\definecolor{maroon}{HTML}{800020}
\definecolor{darkcyan}{HTML}{008060}
\algnewcommand{\Yield}{\textbf{yield}~}
\renewcommand*{\p@subsection}{}
\renewcommand*{\p@subsubsection}{}
\theoremstyle{plain}
\newtheorem{theorem}{Theorem}[section]
\newtheorem{corollary}[theorem]{Corollary}
\newtheorem{proposition}[theorem]{Proposition}
\theoremstyle{definition}
\newtheorem{definition}[theorem]{Definition}
\newcommand{\integers}{\ensuremath{\mathds{Z}}}
\newcommand{\reals}{\ensuremath{\mathds{R}}}
\newcommand{\cmplex}{\ensuremath{\mathds{C}}}
\newcommand{\binary}{\ensuremath{\mathds{B}}}
\newcommand{\sqintR}{\ensuremath{\mathscr{L}^2 (\reals)}}
\newcommand{\sqintS}{\ensuremath{\mathscr{L}^2\left( \mathds{S} \right)}}
\newcommand{\id}{\ensuremath{\mathds{1}}} 
\newcommand{\dbIndex}{\ensuremath{\mathcal{K}}} 
\newcommand{\T}{\ensuremath{{\mathsf{T}}}}
\newcommand{\upi}{\ensuremath{\mathrm{i}}}
\newcommand{\e}{\ensuremath{\mathrm{e}}}
\newcommand{\oneDG}{\textup{1DG}}   
\newcommand{\G}{\textup{G}}         
\newcommand{\s}{\text{s}}           
\newcommand{\w}{\text{w}}           
\newcommand{\sS}{\text{ss}}         
\newcommand{\sw}{\text{sw}}         
\newcommand{\ww}{\text{ww}}         
\DeclarePairedDelimiter{\ceil}{\lceil}{\rceil} 
\DeclarePairedDelimiter{\floor}{\lfloor}{\rfloor} 
\newcommand{\cBraket}[1]{\left\{#1\right\}} 
\newcommand{\pars}[1]{\left(#1\right)}
\newcommand{\upbm}[1]{\bm{\mathrm{#1}}} 
\DeclareMathOperator{\cas}{cas}
\newcommand{\out}{\texttt{out}}
\newcommand{\ang}{\texttt{ang}}
\newcommand{\mean}{\texttt{mean}}
\newcommand{\scratch}{\texttt{scratch}}
\newcommand{\stDev}{\texttt{std}}
\newcommand{\spacing}{\texttt{spc}}
\newcommand{\vac}{\texttt{vac}}
\newcommand{\temp}{\texttt{tmp}}
\newcommand{\shear}{\texttt{shear}}
\newcommand{\reff}{\texttt{ref}}
\newcommand{\anc}{\texttt{anc}}
\newcommand{\ineq}{\texttt{ineq}}
\newcommand{\MUL}{\operatorname{\textsc{mul}}}
\newcommand{\softO}[1]{\ensuremath{\widetilde{\mathcal{O}}\!\left(#1\right)}}
\DeclareFontFamily{U} {MnSymbolB}{}
\DeclareFontShape{U}{MnSymbolB}{m}{n}{
  <-6> MnSymbolB5
  <6-7> MnSymbolB6
  <7-8> MnSymbolB7
  <8-9> MnSymbolB8
  <9-10> MnSymbolB9
  <10-12> MnSymbolB10
  <12-> MnSymbolB12}{}
\DeclareSymbolFont{MnSyB} {U} {MnSymbolB}{m}{n}
\DeclareMathSymbol{\nleftrightline}{\mathrel}{MnSyB}{208}
\newcommand{\qwire}{\text{\Large\ensuremath \nleftrightline}}
\newcounter{library} 
 \newcounter{algCountSaved} 
\newenvironment{library}[1][htb]{%
\setcounter{algCountSaved}{\value{algorithm}} 
  \setcounter{algorithm}{\value{library}} 
  \renewcommand{\ALG@name}{Library}
   \counterwithin*{algorithm}{library}%
\addtocounter{library}{1}
  \begin{algorithm}[#1]%
  }{\end{algorithm}
  \setcounter{algorithm}{\value{algCountSaved}} 
  }
\def\@ssect@ltx#1#2#3#4#5#6[#7]#8{%
  \def\H@svsec{\phantomsection}%
  \@tempskipa #5\relax
  \@ifdim{\@tempskipa>\z@}{%
    \begingroup
      \interlinepenalty \@M
      #6{%
       \@ifundefined{@hangfroms@#1}{\@hang@froms}{\csname @hangfroms@#1\endcsname}%
       {\hskip#3\relax\H@svsec}{#8}%
      }%
      \@@par
    \endgroup
    \@ifundefined{#1smark}{\@gobble}{\csname #1smark\endcsname}{#7}%
  }{%
    \def\@svsechd{%
      #6{%
       \@ifundefined{@runin@tos@#1}{\@runin@tos}{\csname @runin@tos@#1\endcsname}%
       {\hskip#3\relax\H@svsec}{#8}%
      }%
      \@ifundefined{#1smark}{\@gobble}{\csname #1smark\endcsname}{#7}%
      \addcontentsline{toc}{#1}{\protect\numberline{}#8}%
    }%
  }%
  \@xsect{#5}%
}%
\newcommand{\IQST}{\affiliation{%
    Institute for Quantum Science and Technology,
    University of Calgary,
    Alberta, Canada}}
\newcommand{\chemUofT}{\affiliation{%
    Chemical Physics Theory Group, Department of Chemistry, University of Toronto, Ontario, Canada}
    }
\newcommand{\csUofT}{\affiliation{%
    Department of Computer Science, University of Toronto, Ontario, Canada}}
\newcommand{\PhysMQ}{\affiliation{%
    Department of Physics and Astronomy,
    Macquarie University,
    New South Wales, Australia}}
\newcommand{\UTS}{\affiliation{%
    Centre for Quantum Software and Information,
    University of Technology Sydney,
    New South Wales, Australia}}
\newcommand{\EQUS}{\affiliation{%
Centre of Excellence in Engineered Quantum Systems, Macquarie University, New South Wales, Australia}}
\begin{document}

\title{Nearly optimal quantum algorithm for generating the ground state\\
	of a free quantum field theory}
	
\author{Mohsen Bagherimehrab}   \email{mohsen.bagherimehrab@utoronto.ca}\IQST \chemUofT \csUofT
\author{Yuval R.\ Sanders}       \PhysMQ \UTS
\author{Dominic W.\ Berry}       \PhysMQ
\author{Gavin K.\ Brennen}       \PhysMQ \EQUS
\author{Barry C.\ Sanders}       \IQST
\date{\today}


\begin{abstract}
We devise a quasilinear quantum algorithm for generating an approximation for the ground state of a quantum field theory~(QFT).
Our quantum algorithm delivers a super-quadratic speedup over the state-of-the-art quantum algorithm for ground-state generation, overcomes the ground-state-generation bottleneck of the prior approach and is optimal up to a polylogarithmic factor.
Specifically, we establish two quantum algorithms---Fourier-based and wavelet-based---to generate the ground state of a free massive scalar bosonic QFT with gate complexity quasilinear in the number of discretized-QFT modes.
The Fourier-based algorithm is limited to translationally invariant QFTs. Numerical simulations show that the wavelet-based algorithm successfully yields the ground state for a QFT with broken translational invariance.
Furthermore, the cost of preparing particle excitations in the wavelet approach is independent of the energy scale.
Our algorithms require a routine for generating one-dimensional Gaussian~(1DG) states.
We replace the standard method for 1DG-state generation,
which requires the quantum computer to perform lots of costly arithmetic,
with a novel method based on inequality testing that significantly reduces the need for arithmetic.
Our method for 1DG-state generation is generic and could be extended to preparing states whose amplitudes can be computed on the fly by a quantum computer.
\end{abstract}

\pacs{} 
\maketitle

\makeatletter
\renewcommand*\l@subsection{\@dottedtocline{2}{1.5em}{2em}}
\renewcommand*\l@subsubsection{\@dottedtocline{3}{3.5em}{3em}}
\makeatother

{
  \hypersetup{linkcolor=black}
  \tableofcontents
}

\makeatletter
\let\toc@pre\relax
\let\toc@post\relax
\makeatother 

\section{Introduction}
\label{sec:introduction}

Quantum algorithms for simulating a quantum field theory~(QFT) comprise three main steps:
generating an initial state,
simulating time evolution and
measuring observables,
with the initial-state generation being the most expensive step~\cite{JLP12,JLP14}.
The conventional approach to generating the initial state for simulating a QFT, particularly for simulating particle scattering, is first to generate the ground state of the free (i.e. non-interacting) field theory.
Then prepare free wavepackets,
which refers to spatially localized non-interacting particles
and, finally, turn on the field interaction adiabatically~\cite{JLP12,JLP14,JLP14-2}.
An alternative approach is to avoid adiabatic evolution in initial-state preparation, which is used in~\cite{MJ18} to remove the state-preparation as the bottleneck of simulating fermionic QFTs.

everal works have also developed variational~\cite{RLC+20,DMH18} and stochastic~\cite{HLL20,GL21} methods for simulating QFTs that avoid the need to prepare the full quantum state.
These methods, however, are not purely quantum but rather are hybrid quantum-classical,
which are more suitable for simulation on near-term quantum computers.
Despite the advances in simulating QFTs and development of recent algorithmic techniques for preparing quantum states~\cite{CLB+21,LBG+20,Gus20},
state preparation remains the computationally expensive step for simulating a certain class of bosonic QFTs called the massive scalar bosonic QFTs.
In particular, preparing the free-field ground state is the most expensive part in one or two spatial dimensions and is the second-most-expensive part for simulating these theories in three spatial dimensions~\cite{JLP12,JLP14}.

In this paper, we establish a quasilinear quantum algorithm for generating an approximation for the ground state of a massive scalar-bosonic free QFT. 
Our algorithm is optimal, up to a polylogarithmic factor, provides a super-quadratic speedup over the best prior quantum algorithm for ground-state generation and overcomes the ground-state-generation bottleneck.

Specifically, we develop two quantum algorithms, one Fourier-based and the other wavelet-based, to generate the free-field ground state with gate complexity quasilinear in the number of discretized-QFT modes.
For the case of broken translational invariance, e.g., due to mass defects~\cite{BJV11}, the Fourier-based algorithm is inappropriate.
We show, by numerical simulation, that the wavelet-based algorithm successfully yields an approximation for the free-field ground state with a quasilinear gate complexity.
Our algorithms require a routine for preparing one-dimensional Gaussian~(1DG) states, 
which are required for preparing multi-dimensional Gaussian states. 
The standard method for generating a 1DG state is based on Zalka-Grover-Rudolph state preparation~\cite{Zal96,GR02},
which requires the quantum computer to perform costly arithmetic operations.
We replace this method with a novel method based on inequality testing~\cite{SLSB19} that significantly reduces the need for arithmetic.

\subsection{Nontechnical background}

To simulate a QFT, we first need to discretize it~\cite{JLP14}.
The discretization is needed for regulating infinite-dimensional Hilbert spaces involved in the continuum field theory.
Once discretized, the QFT becomes a many-body quantum system whose time evolution can be efficiently simulated on a quantum computer~\cite{Lloyd96,JLP14,BY06}.
The two other steps of a full quantum simulation, namely initial-state generation and measurement, strongly depend on which QFT is being simulated and must be analyzed on a case-by-case basis~\cite[p.~1017]{JLP14}.
Two approaches are used to discretize a continuum massive scalar-bosonic QFT: lattice-based and wavelet-based approaches.
The conventional lattice-based approach is used in the seminal quantum algorithm~\cite{JLP12, JLP14}, and the wavelet-based approach is used in the subsequent quantum algorithm~\cite{BRSS15}.
By discretizing the QFT, the free-field ground state becomes a $N$-dimensional Gaussian~($N$DG) state, where~$N$ is the number of modes in the discretized QFT.

A method proposed in\cite{JLP12} to generate a multi-dimensional Gaussian state is using the Kitaev-Webb method~\cite{KW09}.
This method has three main steps to generate an $N$DG state on a quantum computer.
The first step is to compute the LDL matrix decomposition of the Gaussian state's inverse-covariance matrix (ICM) by a classical computation;
note that any Gaussian state is fully described by a covariance matrix or its inverse.
The second step is to prepare~$N$ different 1DG states where the standard deviation of each 1DG state is obtained from a diagonal element of the diagonal matrix in the LDL decomposition.
The last step is to perform a basis transformation based on the LDL matrix decomposition that maps the previously generated state, a Gaussian state with a diagonal ICM, to a Gaussian state with the desired ICM.

The Kitaev-Webb method~\cite{KW09} for preparing a 1DG state is an application of the standard state-preparation method by Zalka~\cite{Zal96}, Grover and Rudolph~\cite{GR02}.
In this method for preparing a 1DG state on a quantum register, the continuous state is approximated by a discrete 1DG state over a one-dimensional lattice with unit spacing.
Then a recursive description of the approximated 1DG state is used to generate the approximate state that requires the quantum computer to perform a controlled rotation for each qubit of the quantum register.
For each controlled rotation, the quantum computer needs to coherently compute the rotation angle by a large amount of coherent arithmetic on the quantum computer.

The space and time complexities of the Kitaev-Webb method, both for 1DG- and $N$DG-state generation, were not analyzed in the original paper~\cite{KW09}.
However, authors of~\cite{JLP12} state that the method's time complexity for $N$DG-state generation is dominated by the classical complexity of the LDL matrix decomposition, which is~$\softO{N^{2.373}}$~\cite{AW21} if we use Coppersmith-Winograd-style matrix multiplication;
here~$\widetilde{\mathcal{O}}$ suppresses logarithmic factors.
There is also a quantum complexity of~$\softO{N^2}$ to perform the basis transformation needed to yield the Gaussian state with the desired ICM.
This quantum complexity is effectively the cost of performing an in-place matrix-vector multiplication.

\subsection{Overview of methods and results}

Our overall approach to reducing the time-complexity for ground-state generation is to exploit known sparsity properties of the matrices involved.
This approach allows us to reduce both classical and quantum time-complexities by replacing dense matrix operations with sparse ones.
In our wavelet-based approach, for example, we reduce the classical cost of the LDL matrix decomposition by approximating the ground-state ICM with a matrix containing $\order{N \log N}$ nonzero entries.
Similarly, we reduce the quantum cost by replacing the general approach for in-place matrix-vector multiplication with a sparse approach.
In our Fourier-based approach, we avoid even these sparse-approach optimizations by exploiting the translational invariance presumed by Jordan-Lee-Preskill~\cite{JLP12,JLP14} to eliminate the need for classical LDL decomposition.
We replace the quantum in-place matrix-vector multiplication with a coherent fast Fourier transform.

Specifically, in our Fourier-based algorithm, we exploit the translational invariance of the free QFT to reduce the cost of ground-state generation from~$\softO{N^{2.373}}$ down to~$\softO{N}$.
We discretize the continuum QFT in a fixed-scale basis and utilize the circulant structure of the ground state's ICM to compute its eigenvalues by a discrete Fourier transform~(DFT).
We then generate a~$N$DG with a diagonal ICM whose diagonals are the eigenvalues by preparing~$N$ different 1DG states.
Finally, we transform this state into the ground state by a basis transformation.

A choice for basis transformation is to execute a DFT on a quantum computer by reversible arithmetic operations along the lines of~\cite{ASY20}.
The problem with this basis transformation is that the resulting representation for the Gaussian state requires us to have complex-valued coordinates.
We avoid the complex numbers required by the DFT and instead use a discrete Hartley transform~(DHT) that only involves real numbers~\cite[Theorem~1]{BF93}.
The real, symmetric and circulant properties of the ground-state ICM guarantees that by performing a DHT, we obtain the desired Gaussian state.
Analogous to the quantum fast Fourier transform in~\cite{ASY20}, we construct a quantum fast Hartley transform~(QFHT) algorithm with a quasilinear gate complexity.

In the wavelet-based algorithm, we discretize the continuum free QFT in a multi-scale wavelet basis.
The ICM of the ground state in this basis has many elements that have an exponentially close-to-zero value.
We truncate this matrix by replacing the near-zero elements with exactly zero.
This truncation introduces a systematic error that distorts the ground state of the discretized theory.
We show that not only the truncated ICM remains a positive-definite matrix, which is required for the ground state to be a Gaussian state, but also the infidelity between the Gaussian state with the truncated ICM and the free-field ground state is within the pre-specified error-tolerance for preparing the ground state.

By the truncation, the ICM becomes a sparse matrix with a particular structure known as the `fingerlike' sparsity structure~\cite{BCR91};
see~\cref{fig:truncICM}.
A matrix with this structure has a number of nonzero elements that is quasilinear in the dimension of the matrix.
In order to exploit the fingerlike sparse structure, we replace the LDL decomposition in the Kitaev-Webb method~\cite{KW09} with the UDU decomposition.
By exploiting the sparsity structure, we perform the UDU matrix decomposition of the truncated ICM in a quasilinear time.

\begin{figure}[htb]
\centering
    \includegraphics[width=.98\textwidth]{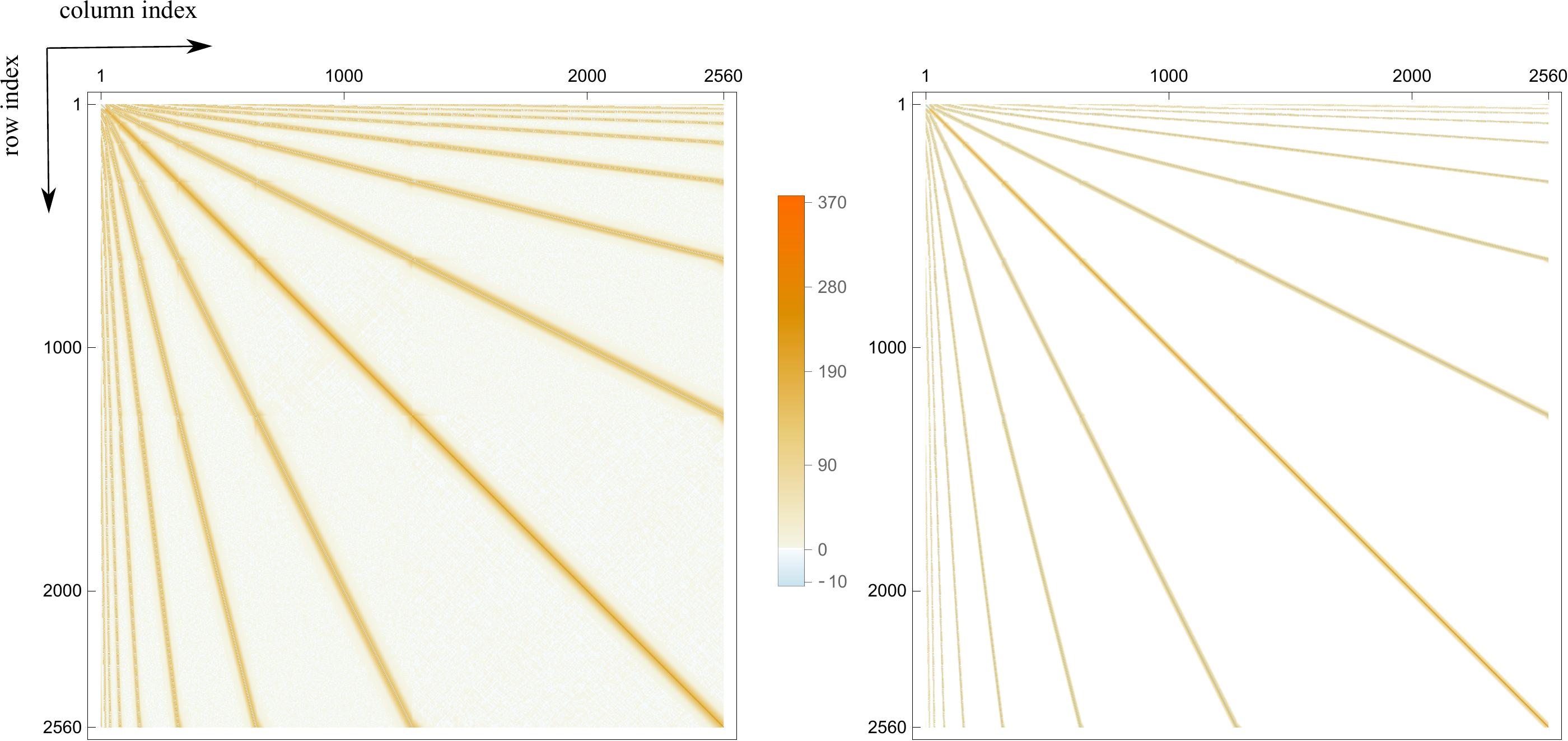}
\caption[Truncated ICM]{
Visual representation of the exact (Left) and approximate ICM (Right) for the ground state of the discretized free~QFT in a multi-scale wavelet basis.
The column (row) index of the visualized matrix is ordered from left (top) to right (bottom), just as the typical format of a matrix.
The values of matrix elements are represented by different colors with the shown color scale.
The number of modes for the discretized QFT is $N=2560$, free mass is~$m_0=1$, and the Daubechies wavelet with index $\dbIndex =3$ is used.
The exact ICM is a quasi-sparse matrix, meaning most of its elements are nearly zero.
Elements of the exact matrix with a magnitude less than~$10^{-8}$ are replaced with zero to obtain the approximate ICM.
The approximate ICM has a specific sparsity structure known as the `fingerlike' sparsity structure.
\label{fig:truncICM}
}
\end{figure}

The ground state in the wavelet-based approach is generated as follows. First we prepare a $N$DG state whose ICM is the diagonal matrix in the UDU decomposition.
We then transform this state into the ground state by performing a quantum shear transform~(QST) on a quantum computer.
The shear-transform matrix has the same sparse structure as the truncated ICM in our UDU decomposition.
We exploit this sparse structure and perform the needed QST with a quasilinear gate complexity.

Our algorithms require a routine for generating an approximation to a 1DG state, and we devise two methods to perform this task.
Our first method is based on the Kitaev-Webb method~\cite{KW09}.
In our method, we approximate the 1DG state over a finite lattice with a non-unit lattice spacing.
We choose the lattice spacing and the number of lattice points in terms of the 1DG-state standard deviation and an error tolerance for the generated state.
The Kitaev-Webb method is restricted to 1DG states with extremely large standard deviations.
Our method does not have this restriction and can generate a 1DG state with any standard deviation.
We show that space and time complexities for preparing an approximation to each 1DG state required for generating the ground state are both logarithmic in the number of modes in the discretized QFT.

A key contributor to gate complexity for the Kitaev-Webb method is the need to perform a large amount of coherent arithmetic.
Their method requires performing a sequence of controlled rotations for every qubit of the quantum register used for preparing a Gaussian state.
For each controlled rotation, the quantum computer needs to coherently compute the third Jacobi theta function twice, perform a division, and compute one square-root and one arccosine function.
In contrast, we provide a significantly improved state-preparation technique based on inequality testing~\cite{SLSB19}, where the most complicated computation needed is a single exponential.
We go beyond the inequality-testing method in~\cite{SLSB19} by showing how to prepare a state with mostly small amplitudes except for a peak in a known location.
We exploit the known location in our method and perform a single step of amplitude amplification~\cite{BHM+02} rather than multiple steps of amplitude amplification required for preparing a general state by inequality testing~\cite{SLSB19}. 

We perform a numerical study to justify why the wavelet-based approach could be preferred over the Fourier-based approach for QFTs with broken translational invariance.
We consider a simple case where the translational invariance is broken due to a mass defect~\cite{BJV11}.
The Fourier-based approach is not applicable in this case because the ground-state ICM cannot be diagonalized by a discrete Fourier transform.
However, our numerical experiment demonstrates that the wavelet-based approach accommodates such QFT with broken translational invariance.
Specifically, the fingerlike sparse structure of the ground-state ICM is not affected by the mass defect,
thereby the wavelet-based algorithm successfully yields an approximate ground state with a quasilinear gate complexity.

We go beyond ground-state generation and construct procedures for preparing free-field particle states in the Fourier- and wavelet-based approaches.
We show that, unlike the Fourier approach, the wavelet approach enables preparing particle states at different energy scales without an additional cost required for the Fourier approach.
Specifically, we show that preparing a free-field single-particle state at a given scale is more expensive than preparing the same state in the wavelet approach.

\subsection{Organization}

Our paper is organized as follows.
We begin, in~\cref{sec:background}, by elaborating the key background pertinent to subsequent sections.
In this section, we review wavelet bases, various methods for discretizing a continuum QFT, a standard approach for generating a Gaussian state and a computation model used for analyzing an algorithm's time complexity.
Next we describe our approach for generating the free-field ground state in~\cref{sec:approach} where we discuss our model for describing a scalar QFT, a metric that we use to analyze our algorithms' time complexity and our methods for generating the free-field ground state.

We then present our results in~\cref{sec:results}.
In this section, we construct our ground-state-generation algorithms, analyze their classical and quantum time complexities, and show that our algorithms are optimal up to polylogarithmic factors.
Furthermore, we determine the space required to represent the free-field ground state and compare the Fourier- and wavelet-based algorithms for a QFT with broken translational invariance and for generating states beyond the free-field ground state.
We finally discuss our results in~\cref{sec:discussion} and conclude in~\cref{sec:conclusions}.


\section{Background}
\label{sec:background}

This section covers the key background pertinent to subsequent sections.
We begin by describing wavelet bases in~\cref{subsec:wavelet_bases}. 
Next, in~\cref{subsec:discretization}, we discuss different approaches for discretizing a continuum QFT.
Then we review the Kitaev-Webb method for generating a Gaussian state in~\cref{subsec:Kitaev_Webb_method}.
Finally, in~\cref{subsec:QRAM}, we review the quantum random-access machine~(QRAM) model for computation (not to be confused with quantum random-access memory~\cite{GLM08}).

\subsection{Wavelet bases}
\label{subsec:wavelet_bases}

Here we briefly review wavelet bases and define terms frequently used in this paper.
For a detailed description, we refer to~\cite[Sec.~2.1]{mbm} and~\cite{Mal09,Dau92}.
We begin by explaining salient features of wavelet bases.
Then we describe a fixed-scale wavelet basis followed by a multi-scale wavelet basis.

The wavelet bases constitute an orthonormal basis for the Hilbert space~$\sqintR$ of square-integrable functions on~\reals.
A wavelet basis is defined in terms of two functions:
\emph{scaling}~$s(x)$ and \emph{wavelet}~$w(x)$ functions.
Here we focus on Daubechies~\dbIndex\ wavelets and refer to them as~db\dbIndex\ wavelets.
The index~$\dbIndex\in\integers^+$ specifies the number of vanishing moments of~$w(x)$.
The scaling and wavelet functions become smoother and less localized
by increasing~$\dbIndex$;
see~\cref{fig:dbwavelets}.

\begin{figure}[htb]
\centering
    \includegraphics[width=.98\textwidth]{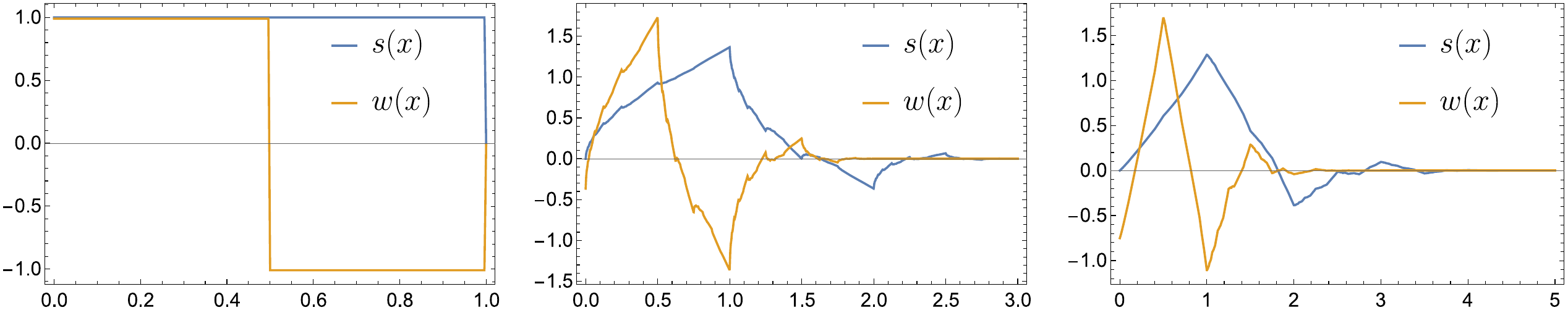}
\caption[Daubechies scaling and wavelet functions]{
From left to right: scaling~$s(x)$ and wavelet~$w(x)$ functions for the db\dbIndex\ wavelet with index~$\dbIndex=1, 2,3$;
db1 wavelet is identical to Haar wavelet.
The support of these functions is~$[0,2\dbIndex-1]$.
The~$s(x)$ and~$w(x)$ for db3 wavelet have continuous first derivatives.}
\label{fig:dbwavelets}
\end{figure}

\noindent
The db\dbIndex\ scaling function is the solution of the linear equation
\begin{equation}
\label{eq:scaling_function}
s(x)=\sqrt{2}\sum_{\ell=0}^{2\dbIndex-1} h_\ell\, s(2x-\ell),
\end{equation}
where, for a unique solution,~$s(x)$ is normalized to have a unit area.
The~$2\dbIndex$ real coefficients~$\{h_0, \ldots, h_{2\dbIndex-1}\}$ are called the \emph{low-pass filter} coefficients.
These coefficients uniquely determine a wavelet basis, and several methods are available to compute their numerical values;
see~\cref{appx:lowpass_filter}.
Given~$s(x)$, the wavelet function is
\begin{equation}
\label{eq:wavelet_function}
w(x):= \sqrt{2}\sum_{\ell=0}^{2\dbIndex-1} g_\ell\, s(2x-\ell),
\quad
g_\ell:=(-)^{\ell} h_{2\dbIndex-1-\ell},
\end{equation}
and~$\{g_0,\ldots,g_{2\dbIndex-1}\}$ are called the \emph{high-pass filter} coefficients.
The scaling and wavelet functions at scale~$k\in\integers$ are defined~as
\begin{equation}
\label{eq:scale_k_scaling_wavelet_functions}
s^{(k)}_\ell(x):= \sqrt{2^k}\, s\left( 2^k x-\ell \right),
\quad
w^{(k)}_\ell(x):= \sqrt{2^k}\, w\left( 2^k x- \ell \right).
\end{equation}
These functions are orthonormal and have support on~$\left[\ell/2^k, (\ell+2\dbIndex-1)/2^k\right]$.
For convenience, we henceforth use the term `scale' to refer to the parameter $k$, so higher (lower) scale means larger (smaller) $k$.

The scaling function at a fixed scale~$k\in \integers$ and its integer translation span a subspace~$\mathcal{S}_{k}$ of~$\sqintR$.
We refer to this subspace as the \emph{scale subspace} of~$\sqintR$. Note that $\sqintR \cong \lim_{k \to \infty} \mathcal{S}_k$.
A fixed-scale wavelet basis is an orthonormal basis whose basis vectors are comprised of a scaling function and its integer translations at a fixed scale.

The wavelet function at a fixed scale~$k$ and its integer translations span a subspace~$\mathcal{W}_k$ of~$\sqintR$, which we call the \emph{wavelet subspace}.
The wavelet subspace~$\mathcal{W}_k$ is the orthogonal complement of~$\mathcal{S}_k$ in~$\mathcal{S}_{k+1}$, i.e.,
$\mathcal{S}_{k+1}\cong \mathcal{S}_k \oplus \mathcal{W}_k$.
This property of the scale and wavelet subspaces leads to a multi-scale decomposition of~\sqintR\ through the relation~$\sqintR \cong
\mathcal{S}_{s_0} \bigoplus_{r=s_0}^{\infty} \mathcal{W}_r$~\cite{BRSS15},
where~$s_0$ is coarsest scale.
The scaling function and its integer translations at a fixed scale along with the wavelet function and its integer translations at all finer scales construct an orthonormal basis for~\sqintR, which we call the multi-scale wavelet~basis.

\subsection{Discretization of a continuum QFT}
\label{subsec:discretization}

In this subsection, we discuss different approaches for discretizing a continuum QFT.
First we explain the common approach of discretizing space as a regular lattice.
Then we describe the alternative wavelet approach for discretizing a QFT.
For simplicity, we consider quantum fields in one spatial dimension.
These discretizations can be extended to higher dimensions using known techniques~\cite{BP13, BRSS15}.

We begin by describing a massive scalar bosonic free QFT
using a Hamiltonian formalism,
which is typical for quantum simulation of a QFT~\cite{JLP14,FS20}.
The time-independent Hamiltonian for a massive scalar bosonic free QFT with bare mass~$m_0$, confined to one spatial dimension, 
is
\begin{equation}
\label{eq:free_Hamiltonian}
    \hat{H}_{\mathrm{free}} := \frac1{2} \int_{\reals}\dd{x}
    \left[
        \hat{\Pi}^2 (x) +
        \left(\nabla \hat{\Phi} (x)\right)^2 +
        m_0^2 \hat{\Phi}^2 (x)
    \right],
\end{equation}
with~$x$ a position
and with field operator~$\hat{\Phi}(x)$ and conjugate momentum~$\hat{\Pi}(x)$ satisfying
\begin{equation}
\label{eq:CCRs}
\left[\hat{\Phi}(x),\hat{\Pi}(y)\right]=\upi \updelta(x-y)\mathds1, \quad
\left[\hat{\Phi}(x),\hat{\Phi}(y)\right]=
\left[\hat{\Pi}(x),\hat{\Pi}(y)\right]=0.
\end{equation}
By QFT conventions, we fix constants $c\equiv1\equiv\hbar$.
Discretizing QFT is achieved by replacing the spatial continuum with a hyper-cubic lattice of `locations' with finite lattice spacing~$a\in\reals^+$.
For a 1D scalar field over~\reals, the discrete domain is~$\Omega:=a\integers$,
and discrete field operators over the lattice satisfy
$\left[\hat{\Phi}(x),\hat{\Pi}(y)\right]=\upi
\updelta_{x,y}\mathds1/a$, and all other commutators are zero.

For lattice QFT~\cite{Rot05},
the Laplacian in~\cref{eq:free_Hamiltonian} and integration measure are
\begin{equation}
\nabla^2 \hat{\Phi} (x) \mapsto
\nabla^2_a \hat{\Phi} (x):=\frac1{a^2}(\hat{\Phi} (x+a)+\hat{\Phi}(x-a)-2\hat{\Phi} (x)),
\quad
\int_{\reals}\dd{x} \mapsto \sum_{x \in \Omega} a,
\end{equation}
respectively.
Defining~$\hat{\Phi}_\ell:=\hat{\Phi}(\ell a)$ and~$\hat{\Pi}_\ell:=\hat{\Pi}(\ell a)$ for each integer~$\ell$, the discretized version of the free QFT~\eqref{eq:free_Hamiltonian} is~\cite{KS19}
\begin{equation}
\label{eq:lattice_Hamiltonian}
    \hat{H}_{a} := \frac{a}{2} \sum_{x \in \Omega}
     \left[
        \hat{\Pi}^2 (x) -
       \hat{\Phi}(x) \nabla^2_a \hat{\Phi} (x) +
        m_0^2 \hat{\Phi}^2 (x)
    \right]
    = 
    \frac{a}{2} \sum_{\ell \in \integers} \hat{\Pi}^2_\ell +
    \frac{a}{2} \sum_{\ell, \ell^\prime \in \integers} \hat{\Phi}_\ell K_{\ell \ell^\prime} \hat{\Phi}_{\ell^\prime},
\end{equation}
where
\begin{equation}
    K_{\ell \ell^\prime}:= m_0^2 \updelta_{\ell \ell^\prime}-
    \left( \updelta_{\ell, \ell^\prime+1} +
    \updelta_{\ell, \ell^\prime-1} -
    2 \updelta_{\ell \ell^\prime}\right)/a^2,
\end{equation}
is the coupling between the two localized $\ell$ and $\ell'$ modes.

The field~$\hat{\Phi}(x)$ and conjugate~$\hat{\Pi}(x)$ operators can be expressed in a wavelet basis by projections onto the scaling and wavelet functions as~\cite{BRSS15,BP13}
\begin{align}
\label{eq:discrete_fields}
&\hat{\Phi}_{\s;\, \ell}^{(k)} := 
\int_\reals \dd{x} s_\ell^{(k)}(x) \hat{\Phi}(x),\quad
\hat{\Pi}_{\s;\, \ell}^{(k)}:=
\int_\reals \dd{x} s_\ell^{(k)}(x) \hat{\Pi}(x),\\
&\hat{\Phi}_{\w;\, \ell}^{(r)}:=
\int_\reals \dd{x} w_\ell^{(r)}(x) \hat{\Phi}(x),\quad
\hat{\Pi}_{\w;\, \ell}^{(r)}:=
\int_\reals \dd{x} w_\ell^{(r)}(x) \hat{\Pi}(x) \quad \forall\, r\geq k.
\end{align}
We refer to field operators with subscript `s' and `w' as scale-field and wavelet-field operators, respectively.
These operators have a compact support, determined by the support of their associated wavelet or scaling functions, and satisfy a set of commutation relations analogous to those of~\cref{eq:CCRs}, but with the Dirac~$\updelta$ replaced with the Kronecker~$\updelta$~\cite[p.~3]{BRSS15}.

To discretize the free QFT in a fixed-scale basis, we project the continuum Hamiltonian~\eqref{eq:free_Hamiltonian} onto a scale subspace~$\mathcal{S}_k$ of~$\sqintR$ for some~$k\in \integers^+$.
By~\cref{eq:discrete_fields}, we first project~$\hat{\Phi}(x)$ and~$\hat{\Pi}(x)$ onto this subspace.
Substituting the projected field and momentum operators into~\cref{eq:free_Hamiltonian}, we obtain the expression
\begin{equation}
\label{eq:fixediscale_Hamiltonian}
    \hat{H}^{(k)}_\s
    := \frac1{2} \sum_\ell \hat{\Pi}_{\s;\ell}^{(k)} \hat{\Pi}_{\s;\ell}^{(k)}
    + \frac1{2} \sum_{\ell\ell^\prime}
    \hat{\Phi}_{\s;\ell}^{(k)} K^{(k)}_{\sS; \ell \ell^\prime} \hat{\Phi}_{\s;\ell^\prime}^{(k)},
\end{equation}
for projected Hamiltonian onto a scale subspace~$\mathcal{S}_k$.
Here
\begin{equation}
\label{eq:coupling_matrix_elements}
    K^{(k)}_{\sS; \ell \ell^\prime}
    := m_0^2 \updelta_{\ell \ell^\prime} - 4^k \Delta^{(2)}_{\ell^\prime - \ell}\, ,
\end{equation}
are the coupling between different modes and ~$\Delta^{(2)}_{\ell^\prime-\ell}$ are the coefficients
\begin{equation}
\label{eq:derivative_overlaps}
    \Delta^{(n)}_{\ell-\ell^\prime} :=
    \int \dd{x} s_{\ell-\ell^\prime}(x) \dv[n]{}{x} s(x)
    \quad \forall\, n\geq 1,
\end{equation}
with $n=2$.
We refer to these coefficients as the $n$-order \emph{derivative overlaps} and use~$\Delta_\ell:=\Delta^{(2)}_\ell$ for simplicity.

The multi-scale wavelet discretization of the free QFT is obtained by projecting~$\hat{\Phi}(x)$ and~$\hat{\Pi}(x)$ onto a multi-scale basis and substituting the projected operators into~\cref{eq:free_Hamiltonian}.
In this case, the projected Hamiltonian is
\begin{equation}
\label{eq:multiscale_Hamiltonian}
    \hat{H}_\w:=
    \hat{H}^{(s_0)}_\s+
    \frac1{2} \sum_{\ell,r\geq s_0} \hat{\Pi}_{\w;\ell}^{(r)} \hat{\Pi}_{\w;\ell}^{(r)}
    + \frac1{2}
    \sum_{\ell\ell^\prime, rr^\prime}
    \hat{\Phi}_{\w;\ell}^{(r)} K^{(r,r^\prime)}_{\ww; \ell \ell^\prime} \hat{\Phi}_{\w;\ell^\prime}^{(r^\prime)}
    + \frac1{2}
    \sum_{\ell\ell^\prime, r \geq s_0}
    \hat{\Phi}_{\s;\ell}^{(s_0)} K^{(s_0,r)}_{\sw; \ell \ell^\prime} \hat{\Phi}_{\w;\ell^\prime}^{(r)},
\end{equation}
where~$K^{(r,r^\prime)}_{\ww; \ell \ell^\prime}$ are coupling between the wavelet fields at scales~$r$ and~$r^\prime$, and~$K^{(s_0,r)}_{\sw; \ell \ell^\prime}$ are coupling between the scale fields at scale~$s_0$ and wavelet fields at scale~$r$.
These couplings are systematically computed from the derivative overlaps~\eqref{eq:derivative_overlaps}~\cite{BP13,SB16}.

Wavelet discretization for the momentum operator
$\hat{P}:=- \int_{\reals}\dd{x}
\hat{\Pi}(x)\nabla \hat{\Phi} (x)$
of the free QFT~\cite[p.~6]{BP13} is similar to wavelet discretization for Hamiltonian.
Therefore, we only consider fixed-scale discretization of~$\hat{P}$.
In this discretization, $\hat{P}$ is projected onto a scale subspace~$\mathcal{S}_k$.
The projected momentum operator is
\begin{equation}
\label{eq:projected_momentum}
    \hat{P}^{(k)}_\s := - \sum_{\ell, \ell^\prime}
    \hat{\Pi}_{\s;\ell}^{(k)} P^{(k)}_{\ell \ell^\prime} \hat{\Phi}_{\s;\ell^\prime}^{(k)}, \quad
    P^{(k)}_{\ell \ell^\prime}:= 2^k 
    \Delta^{(1)}_{\ell^\prime- \ell},
\end{equation}
where~$\Delta^{(1)}_{\ell^\prime-\ell}$ are the derivative overlaps in~\cref{eq:derivative_overlaps} with~$n=1$.
We use this expression in~\cref{subsubsec:fixedscale_groundstate} to obtain the number of modes of the discretized free QFT in a fixed-scale basis.

\subsection{Kitaev-Webb method for Gaussian-state generation}
\label{subsec:Kitaev_Webb_method}

Here we review the Kitaev-Webb method for generating an approximation of a multi-dimensional continuous Gaussian pure state on a quantum register~\cite{KW09}.
First we define a continuous Gaussian pure state and set our notations for particular Gaussian states.
Then we explain the main idea of the Kitaev-Webb method and proceed with describing details of the method.

We begin by defining a continuous Gaussian pure state in~\cref{def:continuous_Gaussian}.
Throughout this paper, we use this definition when we refer to a continuous Gaussian pure state.

\begin{definition}[Continuous Gaussian pure state]
\label{def:continuous_Gaussian}
Let~$N$ be a positive integer,~$\upbm{A}$ be a real-valued $N$-by-$N$ symmetric positive-definite matrix and
\begin{equation}
    p_N(\upbm{A};\bm{x})
    :=\sqrt{\tfrac{\det\upbm{A}}{(2\uppi)^N}}\
    \e^{-\bm{x}^\T\upbm{A}\bm{x}/2},
\end{equation}
be the probability density function of a continuous $N$-dimensional Gaussian~(NDG) distribution, with the ICM~$\upbm{A}$,
for a random variable~$\bm{x}:=(x_0,x_1,\ldots,x_{N-1}) \in \reals^N$.
We define the pure state
\begin{equation}
\label{eq:continuous_NDG}
\ket{\G_N(\upbm{A})}
:=\int_{\reals^N}\dd[N]{\bm{x}}
\sqrt{p_N(\upbm{A};\bm{x})}\ket{\bm{x}},
\end{equation}
where
\begin{equation}
\label{eq:ketx}
\ket{\bm{x}}:=\ket{x_0} \otimes \cdots \otimes \ket{x_{N-1}},
\end{equation}
is a vector of distributions,
as the continuous NDG pure state with the ICM~$\upbm{A}$.
\end{definition}

We use a particular notation for 1DG pure states.
If the Gaussian distribution in~\cref{def:continuous_Gaussian} is one-dimensional with the standard deviation~$\sigma \in \reals^+$ and the mean value $\mu \in \reals^+$,
then we refer to the state
\begin{equation}
\label{eq:continuous1DG}
 \ket{\G(\sigma, \mu)}:= \frac1{\mathcal{N}}
    \int_{\reals}\dd[]{x} \e^{-\frac{(x-\mu)^2}{4\sigma^2}} \ket{x},
    \quad
    \mathcal{N}^2:=\int_\reals \dd{x}\e^{-\frac{(x-\mu)^2}{2\sigma^2}} =
    \sigma\sqrt{2\uppi},
\end{equation}
for~$\ket{x}$ a continuously parameterized position state,
as the continuous 1DG pure state with standard deviation~$\sigma$ and mean value~$\mu$.
For simplicity, 
we denote~$\ket{\G(\sigma, 0)}$ by~$\ket{\G(\sigma)}$.

We now present a high-level description of Kitaev's and Webb's method for generating a multi-dimensional Gaussian state.
Their method's main idea is first to prepare a set of independent 1DG states and then perform a basis transformation to produce the desired Gaussian state.
The parameters needed for preparing the 1DG states and the basis transformation are outputs of a classical algorithm that computes the LDL decomposition of the Gaussian state's ICM~$\upbm{A}$.
Specifically, the classical algorithm returns a diagonal matrix~$\upbm{D}$ and a lower unit-triangular matrix~$\upbm{L}$ such that $\upbm{A}=\upbm{LDL}^\T$. 
Diagonal of~$\upbm{D}$ are parameters needed for preparing the 1DG states, and off-diagonals of~$\upbm{L}$ are parameters needed for the basis transformation.

Their method for generating a $N$DG state with ICM~$\upbm{A}$ can be described by three main steps:
(1)~classically compute~$\upbm{L}$ and~$\upbm{D}$ in the LDL decomposition of the ICM~$\upbm{A}$;
(2)~generate an approximation for~$\ket{\G_N(\upbm{D})}$; and
(3)~implement the basis transformation~$\ket{\bm{x}} \mapsto \ket{\upbm{S}\bm{x}}$, where~$\upbm{S}$ is inverse-transpose of~$\upbm{L}$ and~$\ket{\bm{x}}$~\eqref{eq:ketx} is the basis state.
The basis transformation in the last step is implemented by storing off-diagonal elements of~$\upbm{S}$ on ancillary quantum registers and performing reversible operations on a quantum computer.
The state$\ket{\G_N(\upbm{D})}$ in the second step is generated by preparing~$N$ independent 1DG states with standard deviations $\sigma_\ell:=1/\sqrt{D_{\ell\ell}}$ for $\ell\in\{0,\ldots,N-1\}$.

We now describe the Kitaev-Webb method for generating 1DG states.
To generate the continuous 1DG state~$\ket{\G(\sigma,\mu)}$~\eqref{eq:continuous1DG} 
on a quantum register, it is first approximated by the discrete 1DG state
\begin{equation}
\label{eq:KW_discrete_1DG}
     \ket{\tilde{\G}_\text{KW}(\sigma,\mu)}:=
     \sum_{i\in \integers} \tilde{\G}_\text{KW}(\sigma,\mu;\, i) \ket{i},
     \quad
     \tilde{\G}_\text{KW}(\sigma,\mu;\, i):=
     \tfrac{1}{\sqrt{f_\text{KW}(\sigma, \mu)}} \e^{-\tfrac{(i-\mu)^2}{4\sigma^2}},
     \quad
     f_\text{KW}(\sigma, \mu):=
     \sum_{i\in \integers} \e^{-\tfrac{(i-\mu)^2}{2\sigma^2}},
\end{equation}
over the 1D infinite lattice with unit lattice spacing. 
This discrete 1DG state is again approximated by the state
\begin{align}
\label{eq:KW_approx_1DG}
    \ket{\xi(\sigma, \mu, m)}:=
    \sum_{i=0}^{2^m-1} \xi(\sigma, \mu, m;\, i)\ket{i},
    \quad
    \xi^2(\sigma, \mu, m;\, i):=
    \sum_{j\in \integers}
    \tilde{\G}^2_\text{KW}(\sigma, \mu;\, i+j 2^m).
\end{align}
This quantum state is used as an approximation for~$\ket{\G(\sigma,\mu)}$~\eqref{eq:continuous1DG} in the Kitaev-Webb method to be generated on a quantum register.
The key point of this method 
is to employ the recursive decomposition
\begin{align}
\label{eq:KW_recursive_formula}
    \ket{\xi(\sigma, \mu, m)} =
    \ket{\xi\left(\tfrac{\sigma}{2}, \tfrac{\mu}{2}, m-1\right)} \otimes \cos \theta \ket{0} +
    \ket{\xi\left(\tfrac{\sigma}{2}, \tfrac{\mu-1}{2}, m-1\right)} \otimes \sin \theta \ket{1}, 
    \quad \theta:= \arccos{\sqrt{\frac{f_\text{KW}(\sigma/2, \mu/2)}{f_\text{KW}(\sigma, \mu)}}},
\end{align}
for generating the approximate 1DG state$\ket{\xi(\sigma, \mu, m)}$~\eqref{eq:KW_approx_1DG}.

By the recursive formula~\eqref{eq:KW_recursive_formula} and classical inputs~$\sigma_0:=\sigma, \mu_0:=\mu$ and $m$, following recursive procedure is used to generate the approximate~\oneDG\ state:
(1)~compute~$\theta/2\uppi$~\eqref{eq:KW_recursive_formula} and store it on an ancillary quantum register;
(2)~perform the single-qubit rotation~$R(\theta):=\exp(-\upi\theta Y)$ on the rightmost qubit, where~$Y$ is the Pauli-Y gate;
(3)~uncompute~$\theta/2\uppi$;
(4)~compute~$\sigma_1:=\sigma_0/2$ and~$\mu_1:=(\mu_0-q_0)/2$, where~$q_0=0$ if the state of the rotated qubit is~$\ket{0}$ and~$q_0=0$ if it is~$\ket{1}$; and
(5)~generate the state~$\ket{\xi(\sigma_1, \mu_1, m-1)}$~\eqref{eq:KW_recursive_formula} on the remaining~$m-1$ qubits.

The Kitaev-Webb method is restricted to 1DG states that possess an extremely large standard deviation.
In~\cref{subsubsec:1DG_method} and~\cref{subsubsec:1DG_generation}, we describe our methods for generating an approximation of a continuous 1DG state for any standard deviation.

\subsection{Quantum random-access machine model for computation}
\label{subsec:QRAM}

An algorithm's time complexity depends on the model for computation. 
Here we review the quantum random-access machine~(QRAM) model for computation introduced by Knill~\cite{Knill96}.
We use a variant of this model for analyzing our algorithms' time complexity.
First we describe the architecture of the QRAM model. 
Then we discuss `primitive' operations for this model.
Finally, we describe how an algorithm's time complexity is assessed in the QRAM model and discuss the computational complexity for computing some elementary functions.

The QRAM model is an extension of the classical RAM model that allows classical and quantum computations.
We describe a simplified architecture of the QRAM model.
For more details, we refer to~\cite{LU08, Mis12,WY20}.
In this model, as schematically illustrated in~\cref{fig:QRAM},
a computer has a classical and a quantum processor that are respectively connected to classical and quantum registers.
These processors work in a master-slave fashion, where the classical processor is the master that controls the quantum processor.
A hybrid quantum-classical code is first compiled into the classical processor.
The compiled code contains both classical and quantum instructions.
The classical processor performs the classical instructions on classical registers and sends the quantum instructions to be performed by the quantum processor on quantum registers.
The measurement results are sent back to the classical processor by the quantum processor.
This process could be repeated multiple times depending on the~code.

\begin{figure}[htb]
\centering
    \includegraphics[width=.8\textwidth]{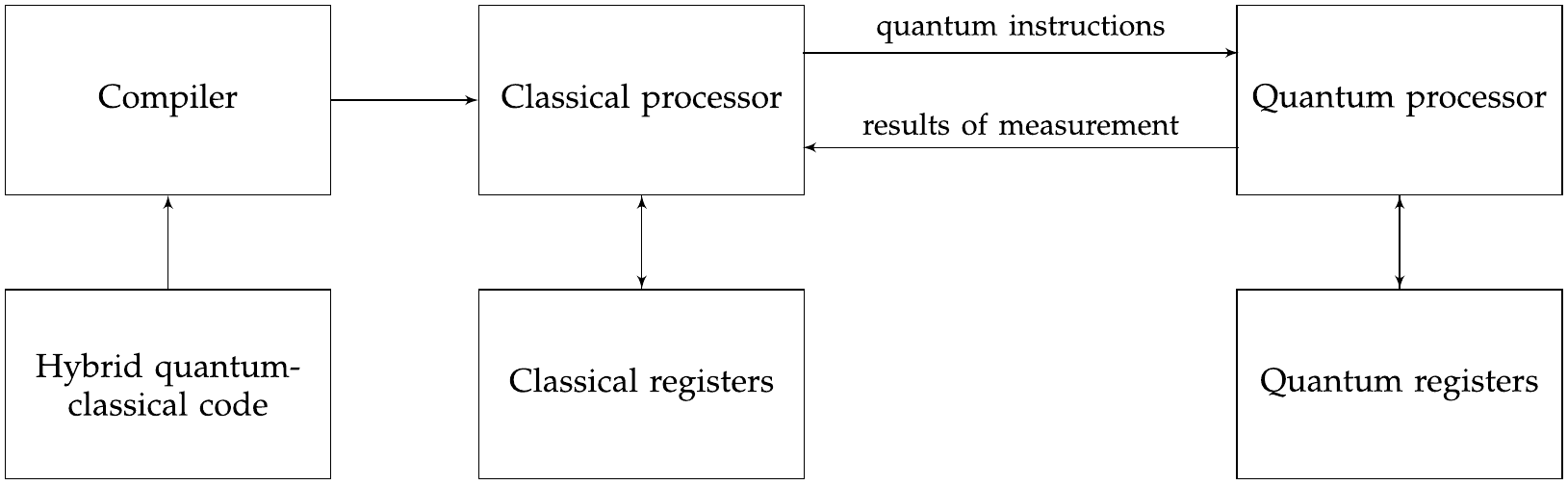}
\caption[Schematic description of the QRAM model]{Schematic description of the QRAM model.
A hybrid code is compiled into the classical processor.
The compiled code provides instructions to be performed by classical and quantum processors on their associated registers.
The quantum processor sends the measurement results back to the classical processor.}
 \label{fig:QRAM}
\end{figure}

The classical and quantum processors in the QRAM model can only perform a restricted set of operations on their associated registers.
These operations are called primitive operations, and a unit cost is assigned to each primitive in this model.
As QRAM is an extension of the classical RAM, the QRAM's classical primitives are considered to be the same as the primitives in the classical RAM model, which are the 
following~\cite{GT14}:
(1)~basic arithmetic operations, i.e., addition, subtraction, multiplication and division;
(2)~data-movement operations such as writing data from memory to classical registers and reading data from classical registers to memory;
(3)~Boolean logic operations such as AND and OR; and
(4)~flow-control operations such as calling a function or returning from a function.
The classical primitives, except Boolean logic operations, are high-level operations.
In practice, implementing high-level operations does not have the same cost in terms of low-level (i.e., bit-wise) operations.

Unlike classical primitives, which are high-level operations, quantum primitives in the QRAM model are low-level operations~\cite{Knill96}.
Specifically, quantum primitives in the QRAM model are quantum gates from a universal set of gates.
In~\cref{subsubsec:complexity_measure}, we describe our alternative for quantum primitives that are high-level operations, similar to classical primitives.

The time complexity for a classical algorithm is determined by counting the number of classical primitives that need to be executed in the algorithm.
Similarly, the common approach to analyzing a quantum algorithm's time complexity in the QRAM model is counting the number of quantum primitives.
As quantum primitives in QRAM are quantum gates, the algorithm's gate complexity is typically used as a standard metric to cost out a quantum algorithm.

Finally, we state the time complexity for computing four elementary functions with respect to the classical primitives in the QRAM model: logarithm, square-root, inverse-square-root and trigonometric functions.
These functions are used in various subroutines of our ground-state-generation algorithms described in~\cref{subsec:highlevel_description}.
The time complexity for computing each of these functions was analyzed in terms of the time complexity for performing a multiplication in~\cite{Bre10}.
Multiplication is a primitive operation in the QRAM model and has a unit cost.
Therefore, we only list time complexities for the elementary functions with respect to classical primitives in the QRAM model.
The time complexity for computing square-root or inverse-square-root of a number in this model is~$\order{1}$~\cite[pp.~3--4]{Bre10}, and the time complexity for computing a logarithm or trigonometric functions to precision~$p$ is~$\order{\log p}$~\cite[pp.~11,~14]{Bre10}.
We use these complexities in~\cref{subsect:complexity_analysis}, where we analyze our algorithms' time complexity.

\section{Approach}
\label{sec:approach}

In this section, we present our approach for generating an approximate quantum-register representation for the ground state of a free massive scalar bosonic QFT, such that the quantum algorithm succeeds deterministically in a quasilinear time with respect to the number of modes for the discretized QFT.
We begin in~\cref{subsec:model} by discussing our model for describing a free massive scalar bosonic QFT.
We introduce a client-server framework for simulating a QFT and discuss a metric to measure our algorithm's time complexity.
Next we describe in~\cref{subsec:mathematics} the mathematical approach inherent in generating the free-field ground state.
We explain how the mass, momentum cutoff, wavelet index and error tolerance are used to discretize the continuum free-field Hamiltonian~\eqref{eq:free_Hamiltonian} in fixed- and multi-scale wavelet bases from which the covariance-matrix description of the discretized-QFT ground state is obtained.
Finally, in~\cref{subsec:methods}, we describe our Fourier and wavelet-based method for ground-state generation and also our methods for generating one-dimensional Gaussian states.


\subsection{Model}
\label{subsec:model}

In this subsection, we discuss our model for describing a massive scalar bosonic QFT.
We also introduce a client-server framework for simulating a QFT on a quantum computer and discuss the metric we use to measure our algorithms' time complexity.
We begin by explaining our model for describing the QFT in~\cref{subsubsec:discqft}.
We then explain the framework in~\cref{subsubsec:framework} followed by the metric for measuring an algorithm's runtime in~\cref{subsubsec:complexity_measure}.

\subsubsection{Discretized quantum field theory}
\label{subsubsec:discqft}

Here we explain our model for describing a massive scalar bosonic QFT.
We describe our approach for discretizing a one-dimensional scalar bosonic field in fixed- and multi-scale wavelet bases. 
Furthermore, we compare these discretizations with the Jordan-Lee-Preskill discretization, which is based on the conventional lattice approach~\cite{JLP12}.

In contrast to the usual approach for discretizing a QFT over a non-compact domain, such as the infinite real line for a one-dimensional theory, we discretize for the field on a finite interval of the real line with periodic boundary conditions.
Mathematically, the finite interval with periodic boundary conditions can be treated as a circle domain.
We denote the bare mass of the field theory by~$m_0$, as in~\cref{eq:free_Hamiltonian}.
We consider an ultraviolet momentum cutoff~$\Lambda$ for the field theory, meaning we ignore all field configurations with momentum higher than~$\Lambda$.
For convenience, we work in the normalized scale where the compact domain becomes the unit interval.
In this case, the theory is on the unit interval with periodic boundaries, and all involved parameters such as mass and momentum cutoff become dimensionless parameters.

In fixed-scale discretization, we partition the unit interval into~$2^k$ subintervals of length~$1/2^k$.
We choose the positive integer~$k$ such that~$2^k \geq 2(2\dbIndex-1)$, where~$\dbIndex$ is the wavelet index~(\cref{subsec:wavelet_bases}).
This choice is made because the smallest size admissible with periodic boundaries is~$2(2\dbIndex-1)$ to ensure the scale function and its translations are orthogonal~\cite[p.~4]{SB16}.
We assign a discrete scale field~\eqref{eq:discrete_fields} to each of the~$2\dbIndex-1$ cyclically consecutive subintervals according to the following averaging rule. 
Each discrete field is an average of the continuous field over an interval of length~$(2\dbIndex-1)/2^k$ weighted by the db\dbIndex\ scaling function at scale~$k$~\cite[p.~7]{BP13}.
The discrete fields and their conjugate momenta satisfy the commutation relations analogous to those of~\cref{eq:CCRs} but with the Dirac~$\updelta$ replaced by the Kronecker~$\updelta$.
The momentum cutoff~$\Lambda$ in this discretization is proportional to the inverse of the subinterval’s length.

In multi-scale wavelet discretization, we partition the unit interval into~$2^s$ subintervals of length~$1/2^s$ at each scale,
where~$s\in\{s_0,s_0+1,\ldots,k-1\}$.
Here~$k$ is an integer such that~$2^k\geq 2(2\dbIndex-1)$ and~$s_0$ is the smallest integer that satisfies this inequality.
At the smallest scale~$s_0$, we assign a scale field and a wavelet field~\eqref{eq:discrete_fields} to each of the~$2\dbIndex-1$ cyclically consecutive subintervals according to the following averaging rule.
Each scale~(wavelet) field is an average of the continuous field over an interval of length~$(2\dbIndex -1)/2^{s_0}$
weighted by the db\dbIndex\ scaling~(wavelet) function at scale~$s_0$.
At each other scale~$s>s_0$,
we assign a wavelet field to each of the~$2\dbIndex-1$ cyclically consecutive subintervals.
 Each wavelet field is an average of the continuous field over an interval of length~$(2\dbIndex -1)/2^s$ weighted by the db\dbIndex\ wavelet function at scale~$s$.
The discrete fields and their conjugate momenta in a multi-scale wavelet basis also satisfy the commutation relations analogous to those of~\cref{eq:CCRs} but with the Dirac~$\updelta$ replaced by a Kronecker~$\updelta$~\cite[p.~3]{BRSS15}.
The momentum cutoff~$\Lambda$ in this discretization is proportional to the inverse of the subinterval’s length at scale with the largest scale index~$s$.

In the Jordan-Lee-Preskill approach~\cite{JLP12}, the conventional lattice discretization is used to discretize the scalar bosonic quantum field.
The unit interval is approximated by a one-dimensional finite lattice with~$2^k$ points and lattice spacing~$1/2^k$ for some positive integer~$k$.
A discrete field is then assigned to each lattice point, where the discrete fields are samples of the continuous field at lattice points.
In contrast, the discrete fields in the wavelet approach are an average of the continuous field over subintervals of the unit interval.
These discrete fields have overlapping domains, whereas the domains of the discrete fields in the lattice approach, i.e., the lattice points, do not overlap.

\subsubsection{Client-server framework for simulating a QFT}
\label{subsubsec:framework}

We now describe a framework for simulating a QFT.
To make clear ground-state generation vs other aspects of a QFT simulation,
we employ a framework comprising three components:
a client, a main server and a ground-state-generation server; see~\cref{fig:clientServerModel} for a schematic representation of the framework.
We define each component of the framework and elucidate its relative task in simulating a QFT.
Finally, we describe the information flow between the client and the two servers for simulating a massive scalar bosonic QFT.

A client is an agent who supplies the needed parameters for solving a computational problem to a server and accepts the solution. 
The client in our framework only communicates with the main server.
She provides the required inputs for simulating a QFT to this server and accepts the solution, which is the simulation's outputs.

A server is a computer that provides a function or service to one or many clients; a server could also be a client to another server.
The main server's task in our framework is to simulate the QFT specified by the client and deliver the simulation's output to the client.
The main server chooses a particular basis for simulation to accomplish this task and delegates the ground-state-generation part of the simulation to the ground-state-generation server.
The main server is, therefore, a client to the ground-state-generation server in our framework.
The main server supplies the input parameters needed for generating the ground state of the QFT and accepts a quantum state,
which is an approximation for the free-field ground state.

The ground-state-generation server in our framework is an auxiliary server whose task is to generate an approximation for the QFT ground state, which is represented in a particular basis by the main server, on a quantum register and deliver the generated state to the main server.
We consider this auxiliary server in our framework to elucidate the input parameters needed for generating the ground state and separate the ground-state generation part of a QFT simulation from other parts of the full quantum simulation as ground-state generation is a bottleneck for the entire simulation. 
The main server then performs the simulation using the generated state by the ground-state-generation server.
The main server may wish to perform the simulation on a different basis.
In this case, the main sever first executes a basis transformation on the generated state by the ground-state-generation server and then performs the simulation on a new basis.

\begin{figure}[htb]
\centering
\includegraphics[width=.7\textwidth]{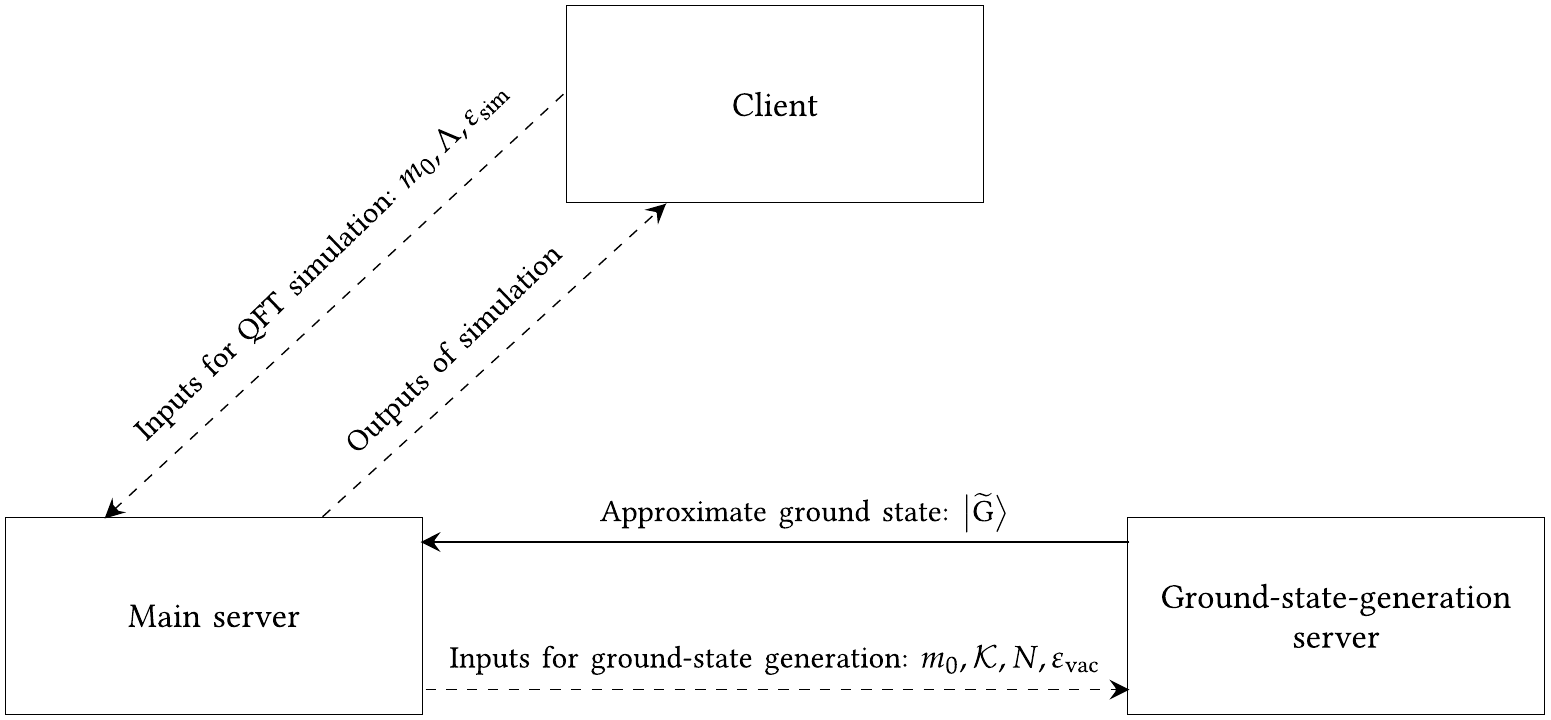}
\caption[A client-server framework for simulating a QFT]{A client-server framework for simulating a QFT.
Dashed lines represent classical communication, and solid lines represent quantum communication.
The classical inputs to the main and ground-state-generation servers are specified in \cref{table:maininputs} and \cref{table:inputsgroundstategen}, respectively, for simulating a massive scalar bosonic QFT.
Output from the ground-state-generation server is an approximation for the free-field ground state.
\label{fig:clientServerModel}
}
\end{figure}

We now discuss the information flow between the client and the two servers in the QFT-simulation framework.
The client in this framework supplies the input parameters for simulating a QFT to the main server.
The required input parameters are those that specify the Hamiltonian (or Lagrangian) describing the QFT, an error tolerance for output of simulation and a parameter specifying the energy at which the simulation is performed.
The free mass $m_0$ is the only parameter that specifies the Hamiltonian of a free massive scalar bosonic QFT;
see~\cref{eq:free_Hamiltonian}.
We use an ultraviolet cutoff on the momentum of the free-QFT particles as the parameter specifying the simulation energy.
\cref{table:maininputs} specifies the required inputs that need to be supplied by the client to the main server for simulating a massive scalar bosonic free QFT.

\begin{table}[htb]
\centering
\begin{tabular}{|c|c|l|}
\hline
Parameter & Type & Description \\
\hline
$m_0$ & $\reals^+$ & Free-QFT mass\\
\hline
$\Lambda$ & $\reals^+$ &
Ultraviolet cutoff on momentum of the free-QFT particles\\
\hline
$\varepsilon_\text{sim}$ & $(0,1)$ & 
Error tolerance for output of simulation\\
\hline
\end{tabular}
\caption{Inputs from client to the main server.}
\label{table:maininputs}
\end{table}

The main server chooses a wavelet basis by selecting a wavelet index $\dbIndex\in \integers_{\geq 3}$ to perform the simulation.
This server approximates the continuum field theory by a finite-mode discretized QFT in the db\dbIndex\ wavelet basis and calculates the sufficient number of modes for the discretized theory using the client’s inputs.
The main server then supplies the inputs specified in \cref{table:inputsgroundstategen} to the ground-state-generation server.

\begin{table}[htb]
\centering
\begin{tabular}{|c|c|l|}
\hline
Parameter & Type & Description\\
\hline
$m_0$ & $\reals^+$ & Free-QFT mass\\
\hline
$\dbIndex$ & $\integers_{\geq 3}$ & Wavelet index\\
\hline
$N$ & $\integers_{\geq 2(2\dbIndex-1)}$ &
Number of modes in the discretized QFT\\
\hline
$\varepsilon_\text{vac}$ & $(0,1)$ &
Error tolerance for generating the discretized-QFT ground state\\
\hline
\end{tabular}
\caption{Inputs from main server to the ground-state-generation server.}
\label{table:inputsgroundstategen}
\end{table}

The ground-state-generation server uses the inputs supplied by the main server to generate an approximation for the ground state of the discretized free QFT.
This server then delivers the generated state to the main server for QFT simulation.
Finally, the simulation outputs are provided to the client by the main server. 
The flow of information between the client and two servers is shown in~\cref{fig:clientServerModel}.

\subsubsection{Measure for time complexity}
\label{subsubsec:complexity_measure}

Here we describe how we assess an algorithm's time complexity.
We begin by stating the metric that we use for the time complexity of an algorithm.
Then we specify primitive operations for classical and quantum processors of the QRAM model described in~\cref{subsec:QRAM}.
The chosen primitives are high-level operations,
and we explain how to relate these primitives to low-level primitives.
Finally, we discuss how our metric differs from common metrics for analyzing an algorithm's time complexity.

We use the number of primitive operations in an algorithm as a metric to quantify the algorithm's time complexity.
That is to say, the number of classical primitives in an algorithm determines its classical complexity, and the number of quantum primitives determines the algorithm's quantum complexity.
By this metric, the time complexity depends on the classical and quantum primitives;
changing the set of primitives yields a different time complexity.
Hence the set of primitive operations needs to be specified.

We now specify the classical and quantum primitives.
We choose the classical primitives to be the same as the classical primitive operations in the QRAM model;
see~\cref{subsec:QRAM}.
Except for the flow-control operations~(\cref{subsec:QRAM}), we take the quantum primitives to be a quantum version of the classical primitives.
We exclude the flow-control operations as quantum primitives because the quantum processor in the QRAM model is controlled by the classical processor.
Specifically, we choose the following operations as quantum primitives in our time-complexity analysis for a quantum algorithm:
(1)~basic arithmetic operations on quantum registers;
(2)~data-movement operations on quantum registers, such as writing~(preparing) classical data from memory into quantum registers and reading~(measuring) data from quantum registers to memory; and
(3)~quantum logic gates such as the Hadamard, controlled-Not~(CNOT) and Toffoli gates.

With our chosen quantum primitives, similar to the classical RAM model, the QRAM model encapsulates computers' core functionality, not their exact functionalities.
For instance, addition and multiplication operations are each considered a single primitive operation for each processor in this model.
In practice, however, a processor needs to execute more low-level operations---bit-wise operations for the classical processor and qubit-wise operations for quantum processor---to perform multiplication versus addition.
By analyzing the cost of performing high-level primitives in terms of low-level operations, one can obtain an algorithm's time complexity with respect to low-level operations.

We comment that our approach for analyzing a quantum algorithm's time complexity is not common in literature.
As described in~\cref{subsec:QRAM}, the common approach to cost out a quantum algorithm is to count the number of low-level operations, i.e., quantum gates, in the algorithm~\cite{Knill96,Mon16}.
In~\cref{sec:discussion}, we discuss our algorithms' time complexity with respect to low-level~operations.


\subsection{Mathematics}
\label{subsec:mathematics}

This subsection describes the mathematical approach for generating the ground state of a massive scalar bosonic QFT.
We begin in~\cref{subsubsec:fixedscale_groundstate} by describing a procedure for approximating the ground state of the continuum QFT~\eqref{eq:free_Hamiltonian} in a fixed-scale basis.
Then we proceed with approximating the ground state in a multi-scale wavelet basis in~\cref{subsubsec:multiscale_groundstate}. 
Lastly, we explain our procedure for discretizing a continuous Gaussian pure state in~\cref{subsubsec:discrete_GaussianState}.

\subsubsection{Free-field ground state in a fixed-scale basis}
\label{subsubsec:fixedscale_groundstate}

Here we represent the ground state of the continuum free QFT~\eqref{eq:free_Hamiltonian} in a fixed-scale wavelet basis.
To this end, we discretize the continuum theory by projecting its Hamiltonian onto a fixed-scale subspace of~$\sqintS$.
The ground state of the discretized QFT represents the free-field ground state in a fixed-scale basis.
We explain how to select a sufficient number of modes~$N$ for the discretized QFT using the client inputs in~\cref{table:inputsgroundstategen}.
Specifically, we establish sufficiency for~$N$ in terms of the momentum cutoff~$\Lambda$, supplied by the client, such that the magnitude of mean momentum (expectation value of the momentum operator) for a single-particle state in the discretized QFT is no greater than~$\Lambda$.

To represent the ground state of the free theory~\eqref{eq:free_Hamiltonian} over the unit interval with periodic boundaries in a fixed-scale wavelet basis, we project the continuum Hamiltonian onto a scale subspace~$\mathcal{S}_k$ of~$\sqintS$ for some integer~$k$ such that~$2^k\geq 2(2\dbIndex-1)$;
see~\cref{subsubsec:discqft}.
The projected Hamiltonian has the form of the discrete Hamiltonian in~\cref{eq:fixediscale_Hamiltonian} but with the coupling matrix
\begin{equation}
\label{eq:fixedscaleK}
    K^{(k)}_{\sS;\, \ell \ell^\prime}
    := m_0^2 \updelta_{\ell \ell^\prime} - N^2
    \left(
      \Delta^{(2)}_{\ell^\prime - \ell}
    + \Delta^{(2)}_{\ell^\prime - (\ell+N)}
    \right),\quad
    N:= 2^k,
\end{equation}
which are matrix elements of~$\upbm{K}_\sS^{(k)}$.
Here~$N$ is the number of modes for the discretized QFT and~$\Delta^{(2)}_{\ell^\prime - \ell}$~\eqref{eq:derivative_overlaps} are the second-order derivative overlaps;
the second term inside the parentheses comes from the periodic boundary condition.
The projected Hamiltonian is quadratic in the field operators and their conjugate momenta akin to the discrete Hamiltonian in~\cref{eq:fixediscale_Hamiltonian} and, therefore, its ground state is a Gaussian state.
Specifically, the ground state of the projected Hamiltonian is
\begin{equation}
\label{eq:fixedscale_groundstate}
    \ket{\G_\text{scale}^{(k)}}:=
    \left( \dfrac{\det \upbm{A}_\sS^{(k)}}{(2\uppi)^N} \right)^{1/4}
    \int_{\reals^N} \dd[N]{\bm{\phi}}
    \e^{ -\tfrac{1}{4} \bm{\phi}^\T \upbm{A}_\sS^{ (k)} \bm{\phi}} \ket{\bm{\phi}},
    \quad
    \upbm{A}_\sS^{(k)}:= \sqrt{\upbm{K}_\sS^{(k)}},
\end{equation}
where $\upbm{A}_\sS^{(k)}$ is the ground state's ICM as per~\cref{def:continuous_Gaussian}.

We now establish sufficiency for the number of modes~$N$ using the supplied momentum cutoff~$\Lambda$ by the client to calculate the sufficient~$N$ for the discretized QFT.
We first project the momentum operator~(\cref{subsec:discretization}) of the continuum QFT to the same scale subspace~$\mathcal{S}_k$ that the continuum Hamiltonian is projected.
Then we write an expression for a single-particle state whose mean momentum has the maximum magnitude~$\bar{P}_{\max}$.
Next we bound~$\bar{P}_{\max}$ from above by~$\Lambda$.
The sufficient~$N$ saturates this bound.
\cref{prop:sufficient_modes} provides the
established sufficiency for~$N$ with respect to~$\Lambda$
and the largest first-order derivative overlap
$\Delta^{(1)}_{\max} := \max_{\ell} \abs{\Delta^{(1)}_\ell}$,
for~$\Delta^{(1)}_\ell$~\eqref{eq:derivative_overlaps} the first-order derivative overlaps.

\begin{proposition}
\label{prop:sufficient_modes}
For~$\Lambda$ the momentum cutoff and~$\Delta^{(1)}_{\max}$ the largest first-order derivative overlap,
\begin{equation}
\label{eq:sufficientModes}
    N = \floor*{\frac{2\Lambda}{\Delta^{(1)}_{\max}}},
\end{equation}
modes suffices to guarantee that the mean momentum of a single-particle state in the discretized QFT is bounded from above~by~$\Lambda$.
\end{proposition}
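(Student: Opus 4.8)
The plan is to track how the momentum cutoff $\Lambda$ constrains the number of modes $N=2^k$ through the discretized momentum operator $\hat{P}^{(k)}_\s$ of \cref{eq:projected_momentum}. First I would diagonalize the single-particle sector of the fixed-scale free Hamiltonian $\hat{H}^{(k)}_\s$ \eqref{eq:fixediscale_Hamiltonian}. Because we work on the unit interval with periodic boundary conditions, the coupling matrix $\upbm{K}_\sS^{(k)}$ \eqref{eq:fixedscaleK} is circulant, so it is diagonalized by the discrete Fourier transform; its eigenvectors are the plane-wave-like modes $\ket{p}\propto\sum_\ell \e^{2\uppi\upi p\ell/N}\ket{\ell}$ indexed by $p\in\{0,1,\ldots,N-1\}$, and the corresponding single-particle energies are $\omega_p=(\lambda_p)^{1/4}$ with $\lambda_p$ the eigenvalues of $\upbm{K}_\sS^{(k)}$. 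These same plane-wave modes are (up to the usual creation-operator dressing) the single-particle eigenstates of $\hat{H}^{(k)}_\s$.

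Next I would compute the mean momentum $\bar P(p):=\langle p|\hat P^{(k)}_\s|p\rangle$ for such a single-particle state. Since $P^{(k)}_{\ell\ell'}=2^k\Delta^{(1)}_{\ell'-\ell}=N\Delta^{(1)}_{\ell'-\ell}$ is also circulant, it is simultaneously diagonalized by the plane waves, and its eigenvalue on mode $p$ is the DFT of the first-order derivative overlaps, $\bar P(p)=N\sum_{j}\Delta^{(1)}_{j}\e^{2\uppi\upi pj/N}$ (possibly with an additional factor from the conjugate-momentum normalization, which I would pin down from the commutation relations following \cref{eq:CCRs}). The key structural fact is that $\{\Delta^{(1)}_j\}$ is a fixed finite-support sequence independent of $k$ and $N$, so the whole $p$-dependence of the mean momentum is carried by a bounded trigonometric sum; its magnitude is therefore at most $N\sum_j|\Delta^{(1)}_j|$ in the crudest bound, but a sharper and scale-independent estimate replaces $\sum_j|\Delta^{(1)}_j|$ by $\Delta^{(1)}_{\max}=\max_\ell|\Delta^{(1)}_\ell|$ once one accounts for the antisymmetry $\Delta^{(1)}_{-j}=-\Delta^{(1)}_{j}$ of the first-derivative overlaps, which makes the DFT purely imaginary and collapses the relevant sum. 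I would then define $\bar P_{\max}:=\max_p|\bar P(p)|$ and show $\bar P_{\max}\le N\,\Delta^{(1)}_{\max}/2$ (the factor $1/2$ coming from pairing $j$ with $-j$).

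Finally, imposing the cutoff condition $\bar P_{\max}\le\Lambda$ gives $N\,\Delta^{(1)}_{\max}/2\le\Lambda$, i.e. $N\le 2\Lambda/\Delta^{(1)}_{\max}$; the largest admissible integer is $N=\floor{2\Lambda/\Delta^{(1)}_{\max}}$, which is exactly \cref{eq:sufficientModes}. One then checks that this $N$ is still compatible with the standing requirement $N=2^k\ge 2(2\dbIndex-1)$ for the chosen cutoff regime, or notes that the statement is about sufficiency of this $N$ rather than its being a power of two. I expect the main obstacle to be the bookkeeping in the second step: getting the normalization of $\hat\Pi^{(k)}_\s$ and hence the precise constant relating $\bar P(p)$ to the DFT of $\Delta^{(1)}$ exactly right, and justifying that the extremal single-particle state is indeed a single plane-wave mode rather than a superposition (which follows because $\hat P^{(k)}_\s$ and $\hat H^{(k)}_\s$ commute and are simultaneously diagonal in the plane-wave basis, so among single-particle states the mean momentum is maximized on an eigenmode). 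The trigonometric estimate itself and the final floor are routine once those points are settled.
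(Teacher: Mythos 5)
Your route is genuinely different from the paper's: you diagonalize $\upbm{K}_\sS^{(k)}$ and $\upbm{P}^{(k)}$ simultaneously in the plane-wave basis and try to bound the mean momentum of a plane-wave eigenmode, whereas the paper never Fourier-transforms at all. It instead considers the localized two-mode superpositions $\ket{\Psi^{(k)}_{\ell\ell'}}=(\alpha a^{(k)\text\textdagger}_\ell+\beta a^{(k)\text\textdagger}_{\ell'})\ket{\G^{(k)}_\text{scale}}$, for which the mean momentum is exactly $P^{(k)}_{\ell\ell'}\Im(\alpha\beta^\ast)$; maximizing over $\alpha,\beta$ and then over $\ell,\ell'$ gives $\bar P_{\max}=\tfrac{2^k}{2}\Delta^{(1)}_{\max}$ directly, with no trigonometric sum to estimate.

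The problem is that your middle step has a genuine gap. Writing the eigenvalue of the circulant matrix $N\Delta^{(1)}$ on mode $p$ and pairing $j$ with $-j$ using $\Delta^{(1)}_{-j}=-\Delta^{(1)}_j$ gives $2\upi N\sum_{j>0}\Delta^{(1)}_j\sin(2\uppi pj/N)$, whose magnitude is bounded only by $2N\sum_{j>0}\abs{\Delta^{(1)}_j}$. This does not ``collapse'' to $N\Delta^{(1)}_{\max}/2$: for db$\dbIndex$ wavelets with $\dbIndex\geq 2$ there are several nonzero first-order overlaps, so $2\sum_{j>0}\abs{\Delta^{(1)}_j}\geq 2\Delta^{(1)}_{\max}$, a factor of at least four larger than the bound you assert, and the maximum over $p$ of the sine sum genuinely exceeds $\Delta^{(1)}_{\max}/4$ in general. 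Consequently, carried out honestly, your plane-wave analysis yields a larger $\bar P_{\max}$ and hence a \emph{smaller} admissible $N$ than \cref{eq:sufficientModes}; you cannot recover the stated formula this way. (This also quietly exposes that the paper's $\bar P_{\max}$ is a maximum over the restricted class of two-mode superpositions rather than over all single-particle states, since the latter maximum is attained on a plane-wave eigenmode, exactly as you observe.) To repair your argument so that it proves the proposition as stated, you would need to either restrict to the paper's class of states or justify a different relation between the extremal DFT value and $\Delta^{(1)}_{\max}$, neither of which the antisymmetry alone provides.
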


\begin{proof}
Let $a^{(k)\text\textdagger}_\ell$ be the creation operator constructed from the $\ell^\text{th}$ scale-field operator~\eqref{eq:discrete_fields} and its conjugate momentum~\footnote{
Similar to the procedure of constructing creation and annihilation operators from the position and momentum operators of a quantum harmonic oscillator, a set of discrete creation and annihilation operators can be constructed from the discrete field operators and their conjugate momenta in~\cref{eq:discrete_fields}; see~\cite[p.~7]{BP13}.
}.
Acting this operator on the ground state in~\cref{eq:fixedscale_groundstate} creates a single-particle state
with zero-mean momentum whose wavefunction is localized in a compact space of size equal to the support of~$s^{(k)}_\ell$~\cite{BRSS15}.
Single-particle states with finite mean-momentum can be created from a superposition of two zero-mean-momentum single-particle states as
\begin{equation}
    \ket{\Psi^{(k)}_{\ell\ell'}}:=
    \left(\alpha a^{(k)\text\textdagger}_\ell
    + \beta a^{(k)\text\textdagger}_{\ell'} \right)\ket{\G_\text{scale}^{(k)}},    
\end{equation}
with $\alpha, \beta \in \cmplex$ such that
$\abs{\alpha}^2+\abs{\beta}^2 = 1$.
The expectation value of the projected momentum operator~\eqref{eq:projected_momentum}, i.e. the mean momentum, for this state is
\begin{equation}
    \bar{P}:=\ev{\hat{P}^{(k)}}{\Psi^{(k)}_{\ell\ell'}}
    = P^{(k)}_{\ell \ell'} \Im\left(\alpha\beta^\ast\right).
\end{equation}
The magnitude of this expression is maximized
for~$\alpha =\pm\upi \beta =1/\sqrt{2}$.
By this equation and the projected momentum operator~\eqref{eq:projected_momentum}, the maximum magnitude of the mean momentum for a single-particle state is
\begin{equation}
\label{eq:max_momentum}
    \bar{P}_{\max}
    = \frac{2^k}{2} \max_{\ell} \abs{\Delta^{(1)}_\ell}=\frac{2^k}{2}\Delta^{(1)}_{\max}.
\end{equation}
By bounding this expression from above by the momentum cutoff~$\Lambda$, we obtain
\begin{align}
    k = \floor*{\log_2\left(\frac{2\Lambda}{\Delta^{(1)}_{\max}}\right)},
\end{align}
which, by $N=2^k$~\eqref{eq:fixedscaleK}, yields the sufficient number of modes in~\cref{eq:sufficientModes}.
\end{proof}

\noindent
The established sufficiency~\eqref{eq:sufficientModes} for~$N$ is used by the main server to calculate the number of modes for the discretized~QFT.

\subsubsection{Free-field ground state in a multi-scale wavelet basis}
\label{subsubsec:multiscale_groundstate}

We now represent the ground state of the continuum theory~\eqref{eq:free_Hamiltonian} in a multi-scale wavelet basis.
In this case, we project the continuum Hamiltonian onto a subspace of~$\sqintS$ that is a multi-scale decomposition of the scale subspace~$\mathcal{S}_k$~(\cref{subsec:wavelet_bases}).
The projected Hamiltonian~\eqref{eq:multiscale_Hamiltonian} onto a multi-scale subspace is quadratic, similar to the fixed-scale Hamiltonian~\eqref{eq:fixediscale_Hamiltonian}, but involves~both scale- and wavelet-field operators~\eqref{eq:discrete_fields}.

The coupling matrix~$\upbm{K}^{(k)}$ in the multi-scale Hamiltonian~\eqref{eq:multiscale_Hamiltonian} is obtained by a multi-level wavelet transform from the fixed-scale coupling matrix~$\upbm{K}_\sS^{(k)}$~\eqref{eq:fixedscaleK}.
Let~$s_0<k$ be the scale index for the lowest scale, then~$\upbm{K}^{(k)}$ has the block-matrix structure
\begin{equation}
\label{eq:multiscaleK}
\upbm{K}^{(k)}:=
\begin{bmatrix*}[l]
\upbm{K}^{(s_0)}_\sS & 
\upbm{K}^{(s_0,\, s_0)}_\sw &
\cdots &
\upbm{K}^{(s_0,\, k-1)}_\sw \\
\upbm{K}^{(s_0,\, s_0)\T}_\sw &
\upbm{K}^{(s_0,\, s_0)}_\ww   &
\cdots
&\upbm{K}^{(s_0,\, k-1)}_\ww\\
\quad \vdots & \quad \vdots& \ddots &\quad \vdots\\
\upbm{K}^{(s_0,\, k-1)\T}_\sw &
\upbm{K}^{(s_0,\, k-1)\T}_\ww &
\cdots &
\upbm{K}^{(k-1,\, k-1)}_\ww
\end{bmatrix*},
\end{equation}
imposed by the wavelet transform~\cite{BRSS15}.
We select the level of wavelet transform so that the number of modes~$2^{s_0}$ in the lowest scale~$s_0$,
i.e., the number of rows or columns of the top-left block~$\upbm{K}^{(s_0)}_\sS$, is at least~$2(2\dbIndex-1)$ as per~\cref{subsubsec:discqft}.
The ground state of the projected Hamiltonian onto a multi-scale wavelet basis is
\begin{equation}
\label{eq:multiscale_groundstate}
    \ket{\G_\text{wavelet}^{(k)}}:=
    \left( \dfrac{\det \upbm{A}^{(k)}}{(2\uppi)^N} \right)^{1/4}
    \int_{\reals^N} \dd[N]{\bm{\phi}}
    \e^{ -\tfrac{1}{4} \bm{\phi}^\T \upbm{A}^{(k)} \bm{\phi}} \ket{\bm{\phi}},
\end{equation}
which is a continuous Gaussian pure state akin to the state in~\cref{eq:fixedscale_groundstate} but with the ICM
\begin{equation}
\label{eq:multiscale_ICM}
\upbm{A}^{(k)}:=
\begin{bmatrix*}[l]
\upbm{A}^{(s_0)}_\sS & 
\upbm{A}^{(s_0,\, s_0)}_\sw &
\cdots &
\upbm{A}^{(s_0,\, k-1)}_\sw \\
\upbm{A}^{(s_0,\, s_0)\T}_\sw &
\upbm{A}^{(s_0,\, s_0)}_\ww   &
\cdots
&\upbm{A}^{(s_0,\, k-1)}_\ww\\
\quad \vdots & \quad \vdots& \ddots &\quad \vdots\\
\upbm{A}^{(s_0,\, k-1)\T}_\sw &
\upbm{A}^{(s_0,\, k-1)\T}_\ww&
\cdots &
\upbm{A}^{(k-1,\, k-1)}_\ww
\end{bmatrix*}
= \sqrt{\upbm{K}^{(k)}},
\end{equation}
which has the same block-matrix structure as the coupling matrix in~\cref{eq:multiscaleK}.
For convenience, we henceforth refer to the block with subscript `ss' as the ss block.
Similarly, we refer to the blocks with subscript `sw' as the sw blocks and those with subscript `ww' as the ww blocks.

\subsubsection{Discretization of continuous Gaussian pure states}
\label{subsubsec:discrete_GaussianState}

Discretization is essential in obtaining a qubit representation for a continuous quantum state.
Here we explain how we discretize a multi-dimensional continuous Gaussian pure state.
We use the described method to discretize the ground state of the free field theory represented in both fixed- and multi-scale wavelet bases to generate the ground state on a quantum register.
First we define a discrete 1DG pure state over a lattice in~\cref{def:discrete1DG}.
Then we explain how to discretize a multi-dimensional Gaussian pure state.

\begin{definition}(discrete~1DG pure state over a lattice)
\label{def:discrete1DG}
For~$\sigma, \delta\in \reals^+$ and $ m\in \integers^+$, let
$\mathds{L}:=\{j\delta\mid j\in[-2^{m-1},2^{m-1})\cap\integers\}$
be a one-dimensional lattice with~$2^m$ points and lattice spacing~$\delta$, and let~$\tilde{\sigma}:=\sigma/\delta$.
We define the pure state 
\begin{equation}
\label{eq:lattice1DG}
    \ket{\G_\textup{lattice}(\tilde{\sigma}, \delta, m)}
    := \frac1{\tilde{\mathcal{N}}}
    \sum_{j=-2^{m-1}}^{2^{m-1}-1} \delta\, \e^{-\frac{j^2}{4\tilde{\sigma}^2}} \ket{j\delta},\quad
    \tilde{\mathcal{N}}^2:= \delta^2\sum_{j=-2^{m-1}}^{2^{m-1}-1}\e^{-\frac{j^2}{2\tilde{\sigma}^2}},
\end{equation}
for~$\ket{j\delta}$ equally spaced lattice states in one dimension, as the discrete 1DG pure state with standard deviation~$\tilde{\sigma}$ over lattice~$\mathds{L}$.
\end{definition}

We use the discrete 1DG state in~\cref{eq:lattice1DG} as a discrete approximation for the continuous 1DG state~\eqref{eq:continuous1DG} with the standard deviation~$\sigma$.
The lattice parameters, i.e., the lattice spacing and the number of lattice points, are chosen based on two given inputs:
the standard deviation and an error tolerance on the infidelity between the discrete and continuous 1DG states.
In~\cref{subsubsec:1DG_space}, we describe how these two inputs are used to calculate the lattice parameters.

To discretize a continuous multi-dimensional Gaussian pure state~\eqref{eq:continuous_NDG}, first we decompose the state into a tensor product of several continuous 1DG pure states by a basis transformation.
Then we discretize each continuous 1DG pure by a discrete 1DG pure state over a lattice as per~\cref{def:discrete1DG}.
Note that a continuous $N$-dimensional Gaussian pure state~$\ket{\G_N(\upbm{A})}$~\eqref{eq:continuous_NDG} with the ICM $\upbm{A}$ is a linear combination of basis states~$\ket{\bm{x}}:=\ket{x_0} \otimes \cdots \otimes \ket{x_{N-1}}$,
where~$\bm{x}$ is a vector of real numbers.
Let~$\upbm{O}$ be a matrix such that~$\upbm{O}^\T\upbm{AO}$ is a diagonal matrix~$\upbm{D}$.
Then the basis transformation~$\ket{\bm{x}} \mapsto \ket*{\upbm{O}^{-1}\bm{x}}$ yields the continuous Gaussian state with the diagonal ICM~$\upbm{D}$,
which can be decomposed into a tensor product of~$N$ continuous 1DG states;
see~\cref{subsubsec:NDG_space}.

\subsection{Methods}
\label{subsec:methods}

In this subsection, we present our Fourier- and wavelet-based methods for generating an approximation for the free-field ground state on a quantum register.
Both methods are based on a method for generating 1DG states.
We begin in~\cref{subsubsec:1DG_method} by explaining our method for generating a discrete approximation for a continuous 1DG state.
Then we describe the Fourier-based method in~\cref{subsubsec:Fourier_method} and the wavelet-based method in~\cref{subsubsec:wavelet_method}.

\subsubsection{One-dimensional Gaussian-state generation}
\label{subsubsec:1DG_method}

Here we present two methods for generating a discrete approximation for a continuous 1DG state on a quantum register.
First we specify the inputs along with the task in generating a 1DG state. 
Next we describe our discrete approximation for a 1DG state.
Then we explain our first method for a 1DG-state generation.
Our first method is similar to the Kitaev-Webb method~\cite{KW09}.
However, in contrast to the Kitaev-Webb method, which is restricted to 1DG states with an extremely large standard deviation, our method generates 1DG states with any standard deviation.
We finally describe our second method for generating a 1DG state, which is based on a method for performing inequality testing~\cite{SLSB19} on a quantum computer.

We begin by specifying the task in a 1DG-state generation.
To generate a continuous 1DG state~$\ket{\G(\sigma)}$~\eqref{eq:continuous1DG}, we are given two inputs:
(1)~the standard deviation~$\sigma \in \reals^+$ of the 1DG state and (2)~an error tolerance~$\varepsilon_\text{1DG} \in (0,1)$.
The task is to generate an approximate 1DG state~$\ket{\tilde{\G}(\sigma)}$ such that the infidelity~\cite{NC11}
\begin{equation}
\label{eq:infidelity}
\text{infid}\left(\ket{\G(\sigma)},\ket{\tilde{\G}(\sigma)}\right)
:=1-\braket{\G(\sigma)}{\tilde{\G}(\sigma)}
\in[0,1),
\end{equation}
between the approximate and exact states is no greater than~$\varepsilon_\text{1DG}$.
We only consider continuous 1DG states with means of zero ($\mu=0$) as we only need to prepare these states in order to generate an approximation for the ground state of the free QFT~\eqref{eq:free_Hamiltonian}.

We approximate~$\ket{\G(\sigma)}$ by a discrete 1DG pure state~$\ket{\G_\text{lattice}(\tilde{\sigma},\delta,m)}$~\eqref{eq:lattice1DG} over a lattice with~$2^m$ points and lattice spacing~$\delta$ as per~\cref{def:discrete1DG}. 
We select~$m$ and~$\delta$ based on~$\sigma$ and~$\varepsilon_\text{1DG}$ such that
the infidelity between the continuous and discrete 1DG states is at most~$\varepsilon_\text{1DG}$.
Our approximate 1DG state is different from that of the Kitaev-Webb method. 
A continuous 1DG in their method is first approximated by a discrete 1DG state over an infinite lattice with unit spacing as in~\cref{eq:KW_discrete_1DG}.
The discrete 1DG state is then again approximated by the state in~\cref{eq:KW_approx_1DG} to be generated on a quantum register.
Our approximate 1DG state, however, is a discrete 1DG state over a finite lattice with a non-unit lattice spacing.

We now explain our strategy for generating~$\ket{\G_\text{lattice}(\tilde{\sigma},\delta,m)}$~\eqref{eq:lattice1DG} on a quantum register.
Our strategy comprises two steps.
In the first step, we generate the state~$\ket{\G_\text{lattice}(\tilde{\sigma},1,m)}$.
That is to say, we first prepare a discrete 1DG state with the same standard deviation but over a lattice with unit spacing.
This state is a linear combination of basis states~$\ket{j}$, where~$j$ is an~integer;
see~\cref{eq:lattice1DG} with~$\delta=1$.
In the second step, we transform~$\ket{\G_\text{lattice}(\tilde{\sigma},1,m)}$ to~$\ket{\G_\text{lattice}(\tilde{\sigma},\delta,m)}$ by performing the unitary map for which~$\ket{j} \mapsto \ket{j\delta}$ for all~$j$.
Our method for generating~$\ket{\G_\text{lattice}(\tilde{\sigma},1,m)}$ is similar to the Kitaev-Webb method for 1DG-state generation.
We write a recursive decomposition for~$\ket{\G_\text{lattice}(\tilde{\sigma},1,m)}$
analogous to~\cref{eq:KW_recursive_formula} and employ the recursive decomposition to design an iterative algorithm for generating this state.
See~\cref{subsubsec:1DG_generation} for a detailed description of the algorithm.

We now describe our inequality-testing-based method for generating the state~$\ket{\G_\text{lattice}(\tilde{\sigma},1,m)}$~\eqref{eq:lattice1DG}.
To elucidate the method, we write this state
as~$\sum_j g(j)\ket{j}$ with unnormalized amplitude distribution $g(j):=\exp\left(-j^2/4\sigma^2\right)$.
To generate this state by inequality testing, first we prepare a quantum state with amplitude according to the value of~$j$ rounded down to the nearest power of~$2$.
Specifically, we first prepare the state $\sum_j g_\text{round}(j) \ket{j}$ with unnormalized amplitude distribution $g_\text{round}(j):=\exp\left(2^{2\floor{\log_2j}}/4\sigma^2\right)$ for all $j \neq 0$ and $g_\text{round}(0):=1$.
The amplitude distributions~$g(j)$ and~$g_\text{round}(j)$ are shown in~\cref{fig:1DG_by_ineq_testing} by blue and orange points, respectively;
these distributions are only shown for non-negative~$j$ for simplicity.

\begin{figure}[htb]
    \centering
    \includegraphics[width=.445\textwidth]{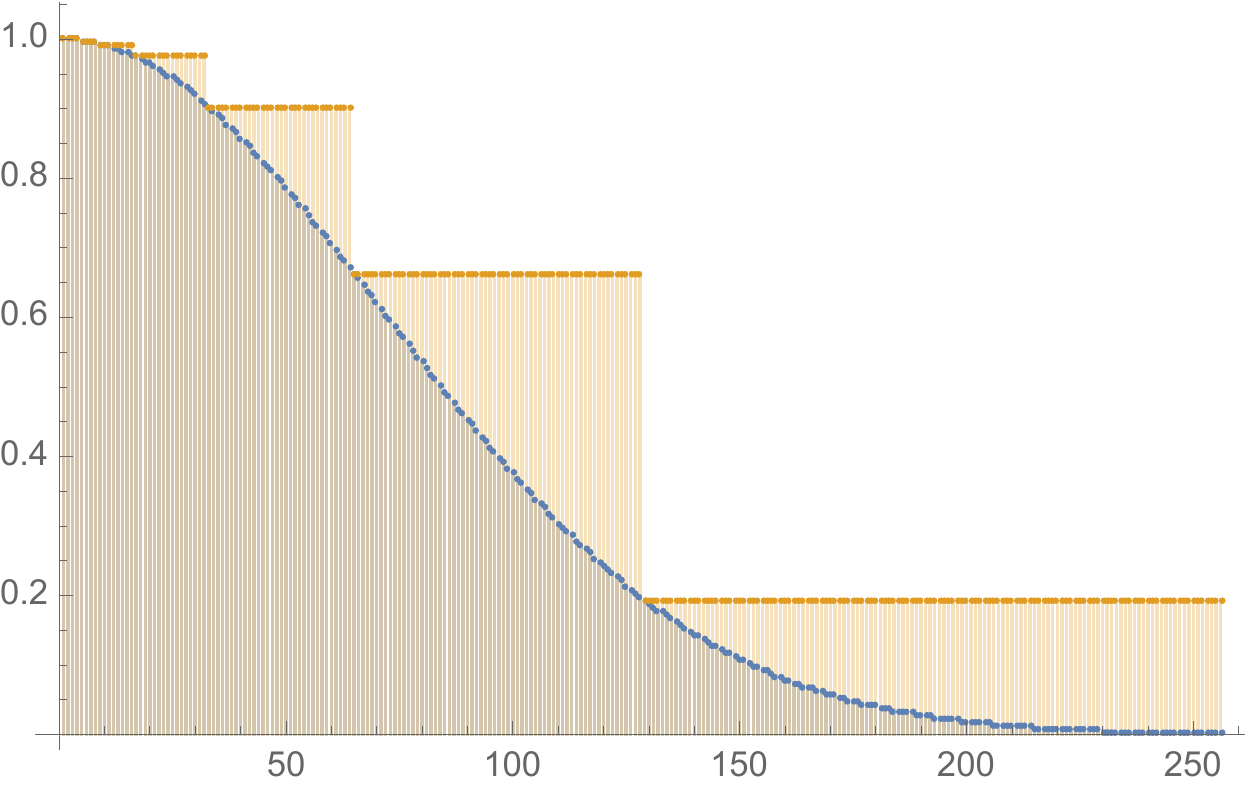}
    \caption[Inequality test for 1DG-state generation]{
    Illustration of state-generation steps before performing an inequality test.
    First we prepare a state with amplitudes according to the orange points.
    Then we test an inequality against the blue points.
    The success probability is at least about~$70\%$.}
    \label{fig:1DG_by_ineq_testing}
\end{figure}

Once the state~$\sum_j g_\text{round}(j) \ket{j}$ is generated, we then coherently compute an approximation for the ratio of the amplitudes $r_j:=g_\text{round}(j)/g(j)$ into a scratch register and prepare a reference quantum register in uniform superposition.
Next we perform an inequality test between the value encoded in the scratch register and the value encoded in the reference register,
and write the result into a flag qubit.
We then erase the reference register and measure the flag qubit.
If the post-measurement state of the flag qubit is $\ket{0}$, then the state generated on the first register is the desired 1DG state.
See~\cref{subsubsec:inequality_testing} for a detailed description of the inequality-testing-based algorithm for generating a 1DG state.

\subsubsection{Fourier-based method for ground-state generation}
\label{subsubsec:Fourier_method}

We now present our Fourier-based method for generating a discrete approximation for the free-field ground state on a quantum register.
First we explain the rationale for specifying the task in this method and describe the task.
Then we explain our strategy for generating the approximate ground in the Fourier-based method.

In the Fourier-based method, we discretize the continuum QFT in a fixed-scale wavelet basis and use the ground state of the discretized QFT as an approximation for that of the continuum theory.
The discretized-QFT ground state in this method is a continuous Gaussian state whose ICM~$\upbm{A}_\sS$~\eqref{eq:fixedscale_groundstate} is a circulant matrix.
The ICM, which fully describes the discretized-QFT ground state, is specified by three parameters:
wavelet index~$\dbIndex \in \integers^+$, free-QFT mass~$ m_0 \in \reals^+$ and the number of modes~$N \in \integers^+$ of the discretized QFT.

For the  Fourier-based method to generate the ground state, we are given an error tolerance~$\varepsilon_\text{vac} \in (0,1)$ along with the parameters that specify the ground-state ICM. 
The task is to generate an approximation for the ground state~$\ket*{\G_\text{scale}}$~\eqref{eq:fixedscale_groundstate} of the discretized QFT in a fixed-scale basis on a quantum register. 
The infidelity between the approximate and exact states is required to be no greater than the error tolerance~$\varepsilon_\text{vac}$.

Our strategy for generating an approximate ground state for the discretized QFT is as follows.
First we construct a classical algorithm for computing the eigenvalues $\bm{\lambda}:=(\lambda_0,\ldots,\lambda_{N-1})$ of the ground-state ICM; eigenvalues are diagonals of
 the diagonal matrix~$\bm{\Lambda}$ in the spectral decomposition of the ICM.
In this algorithm, we exploit the circulant structure of the ICM and compute~$\bm{\lambda}$ by a discrete Fourier transform~\cite[p.~100]{HJ12}.
Next we use~$\bm{\lambda}$ as a classical input to design a quantum circuit for generating an approximation for the state~$\ket{\G_N(\bm{\Lambda})}$~\eqref{eq:continuous_NDG}, i.e., the continuous Gaussian state whose ICM is the diagonal~matrix~$\bm{\Lambda}$.
Finally, we perform a basis transformation by a quantum fast Fourier transform~(QFFT) to map the state~$\ket{\G_N(\bm{\Lambda})}$ to the ground state~$\ket{\G_N(\upbm{A}_\sS)}$.
The state~$\ket{\G_N(\bm{\Lambda})}$ is a linear combination of basis states~$\ket{\bm{x}}:=\ket{x_1} \otimes \cdots \otimes \ket{x_N}$,
where~$\bm{x}=(x_1,\ldots, x_N)$ is a vector of real numbers;
see~\cref{eq:continuous_NDG}.
The QFFT implements the map~$\ket{\bm{x}} \mapsto \ket{\upbm{F}\bm{x}}$ on a quantum computer, where~$\upbm{F}$ is the transformation matrix for the discrete Fourier transform.

To design a quantum circuit to preparing an approximation for a continuous $N$DG state~$\ket{\G_N(\upbm{D})}$ with a diagonal inverse-covariance matrix~$\upbm{D}:=\text{diag}(d_0,\ldots,d_{N-1})$,
first we decompose the state
as~$\ket{\G(\sigma_0)} \otimes \cdots \otimes \ket{\G(\sigma_{N-1})}$,
where~$\ket{\G(\sigma_\ell)}$~\eqref{eq:continuous1DG}~is a continuous 1DG state with the standard deviation~$\sigma_\ell:=1/\sqrt{d_\ell}$.
By the method described in~\cref{subsubsec:1DG_method}, we then design a quantum circuit for generating a discrete approximation for each 1DG state.
The combined output of all these quantum circuits is an approximation for the continuous Gaussian state with the diagonal ICM~$\upbm{D}$.

\subsubsection{Wavelet-based method for ground-state generation}
\label{subsubsec:wavelet_method}

Here we present our wavelet-based method for generating a discrete approximation for the free-field ground state represented in a multi-scale wavelet basis.
Similar to the Fourier-based method, we first provide the rationale for specifying the task in the wavelet-based method and describe the task.
Then we explain the strategy for generating the approximate ground~state.

In the wavelet-based method, we discretize the continuum QFT in a multi-scale wavelet basis and use the ground state of the discretized QFT as an approximation for that of the continuum theory.
The ground state~\eqref{eq:multiscale_groundstate} of the discretized QFT in this method is also a continuous Gaussian state.
The same parameters specify the ground-state ICM here as in the Fourier-based method.
For the wavelet-based method, we are given the same classical inputs as the Fourier-based method.
The task, however, is to generate an approximation for the ground
state~$\ket*{\G_\text{wavelet}}$~\eqref{eq:multiscale_groundstate} of the discretized QFT in a multi-scale wavelet basis such that the infidelity between the approximate and exact states is no greater than the error tolerance~$\varepsilon_\text{vac}$.

The strategy for generating an approximate ground state in the wavelet-based method is as follows.
The ground-state ICM in a multi-scale wavelet basis has many near-zero elements.
We approximate this matrix by replacing its near-zero elements with exactly zero.
Specifically, we replace all matrix elements whose magnitude are less than the threshold value~$\varepsilon_\text{th}=m_0\varepsilon_\text{vac} N^{-3/2}$ with exactly zero.
This approximation enables a fingerlike sparse structure~\cite{Bey92} for the ground-state ICM with a quasilinear number of nonzero elements; see~\cref{fig:truncICM}.

We exploit the fingerlike structure and perform the UDU matrix decomposition of the approximate ICM in a quasilinear time.
In the UDU decomposition, we decompose the fingerlike sparse matrix~$\tilde{\upbm{A}}$ as the product of an upper unit-triangular matrix~$\upbm{U}$, a diagonal matrix~$\upbm{D}$ and transpose of~$\upbm{U}$.
We compute diagonals of~$\upbm{D}$ and shear elements, i.e., nonzero off-diagonal elements, of~$\upbm{U}$ by a classical algorithm, and use them as classical inputs to construct a quantum circuit for generating an approximation for the free-field ground state.

To generate an approximate ground state, first we construct a quantum circuit to preparing an approximation for the $N$DG state~$\ket{\G_N(\upbm{D})}$ whose ICM is the diagonal matrix~$\upbm{D}$ in the UDU decomposition.
The diagonals of~$\upbm{D}$ are used as classical inputs, and the quantum circuit is constructed by the method described in~\cref{subsubsec:Fourier_method}.
Then we transform~$\ket{\G_N(\upbm{D})}$ into the ground state by performing a quantum shear transform~(QST) on a quantum computer;
the shear elements of~$\upbm{U}$ are classical inputs for the QST.
Akin to the QFFT in the Fourier-based method, the QST executes a basis transformation.
Specifically, the QST implements the map~$\ket{\bm{x}} \mapsto \ket{\upbm{S}\bm{x}}$ on a quantum computer, where the shear-transform matrix~$\upbm{S}$ is inverse-transpose of~$\upbm{U}$.


\section{Results}
\label{sec:results}

In this section, we present our main results.
We first construct a high-level description of our ground-state-generation algorithms in~\cref{subsec:highlevel_description}.
Next we discuss the number of required qubits for representing an approximation for the discretized-QFT ground state in the Fourier- and wavelet-based methods in~\cref{subsec:space_requirement}.
Our algorithms have classical preprocessing and quantum routine. 
We present the classical preprocessing of our algorithms in~\cref{subsec:classical_preprocessing} and the quantum routine in~\cref{subsec:quantum_algorithms}.
Then we analyze the runtime of our algorithms in~\cref{subsect:complexity_analysis}.
In~\cref{subsec:lowebound}, we establish a lower bound on the gate complexity for ground-state generation.
Finally, in~\cref{subsec:Fouriervswavelet}, we compare the Fourier vs wavelet approach for ground-state generation.

\subsection{High-level description of our algorithms for ground-state generation}
\label{subsec:highlevel_description}

We begin with a high-level description of our two algorithms for ground-state generation.
The first of these algorithms, described in \cref{subsubsec:highLevel_FBA},
is based on the use of a discrete Fourier transform,
and we refer to it as the Fourier-based algorithm.
The second algorithm, described in~\cref{subsubsec:highLevel_WBA},
is based on the use of a wavelet transform, and we call it the wavelet-based algorithm.

Both algorithms have a similar structure that we now explain.
The algorithms proceed in two stages:
(1)~prepare several independent one-dimensional Gaussian states, and
(2)~perform a collection of arithmetic operations on those Gaussian states.
Both stages require a certain amount of classical information,
much of which is not provided directly by the main server~(\cref{subsubsec:framework}) but
requires a non-negligible amount of computation to produce.
We therefore must analyze not only the quantum complexity
but also the classical complexity of our algorithms in order to
ensure that the resulting procedures are indeed quasilinear in
the number of modes of the discretized QFT.
We refer to the classical part of our algorithm as the
`classical preprocessing' step, as it must be carried out prior
to the execution of our quantum algorithms.
In our descriptions of the Fourier-based and wavelet-based
algorithms, we therefore distinguish between the classical 
preprocessing procedure and the quantum algorithm~itself.

\subsubsection{High-level description of Fourier-based algorithm}
\label{subsubsec:highLevel_FBA}

Here we construct a high-level description of the Fourier-based algorithm for ground-state generation. 
We begin by explaining this algorithm's classical preprocessing and then describe the quantum routine. 
Finally, we present the algorithm by pseudocode to elucidate the inputs, output and procedure of the Fourier-based algorithm.

The Fourier-based algorithm generates an approximation for the free-field ground state~\eqref{eq:fixedscale_groundstate} represented in a fixed-scale~basis.
To generate this state, first we generate~$N$ discrete 1DG states over a lattice with spacing~$\delta$.
The task in the classical preprocessing of the Fourier-based algorithm is to compute the standard deviations~$\tilde{\bm{\upsigma}}:=(\tilde{\sigma}_0,\ldots,\tilde{\sigma}_{N-1})$ for the discrete 1DG states as per~\cref{def:discrete1DG}, and the lattice spacing~$\delta$;
these are the needed parameters for the quantum routine in the Fourier-based algorithm.
To compute~$\tilde{\bm{\upsigma}}$ and~$\delta$, first we compute the derivative overlaps~\eqref{eq:derivative_overlaps} for the second-order derivative, i.e., Laplace, operator.
We then use these derivative overlaps and the bare mass~$m_0$ to compute the eigenvalues~$\bm{\lambda}:=(\lambda_0,\ldots,\lambda_{N-1})$ of the ground-state ICM.
Having~$\bm{\lambda}$, we compute the lattice spacing as~$\delta=1/\sqrt{\lambda_{\max}}$ and the 1DG standard deviations as~$\tilde{\bm{\upsigma}}=1/(\delta\sqrt{\bm{\uplambda}})$.
Figure~\ref{fig:FBA}~(top) shows a schematic description of the classical preprocessing in the Fourier-based algorithm.

In the quantum routine, we use outputs of the classical preprocessing to generate an approximation for the ground state. 
For each component of~$\tilde{\bm{\upsigma}}$ and the lattice spacing~$\delta$, we generate a discrete 1DG state~\eqref{eq:lattice1DG} corresponding to these inputs on a quantum register. 
Then we execute a quantum fast Fourier transform~(QFFT).
The QFFT performs a discrete Fourier transform by a collection of arithmetic operations on the 1DG states.
The resulting state is an approximation for the ground state~\eqref{eq:fixedscale_groundstate} represented in a fixed-scale basis.
Figure~\ref{fig:FBA}~(bottom) shows a schematic description of the quantum routine in the Fourier-based algorithm.
For clarity, we present the inputs, outputs and procedure of this algorithm as pseudocode in~\cref{alg:FBA}.

\begin{figure}[htb]
\centering
\includegraphics[width=.9\textwidth]{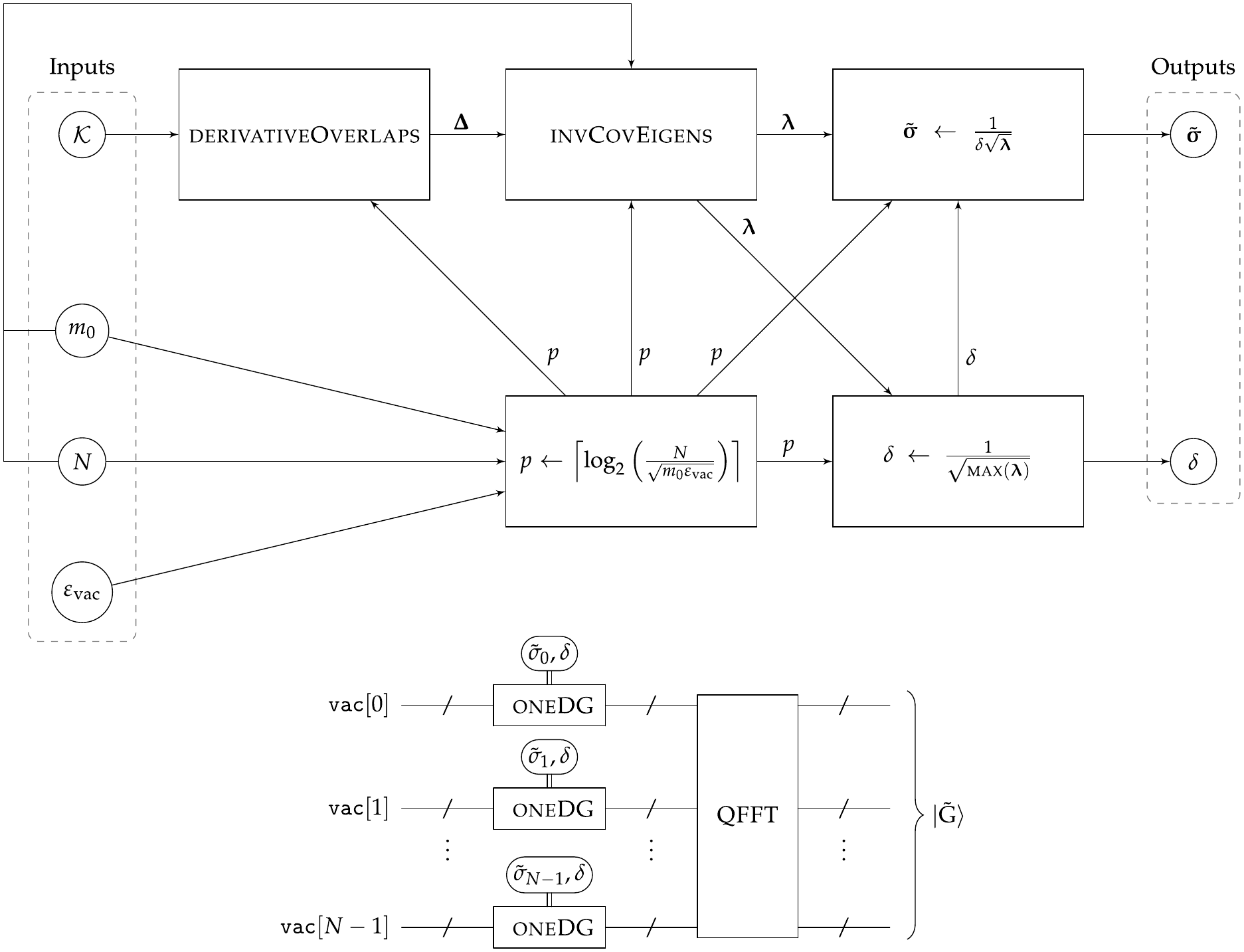}
\caption[Description of the Fourier-based algorithm for ground-state generation]{
    Description of Fourier-based algorithm for ground-state generation.
    Top:~classical preprocessing.
    Inputs: wavelet index~$\dbIndex$, mass~$m_0$,
    number of modes~$N$ in the discretized QFT and error tolerance~$\varepsilon_\text{vac}$ for the output state.
    Each box represents a process; incoming arrows identify the inputs, and outgoing arrows identify the outputs of the process.
    Outputs: standard deviation~$\tilde{\bm{\upsigma}}$ of the approximate one-dimensional Gaussian~(1DG) states and lattice spacing~$\delta$.
    Outputs of intermediate processes:
    working precision~$p$,
    second-order derivative overlaps~$\bm\Delta$ and eigenvalues~$\bm\uplambda$ of the ground-state ICM.
    Bottom:~quantum routine.
    Double lines indicate classical inputs to the quantum routine.
    \vac\ is a quantum register with~$N$ cells, and each cell comprises~$p$ qubits;
    \qwire\ represents multiple qubits.
    Each \textsc{oneDG} accepts~$\delta$ and one component of~$\tilde{\bm{\upsigma}}$ as classical inputs and generates an approximate 1DG state corresponding to these inputs on once cell of \vac.
    The quantum fast Fourier transform~(\textsc{QFFT}) acts collectively on the set of approximate 1DG states and transforms them into the approximate ground state~$\ket*{\tilde{\G}}$.
\label{fig:FBA}
}
\end{figure}

\begin{algorithm}[H]
  \caption{Fourier-based algorithm for ground-state generation}
  \label{alg:FBA}
  \begin{algorithmic}[1]
    \Require{
    \Statex $\dbIndex \in \integers_{\geq 3}$
    \Comment{wavelet index}
    \Statex $m_0\in\reals^+$
    \Comment{free-QFT mass}
    \Statex $N \in \integers_{\geq 2(2\dbIndex-1)}$
    \Comment{number of modes}
    \Statex $\varepsilon_\text{vac} \in (0,1)$
    \Comment{error tolerance for output state}
            }
  \Ensure
    \Statex $\ket{\tilde{\G}} \in \mathscr{H}_2^{N\times \ceil*{\log_2{\left(N/\sqrt{m_0\varepsilon_\text{vac}}\right)}} }$ 
    \Comment{$\left(N\times \ceil*{\log_2{\left(N/\sqrt{m_0\varepsilon_\text{vac}}\right)}}\right)$-qubit approximate ground state}
  
\Function{fourierBasedGSG}{$\dbIndex, m_0, N,\varepsilon_\text{vac}$}
       \Statex \textbf{Classical preprocessing}
       \State $\integers^+ \ni p \gets
       \ceil*{\log_2{
       \left(N/\sqrt{m_0\varepsilon_\text{vac}}\right)}
       }$
       \Comment{computes working precision~$p$}
       \State $\reals^{2\dbIndex-1} \ni \bm{\Delta} \gets \textsc{derivativeOverlaps}(\dbIndex,p)$
       \Comment{computes derivative overlaps~\eqref{eq:derivative_overlaps} for Laplace operator in index-\dbIndex\ wavelet basis}
       \State $ \reals^N \ni \bm{\uplambda} \gets \textsc{invCovEigens}(\dbIndex, m_0, N, \bm{\Delta}, p)$
       \Comment{computes eigenvalues of the ICM by~\cref{alg:invCovEigens}}      
       \State $\reals^+ \ni \delta \gets 1/\sqrt{\textsc{max}(\bm{\uplambda})}$
       \Comment{computes lattice spacing~$\delta$}
       \State $\reals^N \ni \tilde{\bm{\upsigma}} \gets 1/(\delta\sqrt{\bm{\uplambda}})$
       \Comment{computes standard deviation~$\tilde{\bm{\upsigma}}$ of approximate 1DG states}
    \Statex \textbf{Quantum routine}
       \For{$\ell \gets 0$ to $N-1$}
        \State $\mathscr{H}_2^p \ni \vac[\ell] \gets  \textsc{oneDG}~(\tilde{\sigma}_\ell, \delta)\ket{0^p}$
        \Comment{generates $p$-qubit approximate 1DG state
                with standard deviation $\tilde{\sigma}_\ell$ by Algorithm~\ref{alg:1DG}}
       \EndFor
    \State $\mathscr{H}_2^{N\times p}\ni \vac
        \gets \textsc{QFFT}\left(N, \bigotimes\limits_{\ell=0 }^{N-1}\vac[\ell]\right)$
    \Comment{transforms the 1DG~states into the approximate ground state by~\cref{alg:QFHT}}
    \State \Yield \vac
    \Comment{
    we use \Yield instead of \Return for quantum algorithms as the output is a state generated on a quantum register}
\EndFunction
     \end{algorithmic}
\end{algorithm}

\subsubsection{High-level description of wavelet-based algorithm}
\label{subsubsec:highLevel_WBA}

We now present a high-level description of the wavelet-based algorithm for ground-state generation.
This algorithm, similar to the Fourier-based algorithm, has classical preprocessing and quantum routine.
We begin by describing the classical preprocessing and proceed with explaining the quantum routine. 
We finally present the wavelet-based algorithm as pseudocode to specify the algorithm's inputs, output and procedure.

In classical preprocessing of the wavelet-based algorithm, we compute the required inputs for the quantum routine.
These inputs are the lattice spacing~$\delta$, the vector~$\tilde{\bm{\upsigma}}$ of standard deviations for the discrete 1DG states and shear elements of the upper unit-triangular matrix~$\upbm{U}$ in the UDU decomposition of the approximate ICM~(\cref{subsubsec:wavelet_method}) for the free-field ground state~\eqref{eq:multiscale_groundstate} represented in a multi-scale wavelet basis.
The first two inputs are needed to generate the discrete 1DG states, and the last input is needed to perform the basis transformation.

The needed classical inputs for quantum routine of the wavelet-based algorithm are computed as follows.
First we compute the second-order derivative overlaps~\eqref{eq:derivative_overlaps}.
These derivative overlaps are then used to compute the unique matrix elements of the ground-state ICM.
The ICM~\eqref{eq:multiscale_ICM} is a block matrix, and each block is a circulant matrix.
The unique matrix elements are, therefore, the circuit row of each block in the block matrix.
Next we use the circulant rows, denoted by~$\upbm{a}$, to compute the vector~$\upbm{d}$ of diagonals in the diagonal matrix~$\upbm{D}$ and the shear elements~$\upbm{S}$ of the upper unit-triangular matrix~$\upbm{U}$ in the UDU decomposition of the approximate ICM.
Having~$\upbm{d}$, we compute the lattice spacing as~$\delta=1/\sqrt{d_{\max}}$ and the vector of standard deviations as~$\tilde{\bm{\upsigma}}=1/(\delta\sqrt{\upbm{d}})$,
where~$d_{\max}$ is the largest element of~$\upbm{d}$.
Figure~\ref{fig:WBA}~(top) shows a schematic description of the classical preprocessing in the wavelet-based algorithm.

The quantum routine of the wavelet-based algorithm proceeds as follows.
Similar to the quantum routine of the Fourier-based algorithm, first we generate a discrete 1DG state~\eqref{eq:lattice1DG} for each component of~$\tilde{\bm{\upsigma}}$ and the lattice spacing~$\delta$.
Using the shear elements~$\upbm{S}$ as classical input, we then perform a basis transformation on the discrete 1DG states by executing a quantum shear transform~(QST).
The QST performs a collection of arithmetic operations to map the discrete 1DG states to an approximation for the ground state~\eqref{eq:multiscale_groundstate} of the discretized QFT in a multi-scale wavelet basis.
A schematic description of the quantum routine in the wavelet-based algorithm is shown in~\cref{fig:WBA}~(bottom).
The pseudocode in Algorithm~\ref{alg:WBA} specifies the inputs, output and procedure of the wavelet-based algorithm.

\begin{algorithm}[H]
  \caption{Wavelet-based algorithm for ground-state generation}
  \label{alg:WBA}
  \begin{algorithmic}[1]
    \Require{
    \Statex $\dbIndex \in \integers_{\geq 3}$
    \Comment{wavelet index}
    \Statex $m_0\in\reals^+$
    \Comment{free-QFT mass}
    \Statex $N \in \integers_{\geq 2(2\dbIndex-1)}$
    \Comment{number of modes}
    \Statex $\varepsilon_\text{vac}\in (0,1)$
    \Comment{error tolerance for output state}
            }
  \Ensure
    \Statex $\ket{\tilde{\G}} \in \mathscr{H}_2^{N\times \ceil*{\log_2{\left(N/\sqrt{m_0\varepsilon_\text{vac}}\right)}} }$ 
    \Comment{$\left(N\times \ceil*{\log_2{\left(N/\sqrt{m_0\varepsilon_\text{vac}}\right)}}\right)$-qubit approximate ground state}
  
\Function{waveletBasedGSG}{$\dbIndex, m_0, N,\varepsilon_\text{vac}$}
    \Statex \textbf{Classical preprocessing}
    \State $\integers^+ \ni p \gets \ceil*{\log_2{\left(N/\sqrt{m_0\varepsilon_\text{vac}}\right)}}$
       \Comment{computes working precision~$p$}
       \State $\reals^{2\dbIndex-1} \ni \bm{\Delta} \gets \textsc{derivativeOverlaps}(\dbIndex,p)$
       \Comment{computes the derivative overlaps~\eqref{eq:derivative_overlaps} for Laplace operator by Algorithm~\ref{alg:derivativeOverlaps}}
       \State $\upbm{a}:=\cBraket{\left.
  \upbm{a}^{(s_0)}_\sS\in \reals^{2^{s_0}},
  \upbm{a}^{(s_0, c)}_\sw\in \reals^{2^c},
  \upbm{a}^{(r, c)}_\ww\in \reals^{2^{c-r}}
  \right\vert
  s_0\leq r \leq c < k}
  \gets \textsc{invCovCircRows} \left(\dbIndex, m_0, N, \bm{\Delta}, \varepsilon_\text{vac}, p\right)$
  \Comment{computes circulant row in main and upper-diagonal blocks of multi-scale ICM~\eqref{eq:multiscale_ICM} by~\cref{alg:invCovCircRows}; here $s_0:=\ceil*{\log_2 (4\dbIndex-2)}$ and $k:= \log_2 N$}
    \State $\left\{\reals^N \ni \upbm{d}, \reals^{\softO{N}}\ni\upbm{S}\right\}
    \gets \textsc{invCovUDU}
    \left(m_0, N, \varepsilon_\text{vac},
    \upbm{a}
    \right)$
   \Comment{computes diagonals $\upbm{d}$ of $\upbm{D}$ and shear elements $\upbm{S}$ of~$\upbm{U}$ in UDU decomposition of approximate ICM by~\cref{alg:invCovUDU}.
   Here~$\softO{N}$ denotes a quasilinear number; see~\cref{subsubsec:UDUDecomp} and~\cref{alg:invCovUDU} for details
   }
   \State $\reals^+ \ni \delta \gets 1/\sqrt{\textsc{max}(\upbm{d})}$
       \Comment{computes lattice spacing~$\delta$}
       \State $\reals^N \ni \tilde{\bm{\upsigma}} \gets 1/(\delta\sqrt{\upbm{d}})$
       \Comment{computes standard deviation~$\tilde{\bm{\upsigma}}$ of approximate 1DG states}
    \Statex \textbf{Quantum routine}
       \For{$\ell \gets 0$ to $N-1$}
        \State $\mathscr{H}_2^p \ni \vac[\ell] \gets  \textsc{oneDG}~(\tilde{\sigma}_\ell, \delta)\ket{0^p}$
        \Comment{generates $p$-qubit approximate 1DG~state
                with standard deviation $\tilde{\sigma}_\ell$ by~\cref{alg:1DG}}
       \EndFor
    \State $\mathscr{H}_2^{N\times p}\ni \vac
        \gets \textsc{QST}\left(N, \upbm{S}, \bigotimes\limits_{\ell=0 }^{N-1}\vac[\ell]\right)$
    \Comment{transforms the 1DG~states into the approximate ground state by~\cref{alg:QST}}
    \State \Yield \vac
\EndFunction
     \end{algorithmic}
\end{algorithm}

\begin{figure}[htb]
\centering
\includegraphics[width=.98\textwidth]{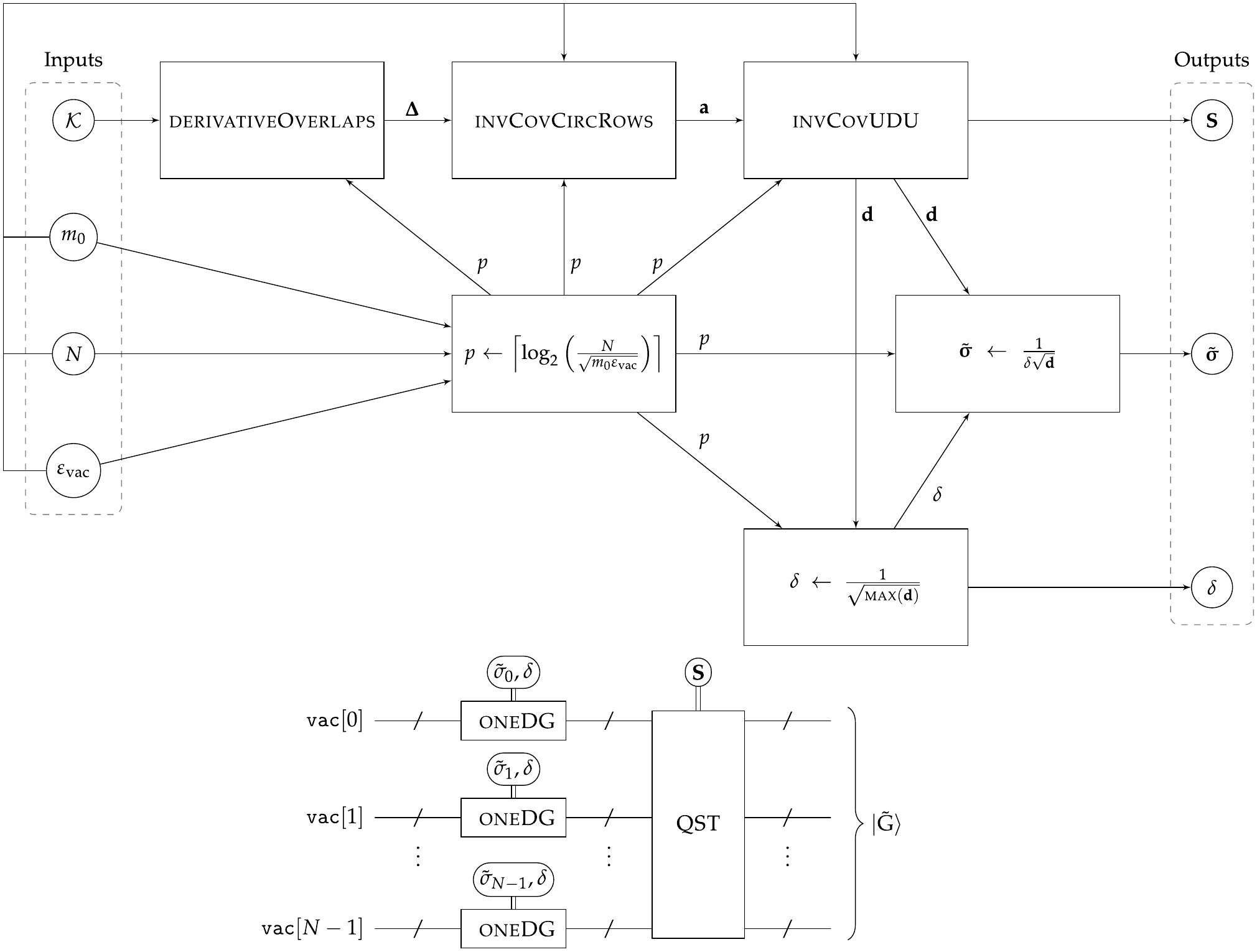}
\caption[Description of the wavelet-based algorithm for ground-state generation]{
    Description of wavelet-based algorithm for ground-state generation.
    Top:~classical preprocessing.
    Inputs are the same as the inputs to the Fourier-based algorithm
    Outputs: shear elements~$\upbm{S}$ of the upper unit-triangular matrix in the UDU decomposition of the approximate ICM,
    standard deviation~$\tilde{\bm{\upsigma}}$ of the approximate one-dimensional Gaussian~(1DG) states and lattice spacing~$\delta$. 
    Outputs of intermediate processes:
    working precision~$p$,
    second-order derivative overlaps~$\bm\Delta$,
    circulant rows~$\upbm{a}$ of the upper-triangular blocks in the ICM and diagonals~$\upbm{d}$ of the diagonal matrix in the UDU decomposition.
    Bottom:~quantum routine.
    Double lines indicate classical inputs to the quantum~routine.
    \vac\ is a quantum register with~$N$ cells, and each cell comprises~$p$ qubits;
    \qwire\ represents multiple qubits.
    Each \textsc{oneDG} accepts~$\delta$ and one component of~$\tilde{\bm{\upsigma}}$ as classical inputs and generates an approximate 1DG state corresponding to these inputs on one cell of \vac.
    The quantum shear transform~(\textsc{QST}) acts collectively on the set of approximate 1DG states and transforms them into the approximate ground state~$\ket{\tilde{\G}}$.
\label{fig:WBA}
}
\end{figure}

\subsection{Space requirement to represent the ground state}
\label{subsec:space_requirement}

In this subsection, we determine how the number of qubits required to represent an approximation for the free-field ground state in both Fourier- and wavelet-based methods scales with respect to the basis-independent input parameters in~\cref{table:inputsgroundstategen} specified by the main server for ground-state generation.
The wavelet index~\dbIndex\ in~\cref{table:inputsgroundstategen} is basis-dependent and is typically chosen to be a small constant number~\cite{BRSS15,SB16,BP13}.
Hence we exclude this parameter in our analysis.

We begin in~\cref{subsubsec:1DG_space} by discussing the number of required qubits to represent a discrete approximation for a continuous 1DG state~\eqref{eq:continuous1DG}.
Having determined the space required to represent a 1DG state, we then analyze the space required to represent an approximation for a continuous multi-dimensional Gaussian state in~\cref{subsubsec:NDG_space}. 
Our result on the space requirement for representing a multi-dimensional Gaussian state allows us to show how the space required to represent the free-field ground state scales in terms of the inputs specified by the main server.

\subsubsection{Space requirement to represent a one-dimensional Gaussian state}
\label{subsubsec:1DG_space}

Here we determine the minimal number of qubits required to represent an approximation for a continuous 1DG state~\eqref{eq:continuous1DG} in terms of its standard deviation and an error tolerance on the infidelity~\eqref{eq:infidelity} between the approximate and continuous states.
First we explain how we approximate a continuous 1DG state.
Then we establish a bound on the infidelity between our approximate 1DG state and the continuous 1DG state.
Having this bound, we then show that the number of qubits needed~to represent the approximate 1DG state is logarithmic in the ratio of the standard deviation to the square root of the error tolerance.

We begin by discussing how we approximate a continuous 1DG pure state.
We are given the standard deviation~$\sigma\in \reals^+$ of a continuous 1DG pure state and an error tolerance~$\varepsilon_\oneDG \in (0,1)$.
We approximate the continuous 1DG pure state by a discrete 1DG pure state over a one-dimensional lattice with~$2^m$ points and lattice spacing~$\delta$ as per~\cref{def:discrete1DG}.
We determine~$\delta$ and~$m$ in terms of~$\sigma$ and~$\varepsilon_\oneDG$ such that the infidelity~\eqref{eq:infidelity} between the discrete and continuous 1DG states is at most~$\varepsilon_\oneDG$.
We show, in~\cref{prop:1DG_fidelity}, how to determine $\delta$ and $m$ in terms of $\sigma$ and $\varepsilon_\oneDG$.

\begin{proposition}
\label{prop:1DG_fidelity}
Let $\ket{\G(\sigma)}$~\eqref{eq:continuous1DG} be a continuous 1DG state with the standard deviation~$\sigma \in \reals^+$ and
let~$\ket{\G_\textup{lattice}(\tilde{\sigma}, \delta, m)}$~\eqref{eq:lattice1DG} be a discrete 1DG state with standard deviation~$\tilde{\sigma}:=\sigma/\delta$ over a one-dimensional lattice with~$2^m$ points and lattice spacing~$\delta \in \reals^+$.
For~$\varepsilon_\oneDG \in (0,1)$, if
\begin{equation}
\label{eq:lattice_paramtrs}
    \delta\leq \min(1/2,\sigma) \quad \text{and} \quad
    2^m \delta\geq 2\sigma/\sqrt{\varepsilon_\oneDG},
\end{equation}
then the infidelity~\eqref{eq:infidelity} between the discrete and continuous 1DG states is bounded above by~$\varepsilon_\oneDG$.
\end{proposition}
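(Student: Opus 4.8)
\emph{Proof plan.} The quantity to control is the fidelity $F := \braket{\G(\sigma)}{\G_\text{lattice}(\tilde\sigma,\delta,m)}$, equivalently $1-F$. Since both states are normalized and all amplitudes are nonnegative, $F$ is real and $1-F = \tfrac12\,\bigl\lVert\,\ket{\G(\sigma)} - \ket{\G_\text{lattice}(\tilde\sigma,\delta,m)}\,\bigr\rVert^2$, so the problem reduces to estimating an $L^2$ distance between the continuous Gaussian profile $x\mapsto e^{-x^2/4\sigma^2}/\mathcal N$ and the embedding of the lattice state obtained by sending each $\ket{j\delta}$ to the normalized indicator of the width-$\delta$ cell centred at $j\delta$. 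Two error sources appear: (i)~\emph{truncation}, because the lattice only covers $[-2^{m-1}\delta,\,2^{m-1}\delta)$ while the Gaussian has full support; and (ii)~\emph{discretization}, because inside the window the piecewise-constant profile built from $e^{-j^2/4\tilde\sigma^2}$ (the Gaussian sampled on the lattice and renormalized) differs from the smooth profile, and $\tilde{\mathcal N}$ differs from $\mathcal N$. I would bound these two contributions separately and combine them with $1-\sqrt{1-t}\le t$ to pass from a squared-norm / weight estimate back to a bound on $1-F$.

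For the truncation error I would use a second-moment estimate for the discrete Gaussian. With $M := 2^{m-1}$, the weight outside the window satisfies $\sum_{|j|\ge M} e^{-j^2/2\tilde\sigma^2} \le M^{-2}\sum_{j\in\integers} j^2 e^{-j^2/2\tilde\sigma^2}$, and the ratio of $\sum_j j^2 e^{-j^2/2\tilde\sigma^2}$ to $\sum_j e^{-j^2/2\tilde\sigma^2}$ (the variance of the width-$\tilde\sigma$ discrete Gaussian) is at most $\tilde\sigma^2$; this last fact follows from Poisson summation applied to $\sum_j e^{-\beta j^2}$ together with monotonicity of the resulting theta function in $\beta$. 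Hence the truncation-induced infidelity is at most $\tilde\sigma^2/M^2 = \bigl(\sigma/(\delta\,2^{m-1})\bigr)^2$. The hypothesis $2^m\delta \ge 2\sigma/\sqrt{\varepsilon_\oneDG}$ is exactly $\sigma/(\delta\,2^{m-1}) \le \sqrt{\varepsilon_\oneDG}$, so this contribution is at most $\varepsilon_\oneDG$.

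For the discretization error I would again exploit Poisson summation for Gaussians: the mismatch between the sampled-and-renormalized profile and the true Gaussian, and between $\tilde{\mathcal N}$ and $\mathcal N$, is governed by the aliasing terms $e^{-2\pi^2\tilde\sigma^2 k^2}$ ($k\ne 0$) and by a midpoint-rule remainder comparing $\int_{\text{cell }j} e^{-x^2/4\sigma^2}\,dx$ with $\delta\,e^{-j^2/4\tilde\sigma^2}$, the latter controlled by $\int_\reals \bigl|\partial_x^2 e^{-x^2/4\sigma^2}\bigr|\,dx$. The hypothesis $\delta\le\sigma$ gives $\tilde\sigma\ge 1$, which makes the aliasing terms negligible (bounded by $e^{-2\pi^2}$ and smaller), and $\delta\le \tfrac12$ lets me dominate the midpoint remainder by a convergent sum. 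In fact, depending on the precise convention fixed for the lattice-to-continuum identification, this discretization piece is either exactly absent---if the overlap collapses to the ratio of partial to total Gaussian lattice weight, in which case $1-F \le 1 - W_{\text{finite}}/W_{\text{total}} = W_{\text{tail}}/W_{\text{total}}$ and one is done by the previous paragraph alone---or it is exponentially small in $\tilde\sigma^2$ and absorbed using $\tilde\sigma\ge1$; either way it does not eat into the truncation budget, which is precisely why the hypothesis on $2^m\delta$ alone already purchases the full tolerance $\varepsilon_\oneDG$.

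The step I expect to be delicate is (ii): pinning down which identification of $\ket{j\delta}$ with a continuum object the definition intends, and then confirming that coarse-graining to the lattice does \emph{not} by itself bound the fidelity away from $1$---so that for arbitrarily small $\varepsilon_\oneDG$ the entire error budget can be spent on truncation rather than shared. The resolution is that, once $\tilde\sigma\ge1$, the Poisson dual of the Gaussian is extremely concentrated, so the normalization mismatch and the cell-averaging error are exponentially small in $\tilde\sigma^2$ (or vanish identically in the cleanest convention) rather than merely $O(\delta^2/\sigma^2)$; this is what rescues the uniformity in $\varepsilon_\oneDG$. Everything else---the Markov/Chebyshev step, the variance bound via Poisson summation, the concavity inequality $1-\sqrt{1-t}\le t$, and the constant bookkeeping---is routine.
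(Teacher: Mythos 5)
Your argument is correct, and it reaches the bound by a genuinely different route from the paper's. The paper evaluates the overlap directly as $F=\frac{1}{\mathcal{N}\tilde{\mathcal{N}}}\sum_{j}\delta\,\e^{-j^2/2\tilde{\sigma}^2}$ and then invokes a separate sum-versus-integral comparison lemma (\cref{prop:sum_integral_relation}, itself proved with Poisson summation plus a convexity estimate---this is where the hypothesis $\delta\le 1/2$ is actually consumed), reducing the fidelity to the probability that a continuous Gaussian lands in the window and finishing with Chebyshev's inequality. You stay entirely on the lattice: writing $W_{\mathrm{fin}}:=\sum_{\text{window}}\e^{-j^2/2\tilde{\sigma}^2}$ and $W_{\mathrm{tot}}:=\sum_{j\in\integers}\e^{-j^2/2\tilde{\sigma}^2}$, the paper's conventions give $F=\sqrt{W_{\mathrm{fin}}}/\mathcal{N}$ exactly, Poisson summation gives $\mathcal{N}^2=\sigma\sqrt{2\uppi}\le\delta\,W_{\mathrm{tot}}\le W_{\mathrm{tot}}$ because the aliasing terms are positive, and hence $F\ge\sqrt{W_{\mathrm{fin}}/W_{\mathrm{tot}}}\ge 1-W_{\mathrm{tail}}/W_{\mathrm{tot}}$; your discrete second-moment bound $W_{\mathrm{tail}}/W_{\mathrm{tot}}\le\tilde{\sigma}^2/4^{m-1}=\left(2\sigma/(2^m\delta)\right)^2\le\varepsilon_\oneDG$ then closes the argument. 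What this buys: you bypass the paper's most technical ingredient entirely, and you use the hypothesis $\delta\le\min(1/2,\sigma)$ only in the trivial form $\delta\le 1$, so your route is both shorter and slightly more economical in assumptions. The one point you flagged as delicate---which continuum identification of $\ket{j\delta}$ is intended, and whether cell-averaging contributes a fixed $\order{\delta^2/\sigma^2}$ loss of fidelity that would defeat arbitrarily small $\varepsilon_\oneDG$---is settled by \cref{def:discrete1DG} and \cref{eq:continuous1DG}: the cross term is the formal pairing $\braket{x}{j\delta}=\updelta(x-j\delta)$, so your ``cleanest convention'' branch is the operative one and there is no separate discretization term to absorb; it collapses into the single Poisson inequality above.
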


\begin{proof}
Let $J:=2^m$.
Then the fidelity between the continuous~\eqref{eq:continuous1DG} and discrete~\eqref{eq:lattice1DG} 1DG states is
\begin{equation}
\label{eq:1DGfidelityBound}
    F:=\braket{\G_\text{lattice}(\tilde{\sigma}, \delta, m)}{\G(\sigma)}
    = \frac1{\mathcal{N}\tilde{\mathcal{N}}} \sum_{j=-J/2}^{J/2-1} \delta\, \e^{-\frac{j^2}{2\tilde{\sigma}^2}}
    \geq \int_{-J\delta/2}^{J\delta/2} \dd{x} \rho(x)
    =\text{Pr}\left(\abs{x}\leq J\delta/2\right),
\end{equation}
where the inequality follows from~\cref{prop:sum_integral_relation}.
Here $\rho(x):=\mathcal{N}^{-2} \e^{-x^2/(2\sigma^2)}$ is the probability density function of a Gaussian distribution and~$\text{Pr}\left(\abs{x}\leq J\delta/2\right)$ is the probability that~$x \in \left[-J\delta/2, J\delta \right/2]$.
By~\cref{eq:1DGfidelityBound},~$J\delta/2 \geq
\sigma/\sqrt{\varepsilon_\oneDG}$~\eqref{eq:lattice_paramtrs}
and~$\text{Pr}\left(\abs{x} \leq J\delta/2\right)=1-\text{Pr}\left(\abs{x}> J\delta/2\right)$, we have
\begin{equation}
\label{eq:1DGfidelityBound2}
    F\geq 1-\text{Pr}\left(
    \abs{x} \geq \sigma/\sqrt{\varepsilon_\oneDG}
    \right).
\end{equation}
Using Chebyshev's inequality~\cite[p.~609]{NC11},
\begin{equation}
\label{eq:Chebyshev_inequality}
    \text{Pr}\left(\abs{x} \geq \sigma/\sqrt{\varepsilon_\oneDG}\right)
    \leq \varepsilon_\oneDG,
\end{equation}
which together with~\cref{eq:1DGfidelityBound2} yield~$F\geq 1-\varepsilon_\oneDG$.
The infidelity~$1-F$ is therefore bounded from above by~$\varepsilon_\oneDG$.
\end{proof}

\cref{prop:1DG_fidelity} allows us to obtain the minimal number of qubits to represent the discrete 1DG state~\eqref{eq:lattice1DG}.
We provide the result for the minimal number of qubits in the following proposition and proceed with a proof.

\begin{proposition}
\label{prop:1DG_qubits}
For~$\sigma \in \reals^+$ the standard deviation of a continuous 1DG state~\eqref{eq:continuous1DG} and $\varepsilon_\oneDG \in (0,1)$ an error tolerance, at~least
\begin{equation}
\label{eq:discrete1DG_qubits}
    n_\oneDG= \ceil{\log_2\left(\sigma/\sqrt{\varepsilon_\oneDG}\right)} + \max\left(1,\, \ceil{\log_2(1/\sigma)}\right),
\end{equation}
qubits are required to represent a discrete approximation for the continuous state with infidelity~\eqref{eq:infidelity} no greater than~$\varepsilon_\oneDG$.
\end{proposition}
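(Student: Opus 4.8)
The plan is to obtain $n_\oneDG$ directly from the lattice parameters supplied by~\cref{prop:1DG_fidelity}. A discrete 1DG state $\ket{\G_\textup{lattice}(\tilde{\sigma},\delta,m)}$ as in~\cref{def:discrete1DG} is a superposition over the $2^m$ points of its lattice, so it is carried by exactly $m$ qubits. Hence I would determine the least $m$ for which \emph{some} admissible spacing $\delta$ makes the two hypotheses of~\cref{prop:1DG_fidelity}, namely $\delta\le\min(1/2,\sigma)$ and $2^m\delta\ge 2\sigma/\sqrt{\varepsilon_\oneDG}$, simultaneously hold; for that $m$ the discrete state has infidelity~\eqref{eq:infidelity} at most $\varepsilon_\oneDG$ against $\ket{\G(\sigma)}$~\eqref{eq:continuous1DG}, so $m$ qubits suffice to represent the required approximation.

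First I would optimise over $\delta$. Rewriting the second hypothesis as $2^m\ge 2\sigma/(\delta\sqrt{\varepsilon_\oneDG})$ shows it is relaxed by making $\delta$ larger, and since $\delta$ ranges over all of $\reals^+$ (it need not be a power of two) the best choice saturates the first hypothesis, $\delta=\min(1/2,\sigma)$. With this $\delta$ the requirement collapses to $2^m\ge 2\sigma/(\min(1/2,\sigma)\sqrt{\varepsilon_\oneDG})$, so the least admissible exponent is $m=\ceil*{\log_2\!\left(2\sigma / (\min(1/2,\sigma)\,\sqrt{\varepsilon_\oneDG})\right)}$.

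It then remains to rewrite this single ceiling in the two-term shape of~\eqref{eq:discrete1DG_qubits}, which I would do by splitting on whether $\sigma\ge 1/2$. When $\sigma\ge 1/2$ one has $\min(1/2,\sigma)=1/2$ and $\max(1,\ceil{\log_2(1/\sigma)})=1$, so the count reduces to $\ceil{\log_2(\sigma/\sqrt{\varepsilon_\oneDG})}$ together with the integer contributed by the $2/\min(1/2,\sigma)$ factor. When $\sigma<1/2$ one has $\min(1/2,\sigma)=\sigma$, the spacing factor becomes $2/\sigma$, and $\max(1,\ceil{\log_2(1/\sigma)})=\ceil{\log_2(1/\sigma)}$; writing $\log_2$ of the product as a sum of logarithms and rounding the two pieces separately regroups the bound as $\ceil{\log_2(\sigma/\sqrt{\varepsilon_\oneDG})}+\ceil{\log_2(1/\sigma)}$. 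Combining the two cases gives $n_\oneDG$ as claimed, and the complementary lower bound (``at least $n_\oneDG$ are required'') follows by noting that a lattice too coarse to resolve the scale $\min(1,\sigma)$ or too narrow to cover a range of order $\sigma/\sqrt{\varepsilon_\oneDG}$ cannot achieve fidelity $1-\varepsilon_\oneDG$, which forces $m$ down to essentially the same quantity.

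I expect the ceiling bookkeeping in the final step to be the main obstacle, since $\ceil{\log_2 x}+\ceil{\log_2 y}$ and $\ceil{\log_2(xy)}$ can differ by one; care is needed about the direction of each rounding and about whether $n_\oneDG$ is stated as exactly the least admissible $m$ or as a clean bound on it. Everything else --- equating the register size with $m$, the monotonicity argument that pins $\delta$ to $\min(1/2,\sigma)$, and the appeal to~\cref{prop:1DG_fidelity} for the fidelity guarantee --- is routine.
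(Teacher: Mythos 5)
Your first step---identifying the qubit count with $m=\log_2(\text{number of lattice points})$, i.e.\ storing the index $j$---is where the argument parts company with what \eqref{eq:discrete1DG_qubits} actually counts, and the later ceiling bookkeeping cannot repair it. The paper's register does not hold the index $j$; it holds the value $j\delta$ in fixed-point binary, and the count in \eqref{eq:discrete1DG_qubits} is the sum of (i) the bits for the integer part, fixed by the largest value $2^{m-1}\delta=\sigma/\sqrt{\varepsilon_\oneDG}$, giving $\ceil{\log_2(\sigma/\sqrt{\varepsilon_\oneDG})}$, and (ii) the bits for the fractional part, fixed by the resolution $\delta=\min(1/2,\sigma)$, giving $\max(1,\ceil{\log_2(1/\sigma)})$. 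The two summands are therefore not a regrouping of a single $\ceil{\log_2(\cdot)}$ of the lattice size; they come from two different parts of a fixed-point word.

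Concretely, your optimisation over $\delta$ and the appeal to \cref{prop:1DG_fidelity} are the same as the paper's, but the resulting $m=\ceil{\log_2\bigl(2\sigma/(\min(1/2,\sigma)\sqrt{\varepsilon_\oneDG})\bigr)}$ does not equal $n_\oneDG$. For $\sigma\ge 1/2$ you get $\ceil{\log_2(4\sigma/\sqrt{\varepsilon_\oneDG})}=\ceil{\log_2(\sigma/\sqrt{\varepsilon_\oneDG})}+2$, while \eqref{eq:discrete1DG_qubits} gives $\ceil{\log_2(\sigma/\sqrt{\varepsilon_\oneDG})}+1$. For $\sigma<1/2$ you get $\ceil{\log_2(2/\sqrt{\varepsilon_\oneDG})}$, while \eqref{eq:discrete1DG_qubits} gives $\ceil{\log_2(\sigma/\sqrt{\varepsilon_\oneDG})}+\ceil{\log_2(1/\sigma)}$; taking $\sigma=\varepsilon_\oneDG=1/4$ yields $2$ versus $1$, so the "split the logarithm and round separately" step you defer to the end is not an identity and fails in the direction needed. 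To land on the stated formula you must switch representations: count the bits of the fixed-point encoding of $j\delta$ (integer part from the maximal value, fractional part from the spacing), not the base-two logarithm of the number of lattice points. (Your closing remark about the "at least" direction is as informal as the paper's own treatment, so I would not count that against you separately.)
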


\begin{proof}
The discrete 1DG state~\eqref{eq:lattice1DG} is a superposition of lattice states~$\ket{j\delta}$, where $j\in [-2^{m-1},2^{m-1})\cap \integers$ and $j\delta$ is a real number.
Hence the smallest nonzero value for the real numbers is~$\delta$,
and the largest value is~$2^{m-1}\delta$. 
By virtue of~\cref{prop:1DG_fidelity}, taking $\delta = \min(1/2, \sigma)$ and $2^{m-1}\delta= \sigma/\sqrt{\varepsilon_\oneDG}$ ensures that the infidelity between the approximate and continuous 1DG states is at most~$\varepsilon_\oneDG$.
As $\delta<1$, the number of qubits needed to represent the fractional part of the real numbers
is~$\max\left(1,\, \ceil{\log_2(1/\sigma)}\right)$.
The largest value for the real numbers
is~$\sigma/\sqrt{\varepsilon_\oneDG}$,
so $\ceil{\log_2\left(\sigma/\sqrt{\varepsilon_\oneDG}\right)}$ qubits are needed to represent the integer part of the real numbers.
The total number of qubits needed to represent the approximate 1DG state is obtained by adding the number of qubits required to represent the real numbers' integer and fractional parts.
\end{proof}
\noindent
Having determined the space requirement for representing a 1DG state, next we establish a bound on the number of qubits needed to represent a multi-dimensional Gaussian state.

\subsubsection{Space requirement to represent a multi-dimensional Gaussian state}
\label{subsubsec:NDG_space}

We now determine the minimal number of qubits needed to represent an approximation for a continuous $N$-dimensional Gaussian state in terms of the condition number of its ICM,~$N$ and an error tolerance on the infidelity between the approximate and continuous states.
We present the result in~\cref{theorem:space_requirement} and proceed with a proof.
Then we invoke this theorem to obtain the space requirement for representing the free-field ground state in both Fourier- and wavelet-based methods in terms of the inputs specified by the main server.

\begin{theorem}
\label{theorem:space_requirement}
Let~$\upbm{A} \in \reals^{N\times N}$ be the ICM for a continuous $N$-dimensional Gaussian state~$\ket{\G_N(\upbm{A})}$~\eqref{eq:continuous_NDG} and let~$\varepsilon_\G\in (0,1)$ be an error tolerance.
Also let $\kappa \in \reals^+$ be the condition number of~$\upbm{A}$, quantified as the ratio of the largest to smallest eigenvalues of~$\upbm{A}$.
Then
\begin{equation}
\label{eq:NDGqubits}
   n \in \Omega\left(N\log_2\left(\dfrac{N \kappa}{\varepsilon_\G}\right)\right),
\end{equation}
qubits are required to represent a discrete approximation for~$\ket{\G_N(\upbm{A})}$ such that the infidelity
between the discrete and continuous states is bounded from above by~$\varepsilon_\G$.
\end{theorem}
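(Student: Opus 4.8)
The plan is to reduce to a diagonal inverse-covariance matrix, factor the Gaussian into $N$ one-dimensional Gaussians, invoke the single-mode space bound \cref{prop:1DG_qubits} on each factor, and sum. First I would diagonalize $\upbm{A}=\upbm{O}\bm{\Lambda}\upbm{O}^\T$ with $\upbm{O}$ orthogonal and $\bm{\Lambda}=\text{diag}(\lambda_0,\ldots,\lambda_{N-1})$; as in \cref{subsubsec:discrete_GaussianState}, the coordinate change $\ket{\bm{x}}\mapsto\ket{\upbm{O}^{-1}\bm{x}}$ is unitary, carries $\ket{\G_N(\upbm{A})}$ to $\ket{\G_N(\bm{\Lambda})}$, and preserves both the dimension of the Hilbert space hosting a discrete approximation and its infidelity with the target, so it suffices to lower bound $n$ for a diagonal ICM. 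Then $\ket{\G_N(\bm{\Lambda})}=\bigotimes_{\ell=0}^{N-1}\ket{\G(\sigma_\ell)}$ with $\sigma_\ell:=1/\sqrt{\lambda_\ell}$ (cf.\ \cref{subsubsec:Fourier_method}), and the extremal widths satisfy $\sigma_{\max}/\sigma_{\min}=\sqrt{\lambda_{\max}/\lambda_{\min}}=\sqrt{\kappa}$.

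Next I would account for how the joint tolerance $\varepsilon_\G$ splits over the $N$ cells. A discrete approximation occupies $N$ cells, one per coordinate, of sizes $n_\ell$ with $n=\sum_\ell n_\ell$; because the target is a tensor product and the cells embed independently, the smallest infidelity $\varepsilon_\ell$ attainable on cell $\ell$ controls the joint fidelity through $\prod_\ell(1-\varepsilon_\ell)$, whence $\prod_\ell(1-\varepsilon_\ell)\geq 1-\varepsilon_\G$, so $\sum_\ell\varepsilon_\ell\leq-\ln(1-\varepsilon_\G)\leq 2\varepsilon_\G$ and, by the arithmetic--geometric-mean inequality, $\prod_\ell\varepsilon_\ell\leq(2\varepsilon_\G/N)^N$. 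Applying \cref{prop:1DG_qubits} to each cell gives $n_\ell\geq\ceil{\log_2(\sigma_\ell/\sqrt{\varepsilon_\ell})}+\max(1,\ceil{\log_2(1/\sigma_\ell)})\geq\max(1+\log_2\sigma_\ell,\,0)+\tfrac12\log_2(1/\varepsilon_\ell)$. Summing, $n\geq\sum_\ell\max(1+\log_2\sigma_\ell,0)+\tfrac12\log_2(1/\prod_\ell\varepsilon_\ell)\geq\tfrac N2\log_2(N/(2\varepsilon_\G))$, which already yields the $N$ and $1/\varepsilon_\G$ dependence. The factor of $\kappa$ I would obtain from the spread $\sigma_{\max}/\sigma_{\min}=\sqrt{\kappa}$ together with the structure of the discretization of \cref{subsubsec:discrete_GaussianState}: a common lattice spacing $\delta$ that resolves the narrowest mode forces $\delta\lesssim\sigma_{\min}$, so the widest mode alone must span $\gtrsim\sigma_{\max}/(\delta\sqrt{\varepsilon_\G})\gtrsim\sqrt{\kappa}$ lattice points, and since the output format of \cref{alg:FBA} and \cref{alg:WBA} assigns every cell the same width $\ceil{\log_2(N/\sqrt{m_0\varepsilon_\G})}$, each of the $N$ cells inherits $\Omega(\log_2\sqrt{\kappa})$ qubits; adding this to the previous estimate gives $n\in\Omega(N\log_2(N\kappa/\varepsilon_\G))$.

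The hard part will be making this last step rigorous. The single-mode bound \cref{prop:1DG_qubits} pins down the qubit count for the \emph{widest} mode only, and upgrading that to a per-mode $\Omega(\log\kappa)$ contribution relies on the common-spacing / uniform-cell-size feature of the discretization of \cref{subsubsec:discrete_GaussianState}; if cell sizes were allowed to shrink with $\sigma_\ell$, only $\Omega\!\big(N\log_2(N/\varepsilon_\G)+\log_2\kappa\big)$ would survive, so the proof must fix the discretization model precisely and verify that no non-diagonal or variable-width representation beats it. A secondary point needing care is the error-budgeting claim that the best product approximation is the product of the best cell approximations, so that the $\varepsilon_\ell$ genuinely multiply; this follows from a Cauchy--Schwarz estimate on the rank-one functional $\bigotimes_\ell\bra{\G(\sigma_\ell)}$ restricted to $\bigotimes_\ell\mathscr{H}_2^{n_\ell}$, but it should be stated explicitly rather than taken for granted.
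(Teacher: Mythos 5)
Your proposal follows essentially the same route as the paper's proof: reduce to a diagonal ICM by a basis transformation, factor the state into $N$ one-dimensional Gaussians with $\sigma_\ell=1/\sqrt{D_{\ell\ell}}$, count qubits per mode via the one-dimensional space bound, and convert $\sigma_{\max}/\sigma_{\min}$ into $\sqrt{\kappa}$ (the paper invokes \cref{prop:dmax_dmin_bound} for this, allowing either the spectral or the UDU decomposition). The differences are refinements rather than a new strategy. Where you optimize the error budget over the cells via AM--GM, the paper simply sets $\varepsilon_\oneDG=\varepsilon_\G/N$ uniformly and checks $(1-\varepsilon_\G/N)^N\geq 1-\varepsilon_\G$; and where you derive the per-mode $\Omega(\log_2\sqrt{\kappa})$ from the common-spacing constraint, the paper gets the same term by demanding $\ceil{\log_2(\sigma_{\max}\sqrt{N/\varepsilon_\G})}$ integer-part qubits and $\ceil{\log_2(1/\sigma_{\min})}$ fractional-part qubits in \emph{every} cell, i.e., by fixing a common $\delta$ and common $m$ across all modes. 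The two concerns you flag at the end are genuine, but the paper does not resolve them either: its proof establishes the qubit count of the specific uniform-lattice product discretization of \cref{subsubsec:discrete_GaussianState}, and the $\Omega$ in the statement should be read relative to that discretization model rather than as an information-theoretic bound over all conceivable discrete representations (your observation that adaptive cell widths would reduce the bound to $\Omega(N\log_2(N/\varepsilon_\G)+\log_2\kappa)$ is correct and is simply not addressed). Likewise, the product-of-best-cell-approximations optimality is taken for granted in the paper. So your write-up is, if anything, more explicit about the logical status of the result; one small point in your favor is that by using the spectral decomposition you get $\sigma_{\max}/\sigma_{\min}=\sqrt{\kappa}$ as an equality, which is the correct direction for a lower bound, whereas the paper cites $d_{\max}/d_{\min}\in O(\kappa)$, an upper bound that only suffices because it is tight in the spectral case.
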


\begin{proof}
Let~$\upbm{D}$ be the diagonal matrix in either the spectral or the UDU decomposition of the ICM~$\upbm{A}$.
Then the space required to represent~$\ket{\G_N(\upbm{A})}$ is the same as the space required to represent~$\ket{\G_N(\upbm{D})}$ because the former state is obtained from the latter by a basis transformation.
Therefore, we determine the required space to represent~$\ket{\G_N(\upbm{D})}$.
We decompose this state into a tensor product of~$N$ continuous 1DG states as
\begin{equation}
\label{eq:NDG_decomposition}
    \ket{\G_N(\upbm{D})}= \bigotimes_{\ell=0}^{N-1} \ket{\G(\sigma_\ell)}, \quad
    \sigma_\ell:=1/\sqrt{D_{\ell\ell}},
\end{equation}
where~$\sigma_\ell$ is the standard deviation of the~$\ell^\text{th}$ continuous 1DG state~$\ket{\G(\sigma_\ell)}$~\eqref{eq:continuous1DG}.
We approximate each 1DG state by a discrete 1DG state over a lattice with $2^m$ points and spacing $\delta$ as per~\cref{def:discrete1DG}.
The approximate Gaussian state is then
\begin{equation}
\label{eq:NDG_approximation}
    \ket{\tilde{\G}_N(\upbm{D})} := \bigotimes_{\ell=0}^{N-1} \ket{\G_\text{lattice}(\tilde{\sigma}_\ell, \delta, m)},
\end{equation}
where $\tilde{\sigma}_\ell:=\sigma_\ell/\delta$ is the standard deviation of the $\ell^\text{th}$ discrete 1DG state~\eqref{eq:lattice1DG}.
By~\cref{prop:1DG_fidelity}, if
\begin{equation}
    \varepsilon_\oneDG=\varepsilon_\G/N,\quad
    \delta \leq \min\left(1/2, \sigma_{\min}\right) \quad \text{and} \quad
    2^m\delta\geq 2\sigma_{\max}/\sqrt{\varepsilon_\oneDG},
\end{equation}
then for each $\ell$
\begin{equation}
\label{eq:1DG_fidelity}
    \braket{\G_\text{lattice}(\tilde{\sigma}_\ell, \delta, m)}{\G(\sigma_\ell)} \geq 1- \varepsilon_\G/N.
\end{equation}
\crefrange{eq:NDG_decomposition}{eq:1DG_fidelity} yield
\begin{equation}
\label{eq:fidelity_bound}
    \braket{\tilde{\G}_N(\upbm{D})}{\G_N(\upbm{D})} = \prod_{\ell=0}^{N-1} \braket{\G_\text{lattice}(\tilde{\sigma}_\ell, \delta, m)}{\G(\sigma_\ell)}
    \geq \left(1- \varepsilon_\G/N\right)^N
    \geq 1- \varepsilon_\G.
\end{equation}
Each discrete 1DG state~\eqref{eq:lattice1DG} is a superposition of lattice states~$\ket{j\delta}$, where $j\delta$ is a real number. 
To ensure that~\cref{eq:1DG_fidelity} holds~for each 1DG state, we need at least
$\ceil*{\log_2\left(\sigma_{\max}\sqrt{N/\varepsilon_\G}\right)}$ qubits to represent the integer part and at least~$\ceil{\log_2{(1/\sigma_{\min})}}$ qubits to represent the fractional part of the real numbers.
Thus, the minimal number of qubits to represent each 1DG state scales as
\begin{equation}
\label{eq:1DG_qubits}
    n_\oneDG\in \Omega\left(\log_2\left((\sigma_{\max}/\sigma_{\min})\sqrt{N/\varepsilon_\G}\right)\right).
\end{equation}
Let~$d_{\max}$ and~$d_{\min}$ be respectively the largest and smallest diagonal elements of $\bm{\mathrm{D}}$,
then by~\cref{eq:NDG_decomposition}
\begin{equation}
    \frac{\sigma_{\max}}{\sigma_{\min}}=\sqrt{\frac{d_{\max}}{d_{\min}}},
\end{equation}
and by~\cref{prop:dmax_dmin_bound}, $d_{\max}/d_{\min} \in O(\kappa)$.
The combination of these equations with~\cref{eq:1DG_qubits} yield~$n_\oneDG\in\Omega(\log_2(N\kappa/\varepsilon_\G))$.
Therefore, the total number of qubits to represent an approximation for a $N$-dimensional Gaussian state scales as~\cref{eq:NDGqubits}.
\end{proof}

We now determine the space required to represent an approximation for the ground state in both Fourier- and wavelet-based methods.
To this end, by~\cref{theorem:space_requirement}, we only need to bound the condition number~$\kappa$ of the ground-state ICM for each method to obtain the space requirement in terms of the parameters specified by the main server.
By~\cref{prop:condition_bound}, the condition number of the ground state's ICM for both methods scales~as $\kappa\in \Theta(N/m_0)$.
Therefore, by \cref{theorem:space_requirement}, the number of qubits needed to represent the ground state scales as
\begin{equation}
  \label{eq:qubits_for_ground_state}
   n \in \Omega\left(N\log_2\left(N/\sqrt{m_0\varepsilon_\text{vac}}\right)\right),
\end{equation}
with respect to the parameters specified by the main server in~\cref{table:inputsgroundstategen}.
Notice that~$n$ is quasilinear in the number of modes~$N$.
The logarithmic factor here is the number of qubits required to represent each 1DG state for generating the free-field ground state.
For simplicity, we use
\begin{equation}
    \label{eq:p}
    p = \ceil{\log_2\left(N/\sqrt{m_0\varepsilon_\text{vac}}\right)},
\end{equation}
in description of our ground-state-generation algorithms for number of qubits to represent each 1DG state.


\subsection{Classical preprocessing}
\label{subsec:classical_preprocessing}

In this subsection, we construct key subroutines of the classical preprocessing in the Fourier- and wavelet-based algorithms for ground-state generation.
We begin, in~\cref{subsubsec:ICM_eigenvalues}, by constructing a classical algorithm for computing the eigenvalues of the ground-state ICM in a fixed-scale wavelet basis;
this algorithm is used as a subroutine in classical preprocessing of both ground-state-generation algorithms.
Then, in~\cref{subsubsec:ICM_circulant-rows}, we present a classical algorithm for computing the circulant row in unique blocks of the ground-state ICM in a multi-scale wavelet basis.
Finally, in~\cref{subsubsec:UDUDecomp}, we construct a classical algorithm for computing the UDU decomposition of the ICM in a multi-scale wavelet basis.
The last two algorithms are subroutines of the classical preprocessing in the wavelet-based algorithm.

\subsubsection{Eigenvalues of the ground state's inverse-covariance matrix~(ICM)}
\label{subsubsec:ICM_eigenvalues}

Here we devise a classical algorithm for computing the eigenvalues of the ground-state ICM~$\upbm{A}_\sS$~\eqref{eq:fixedscale_groundstate} in a fixed-scale wavelet basis.
First we state the properties of this matrix used to compute the eigenvalues and explain the parameters that specify unique elements of the matrix.
We then provide the rationale for computing the eigenvalues of~$\upbm{A}_\sS$ by a discrete Hartley transform~(DHT) and present our algorithm as pseudocode.
Finally, we elaborate on the relationship between the eigenvalues of the ground-state ICM in fixed- and multi-scale wavelet bases.

We begin by stating properties of the fixed-scale ICM~\eqref{eq:fixedscale_groundstate} used for computing the eigenvalues.
This matrix is the principal square root of the fixed-scale coupling matrix~$\upbm{K}_\sS$~\eqref{eq:fixedscaleK}, which itself is a circulant, real and symmetric matrix.
Being the coupling matrix's principal square root, the ICM is also a circulant, real and symmetric matrix.
A circulant matrix is fully specified by its first row, which we call the `circulant row' of the circulant matrix.
Four parameters specify the unique elements in the circulant row of~$\upbm{K}_\sS$:
the wavelet index~$\dbIndex$,
the second-order derivative overlaps~$\Delta_\ell$~\eqref{eq:derivative_overlaps},
the number of modes~$N$ and the free mass~$m_0$;
see~\cref{eq:fixedscaleK}.
The same parameters specify the unique matrix elements of the fixed-scale ICM.

We now provide the rationale for computing the eigenvalues of~$\upbm{A}_\sS$~\eqref{eq:fixedscale_groundstate} by a DHT.
Any circulant matrix is diagonalizable by a discrete Fourier transform~\cite[p.~100]{HJ12}.
By the real and symmetric properties of the ICM, we diagonalize this matrix by a DHT.
In particular, the eigenvalues of any real, symmetric and circulant matrix are obtained by computing the DHT of the matrix's circulant row~\cite[p.~100]{HJ12}.
As the ICM is the principal square root of the coupling matrix, first we compute the coupling-matrix eigenvalues by the DHT of its circulant row. 
Then we take the square root of the coupling-matrix eigenvalues to obtain the eigenvalues of the ICM~$\upbm{A}_\sS$. 
Algorithm~\ref{alg:invCovEigens} provides the procedure for computing the eigenvalues of the ICM using the parameters that specify unique elements of the coupling matrix.

\begin{algorithm}[H]
  \caption{Classical algorithm for computing eigenvalues of ground-state ICM}
  \label{alg:invCovEigens}
  \begin{algorithmic}[1]
  \Require{
  \Statex $\dbIndex \in \integers_{\geq 3}$
  \Comment{wavelet index}
  \Statex $m_0\in \reals^+$
  \Comment{mass of free QFT}
  \Statex $N \in \integers_{\geq 2(2\dbIndex-1)}$
  \Comment{number of modes}
  \Statex $\bm{\Delta} \in \reals^{2\dbIndex-1}$
  \Comment{derivative overlaps~\eqref{eq:derivative_overlaps} for second-order derivative operator in a wavelet basis with index~\dbIndex}
    }
  \Ensure
  \Statex $\bm{\uplambda} \in \reals^N$
  \Comment{eigenvalues of ground-state ICM~$\upbm{A}_\sS$~\eqref{eq:fixedscale_groundstate}}
    \Function{invCovEigens}{$\dbIndex, m_0, N, \bm{\Delta}$}
       \For{$j \gets 0$ to $N-1$}
            \State \label{line:eigen1}
            $\text{tmp}_j
            \gets m_0^2 - N^2 \Delta_0
            -2\sum\limits_{\ell=1}^{2\dbIndex-1} N^2 \Delta_\ell \cos\left(\frac{2\uppi \ell}{N}j\right)$ 
            \Comment{computes $j^\text{th}$ eigenvalue of the coupling matrix $\upbm{K}_\sS$~\eqref{eq:fixedscaleK}}
            \State \label{line:eigen2}
            $\lambda_j \gets \sqrt{\text{tmp}_j}$
            \Comment{computes $j^\text{th}$ eigenvalue of $\upbm{A}_\sS$}
       \EndFor
       \State \Return $\bm{\uplambda}$
    \EndFunction
     \end{algorithmic}
\end{algorithm}

The coupling matrix~\eqref{eq:multiscaleK} in a multi-scale wavelet basis is obtained from the coupling matrix~\eqref{eq:fixedscaleK} in a fixed-scale wavelet basis by a wavelet transform, which is a unitary transformation.
For each basis, the ground-state ICM is the principal square root of the coupling matrix.
Consequently, the ICM in a multi-scale wavelet basis is obtained by the same unitary wavelet transform from the ICM in a fixed-scale basis, and they have identical eigenvalues.
Therefore, we use Algorithm~\ref{alg:invCovEigens} as a subroutine in classical preprocessing of both Fourier- and wavelet-based algorithms for computing the eigenvalues of the ground-state ICM.

\subsubsection{Elements of ground state's ICM}
\label{subsubsec:ICM_circulant-rows}

We now construct a classical algorithm to compute the unique matrix elements of the ground-state ICM in a multi-scale wavelet basis.
First we state key properties of the coupling matrix~\upbm{K}~\eqref{eq:multiscaleK} represented in this basis.
Then we explain how we approximate the multi-scale ICM and discuss the classical memory requirement to store unique elements of the approximate ICM.
Next we explain our algorithm's procedure for computing the unique matrix elements, and finally, we present our algorithm as pseudocode.

We begin with a few observations about the multi-scale coupling matrix~$\upbm{K}$~\eqref{eq:multiscaleK}.
We then extend these observations to the approximate ICM.
The coupling matrix~$\upbm{K}$~\eqref{eq:multiscaleK} has the following key properties:
\begin{enumerate}
    \item \emph{Block structure.}
    The matrix~$\upbm{K}$ has a block-matrix structure imposed by a wavelet transform, with three types of blocks:~ss, sw and ww;
    see~\cref{eq:multiscaleK}.
    We therefore present entries of~$\upbm{K}$ based on their block's location in the block matrix.
    
    \item \emph{Symmetry.}
    The coupling matrix is symmetric.
    Therefore, we only consider the unique blocks, which are the main and upper-diagonal blocks of~$\upbm{K}$.
    
    \item \emph{Banded circulant blocks.}
    The main and upper-diagonal blocks of~$\upbm{K}$ have a circulant structure;
    specifically, each block is a banded~$2^k$-circulant matrix for some non-negative integer~$k$.
    Each diagonal block is a banded~$1$-circulant matrix with bandwidth $w_\text{d}:=2(2\dbIndex-2)+1$.
    The ww block at entry~$(r,c)$ of the block matrix~$\upbm{K}$~\eqref{eq:multiscaleK}, for any~$c>r$, is a banded~$2^{c-r}$-circulant matrix with bandwidth~$w_\text{d}+(2^{c-r}-1)(2\dbIndex-1)$.
    The sw block at entry $(s_0,c)$ is a banded~$2^{c-s_0}$-circulant matrix with the same bandwidth as the ww block at entry~$(s_0,c)$.
\end{enumerate}
These observations are also true for the multi-scale ICM~$\upbm{A}$~\eqref{eq:multiscale_ICM},
except that the blocks are no longer banded.
However, imposing a cutoff condition on elements of~$\upbm{A}$ by some threshold value~$\varepsilon_\text{th}$ reimposes the banded structure for blocks of this matrix.
To describe the cutoff condition and how we approximate the ICM, first we define an $\varepsilon_\text{th}$-approximate ICM as follows.

\begin{definition}[$\varepsilon_\text{th}$-approximate ICM]
\label{def:Avarepsilonth}
Given any ICM~$\upbm{A} \in \reals^{N\times N}$
and any $\varepsilon_\text{th}>0$,
an $\varepsilon_\text{th}$-approximate ICM is a symmetric matrix~$\upbm{A}_{\varepsilon_\text{th}}$ such that
\begin{equation}
\label{eq:cutoffcondition}
    \left[A_{\varepsilon_\text{th}}\right]_{ij} :=
    \begin{cases}
        0      & \text{if}\ \abs*{A_{ij}} < \varepsilon_\text{th}, \\
        A_{ij} & \text{otherwise.}
    \end{cases}
\end{equation}
\end{definition}
\noindent
As per~\cref{def:Avarepsilonth}, we approximate the ground-state ICM by replacing its near-zero matrix elements, i.e., the elements with magnitude less than some close-to-zero threshold value~$\varepsilon_\text{th}$, with exactly zero.
This replacement rule enables a sparse structure
for the ICM that we exploit to perform its UDU decomposition in quasilinear time.

We show, in~\cref{prop:truncated_Gaussian}, that not only the approximate ICM obtained by imposing the cutoff condition is a positive-definite matrix but also the infidelity between the Gaussian state with the approximate ICM and the free-field ground state is bounded from above by the error tolerance~$\varepsilon_\text{vac}$ in~\cref{table:inputsgroundstategen}.

\begin{proposition}
\label{prop:truncated_Gaussian}
Given any $\varepsilon_\textup{vac}\in(0,1)$
and any ICM~$\upbm{A} \in \reals^{N\times N}$,
then for any $\varepsilon_\textup{th}$
satisfying
\begin{equation}
0<\varepsilon_\textup{th}\leq \varepsilon_\textup{vac} N^{-3/2}
\min \operatorname{spec}\upbm{A},
\end{equation}
every $\varepsilon_\textup{th}$-approximate ICM
$\upbm{A}_{\varepsilon_\textup{th}}$
is a positive-definite matrix such that
\begin{equation}
\label{eq:infidGNAGNAth}
\operatorname{infid}\left(\ket{\G_N(\upbm{A})},
\ket{\G_N(\upbm{A}_{\varepsilon_\textup{th}})}\right)
\leq\varepsilon_\textup{vac},
\end{equation}
where~$\ket{\G_N(\upbm{A})}$ is a $N$-dimensional continuous Gaussian state with the ICM~$\upbm{A}$ as per~\cref{def:continuous_Gaussian}.
\end{proposition}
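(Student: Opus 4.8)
The plan is to control the perturbation $\upbm{E}:=\upbm{A}-\upbm{A}_{\varepsilon_\textup{th}}$ and then evaluate the overlap of the two Gaussian states in closed form. Recall that $\upbm{A}$, being an ICM in the sense of \cref{def:continuous_Gaussian}, is symmetric positive-definite, so $\upbm{A}^{-1/2}$ is well defined; moreover $\upbm{E}$ is symmetric because the cutoff rule of \cref{def:Avarepsilonth} depends only on $\abs{A_{ij}}=\abs{A_{ji}}$. Every nonzero entry of $\upbm{E}$ has magnitude below $\varepsilon_\textup{th}$ and there are at most $N^2$ of them, so $\norm{\upbm{E}}_2\le\norm{\upbm{E}}_F\le N\varepsilon_\textup{th}$.

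For positive-definiteness, the hypothesis $\varepsilon_\textup{th}\le\varepsilon_\textup{vac}N^{-3/2}\min\operatorname{spec}\upbm{A}$ together with $\varepsilon_\textup{vac}<1$ gives $\norm{\upbm{E}}_2\le\varepsilon_\textup{vac}N^{-1/2}\min\operatorname{spec}\upbm{A}<\min\operatorname{spec}\upbm{A}$. Weyl's eigenvalue perturbation inequality then yields $\min\operatorname{spec}\upbm{A}_{\varepsilon_\textup{th}}\ge\min\operatorname{spec}\upbm{A}-\norm{\upbm{E}}_2>0$; since $\upbm{A}_{\varepsilon_\textup{th}}$ is also symmetric, it is positive-definite, which is the first claim.

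For the infidelity bound, I would first derive, by a direct Gaussian integral, that for any symmetric positive-definite $\upbm{A},\upbm{B}$ one has $\braket{\G_N(\upbm{A})}{\G_N(\upbm{B})}=2^{N/2}(\det\upbm{A}\det\upbm{B})^{1/4}(\det(\upbm{A}+\upbm{B}))^{-1/2}$. Taking $\upbm{B}=\upbm{A}_{\varepsilon_\textup{th}}$ and $\upbm{C}:=\upbm{A}^{-1/2}\upbm{B}\upbm{A}^{-1/2}$, which is positive-definite with eigenvalues $\mu_1,\dots,\mu_N>0$, the identities $\det\upbm{B}=\det\upbm{A}\det\upbm{C}$ and $\det(\upbm{A}+\upbm{B})=\det\upbm{A}\det(\upbm{1}+\upbm{C})$ collapse the overlap to $F=\prod_{i=1}^{N}(4\mu_i/(1+\mu_i)^2)^{1/4}\in(0,1]$. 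Writing $\nu_i:=\mu_i-1>-1$ and $t_i:=1-4\mu_i/(1+\mu_i)^2=\nu_i^2/(2+\nu_i)^2\in[0,1)$, and using the elementary facts that $1-\prod_i a_i\le\sum_i(1-a_i)$ for $a_i\in[0,1]$, that $1-(1-t)^{1/4}\le t$ on $[0,1]$, and that $(2+\nu_i)^2>1$, we obtain $1-F\le\sum_i t_i\le\sum_i\nu_i^2$.

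It then remains to bound $\sum_i\nu_i^2$, which equals $\norm{\upbm{C}-\upbm{1}}_F^2=\norm{\upbm{A}^{-1/2}\upbm{E}\upbm{A}^{-1/2}}_F^2$ since $\upbm{C}-\upbm{1}=-\upbm{A}^{-1/2}\upbm{E}\upbm{A}^{-1/2}$. Submultiplicativity of the Frobenius norm under multiplication by operators gives $\sum_i\nu_i^2\le\norm{\upbm{A}^{-1/2}}_2^4\,\norm{\upbm{E}}_F^2=\norm{\upbm{E}}_F^2/(\min\operatorname{spec}\upbm{A})^2\le N^2\varepsilon_\textup{th}^2/(\min\operatorname{spec}\upbm{A})^2\le\varepsilon_\textup{vac}^2/N$, the last step invoking the hypothesis on $\varepsilon_\textup{th}$. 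Hence $\operatorname{infid}=1-F\le\varepsilon_\textup{vac}^2/N\le\varepsilon_\textup{vac}^2<\varepsilon_\textup{vac}$, which is \cref{eq:infidGNAGNAth}. I expect the only steps needing genuine care to be the derivation of the Gaussian-overlap formula and the determinant bookkeeping behind the product form of $F$, together with the short verification of $1-(1-t)^{1/4}\le t$ on $[0,1]$ (the map $t\mapsto t-1+(1-t)^{1/4}$ vanishes at both endpoints and has strictly decreasing derivative there, so it changes sign once and is nonnegative in between); everything else is routine matrix-norm estimation.
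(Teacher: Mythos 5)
Your proof is correct, and the infidelity half takes a genuinely different route from the paper's. For positive-definiteness the two arguments essentially coincide: the paper invokes Bauer--Fike with an orthogonal diagonalizer, which for a symmetric perturbation is exactly your Weyl bound $\lambda_{\min}(\upbm{A}_{\varepsilon_\textup{th}})\geq\lambda_{\min}(\upbm{A})-\norm{\upbm{E}}_2$. For the overlap, the paper works with the \emph{non-symmetric} product $\upbm{T}=\upbm{E}\upbm{A}^{-1}$, writes $F=[\det(\id-\upbm{T})]^{1/4}/[\det(\id-\upbm{T}/2)]^{1/2}$, bounds the entries of $\upbm{T}$ via $\norm{\upbm{A}^{-1}}_1\leq\sqrt{N}\norm{\upbm{A}^{-1}}_2$, and then invokes an external lemma bounding the determinant of a near-identity matrix with small entries to control both determinants. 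You instead symmetrize to $\upbm{C}=\upbm{A}^{-1/2}\upbm{A}_{\varepsilon_\textup{th}}\upbm{A}^{-1/2}$, diagonalize, and reduce everything to the scalar inequalities $1-\prod_i a_i\leq\sum_i(1-a_i)$ and $1-(1-t)^{1/4}\leq t$ (the latter follows even more quickly from $(1-t)^{1/4}\geq 1-t$ on $[0,1]$), closing with $\sum_i(\mu_i-1)^2=\norm{\upbm{A}^{-1/2}\upbm{E}\upbm{A}^{-1/2}}_F^2\leq\norm{\upbm{E}}_F^2/(\min\operatorname{spec}\upbm{A})^2$. Your version is self-contained (no imported determinant lemma), exploits the symmetry of the problem more directly, and in fact yields the quadratically stronger bound $\operatorname{infid}\leq\varepsilon_\textup{vac}^2/N$ under the same hypothesis on $\varepsilon_\textup{th}$, so it would support a correspondingly weaker threshold; the paper's route only delivers $\operatorname{infid}\leq\varepsilon_\textup{vac}$. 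Both are sound.
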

\noindent
This proposition is proven in~\cref{proofprop:truncated_Gaussian}.

The approximate ICM obtained by imposing the cutoff condition is sparse because most elements of the exact ICM~$\upbm{A}$~\eqref{eq:multiscale_ICM} have an exponentially close-to-zero value.
In particular, we show in~\cref{prop:decayingDiagBlocks} that diagonal blocks of~$\upbm{A}$ decay exponentially away from the diagonal elements.
A corollary of the exponential decay, shown in~\cref{corol:bandedCirculantBlocks}, is that the diagonal blocks are banded circulant matrices with a bandwidth that is logarithmic in the number of modes~$N$;
here we refer to the number of cyclically nonzero elements in any row of a circulant matrix as `bandwidth' of the matrix.

\begin{proposition}[exponentially decaying diagonal blocks]
\label{prop:decayingDiagBlocks}
Let $m_0 \in \reals^{+}$ be the free mass and $\upbm{A}\in\reals^{N\times N}$~\eqref{eq:multiscale_ICM} be the ground-state ICM in a multi-scale wavelet basis with the wavelet index $\dbIndex \in \integers_{\geq 3}$.
Also let \{$\upbm{A}^{(r,r)}_\textup\ww:\ceil*{\log_2(4\dbIndex-2)}\leq r <\log_2 N\}$ be the diagonal ww blocks of~$\upbm{A}$ as in~\cref{eq:multiscale_ICM}.
Then for any $r$ and $j\geq 2\dbIndex-1$
\begin{equation}
\label{eq:decayingDiagBlocks}
    \abs{A^{(r,r)}_{\textup{ww};\, 0, j}}
    \leq 16\dbIndex m_0\kappa^{(r+1)} 2^{-\abs{j}/\xi^{(r+1)}},
    \quad
    \xi^{(r)}:=(2\dbIndex-1)2^{r+1}/m_0,
\end{equation}
where $\kappa^{(r)}>1$ is the spectral condition number of~$\upbm{K}_\sS^{(r)}$~\eqref{eq:fixedscaleK}.
\end{proposition}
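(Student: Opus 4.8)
The plan is to write the diagonal ww block as the wavelet‑basis representation of the principal square root of the banded circulant fixed‑scale coupling matrix $\upbm{K}^{(k)}_\sS$~\eqref{eq:fixedscaleK} (with $N=2^k$), to obtain exponential off‑diagonal decay from polynomial approximation of $\sqrt{\,\cdot\,}$ on its spectrum, to control the prefactor via operator concavity of $\sqrt{\,\cdot\,}$, and to convert the scale‑$k$ estimate into the scale‑$(r+1)$ quantities $\kappa^{(r+1)},\xi^{(r+1)}$ appearing in~\eqref{eq:decayingDiagBlocks} by tracking the grid coarsening built into the wavelet cascade.

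\emph{Step 1 (reduction).} Since the multi‑scale basis is the image of the scale‑$k$ scaling basis under the orthogonal wavelet cascade and $\upbm{A}=\sqrt{\upbm{K}}$ with $\upbm{A}^{(k)}_\sS=\sqrt{\upbm{K}^{(k)}_\sS}$, one has $A^{(r,r)}_{\ww;\,0,j}=\sum_{m,m'}g_m\,g_{m'-2j}\,[P\sqrt{\upbm{K}^{(k)}_\sS}P^{\T}]_{m,m'}$, where $P\in\reals^{2^{r+1}\times 2^k}$ is the cascade operator expressing the scale‑$(r+1)$ scaling functions in the scale‑$k$ ones (so $PP^{\T}=\id$, $\lVert P\rVert=1$) and $g$ is the $2\dbIndex$‑tap high‑pass filter of~\eqref{eq:wavelet_function}. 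I record: $\upbm{K}^{(k)}_\sS$ is real symmetric positive‑definite circulant with half‑bandwidth $2\dbIndex-2$ (from the support of the derivative overlaps~\eqref{eq:derivative_overlaps}); its least eigenvalue equals $m_0^2$ (since $\sum_\ell\Delta_\ell=0$, because $\sum_\ell s(x-\ell)\equiv1$); its largest is $\Theta(4^k)$; and $P\,\upbm{K}^{(k)}_\sS\,P^{\T}=\upbm{K}^{(r+1)}_\sS$ exactly, from which $\sqrt{\kappa^{(k)}}\,2^{-(k-r-1)}\lesssim\sqrt{\kappa^{(r+1)}}$, the rescaling that will produce $\xi^{(r+1)}$. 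The hypothesis $j\ge2\dbIndex-1$ makes $w^{(r)}_0$ and $w^{(r)}_j$ disjointly supported, so in the sum every term has coarse‑index separation $\ge 2j-(2\dbIndex-1)$ and only off‑diagonal entries contribute.

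\emph{Step 2 (decay, prefactor, assembly).} For the degree‑$n$ minimax polynomial $p_n$ of $\sqrt{\,\cdot\,}$ on $[m_0^2,\lambda_{\max}(\upbm{K}^{(k)}_\sS)]$ the standard Chebyshev bound gives $\lVert\sqrt{\,\cdot\,}-p_n\rVert_\infty\le 2\sqrt{\lambda_{\max}}\,\bigl(\tfrac{\sqrt{\kappa^{(k)}}-1}{\sqrt{\kappa^{(k)}}+1}\bigr)^{n}$ (the branch point of $\sqrt{\,\cdot\,}$ at $0$ lies on the Bernstein ellipse of radius $\tfrac{\sqrt{\kappa^{(k)}}+1}{\sqrt{\kappa^{(k)}}-1}$), while $P p_n(\upbm{K}^{(k)}_\sS)P^{\T}$ is banded on the scale‑$(r+1)$ grid with half‑bandwidth $O(\dbIndex\,n/2^{k-r}+\dbIndex)$; choosing $n\asymp 2^{k-r}s/\dbIndex$ for a coarse separation $s$ bounds the $(m,m')$‑entry of $P\sqrt{\upbm{K}^{(k)}_\sS}P^{\T}$ by an exponential in $s$ of rate $\gtrsim 1/((2\dbIndex-1)\sqrt{\kappa^{(r+1)}})\asymp 1/\xi^{(r+1)}$ after applying the Step‑1 rescaling. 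For the prefactor, operator concavity of $\sqrt{\,\cdot\,}$ yields $0\le P\sqrt{\upbm{K}^{(k)}_\sS}P^{\T}\le\sqrt{P\upbm{K}^{(k)}_\sS P^{\T}}=\sqrt{\upbm{K}^{(r+1)}_\sS}$, so all entries are bounded in magnitude by $\lVert\sqrt{\upbm{K}^{(r+1)}_\sS}\rVert=m_0\sqrt{\kappa^{(r+1)}}\le m_0\kappa^{(r+1)}$. Summing the $O(\dbIndex^2)$ nonzero terms with $\lVert g\rVert_1\le\sqrt{2\dbIndex}$ and absorbing the $\dbIndex$‑ and $2$‑power constants into the overall factor $16\dbIndex$ gives~\eqref{eq:decayingDiagBlocks}; the regime $2^r\lesssim m_0$ (where $\kappa^{(r+1)}=O(1)$ and $\xi^{(r+1)}=O(\dbIndex)$) is covered too, since there the prefactor alone already dominates.

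\emph{Main obstacle.} The genuinely delicate point is producing a bound whose right‑hand side is independent of the top scale $k$: the naive polynomial‑error step contributes a prefactor $\sqrt{\lambda_{\max}(\upbm{K}^{(k)}_\sS)}=\Theta(2^k)$, which has to be traded for the scale‑$(r+1)$ quantity $m_0\kappa^{(r+1)}$. The operator‑concavity inequality $P\sqrt{\upbm{K}^{(k)}_\sS}P^{\T}\le\sqrt{\upbm{K}^{(r+1)}_\sS}$ pins the \emph{entrywise} magnitude, but one must still argue that the decay estimate only ever feels the low part of the spectrum — equivalently, that the ``Jensen gap'' $\sqrt{\upbm{K}^{(r+1)}_\sS}-P\sqrt{\upbm{K}^{(k)}_\sS}P^{\T}\ge 0$, though not small in general, inherits off‑diagonal decay at the same scale‑$(r+1)$ rate — and this is where the rapid Fourier decay of the Daubechies scaling function and the multiresolution nesting $\mathcal{S}_{r+1}\subset\mathcal{S}_k$ must be exploited; I expect this, together with the bookkeeping needed to land the precise constants $16\dbIndex$ and $\xi^{(r+1)}$ rather than mere $\Theta$‑versions, to be the hardest part. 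A softer alternative starts from the closed‑form kernel $\tfrac{m_0}{\pi}K_1(m_0\lvert x\rvert)/\lvert x\rvert$ of $\sqrt{-\nabla^2+m_0^2}$ and its $e^{-m_0\lvert x\rvert}$ decay, with $\kappa^{(r+1)}$ entering through a separate discretization‑error term, but I would keep the polynomial argument as the main line since it stays entirely within the discrete, circulant setting already used in the paper.
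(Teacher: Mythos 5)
Your proposal assembles the right raw ingredients---the banded circulant coupling matrix with $\lambda_{\min}(\upbm{K}_\sS)=m_0^2$, Chebyshev/Bernstein-ellipse approximation of $\sqrt{\,\cdot\,}$ on its spectrum, and the finite support of the high-pass filter---but it deploys them at the finest scale $k$ and, as you yourself flag, never closes the resulting gap. Concretely: the polynomial-approximation bound for $\sqrt{\upbm{K}^{(k)}_\sS}$ carries the prefactor $\sqrt{\lambda_{\max}(\upbm{K}^{(k)}_\sS)}=\Theta(2^k)$, and your proposed fix (operator concavity, $P\sqrt{\upbm{K}^{(k)}_\sS}P^{\T}\le\sqrt{\upbm{K}^{(r+1)}_\sS}$) only caps the \emph{magnitude} of entries by $m_0\sqrt{\kappa^{(r+1)}}$; it says nothing about their decay. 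The two bounds cannot be merged into the stated form $m_0\kappa^{(r+1)}2^{-|j|/\xi^{(r+1)}}$ by any simple interpolation (e.g.\ taking a geometric mean halves the rate and still leaves a $k$-dependent prefactor), so the step you label the ``Jensen gap'' is not a technicality to be filled in later---it is the proof. As written, the argument establishes neither the prefactor nor the rate of \cref{eq:decayingDiagBlocks}.

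The paper avoids this entirely by never approximating at scale $k$. It first proves a standalone fixed-scale decay lemma (\cref{prop:decayingICM}) by applying the Benzi--Golub theorem directly to $\sqrt{\upbm{K}^{(r+1)}_\sS}$: a banded circulant matrix of half-bandwidth $2\dbIndex-2$ with spectrum in $[m_0^2,\Theta(4^{r+1})]$, so every quantity in the resulting bound ($4m_0\kappa^{(r+1)}$ and the rate $1/\xi^{(r+1)}$) is a scale-$(r+1)$ quantity with no reference to $k$. It then uses only \emph{one} level of the filter bank, $\upbm{A}^{(r,r)}_\ww=\upbm{G}\upbm{A}^{(r+1)}_\sS\upbm{G}^{\T}$, so that $A^{(r,r)}_{\ww;\,0,j}=\sum_{m,n}g_ng_mA^{(r+1)}_{\sS;\,0,2j+m-n}$; the hypothesis $j\ge2\dbIndex-1$ gives $2j+m-n\ge j$, monotonicity of the decay envelope applies, and $\sum_{m,n}|g_mg_n|<4\dbIndex^2$ supplies the remaining constant. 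If you want to salvage your route, you would have to prove off-diagonal decay of the compressed square root $\upbm{H}^{k-r-1}\sqrt{\upbm{K}^{(k)}_\sS}(\upbm{H}^{\T})^{k-r-1}$ with a $k$-independent prefactor from scratch; the far shorter path is to prove the decay lemma at scale $r+1$ and descend one level, as the paper does. (Your observation that compression does not commute with the square root is a fair one---it is the point at which the paper silently identifies the recursively defined $\upbm{A}^{(r+1)}_\sS$ of \cref{eq:Ass} with $\sqrt{\upbm{K}^{(r+1)}_\sS}$---but raising that subtlety does not substitute for a bound.)
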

\noindent
This proposition is proven in~\cref{proofprop:decayingDiagBlocks}.

\begin{corollary}[banded circulant blocks]
\label{corol:bandedCirculantBlocks}
The diagonal blocks of the approximate ICM in a multi-scale wavelet basis are banded matrices with the upper bandwidth
\begin{equation}
\label{eq:bandwidth}
    w=\ceil*{\frac{2(2\dbIndex-1)}{m_0}
    \log N \log_2\left(\frac{4\dbIndex N}{m_0\varepsilon_\text{vac}}\right)},
\end{equation}
where all parameters are specified in~\cref{table:inputsgroundstategen}.
\end{corollary}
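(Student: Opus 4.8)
The plan is to read the bandwidth directly off the exponential decay in \cref{prop:decayingDiagBlocks}, matched against the truncation threshold chosen in \cref{subsubsec:wavelet_method}. Recall that the approximate ICM is the $\varepsilon_\text{th}$-approximate ICM of \cref{def:Avarepsilonth} with $\varepsilon_\text{th}=m_0\varepsilon_\text{vac}N^{-3/2}$. First I would check that this choice meets the hypothesis of \cref{prop:truncated_Gaussian}: the multi-scale coupling matrix $\upbm{K}$ equals $m_0^2$ times the identity plus a positive-semidefinite matrix (the wavelet transform of the positive-semidefinite derivative-overlap contribution), so its spectrum lies in $[m_0^2,\infty)$, whence $\min\operatorname{spec}\upbm{A}=\min\operatorname{spec}\sqrt{\upbm{K}}\geq m_0$ and $\varepsilon_\text{th}\leq\varepsilon_\text{vac}N^{-3/2}\min\operatorname{spec}\upbm{A}$ as required. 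Consequently every entry of the exact ICM of magnitude below $\varepsilon_\text{th}$ is replaced by zero, and it suffices to bound how far from the diagonal an entry of a diagonal block can lie while still exceeding $\varepsilon_\text{th}$.

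For a diagonal ww block at scale $r$ and $|j|\geq 2\dbIndex-1$, \cref{prop:decayingDiagBlocks} bounds $\abs{A^{(r,r)}_{\ww;\,0,j}}$ by $16\dbIndex m_0\kappa^{(r+1)}2^{-|j|/\xi^{(r+1)}}$. Such an entry survives the cutoff only if this bound is at least $\varepsilon_\text{th}$, which rearranges to
\begin{equation}
|j|\;\leq\;\xi^{(r+1)}\log_2\!\left(\frac{16\dbIndex m_0\kappa^{(r+1)}}{\varepsilon_\text{th}}\right).
\end{equation}
Entries with $|j|<2\dbIndex-1$ lie inside any band of width at least $2\dbIndex-1$, and the coarsest ss block has fixed dimension $2^{s_0}=O(\dbIndex)$ so is trivially banded. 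Since truncation acts entrywise, it preserves the circulant structure of every block, so each truncated diagonal block is a banded circulant matrix whose upper bandwidth is bounded by the right-hand side above at the relevant scale.

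It remains to make this uniform over the diagonal blocks. Both the decay length $\xi^{(r+1)}$ and the condition number $\kappa^{(r+1)}$ of $\upbm{K}_\sS^{(r+1)}$ increase with $r$, and the finest diagonal block sits at $r=\log_2N-1$, so the worst case is $r+1=\log_2N$. Substituting this scale into the decay length, bounding $\kappa^{(r+1)}$ by the full-scale condition number---which is polynomial in $N/m_0$ via \cref{prop:condition_bound} together with $\upbm{K}=\upbm{A}^2$---and simplifying the logarithm with $\varepsilon_\text{th}=m_0\varepsilon_\text{vac}N^{-3/2}$ collapses the right-hand side to the expression in \eqref{eq:bandwidth}; taking a ceiling makes $w$ an integer. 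The same $w$ then bounds the bandwidth of every diagonal block, since the estimate at scale $\log_2N-1$ dominates those at all coarser scales.

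The one piece of real work is this last step of bookkeeping: the constants emerging from the proof of \cref{prop:decayingDiagBlocks} must be tracked through the logarithm and the maximization over the $\Theta(\log N)$ scales so that the prefactor $\tfrac{2(2\dbIndex-1)}{m_0}\log N$ and the argument $\tfrac{4\dbIndex N}{m_0\varepsilon_\text{vac}}$ of the logarithm come out exactly as stated rather than as looser surrogates. A minor additional point worth checking is that, for the small blocks, $w$ may exceed the half-dimension $2^{r-1}$, in which case the ``band'' wraps around and the block is simply (near-)full---the banded-circulant statement still holds, but its content lies in the large blocks, which is precisely the fingerlike picture of \cref{fig:truncICM}.
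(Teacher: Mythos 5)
Your proposal is correct and follows essentially the same route as the paper's proof in \cref{appx:proofcorol:bandedCirculantBlocks}: invoke the exponential decay of \cref{prop:decayingDiagBlocks}, compare against the threshold $\varepsilon_\text{th}=m_0\varepsilon_\text{vac}N^{-3/2}$ justified by \cref{prop:truncated_Gaussian} and the fact that $\min\operatorname{spec}\upbm{A}=m_0$, take the worst case at the finest-scale diagonal block, and bound $\kappa^{(k)}$ via \cref{prop:condition_bound} to collapse the logarithm to $\log_2\!\left(4\dbIndex N/(m_0\varepsilon_\text{vac})\right)$. The final arithmetic you defer as ``bookkeeping'' is exactly what the paper carries out in Eqs.~\eqref{eq:jBound}--\eqref{eq:bandwidthIneqs}, so nothing essential is missing.
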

\noindent
By this corollary, proven in~\cref{appx:proofcorol:bandedCirculantBlocks}, each diagonal ww block of the approximate ICM~$\tilde{\upbm{A}}$ is a banded $1$-circulant matrix~with bandwidth~$2w+1$.
The wavelet transform implies that the off-diagonal ww blocks are also banded matrices.
Specifically, the off-diagonal ww block at entry~$(r, c)$ of the block matrix~$\tilde{\upbm{A}}$ is a banded~$2^{c-r}$-circulant matrix with bandwidth
\begin{equation}
\label{eq:block_bandwidth}
    \textsc{width}(r,c):= 2w+1+(2^{c-r}-1)(2\dbIndex-1) \quad \forall\, c>r.
\end{equation}
However, the ss and sw blocks of $\tilde{\upbm{A}}$ are not necessarily banded matrices.
We therefore treat these blocks as dense matrices.

We now describe a data structure for representing the bock matrix in a multi-scale wavelet basis.
The date structure that we describe here takes advantage of the matrix's block and circulant structures for efficient storage.
In particular, we use the data structure for storing the multi-scale ICM~$\upbm{A}$~\eqref{eq:multiscale_ICM}. 
We store the block matrix in a multi-scale wavelet basis by an associative array, i.e., by a collection of (key, value) pairs.
Each key is a tuple $(z,r,c)$, where $z\in\{\text{ss,sw,ww}\}$ specifies if the block belongs to the ss, sw or ww part of the block matrix.
The positive integers~$r$ and~$c$ respectively specify the row and column indices of the block matrix;
these integers also specify the scales to which the block belongs.
The value of the associative array specifies the block at entry~$(z, r, c)$ of the block matrix.
Each block, being a circulant matrix, is specified by a vector, which is the first row of the block, and two additional parameters:
the block's size and the amount by which the vector is shifted when moving from one row to the next.
For our application, each key specifies the block's size and the amount by which the vector is shifted.
Therefore, the values are vectors that specify blocks of the block matrix.

The memory requirement for storing a symmetric block matrix with circulant blocks using the described data stricture is quasilinear in the matrix's dimension.
The fact that each block matrix is circulant means
that we only need to store one row of each circulant block;
the other rows are shifted versions of the stored row.
Assuming unit cost for storing each element, the cost of storing a block matrix with circulant blocks is equal to the sum of the row size for each block, which is altogether quasilinear in the matrix's dimension.
The symmetry of the matrix reduces this cost by a factor of two.
The memory cost can be further reduced by noting that most of the entries
in the vector specifying the circulant blocks are equal to zero.
We do not use this technique because we are only concerned with producing a quasilinear algorithm for ground-state generation.
The memory-cost reduction only delivers a constant-factor improvement and makes the algorithm much more complicated.
However, it is an obvious way to improve the classical preprocessing of the wavelet-based algorithm further.

We now describe our algorithm's procedure for computing the circulant row in unique blocks of the multi-scale ICM~$\upbm{A}$~\eqref{eq:multiscale_ICM}.
To elucidate the algorithm's procedure, first we state recursive relations for blocks of the multi-scale ICM.
The ICM in a multi-scale wavelet basis is obtained by a wavelet transform from the ICM in a fixed-scale wavelet basis.
Specifically, for $d := k-s_0$,
\begin{equation}
    \upbm{A}^{(k)}=\upbm{W}^{(k)}_d\upbm{A}^{(k)}_\sS \upbm{W}^{(k)\T}_d,
\end{equation}
where $\upbm{W}^{(k)}_d$~\eqref{eq:WTM} is the $d$-level wavelet-transform matrix at scale~$k$.
By this equation, for scales~$r$ and~$c$ with $s_0 \leq r \leq c <k $, the wavelet transform imposes the recursive relations
\begin{align}
 &\label{eq:Ass}
 \upbm{A}^{(c)}_\sS = \upbm{H} \upbm{A}^{(c+1)}_\sS \upbm{H}^\T, \\
 &\label{eq:Asw}
 \upbm{A}^{(r,c)}_\sw = \upbm{H}^{c-r+1} \upbm{A}^{(c+1)}_\sS \upbm{G}^\T, \\
 &\label{eq:Aww}
 \upbm{A}^{(r, c)}_\ww = \upbm{G} \upbm{H}^{c-r} \upbm{A}^{(c+1)}_\sS \upbm{G}^\T,
\end{align}
for blocks of the multi-scale ICM~$\upbm{A}$ and the fixed-scale ICM~$\upbm{A}_\sS^{(c+1)}$, where
$\upbm{H}$ and $\upbm{G}$ are the upper and lower half of~$\upbm{W}^{(k)}_d$~\eqref{eq:WTM}, respectively.
We use these recursive formulae to compute the circulant in unique blocks of the multi-scale ICM~$\upbm{A}$.

We start by computing the circulant row of the bottom-right block in~$\upbm{A}$ and proceed to compute the circulant row of the top-left block column by column.
For each column~$c$ of the block matrix, first we compute the circulant row of~$\upbm{A}^{(c)}_\sS$ by~\cref{eq:Ass}.
Next we compute circulant row of the diagonal block in column~$c$ using~\cref{eq:Aww} with~$r=c$.
For each ww block above the diagonal block,
i.e., for ww blocks with row index~$r$ form~$c-1$ to~$s_0$, we then iteratively update~$\upbm{A}^{(c+1)}_\sS$ as~$\upbm{A}^{(c+1)}_\sS \gets \upbm{H}\upbm{A}^{(c+1)}_\sS$ and compute circulant row of~$\upbm{A}^{(r, c)}_\ww$ by~$\upbm{A}^{(r, c)}_\ww = \upbm{G} \upbm{A}^{(c+1)}_\sS \upbm{G}^\T$.
Finally, we compute the circulant row of the sw block in column~$c$ by~$\upbm{A}^{(s_0,c)}_\sw = \upbm{A}^{(c+1)}_\sS \upbm{G}^\T$ because~$\upbm{A}^{(c+1)}_\sS$ in~\cref{eq:Asw} is updated~$c-s_0+1$ times while computing the circulant row of the ww blocks in column~$c$.
The explicit procedure of our algorithm for computing the circulant rows is presented in Algorithm~\ref{alg:invCovCircRows}.

\begin{algorithm}[H]
  \caption{Classical algorithm for computing circulant row in blocks of the ICM~\eqref{eq:multiscale_ICM} in multi-scale wavelet basis}
  \label{alg:invCovCircRows}
  \begin{algorithmic}[1]
  \Require{
  \Statex $\dbIndex \in \integers_{\geq 3}$
    \Comment{wavelet index}
    \Statex $m_0\in\reals^+$
    \Comment{free-QFT mass}
    \Statex $N \in \integers_{\geq 2(2\dbIndex-1)}$
    \Comment{number of modes}
    \Statex $\bm{\Delta} \in \reals^{2\dbIndex-1}$
    \Comment{derivative overlaps~\eqref{eq:derivative_overlaps} for Laplace operator}
    \Statex $p \in \integers^+$
    \Comment{working precision}
  }
  \Ensure{
  \Statex $\upbm{a}:=\cBraket{\left.
  \upbm{a}^{(s_0)}_\sS\in \reals^{2^{s_0}},
  \upbm{a}^{(s_0, c)}_\sw\in \reals^{2^c},
  \upbm{a}^{(r, c)}_\ww\in \reals^{2^{c-r}}
  \right\vert
  \ceil*{\log_2(4\dbIndex-2)} =:s_0\leq r \leq c < k:=\log_2 N}$
  \Comment{circulant row in main and upper-diagonal blocks of the multi-scale ICM~$\upbm{A}$~\eqref{eq:multiscale_ICM}}
  }
\Function{invCovCircRows}{$\dbIndex, m_0, N, \bm{\Delta}, p$}
\State \label{line:invCovEigens}
$\reals^N \ni \bm{\uplambda} \gets
\textsc{invCovEigens}(\dbIndex, m_0, N, \bm{\Delta}, p)$
\Comment{computes eigenvalues of~$\upbm{A}$ by~\cref{alg:invCovEigens}}   
\For{$j \gets 0$ to $N-1$} \label{line:scalekCircrow-start}
    \Comment{computes circulant row of $\upbm{A}_\sS^{(k)}$}
    \State $a_{\sS;\,j }^{(k)} \gets
    \tfrac{1}{N} \sum\limits_{i=0}^{N-1}
    \lambda_i \cos\pars{\tfrac{2\uppi j}{N}i}$
\EndFor \label{line:scalekCircrow-end}
\State \label{line:lowpass}
$\reals^{2\dbIndex} \ni \upbm{h} \gets
\textsc{lowPassFilter}(\dbIndex,p)$
\Comment{computes low-pass filter for index-\dbIndex\ wavelet by~\cref{alg:lowPass}}
\For{$i \gets 0$ to $2\dbIndex-1$} \label{line:highpass-start}
    \Comment{computes high-pass filter}
    \State $g_i \gets (-)^i h_{2\dbIndex-1-i}$
\EndFor \label{line:highpass-end}
\For{$c \gets k-1$ to $s_0$}
\label{line:ssCircrow-start}
\Comment{iterates over column index of the block matrix~$\upbm{A}$}
    \For{$j \gets 0$ to $2^c-1$}
        \State $a_{\sS;\, j}^{(c)} \gets
        \sum \limits_{m,n=0}^{2\dbIndex-1}
         h_m h_n a_{\sS;\, (m+2j-n)\bmod 2^{c+1}}^{(c+1)}$
         \Comment{computes circulant row of $\upbm{A}^{(c)}_\sS$ from the circulant row of $\upbm{A}^{(c+1)}_\sS$ by~\cref{eq:Ass}}
    \EndFor \label{line:ssCircrow-end}
    \For{$r \gets c$ to $s_0$}
    \Comment{iterates over row index of the block matrix~$\upbm{A}$}
        \For{$j \gets 0$ to $2^c-1$}
             \State $a_{\ww;\, j}^{(r,c)}\gets
             \sum \limits_{m,n=0}^{2\dbIndex-1}
             g_m g_n a^{(c+1)}_{\sS;\, (m+2j-2^{c-r} n)\bmod 2^{c+1}}$
            \Comment{computes circulant row of $\upbm{A}^{(r,c)}_\ww$}
        \EndFor
        \State $\upbm{x} \gets \upbm{a}^{(c+1)}_\sS$
        \Comment{to be updated, the circulant row of $\upbm{A}^{(c+1)}_\sS$ is stored on scratchpad as a temporary vector $\upbm{x}$
        }
        \For{$j \gets 0$ to $2^{c+1}-1$}
            \State $a^{(c+1)}_{\sS;\, j} \gets \sum\limits_{m=0}^{2\dbIndex-1} h_m x_{(j-2^{c-r}m)\bmod 2^{c+1}}$
        \EndFor
    \EndFor
    \For{$j \gets 0$ to $2^c-1$}
    \Comment{computes circulant row of the sw block in column $c$}
    \State
    \label{line:swCircrow-end}
    $a^{(s_0,c)}_{\sw;\, j\bmod 2^c} \gets \sum\limits_{m=0}^{2\dbIndex-1} g_m a^{(c+1)}_{\sS;\, (m+2j)\bmod 2^{c+1} }$
    \EndFor
\EndFor
\State \Return $\upbm{a}$
\EndFunction
\end{algorithmic}
\end{algorithm}

\subsubsection{UDU decomposition of ground state's ICM}
\label{subsubsec:UDUDecomp}

Here we present our classical algorithm for computing the UDU decomposition of the approximate ICM in a multi-scale wavelet basis.
First we describe a block variant of the UDU decomposition for a real-symmetric matrix.
Next we explain how we approximate the UDU decomposition of the approximate ICM.
We then specify the inputs and outputs for our UDU-decomposition algorithm for a sparse matrix and explain an efficient method for storing the outputs.
Finally, we describe our algorithm's procedure and present the algorithm as pseudocode.

We begin by describing a block variant for the UDU decomposition of a dense real-symmetric matrix~$\upbm{A}$ that has a block structure as the matrix in~\cref{eq:multiscale_ICM};
see~\cref{appx:UDU} for the standard UDU decomposition of~$\upbm{A}$.
By~\cref{eq:UDU}, we write the UDU decomposition of the block matrix~$\upbm{A}$~\eqref{eq:multiscale_ICM} as
\begin{equation}
\label{eq:blockUDU}
\upbm{A} = \upbm{UDU}^\T := \upbm{UV}^\T =
\begin{bmatrix*}[l]
\upbm{U}^{(s_0)}_\sS & \upbm{U}^{(s_0,\,s_0)}_\sw & \cdots & \upbm{U}^{(s_0,\,k-1)}_\sw \\
                     & \upbm{U}^{(s_0,\,s_0)}_\ww & \cdots & \upbm{U}^{(s_0,\,k-1)}_\ww \\
                     &                            & \ddots & \quad\vdots                \\
                     &                            &        & \upbm{U}^{(k-1,\,k-1)}_\ww
\end{bmatrix*}
\begin{bmatrix*}[l]
\upbm{V}^{(s_0)}_\sS & \upbm{V}^{(s_0,\,s_0)}_\sw & \cdots & \upbm{V}^{(s_0,\,k-1)}_\sw \\
                     & \upbm{V}^{(s_0,\,s_0)}_\ww & \cdots & \upbm{V}^{(s_0,\,k-1)}_\ww \\
                     &                            & \ddots & \quad\vdots                \\
                     &                            &        & \upbm{V}^{(k-1,\,k-1)}_\ww
\end{bmatrix*}^\T,
\end{equation}
where
\begin{equation}
\label{eq:blcokDiagonalD}
   \upbm{D}:= \upbm{D}^{(s_0)}_\sS
\bigoplus_{s=s_0}^{k-1} \upbm{D}^{(s)}_\ww, 
\end{equation}
is a block-diagonal matrix and
\begin{equation}
\label{eq:matrixV}
\upbm{V}^{(s_0)}_\sS:= \upbm{U}^{(s_0)}_\sS \upbm{D}^{(s_0)}_\sS,\;\;\;
\upbm{V}^{(s_0,c)}_\sw:= \upbm{U}^{(s_0,c)}_\sw \upbm{D}^{(c)}_\ww,\;\;\;
\upbm{V}^{(r,c)}_\ww:= \upbm{U}^{(r,c)}_\ww \upbm{D}^{(c)}_\ww \;\;\;
(s_0 \leq r\leq c<k),
\end{equation}
are blocks of the block matrix~$\upbm{V}$, which has the same block structure as~$\upbm{U}$.
Diagonal elements of~$\upbm{D}$ and shear elements of~$\upbm{U}$ are computed in the UDU matrix decomposition.
By~\cref{eq:blockUDU}, the $i^\text{th}$ diagonal element of~$\upbm{D}^{(c)}_\ww$ and~$\upbm{D}^{(s_0)}_\sS$ are
\begin{align}
& \label{eq:wwDiags}
    d^{(c)}_{\ww;\,i} = a^{(c,c)}_{\ww;\,i,i}
    -\sum_{s=c}^{k-1}
    \upbm{u}^{(c,s)}_{\ww;\,
    i,(i+1)\updelta_{sc}:2^s-1}
    \cdot \upbm{v}^{(c,s)}_{\ww;\, i,(i+1)\updelta_{sc}:2^s-1},\\
& \label{eq:ssDiags}
    d^{(s_0)}_{\sS;\,i} = a^{(s_0)}_{\sS;\,i,i}
    -\upbm{u}^{(s_0)}_{\sS;\, i,i+1:2^{s_0}-1}
    \cdot \upbm{v}^{(s_0)}_{\sS;\, i,i+1:2^{s_0}-1}
    -\sum_{s=s_0+1}^{k-1}
    \upbm{u}^{(s_0,s)}_{\sw;\,i,0:2^s-1}
    \cdot \upbm{v}^{(s_0,s)}_{\sw;\,i,0:2^s-1},
\end{align}
respectively;
see also~\cref{eq:diagsOfD}.

We now establish formulae to compute shear elements in various blocks of~$\upbm{U}$.
For $0< i <2^{s_0}$, the shear elements in the~$i^\text{th}$ column of~$\upbm{U}$ are the same as the shear elements in the~$i^\text{th}$ column of~$\upbm{U}^{(s_0)}_\sS$.
By \cref{eq:shearsOfU}, these elements are
\begin{equation}
\label{eq:ssShears}
    \upbm{u}^{(s_0)}_{\sS;\, 0:i-1, i} = \frac1{d^{(s_0)}_{\sS;\, i}}
    \left[\upbm{a}^{(s_0)}_{\sS;\, 0:i-1, i}
    -\upbm{u}^{(s_0)}_{\sS;\, 0:i-1, i+1:2^{s_0}-1}
    \cdot\upbm{v}^{(s_0)}_{\sS;\, i, i+1:2^{s_0}-1}
    -\sum_{s=s_0+1}^{k-1}
    \upbm{u}^{(s_0,s)}_{\sw;\, 0:i-1, 0:2^s-1}
    \cdot\upbm{v}^{(s_0,s)}_{\sw;\, i, 0:2^s-1}
    \right]
    \quad \forall i \neq 0.
\end{equation}
For $c\geq s_0$ and $0 \leq i < 2^c$,
the shear elements in the~$(2^c+i)^\text{th}$ column of $\upbm{U}$ are the elements in the $i^\text{th}$ column of~$\upbm{U}^{(s_0,c)}_\sw$ and~$\upbm{U}^{(r,c)}_\ww$
for $s_0\leq r<c$,
and the elements above the diagonal entries of~$\upbm{U}^{(c,c)}_\ww$.
By~\cref{eq:shearsOfU}, these elements are
\begin{align}
&\label{eq:swShears}
    \upbm{u}^{(s_0,c)}_{\sw;\, 0:2^{s_0}-1, i}
    =\frac1{d^{(c)}_{\ww;\,i}}
    \left[\upbm{a}^{(s_0, c)}_{\sw;\, 0:2^{s_0}-1, i}
    -\sum_{s=c}^{k-1}
    \upbm{u}^{(s_0,s)}_{\sw;\, 0:2^{s_0}-1, (i+1)\updelta_{sc}:2^s-1}
    \cdot \upbm{v}^{(c,s)}_{\ww;\, i,(i+1)\updelta_{sc}:2^s-1}
    \right],\\
&\label{eq:wwShears}
    \upbm{u}^{(r,c)}_{\ww;\, 0:2^r-1, i}
    =\frac1{d^{(c)}_{\ww;\,i}}
    \left[\upbm{a}^{(r,c)}_{\ww;\, 0:2^r-1, i}
    -\sum_{s=c}^{k-1}
    \upbm{u}^{(r,s)}_{\ww;\, 0:2^r-1, (i+1)\updelta_{sc}:2^s-1}
    \cdot \upbm{v}^{(c,s)}_{\ww;\, i,(i+1)\updelta_{sc}:2^s-1}
    \right],\\
&\label{eq:diagonalwwShears}
    \upbm{u}^{(c,c)}_{\ww;\, 0:i-1, i} = \frac1{d^{(c)}_{\ww;\,i}}
    \left[\upbm{a}^{(c, c)}_{\ww;\, 0:i-1, i}
    -\sum_{s=c}^{k-1}
    \upbm{u}^{(c,s)}_{\ww;\, 0:i-1, (i+1)\updelta_{sc}:2^s-1}
    \cdot \upbm{v}^{(c,s)}_{\ww;\, i,(i+1)\updelta_{sc}:2^s-1}
    \right]
    \quad \forall i\neq 0,
\end{align}
respectively.
We use these formulae in our UDU-decomposition algorithm to compute the nonzero shear elements of the upper unit-triangular matrix~$\upbm{U}$ in the UDU decomposition of the approximate ICM in a multi-scale wavelet basis.

In our algorithm for the UDU decomposition of~$\tilde{\upbm{A}}$, we compute an approximation of the upper unit-triangular matrix $\upbm{U}$.
We take the above-diagonal nonzero elements of~$\upbm{U}$ to be in the same position as the above-diagonal nonzero elements of $\tilde{\upbm{A}}$;
we set any entry of~$\upbm{U}$ to zero if the corresponding entry in~$\tilde{\upbm{A}}$ is also zero.
We refer to this decomposition as the incomplete UDU decomposition by position.

We specify the location of nonzero entries of the approximate ICM by helper functions.
The helper functions return various parameters that specify nonzero entries in each ww block of the approximate ICM.
As shown in~\cref{fig:offDiagBlock}, each ww block has nonzero elements in three parts:
top-right part~(TRP),
bottom-left part~(BLP) and
main part~(MP).
For each block, we compute the following parameters by the five helper functions in~Library~\ref{lib:positionHelperFunctions}:
(1)~the number of nonzero (NNZ) entries in the last column at the TRP of the block;
(2)~bandwidth of the block, which is the number of cyclically consecutive nonzero entries in each row of the block;
(3)~vertical bandwidth of the block, which is the number of cyclically consecutive nonzero entries in each column of the block;
(4)~the column index of the first and last nonzero entries in the block's main part for a given row index; and
(5)~The row index of the first and last nonzero entries in the block's main part for a given column index.
These parameters fully specify the location of nonzero entries in each ww block.

\begin{figure}[htb]
\centering
    \includegraphics[width=\textwidth]{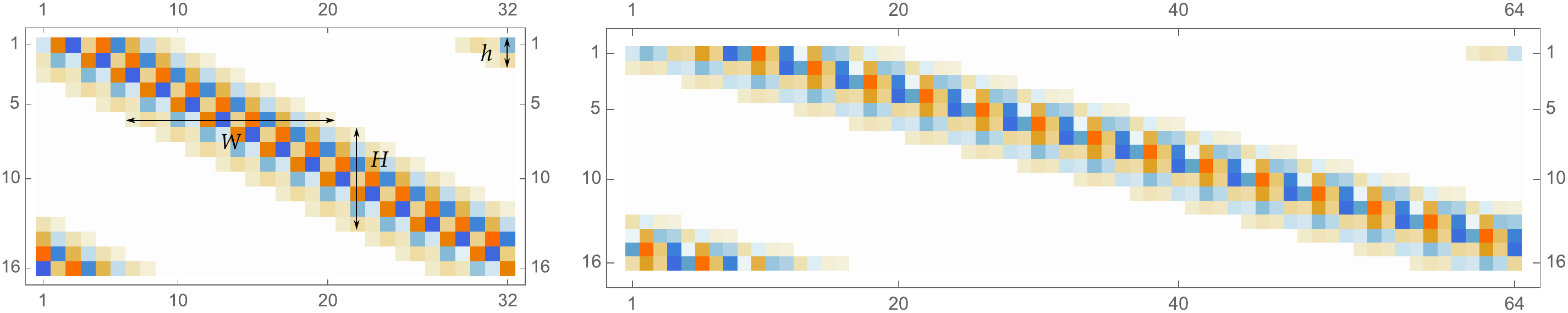}
\caption[Visualization of a ww block]{
Visualization of two ww blocks of the approximate ICM.
The left block is a $2$-circulant matrix and the right block is a $4$-circulant matrix.
Each ww block has nonzero elements in the top-right, bottom-left and main part.
The parameter~$h$ is the NNZ elements in the last column at the top-right part of the block;
$W$ is the bandwidth and $H$ is the vertical bandwidth of the block.
These parameters specify the location of nonzero entries in the block.}
\label{fig:offDiagBlock}
\end{figure}

\begin{library}[H]
  \caption{Helper functions for location of nonzero elements in blocks of the approximate ICM in multi-scale wavelet basis}
  \label{lib:positionHelperFunctions}
  \begin{algorithmic}[1]
\State \textbf{global} \dbIndex, $w$
\Comment{wavelet index \dbIndex\ and upper bandwidth $w$ of the diagonal \ww\ blocks of $\tilde{\upbm{A}}$ are global variables}
\Function{lastColNNZ}{$r,c$}
\Comment{computes NNZ elements in last column in top-right part of $(r,c)$ block}
\State \Return $\ceil*{\frac{w}{2^{c-r}}}$
\EndFunction
\Function{width}{$r,c$}
\Comment{computes bandwidth of $(r,c)$ block}
\State \Return $2w+1+(2^{c-r}-1)(2\dbIndex-1)$
\EndFunction
\Function{vertWidth}{$r,c$}
\Comment{computes vertical bandwidth of $(r,c)$ block}
\State \Return $\ceil*{
\frac{\textsc{width}(r,c)}{2^{c-r}}
}$
\EndFunction
\Function{mainCol}{$i,r,c$}
\Comment{computes column index of first and last nonzero entry in $i^\text{th}$ row in main part of $(r,c)$ block}
    \State $h \gets \textsc{lastColNNZ}(r,c)$
    \State $W \gets \textsc{width}(r,c)$
    \State $j_\text{F} \gets (i-h)2^{c-r}\times \Theta(i-h)$
    \Comment{$\Theta(n)$ is the unit-step function whose value is zero for $n<0$ and is one for $n\geq0$}
    \State $j_\text{L} \gets
    \min\left(j_\text{F}+W-1,2^c-1\right)\times\Theta(i-h)
    +\left(W-w+i2^{c-r}-1\right)\times\Theta(h-i-1)$
    \State \Return $(j_\text{F},j_\text{L})$
\EndFunction
\Function{mainRow}{$j,r,c$}
\Comment{computes row index of first and last nonzero entry in $j^\text{th}$ column in main part of $(r,c)$ block}
    \State $h \gets \textsc{lastColNNZ}(r,c)$
    \State $W \gets \textsc{width}(r,c)$
    \State $H \gets \textsc{vertWidth}(r,c)$
    \State $i_\text{F} \gets \ceil*{\frac{j-(W-w-1)}{2^{c-r}}}
    \times \Theta(j-(W-w))$
    \State $i_\text{L} \gets \min (i_\text{F} + H-1, 2^r-1) \times \Theta(j-(W-w))
    +\left(h + \floor*{\frac{j}{2^{c-r}}}\right)
    \times\Theta((W-w)-j-1)$
    \State \Return $(i_\text{F},i_\text{L})$
\EndFunction
\end{algorithmic}
\end{library}

We now specify inputs and outputs of our UDU-decomposition algorithm.
The algorithm's inputs are parameters that specify unique elements of the approximate ICM~$\tilde{\upbm{A}}$:
the wavelet index~$\dbIndex$,
the number of modes~$N$ in the discretized QFT,
the upper bandwidth~$w$ for diagonal blocks of~$\tilde{\upbm{A}}$
and the circulant row in unique blocks of~$\tilde{\upbm{A}}$. 
Outputs are shear elements of the approximate upper unit-triangular matrix~$\tilde{\upbm{U}}$ and diagonals of the diagonal matrix~$\upbm{D}$ in the incomplete UDU decomposition~of~$\tilde{\upbm{A}}$.

The approximate upper unit-triangular~$\tilde{\upbm{U}}$ matrix in the incomplete UDU decomposition has the same sparse and block-matrix structure as the matrix~$\tilde{\upbm{A}}$.
We exploit these structures and store the shear elements of~$\tilde{\upbm{U}}$ by a sparse representation.
Similar to sparse representation of~$\tilde{\upbm{A}}$,
we represent shear elements in blocks of~$\tilde{\upbm{U}}$ by an associative array, i.e., a collection of (key, value) pairs.
Each key is a tuple $(z, r, c)$, where $z\in\{\text{ss,sw,ww}\}$ specifies if the block belongs to the ss, sw or ww part of the block matrix.
The positive integers $r$ and $c$ specify the row and column indices of the block matrix, respectively.
The associative array's value specifies the shear elements of the block at entry~$(z, r, c)$ of the block matrix.
Unlike blocks of $\tilde{\upbm{A}}$, blocks of~$\tilde{\upbm{U}}$ are not circulant matrices, so we cannot specify each block by a single row of the block.
Therefore, each value in the associative array is a matrix that specifies the shear elements of a block in the block matrix.

The ss block of~$\tilde{\upbm{U}}$ is a unit-triangular matrix.
As illustrated in~\cref{fig:shearstoring}~(b), we store this block's shear elements as a vector of size~$2^{s_0}(2^{s_0}-1)/2$ by concatenating the shear elements from top-left to bottom-right corner sequentially row by row;
note that the ss block is a matrix of size~$2^{s_0}\times 2^{s_0}$.
Similar to the sw blocks of~$\tilde{\upbm{A}}$, the sw blocks of~$\tilde{\upbm{U}}$ are not sparse matrices.
Therefore, we store the shear elements in these blocks by a matrix of the same size.
The diagonal ww blocks are each an upper unit-triangular matrix.
The shear element of~$\tilde{\upbm{U}}_\ww^{(r,r)}$
are in the same position of the above-diagonal nonzero elements of~$\tilde{\upbm{A}}_\ww^{(r,r)}$;
see~\cref{fig:shearstoring}~(c).
As illustrated in~\cref{fig:shearstoring}~(d), we store the shear elements of~$\tilde{\upbm{U}}_\ww^{(r,r)}$ by a matrix of size~$2^r \times w$, where~$w$ is the upper bandwidth of the diagonal ww blocks of~$\tilde{\upbm{A}}$.
The off-diagonal ww blocks of~$\tilde{\upbm{U}}$ are sparse with the same sparse structure as the off-diagonal ww blocks of~$\tilde{\upbm{A}}$.
Specifically, the block~$\tilde{\upbm{U}}_\ww^{(r,c)}$ with~$c>r$ is a banded matrix with bandwidth $\textsc{width}(r,c)$~\eqref{eq:block_bandwidth}.
We store shear element of this block by a matrix of size~$2^r \times \textsc{width}(r,c)$, which is illustrated in~\cref{fig:shearstoring}~(f).
For convenience, we use the helper functions in~Library~\ref{lib:storingHelperFunctions} for storing the shear elements in the ss and ww blocks of~$\tilde{\upbm{U}}$.

\begin{figure}[htb]
\centering
\includegraphics[width=.7\linewidth]{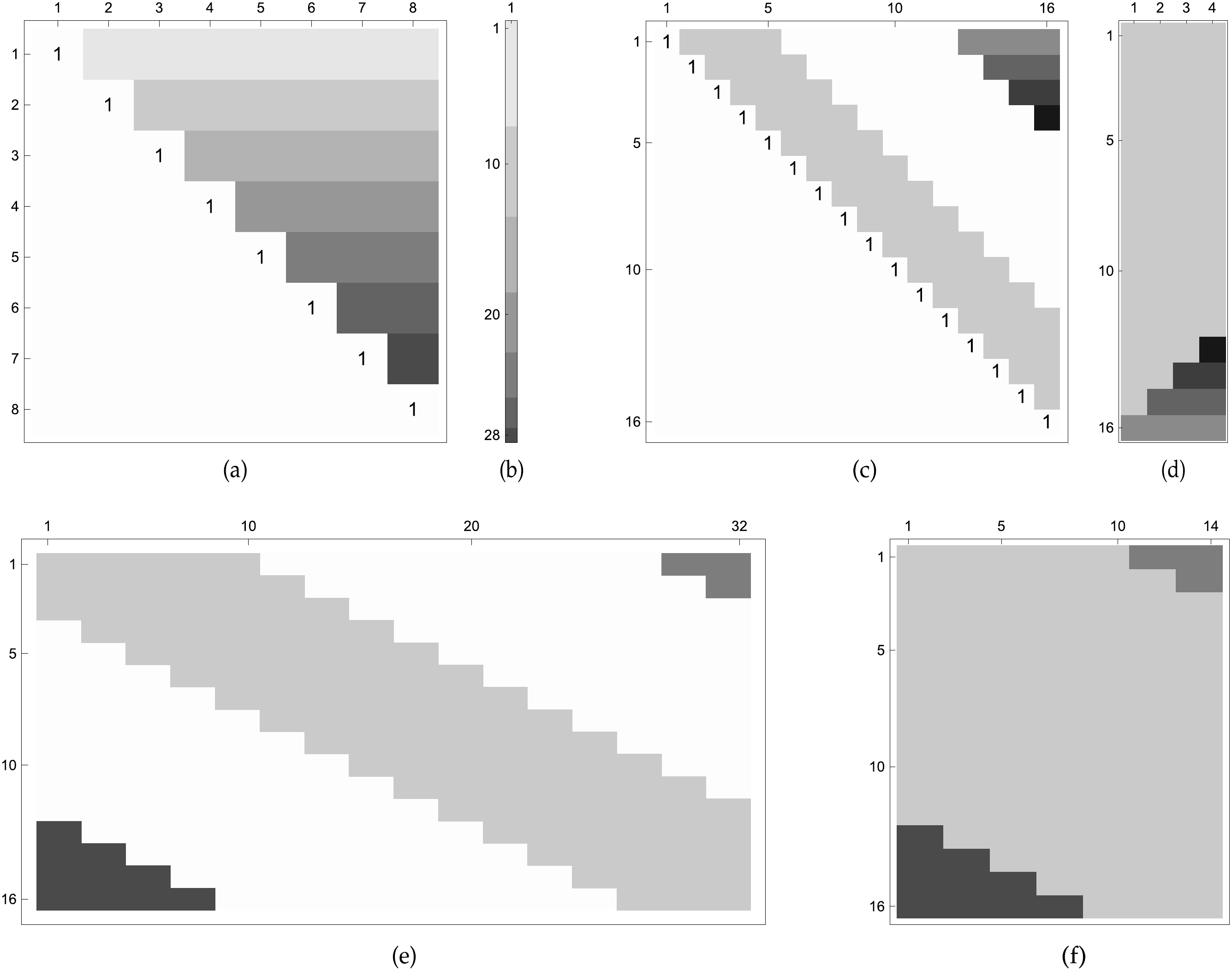}
 \caption[Schematic illustration of our method for storing the shear elements]{Schematic illustration of our method for storing shear elements in the ss and ww blocks of the upper unit-triangular matrix in the incomplete UDU decomposition of the approximate ICM in a multi-scale wavelet basis.
 To elucidate how we store the shear elements, we use gray color with different scales to show the location of nonzero elements in different parts of a block;
 gray color's scale here does not represent the relative magnitude of the elements
 (a)~Visualization of the ss block.
 This block is an upper unit-triangular matrix.
 (b)~Visualization of the vector storing the ss block's shear elements.
 (c)~Visualization of a diagonal ww block.
 This block is also an upper unit-triangular matrix.
 (d)~Visualization of the matrix storing shear elements of the diagonal ww block.
 (e)~Visualization of an off-diagonal ww block.
 (f)~Visualization of the matrix storing the shear elements of the off-diagonal ww block.
 \label{fig:shearstoring}
 }
\end{figure}

We now describe the procedure of our UDU-decomposition algorithm.
We employ the block variant for the UDU decomposition in our algorithm, but we only compute the nonzero shear elements in each column of~$\upbm{U}$.
The ss and sw blocks of~$\upbm{U}$ are dense matrices, so we compute all elements of these blocks.
The ww blocks, however, are sparse matrices similar to the ww blocks of~$\tilde{\upbm{A}}$.
We therefore use the helper functions in Library~\ref{lib:positionHelperFunctions} to specify the location of nonzero elements in the ww blocks.
We start from the top-right block of~$\upbm{U}$ and proceed to compute the nonzero shear elements in each block of~$\upbm{U}$ column by column.
For each column~$c$ of the block matrix~$\upbm{U}$, we compute diagonals~$d^{(c)}_{\ww;\,i}$ of~$\upbm{D}^{(c)}_\ww$, all elements of~$\upbm{U}_\sw^{(s_0,c)}$ and nonzero shear elements of $\upbm{U}_\ww^{(r,c)}$ for $r\in\{s_0,\ldots, c\}$.

To compute the diagonal element $d^{(c)}_{\ww;\,i}$, first we compute nonzero elements in the~$i^\text{th}$ row of~$\upbm{V}^{(c,s)}_\ww$~\eqref{eq:matrixV} for $s\in\{c,\ldots,k-1\}$.
Then we multiply each nonzero element in the~$i^\text{th}$ row of~$\upbm{V}^{(c,s)}_\ww$ to its corresponding nonzero element in the~$i^\text{th}$ row of $\upbm{U}^{(c,s)}$ and add the results.
Next we negate the result and add $a^{(c,c)}_{\ww;\,i,i}$ to obtain~$d^{(c)}_{\ww;\,i}$.
The~$i^\text{th}$ diagonal element in the ss block of~$\upbm{D}$ is computed by a similar procedure. 
In this case, we compute all elements in the~$i^\text{th}$ row of the ss and sw blocks of~$\upbm{V}$~\eqref{eq:matrixV}
because these blocks are dense matrices.

To compute the nonzero shear element at entry~$(i,j)$ of a ww block at entry~$(r,c)$ of the block matrix $\upbm{U}$,
we multiply nonzero elements in the~$j^\text{th}$ row of~$\upbm{V}^{(c,s)}$ by their corresponding entries in the~$i^\text{th}$ row of~$\upbm{U}^{(c,s)}$ for $s\in\{c,\ldots,k-1\}$, as per~\cref{eq:wwShears};
note that off-diagonal elements of diagonal blocks and all elements of off-diagonal blocks are shear elements.
We then add them all and negate the result.
Next we add~$a_{\ww;\,i,j}^{(r,c)}$ to the obtained value and divide the result by~$d^{(c)}_{\ww;\, i}$.
The final result is the shear element at the~$(i,j)$ entry of~$(r,c)$ block of~$\upbm{U}$.
The shear elements in the ss and sw blocks are computed by a similar procedure, but being dense matrices, all elements in these blocks must be computed.

\begin{algorithm}[H]
\caption{Classical algorithm for UDU decomposition of the approximate ICM in multi-scale wavelet basis}
\label{alg:invCovUDU}
\begin{algorithmic}[1]
\Require{
    \Statex $\dbIndex \in \integers_{\geq 3}$
    \Comment{wavelet index}
    \Statex $N \in \integers_{\geq 2(2\dbIndex-1)}$
    \Comment{number of modes; for convenience, we assume $N$ is a power of 2}
    \Statex $w \in \integers^{+}$
    \Comment{upper bandwidth of diagonal \ww\ blocks in the approximate ICM $\tilde{\upbm{A}}$}
    \Statex $\upbm{a}:=\cBraket{\left.
    \upbm{a}^{(s_0)}_\sS \in \reals^{2^{s_0}},
    \upbm{a}^{(s_0, c)}_\sw \in \reals^{2^c},
    \upbm{a}^{(r, c)}_\ww \in \reals^{2^{c-r}}
    \right\vert
    \ceil*{\log_2(4\dbIndex-2)} =:s_0\leq r \leq c < k:=\log_2 N}$
    \Comment{circulant rows of ICM~$\upbm{A}$}
}
\Ensure{
    \Statex $\upbm{d} \in \reals^N$
    \Comment{diagonals of the diagonal matrix~$\upbm{D}$ in UDU decomposition of $\tilde{\upbm{A}}$}
    \Statex
    $\upbm{S}:=\cBraket{\left.
    \upbm{s}^{(s_0)}_\sS \in \reals^{\frac1{2}2^{s_0}(2^{s_0}-1)},
    \upbm{S}^{(s_0,c)}_\sw \in \reals^{2^{s_0}\times 2^c},
    \upbm{S}^{(r,c)}_\ww \in \reals^{2^r\times\left(\textsc{width}(r,c)
    -(w+1)\updelta_{rc}\right)}
    \right\vert s_0 \leq r \leq c < k
    }$
    \Comment{shear elements in main and upper-diagonal blocks of~$\upbm{U}$ in UDU decomposition of $\tilde{\upbm{A}}$;
    here $\textsc{width}(r,c)$~\eqref{eq:block_bandwidth} is bandwidth of $\tilde{\upbm{A}}^{(r,c)}_\ww$}
}    
\Function{invCovUDU}{$\dbIndex,N,w,\upbm{a}$}
\For{$c \gets k-1$ to $s_0$}
\Comment{iterates over columns of $\upbm{U}$}
\For{$i \gets 2^c-1$ to $0$}
\Comment{iterates over columns/rows of $\upbm{U}_\ww^{(c,c)}$ from right/bottom to left/top}
\State $d^{(c)}_{\ww;\,i} \gets a^{(c,c)}_{\ww;\,i,i}$
\For{$s \gets c$ to $k-1$}
    \Comment{lines~(\ref{line:nonzeroVkr-1}--\ref{line:nonzeroVkr-2}) compute
    $d^{(c)}_i$ and nonzero elements in $i^\text{th}$ row of $\upbm{V}_\ww^{(c,s)}$}
    \State $(d^{(c)}_{\ww;\,i}, \upbm{v}^{(c,s)}_{\ww,i})\stackrel{-}{\gets}
    \textsc{diagAndV}\left(i,c,s,\upbm{u}^{(c,s)}_{\ww;\,i}, \upbm{d}_\ww^{(s)}\right)$
\EndFor
    \State $(i_1,i_2) \gets (0, 2^{s_0}-1)$
    \State $\upbm{u}^{(s_0,c)}_{\sw;\,i_1:i_2,i}
            \gets \upbm{a}^{(s_0,c)}_{\sw;\,i_1:i_2,i}/d^{(c)}_{\ww;\,i}$
\For{$s \gets c$ to $k-1$}
    \State $\upbm{u}^{(s_0,c)}_{\sw;\,i_1:i_2,i} \stackrel{+}{\gets} \textsc{mainShears}\left(i_1,i_2,i,r,c,s,d^{(c)}_{\ww;\,i}, \upbm{u}^{(s_0,s)}_{\sw;\,i_1:i_2}, \upbm{v}^{(c,s)}_{\ww;\,i}\right)$
\EndFor
\For{$r \gets s_0$ to $c$} \label{line:Ulk-1}
\Comment{lines~(\ref{line:Ulk-1}--\ref{line:Ulk-2}) compute nonzeros in TRP, MP and BLP of $\upbm{U}_\ww^{(r, c)}$ ($r < c$)}
\If{$r \neq c$}
\State $(i_\text{F},i_\text{L}) \gets \textsc{mainRow}(i,r,c)$ \label{line:mainUlk-1}
\State $\upbm{u}^{(r,c)}_{\ww;\,i_\text{F}:i_\text{L},i} \gets    
        \upbm{a}^{(r, c)}_{\ww;\,i_\text{F}:i_\text{L},i}
        /d^{(c)}_{\ww;\,i}$
\For{$s \gets c$ to $k-1$}
    \State $\upbm{u}^{(r, c)}_{\ww;\, i_\text{F}:i_\text{L},i} \stackrel{+}{\gets} \textsc{mainShears}\left(i_\text{F},i_\text{L},i,r,c,s, d^{(c)}_{\ww;\,i}, \upbm{u}^{(r,s)}_{\ww;\,i_\text{F}:i_\text{L}}, \upbm{v}^{(c,s)}_{\ww;\,i}\right)$
\EndFor
\EndIf
    \If{$i\geq 2^c - w$}
    \Comment{lines~(\ref{line:topRightUlk-1}--\ref{line:topRightUlk-2}) compute nonzeros in TRP of $\upbm{U}_\ww^{(r, c)}$}
        \State $(i_\text{F},i_\text{L})\gets
        \left(0, \textsc{lastColNNZ}(r,c)-1
        -\floor*{\frac{2^c-1-i}{2^{c-r}}}\right)$
        \label{line:topRightUlk-1}
        \Comment{row index of last nonzero in $i^\text{th}$ column in TRP of $\upbm{U}_\ww^{(r, c)}$}
        \State $\upbm{u}^{(r,c)}_{\ww;\,i_\text{F}:i_\text{L},i}
        \gets\upbm{a}^{(r, c)}_{\ww;\,i_\text{F}:i_\text{F},i}
        /d^{(c)}_{\ww;\,i}$
        \For{$s \gets c$ to $k-1$}
           \State $\upbm{u}^{(r,c)}_{\ww;\,i_\text{F}:i_\text{L},i}
           \stackrel{+}{\gets} \textsc{topRightShears}
           \left(i_\text{F},i_\text{L},i,r,c,s, d^{(c)}_{\ww;\,i}, \upbm{u}^{(r,s)}_{\ww;\,i_\text{F}:i_\text{L}}, \upbm{v}^{(c,s)}_{\ww;\,i}\right)$
        \EndFor
    \EndIf
\If{$i< \textsc{width}(r,c) -w-2^{c-r}$ and $r \neq c$}
    \label{line:bottomLeftUlk-1}
    \Comment{lines~(\ref{line:bottomLeftUlk-1}--\ref{line:bottomLeftUlk-2}) compute nonzeros in BLP of $\upbm{U}_\ww^{(r, c)}$}
    \State $(i_1,i_2) \gets \left(i_\text{F}+\floor*{\frac{i}{2^{c-r}}},\, 2^r-1\right)$
    \State $\upbm{u}^{(r,c)}_{\ww;\, i_1:i_2,\, i} \gets
    \upbm{a}^{(r,c)}_{\ww;\, i_1:i_2,i}/d^{(c)}_i$
\For{$s \gets c$ to $k-1$}
    \State $\upbm{u}^{(r,c)}_{\ww;\, i_1:i_2,\, i} \stackrel{+}{\gets} \textsc{bottomLeftShears}(r,c,i,i_1,i_2, d^{(c)}_i, \upbm{U}^{(r,s)}_{\ww;\, i_1:i_2}, \upbm{v}^{(c,s)}_i,\dbIndex,w)$
\EndFor
\EndIf
\EndFor \label{line:Ulk-2}
\If{$i > 0$}
\label{line:mainUkk-1}
    \Comment{lines (\ref{line:mainUkk-1}--\ref{line:mainUkk-2})
    ) compute nonzeros in MP of $\upbm{U}^{(c,c)}_\ww$}
    \State $(i_\text{F},i_\text{L}) \gets (\max(0,i-w),i-1)$
    \State $\upbm{u}^{(c,c)}_{\ww;\,i_\text{F}:i_\text{L},i}\gets
            \upbm{a}^{(c,c)}_{\ww;\,i_\text{F}:i_\text{L},i}
            /d^{(c)}_{\ww;\,i}$
    \For{$s \gets c$ to $k-1$}
        \State
        \label{line:mainUkk-2}
        $\upbm{u}^{(c,c)}_{\ww;\,i_\text{F}:i_\text{L},i} \stackrel{+}{\gets}\textsc{mainShears}
        \left(i_\text{F},i_\text{L},i,r,c,s, d^{(c)}_{\ww;\,i}, \upbm{u}^{(c,s)}_{\ww;\,i_\text{F}:i_\text{L}}, \upbm{v}^{(c,s)}_{\ww;\,i}\right)$
    \EndFor
\EndIf
\EndFor
\EndFor
\For{$i \gets 2^{s_0}-1$ to $0$}
\label{line:nonzeroVsw-1}
\Comment{lines~(\ref{line:nonzeroVsw-1}--\ref{line:nonzeroVsw-2}) compute $d_i$ and nonzero elements in $i^\text{th}$ row of $\upbm{V}_\sS^{(s_0)}$ and $\upbm{V}_\sw^{(s_0,c)}$}
    \State $(j_\text{F},j_\text{L}) \gets (i+1,2^{s_0}-1)$
    \State $\upbm{v}^{(s_0)}_{\sS;\,i,j_\text{F}:j_\text{L}}
            \gets \upbm{u}^{(s_0)}_{\sS;\,i,j_\text{F}:j_\text{L}}
            \odot \upbm{d}_{j_\text{F}:j_\text{L}}$
    \State $d_i \gets a^{(s_0)}_{\sS;\,i,i}
            -\upbm{u}^{(s_0)}_{\sS;\,i,j_\text{F}:j_\text{L}}
            \odot \upbm{v}^{(s_0)}_{\sS;\,j_\text{F}:j_\text{L}}$
    \For{$s \gets s_0$ to $k-1$}
        \State $v^{(s_0,s)}_{\sw;\,i,0:2^s-1}
            \gets \upbm{u}^{(s_0,c)}_{\sw;\,i,0:2^s-1}
            \odot \upbm{d}_{0:2^s-1}$
        \State $d_i \stackrel{-}{\gets}
                \upbm{u}^{(s_0,s)}_{\sw;\,i,0:2^s-1}
                \cdot \upbm{v}^{(s_0,s)}_{\sw;\,i,0:2^s-1}$
    \EndFor \label{line:nonzeroVsw-2}
\algstore{invCovLDLPart2}
\end{algorithmic}
\end{algorithm}
\begin{algorithm}[H]
\ContinuedFloat
\caption{Classical algorithm for UDU decomposition of the approximate ICM in multi-scale wavelet basis (continued)}
\begin{algorithmic}
\algrestore{invCovLDLPart2}
    \If{$i>0$} \label{line:upperDiagonalUss-1}
        \Comment{lines~(\ref{line:upperDiagonalUss-1}--\ref{line:upperDiagonalUss-2}) compute upper-diagonal elements of $\upbm{U}^{(s_0)}_\sS$}
        \State $\upbm{u}^{(s_0)}_{\sS;\,0:i-1,i} \gets
        \left[ a^{(s_0)}_{\sS;\,0:i-1,i}
        -\upbm{u}^{(s_0)}_{\sS;\,0:i-1,i+1:2^{s_0}-1}
        \cdot \upbm{v}^{(s_0)}_{\sS;\,i,i+1:2^{s_0}-1} 
        \right]/d_i$
        \For{$s \gets s_0$ to $k-1$}
            \State $\upbm{u}^{(s_0)}_{\sS;\,0:i-1,i}
            \stackrel{-}{\gets}
            \left[\upbm{u}^{(s_0,s)}_{\sw;\,0:i-1,0:2^s-1}
            \cdot \upbm{v}^{(s_0,s)}_{\sw;\,i,0:2^s-1} 
            \right]/d_i$
        \EndFor
    \EndIf \label{line:upperDiagonalUss-2}
\EndFor

\State $\upbm{s}^{(s_0)} \gets\textsc{upperUnitriangStore}
        \left(2^{s_0}, \upbm{U}_\sS^{(s_0)}\right)$
\Comment{stores shear elements of $\upbm{U}_\sS^{(s_0)}$ as a vector; see line~\ref{func:upperUnitriangStore} of~Library~\ref{lib:storingHelperFunctions}}
\For{$c \gets s_0$ to $k-1$}
    \State $\upbm{d}_{2^c:2^{c+1}-1}
            \gets \upbm{d}^{(c)}_{\ww;\,0:2^c-1}$
    \State $\upbm{S}^{(s_0,c)}_\sw \gets \upbm{U}^{(s_0,c)}_\sw$
    \State $\upbm{S}^{(c,c)}_\ww \gets \textsc{diagStore}
            \left(c,w,\upbm{U}^{(c,c)}_\ww\right)$
    \Comment{stores shear elements of $\upbm{U}_\ww^{(c,c)}$ as a matrix; see line~\ref{func:diagStore} of~Library~\ref{lib:storingHelperFunctions}}
\EndFor
\For{$r \gets s_0$ to $k-1$}
    \For{$c \gets r+1 $ to $k-1$}
        \State $\upbm{S}^{(r,c)}_\ww \gets \textsc{offDiagStore}
                \left(r,c,\dbIndex,w, \upbm{U}^{(r,c)}_\ww\right)$
        \Comment{stores shear elements of $\upbm{U}_\ww^{(r,c)}$ as a matrix; see line~\ref{func:offDiagStore} of~Library~\ref{lib:storingHelperFunctions}}
    \EndFor
\EndFor
\State \Return
$\left(\upbm{d}, \upbm{S}
\right)$
\EndFunction
\end{algorithmic}
\end{algorithm}

\begin{library}[H]
\caption{Helper functions for storing shear elements of the upper unit-triangular matrix in UDU decomposition}
\label{lib:storingHelperFunctions}
\begin{algorithmic}[1]
\Function{upperUnitriangStore}{$N,\upbm{U}$}
\label{func:upperUnitriangStore}
\Comment{stores shear elements of an $N\times N$ upper unit-triangular matrix~$\upbm{U}$}
\For{$i \gets 0$ to $N-2$}
    \For{$j \gets i+1 $ to $N-1$}
        \State $s_{iN-i(i+1)/2 +j-(i+1)} \gets u_{ij}$
    \EndFor
\EndFor
\State \Return $\upbm{s}$
\EndFunction
\Function{diagStore}{$c,\upbm{U}_\ww^{(c,c)}$}
\label{func:diagStore}
\Comment{stores shear elements of a diagonal \ww\ block of~$\upbm{U}$}
\For{$i \gets 0 $ to $2^c-1$}
    \Comment{iterates over rows of $\upbm{U}^{(c,c)}_\ww$}
    \State $(j_\text{FM},j_\text{LM}) \gets
            (i+1, \min(i+w,2^c-1))$
    \Comment{computes column index of first and last shear elements of $i^\text{th}$ row in MP of $\upbm{U}^{(c,c)}_\ww$}
    \State $\upbm{s}^{(c,c)}_{\ww;\,i,0:j_\text{LM}-j_\text{FM}} \gets \upbm{u}^{(c,c)}_{\ww;\,i,j_\text{FM}:j_\text{LM}}$
    \If{$i\leq w-1$}
    \Comment{$w-1$ is row index of last nonzero in last column of $\upbm{U}^{(c,c)}_\ww$}
        \State $\upbm{s}^{(c,c)}_{\ww;\,2^c-1-i,\,i:w-1}
        \gets \upbm{u}^{(c,c)}_{\ww;\,i,\,2^c-w-i:2^c-1}$
        \Comment{stores shear elements of $i^\text{th}$ row in TRP of $\upbm{U}^{(c,c)}_\ww$ into $(2^c-1-i)^\text{th}$ row of~$\upbm{S}^{(c,c)}_\ww$}
    \EndIf
\EndFor
\State \Return $\upbm{S}_\ww^{(c,c)}$
\EndFunction
\Function{offDiagStore}{$r,c,\upbm{U}_\ww^{(r,c)}$}
\label{func:offDiagStore}
\Comment{stores shear elements of the off-diagonal \ww\ block at entry $(r,c)$ of~$\upbm{U}$}
\State $h \gets \textsc{lastColNNZ}(r,c)$
\State $H \gets \textsc{vertWidth}(r,c)$
\State $W \gets \textsc{width}(r,c)$
\State $(i_\text{LT},i_\text{FB}) \gets
        \left(h-1, 2^r-(H-h-1)\right)$
\Comment{$i_\text{LT}$ is row index of last column's last nonzero in TRP and $i_\text{FB}$ is row index of first column's first nonzero in BLP of $\upbm{U}^{(r,c)}_\ww$}
\For{$i \gets 0$ to $2^r-1$}
\Comment{iterates over rows of $\upbm{U}^{(r,c)}_\ww$}
    \State $(j_\text{FM},j_\text{LM}) \gets
            \textsc{mainCol}(i,r,c)$
    \Comment{computes column index of first and last nonzero in $i^\text{th}$ row in MP of $\upbm{U}^{(r,c)}_\ww$}
    \If{$i \leq i_\text{LT}$}
        \State $\upbm{s}^{(r,c)}_{\ww;\,i,0:j_\text{LM}}
                \gets \upbm{u}^{(r,c)}_{\ww;\,i,0:j_\text{LM}}$
                \Comment{note that $j_\text{FM}=0$ for $i< i_\text{LT}$}
        \State $(j_\text{FT}, j_\text{LT}) \gets
                (2^c-w+i2^{c-r},2^c-1)$
        \Comment{computes column index of first and last nonzero in $i^\text{th}$ row in TRP of $\upbm{U}^{(r,c)}_\ww$}
        \State $\upbm{s}^{(r,c)}_{\ww;\,i,j_\text{LM}+1:W-1}\gets
                \upbm{u}^{(r,c)}_{\ww;\,i,j_\text{FT}:j_\text{LT}}$
    \ElsIf{$i_\text{LT} <i< i_\text{FB}$}
        \State $\upbm{s}^{(r,c)}_{\ww;\, i, 0:W-1}\gets
                \upbm{u}^{(r,c)}_{\ww;\,i,j_\text{FM}:j_\text{LM}}$
    \Else
        \State $(j_\text{FB},j_\text{LB})\gets
                (0, (i-i_\text{FB}) 2^{c-r}-1)$
        \Comment{computes column index of first and last nonzero in $i^\text{th}$ row in BLP of $\upbm{U}^{(r,c)}_\ww$}
        \State $\upbm{s}^{(r,c)}_{\ww;\,i,0:j_\text{LB}}
                \gets \upbm{u}^{(r,c)}_{\ww;\,i,0:j_\text{LB}}$
        \State $\upbm{s}^{(r,c)}_{\ww;\,i,j_\text{LB}+1:W-1}\gets
                \upbm{u}^{(r,c)}_{\ww;\,i,j_\text{FM}:j_\text{LM}}$ 
    \EndIf
\EndFor
\State \Return $\upbm{S}_\ww^{(r,c)}$
\EndFunction
\end{algorithmic}
\end{library}

\begin{library}[H]
\caption{Helper functions for computing shear elements}
\label{lib:shearsHelperFunctions}
\begin{algorithmic}[1]
\Function{diagAndV}{$i,c,s,\upbm{u}^{(c,s)}_{\ww;\,i}, \upbm{d}_\ww^{(s)}$}
\State $(j_\text{F},j_\text{L}) \gets \textsc{mainCol}(i,c,s)$ \label{line:nonzeroVkr-1}
\State $\upbm{v}^{(c,s)}_{\ww;\,i,j_\text{F}:j_\text{L}}
        \gets \upbm{u}^{(c,s)}_{\ww;\,i,j_\text{F}:j_\text{L}}
        \odot \upbm{d}^{(s)}_{\ww;\,j_\text{F}:j_\text{L}}$
\State $d_{\ww;\,i}^{(c)} \stackrel{+}{\gets}
        \upbm{u}^{(c,s)}_{\ww;\,i,j_\text{F}:j_\text{L}}
        \cdot \upbm{v}^{(c,s)}_{\ww;\,i,j_\text{F}:j_{\text{L}}}$
\Comment{$\stackrel{+}{\gets}$ denotes the addition assignment, i.e., $a \stackrel{+}{\gets} b := a \gets a+b$
}
\State $(h,H) \gets (\textsc{lastColNNZ}(c,s),\, \textsc{vertWidth}(c,s))$
\State $(i_\text{LT}, i_\text{FB}) \gets (h-1,\, 2^c-(H-h-1))$
\If{$i\leq i_\text{LT}$}
\Comment{$i_\text{L}$ is row index of last nonzero in last column in TRP of $\upbm{U}_\ww^{(c,s)}$}
    \State $(j_\text{F},j_\text{L}) \gets (2^s-w+i2^{s-c},2^s-1)$
    \Comment{$j_\text{F}$ is column index of first nonzero in $i^\text{th}$ row in TRP of $\upbm{U}_\ww^{(r,c)}$}
    \State $\upbm{v}^{(c,s)}_{\ww;\,i,j_\text{F}:j_\text{L}}
        \gets \upbm{u}^{(c,s)}_{\ww;\,i,j_\text{F}:j_\text{L}}
        \odot \upbm{d}^{(s)}_{\ww;\,j_\text{F}:j_\text{L}}$
    \State $d_{\ww;\,i}^{(c)} \stackrel{+}{\gets}
        \upbm{u}^{(c,s)}_{\ww;\,i,j_\text{F}:j_\text{L}}
        \cdot \upbm{v}^{(c,s)}_{\ww;\,i,j_\text{F}:j_{\text{L}}}$
    \Comment{multiplies TRP of $\upbm{U}_\ww^{(c,s)}$ by TRP of $\upbm{V}_\ww^{(c,s)}$}
\EndIf
\If{$i \geq i_\text{FB}$ and $s \neq c$}
\Comment{$i_\text{F}$ is row index of first nonzero in first column in BLP}
    \State $(j_\text{F},j_\text{L}) \gets (0,(i-i_\text{FB})2^{s-c}-1)$
    \Comment{$j_\text{L}$ is column index of last nonzero in $i^\text{th}$ row in BLP of $\upbm{U}_\ww^{(c,s)}$}
    \State $\upbm{v}^{(c,s)}_{\ww;\,i,j_\text{F}:j_\text{L}}
        \gets \upbm{u}^{(c,s)}_{\ww;\,i,j_\text{F}:j_\text{L}}
        \odot \upbm{d}^{(s)}_{\ww;\,j_\text{F}:j_\text{L}}$
    \State $d_{\ww;\,i}^{(c)} \stackrel{+}{\gets}
        \upbm{u}^{(c,s)}_{\ww;\,i,j_\text{F}:j_\text{L}}
        \cdot \upbm{v}^{(c,s)}_{\ww;\,i,j_\text{F}:j_{\text{L}}}$
    \Comment{multiplies BLP of $\upbm{U}_\ww^{(c, s)}$ by BLP of $\upbm{V}_\ww^{(c,s)}$}
\EndIf \label{line:nonzeroVkr-2}
\State \Return $(d^{(c)}_{\ww;\,i}, \upbm{v}^{(c,s)}_{\ww;\,i})$
\EndFunction
\Function{mainShears}{$i_1,i_2,i,r,c,s,d^{(c)}_{\ww;\,i}, \upbm{u}^{(r,s)}_{\ww;\,i_1:i_2}, \upbm{v}^{(c,s)}_{\ww;\,i}$}
    \State $(j_\text{F},j_\text{L}) \gets \textsc{mainCol}(i,c,s)$
    \State $\upbm{u}^{(r, c)}_{\ww;\,i_1:i_2,i}
    \stackrel{+}{\gets} 
    \left[\upbm{u}^{(r,s)}_{\ww;\,i_1:i_2,j_\text{F}:j_\text{L}}
          \cdot \upbm{v}^{(c,s)}_{\ww;\,i,j_\text{F}:j_\text{L}}
    \right]/d^{(c)}_{\ww;\,i}$
    \Comment{contribution to shear elements from multiplying MP of $\upbm{U}_\ww^{(r,s)}$ to MP of~$\upbm{V}_\ww^{(c,s)}$}
    \State $(h,H) \gets (\textsc{lastColNNZ}(c,s),\, \textsc{vertWidth}(c,s))$
    \State $(i_\text{LT}, i_\text{FB}) \gets (h-1,\, 2^c-(H-h-1))$
    \If{$i\leq i_\text{LT}$}
    \Comment{$i_\text{L}$ is row index of last nonzero in last column in TRP of $\upbm{V}_\ww^{(c,s)}$}
        \State $(j_\text{F},j_\text{L}) \gets (2^s-w+i2^{s-c},2^s-1)$
        \Comment{$j_\text{F}$ is column index of first nonzero in $j^\text{th}$ row in TRP of $\upbm{V}_\ww^{(c,s)}$}
        \State $\upbm{u}^{(r,c)}_{\ww;\,i_1:i_2,i}
                \stackrel{+}{\gets}
                \left[\upbm{u}^{(r,s)}_{\ww;\,i_1:i_2,j_\text{F}:j_\text{L}}
                \cdot \upbm{v}^{(c,s)}_{\ww;\,i,j_\text{F}:j_\text{L}}
                \right]/d^{(c)}_{\ww;\,i}$
        \Comment{contribution from multiplying TRP of $\upbm{U}_\ww^{(r,s)}$ to TRP of $\upbm{V}_\ww^{(c,s)}$}
    \EndIf
    \If{$i \geq i_\text{FB}$ and $s \neq c$}
    \Comment{$s \neq c$ as $\upbm{V}_\ww^{(c,c)}$ does not have BLP part}
        \State $(j_\text{F},j_\text{L}) \gets (0,(i-i_\text{FB})2^{s-c}-1)$
        \State $\upbm{u}^{(r, c)}_{\ww;\, i_1:i_2,\, i}
        \stackrel{+}{\gets}
        \left[\upbm{u}^{(r,s)}_{\ww;\,i_1:i_2,j_\text{F}:j_\text{L}}
              \cdot \upbm{v}^{(c,s)}_{\ww;\,i,j_\text{F}:j_\text{L}}
        \right]/d^{(c)}_{\ww;\,i}$
        \Comment{contribution to shear elements from multiplying BL of $\upbm{U}_\ww^{(r,s)}$ to BLP of $\upbm{V}_\ww^{(c,s)}$}
    \EndIf \label{line:mainUlk-2}
\State \Return $\upbm{u}^{(r,c)}_{\ww;\,i_1:i_2,i}$
\EndFunction
\Function{topRightShears}{$i_1,i_2,i,r,c,s, d^{(c)}_{\ww;\,i}, \upbm{u}^{(r,s)}_{\ww;\,i_1:i_2}, \upbm{v}^{(c,s)}_{\ww;\,i}$}
    \State $(j_\text{F},j_\text{L}) \gets \textsc{mainCol}(i,c,s)$
    \State $\upbm{u}^{(r, c)}_{\ww;\,i_1:i_2,i}
    \stackrel{+}{\gets} 
    \left[\upbm{u}^{(r,s)}_{\ww;\,i_1:i_2,j_\text{F}:j_\text{L}}
          \cdot \upbm{v}^{(c,s)}_{\ww;\,i,j_\text{F}:j_\text{L}}
    \right]/d^{(c)}_{\ww;\,i}$
    \Comment{contribution to shear elements from multiplying TRP of $\upbm{U}_\ww^{(r,s)}$ to MP of~$\upbm{V}_\ww^{(c,s)}$}
    \State $(h,H) \gets (\textsc{lastColNNZ}(c,s),\, \textsc{vertWidth}(c,s))$
    \State $i_\text{FB} \gets 2^c-(H-h-1)$
    \Comment{$i_\text{F}$ is row index of first nonzero in first column in BLP of $\upbm{V}_\ww^{(c,s)}$}
    \If{$i\geq i_\text{FB}$ and $s \neq c$}
    \Comment{BLP of $\upbm{V}_\ww^{(c,s)}$ has nonzeros only for rows $i>i_\text{F}$;
    $s \neq c$ as $\upbm{V}_\ww^{(c,c)}$ does not have BLP part}
        \State $(j_\text{F},j_\text{L}) \gets (0,(i-i_\text{FB})2^{s-c}-1)$
        \State $\upbm{u}^{(r, c)}_{\ww;\, i_1:i_2,\, i}
        \stackrel{+}{\gets}
        \left[\upbm{u}^{(r,s)}_{\ww;\,i_1:i_2,j_\text{F}:j_\text{L}}
              \cdot \upbm{v}^{(c,s)}_{\ww;\,i,j_\text{F}:j_\text{L}}
        \right]/d^{(c)}_{\ww;\,i}$
        \Comment{multiplies TLP of $\upbm{U}_\ww^{(s,r)}$ by BLP of $\upbm{V}_\ww^{(k,r)}$}
    \EndIf \label{line:topRightUlk-2}
    \State \Return $\upbm{u}^{(r,c)}_{\ww;\, i_1:i_2,\, i}$
\EndFunction
\Function{bottomLeftShears}{$i_1,i_2,i,r,c,s,d^{(c)}_{\ww;\,i}, \upbm{u}^{(r,s)}_{\ww;\,i_1:i_2}, \upbm{v}^{(c,s)}_{\ww;\,i}$}
    \State $(j_\text{F},j_\text{L}) \gets \textsc{mainCol}(i,c,s)$
    \State $\upbm{u}^{(r, c)}_{\ww;\,i_1:i_2,i}
    \stackrel{+}{\gets} 
    \left[\upbm{u}^{(r,s)}_{\ww;\,i_1:i_2,j_\text{F}:j_\text{L}}
          \cdot \upbm{v}^{(c,s)}_{\ww;\,i,j_\text{F}:j_\text{L}}
    \right]/d^{(c)}_{\ww;\,i}$
\Comment{contribution to shear elements from multiplying BLP of $\upbm{U}_\ww^{(r,s)}$ to MP of $\upbm{V}_\ww^{(c,s)}$}
\If{$i\leq i_\text{TL}\gets \textsc{lastColNNZ}(c,s,w)-1$}
    \State $(j_\text{F},j_\text{L}) \gets (2^s-w+i2^{s-c},2^s-1)$
    \State $\upbm{u}^{(r,c)}_{\ww;\,i_1:i_2,i}
                \stackrel{+}{\gets}
                \left[\upbm{u}^{(r,s)}_{\ww;\,i_1:i_2,j_\text{F}:j_\text{L}}
                \cdot \upbm{v}^{(c,s)}_{\ww;\,i,j_\text{F}:j_\text{L}}
                \right]/d^{(c)}_{\ww;\,i}$
    \Comment{contribution to shear elements from multiplying MP of $\upbm{U}_\ww^{(r,s)}$ to TRP of $\upbm{V}_\ww^{(c,s)}$}
\EndIf \label{line:bottomLeftUlk-2}
\State \Return $\upbm{u}^{(r,c)}_{\ww;\, i_1:i_2,\, i}$
\EndFunction
\end{algorithmic}
\end{library}


\subsection{Quantum algorithms}
\label{subsec:quantum_algorithms}

In this subsection, we construct the quantum routine of our Fourier- and wavelet-based algorithms for ground-state generation.
The quantum routine of our algorithms has two subroutines.
The first subroutine generates an approximation for a continuous 1DG state, and the second subroutine executes a basis transformation.

We have two quantum algorithms for generating a 1DG state.
The first algorithm is presented in~\cref{subsubsec:1DG_generation}, and the second algorithm, which is based on inequality testing,
is described in~\cref{subsubsec:inequality_testing}.
We present our algorithm for quantum fast Fourier transform~(QFFT) in~\cref{subsubsec:QFFT}, and the algorithm for quantum shear transform~(QST) in~\cref{subsubsec:QST}.
The QFFT and QST algorithms serve as the basis-transformation subroutine in the Fourier- and wavelet-based algorithms, respectively.

\subsubsection{One-dimensional Gaussian-state generation}
\label{subsubsec:1DG_generation}

Here we present our first quantum algorithm for generating a discrete approximation for a continuous 1DG state~\eqref{eq:continuous1DG} on~a quantum register.
We begin by explaining the inputs and output of the algorithm.
We then describe the involved quantum registers in our algorithm and explain the algorithm's procedure.
Finally, we present our algorithm as pseudocode.

The output of our 1DG-state-generation algorithm is a discrete 1DG state with the standard deviation~$\tilde{\sigma}$ over a lattice with~$2^m$ points and lattice spacing~$\delta$; see~\cref{subsubsec:1DG_method} for a description of our method for approximating a continuous 1DG state.
The discrete 1DG state~\eqref{eq:lattice1DG} is a linear combination of basis states $\ket{j\delta}$, where~$\delta$ is a real number.
Therefore, the discrete 1DG state can be regarded as a superposition of real numbers.

For convenience, we use the fixed-point number representation~\cite[p.~255]{HH16} to treat the real numbers in our algorithm.
Specifically, we consider each real number as a $p$-bit number in this representation.
The positive integer~$p$ is the working precision in our main algorithms for ground-state generation. 
As the largest value for the real numbers is $2^m$, we use lattice parameter~$m$ to be the radix-point position, i.e., the number of bits to the left of the radix point in the fixed-point number representation.
Therefore, the following parameters are taken as classical inputs to our 1DG-state-generation algorithm:
the working precision~$p$,
the radix-point position~$m$,
and the $p$-bit numbers~$\tilde{\sigma}$ and~$\delta$.

To elucidate the procedure of our quantum algorithm, we now describe various quantum registers involved in the algorithm.
Our algorithm involves the following quantum registers;
all registers start in the all-zero state and have~$p$ qubits unless otherwise specified.
(1)~\out: a register that is to be prepared in the discrete 1DG state~\eqref{eq:lattice1DG},
(2)~\stDev: a register that stores the value of~$\tilde{\sigma}$,
(3)~\spacing: a register that stores the value of~$\delta$,
(4)~\mean: a register that stores a mean value~$\mu$,
(5)~\ang: a register that stores a single-qubit rotation angle, and
(6)~\scratch: a $\order{p}$-qubit register used to
assist in various operations throughout the algorithm.
We index the qubits of each register from~$0$ to~$p-1$, where the~$0^\text{th}$ qubit is the rightmost qubit.
The first $p-m$ qubits in each $p$-qubit register represent the fractional part of a number, and the last~$m$ qubits represent its integer part in the fixed-point number representation.

The main strategy to generate the discrete 1DG state~\eqref{eq:lattice1DG} is as follows.
First we prepare the state
\begin{align}
\label{eq:1DG_over_integers}
\ket{\Psi(\tilde{\sigma}, \mu, m)}:=
\frac1{\sqrt{f(\tilde{\sigma},\mu,m)}}
\sum_{j=-2^{m-1}}^{2^{m-1}-1}
\e^{-\tfrac{(j+\mu)^2}{4\tilde{\sigma}^2}}\ket{j},
\quad
f(\tilde{\sigma},\mu,m):=
\sum_{j=-2^{m-1}}^{2^{m-1}-1}
\e^{-\tfrac{(j+\mu)^2}{2\tilde{\sigma}^2}},
\end{align}
with the initial value~$\mu=0$ on the leftmost~$m$ qubits of the $p$-qubit \out\ register, i.e. the qubits that represent the integer part of a number.
This state is a discrete 1DG state with the mean value~$\mu$ over a lattice with unit spacing.
The parameter~$\mu$ is used here because our algorithm for generating the state in~\cref{eq:1DG_over_integers} is iterative, and the value of~$\mu$ changes in each iteration.
Having prepared the~$\ket{\Psi(\tilde{\sigma},0,m)}$, we then transform it into the discrete 1DG state~$\ket{\G_\text{lattice}(\tilde{\sigma},\delta,m)}$ by multiplying~$j$ to~$\delta$.
To this end, we write the classical input~$\delta$ into the \spacing\ register and implement the transformation~$\ket{j}_\out \ket{\delta}_\spacing \mapsto \ket{j\delta}_\out \ket{\delta}_\spacing$.
The state prepared on the \out\ register is then~$\ket{\G(\tilde{\sigma},\delta,m)}$.

We now proceed with a detailed description of the procedure for generating~$\ket{\Psi(\tilde{\sigma}, \mu, m)}$.
This state is a linear combination of basis states~$\ket{j}$ where~$j$ is a signed integer.
Using two's complement to represent signed integers~\cite[p.~16]{HH16}, we recursively decompose the state as
\begin{equation}
\label{eq:our_recursive_formula}
\ket{\Psi(\tilde{\sigma}_\ell, \mu_\ell, m_\ell)}
=
\ket{\Psi(\tilde{\sigma}_\ell/2, \mu_\ell/2,m_\ell-1}
\otimes
\cos\theta_\ell \ket{0}
+
\ket{\Psi(\tilde{\sigma}_\ell/2, (\mu_\ell +1)/2, m_\ell-1)}
\otimes
\sin\theta_\ell \ket{1},
\end{equation}
where we define the four terms $\tilde{\sigma}_0:=\tilde{\sigma}$,
$\mu_0:=\mu$,
$m_\ell:=m-\ell$ and
\begin{equation}
\label{eq:rot_angle}
\theta_\ell:=
\arccos\sqrt{
\dfrac{f\left(\tilde{\sigma}_\ell/2, \mu_\ell/2, m_\ell-1\right)}
{f\left(\tilde{\sigma}_\ell, \mu_\ell, m_\ell\right)}
}
\quad \forall \ell\in\{0,\ldots,m-1\}.
\end{equation}
We use the recursive formula~\eqref{eq:our_recursive_formula} to devise an iterative algorithm for generating ~$\ket{\Psi(\tilde{\sigma}, \mu, m)}$.
We start by writing~$\tilde{\sigma}$ into the \stDev\ register and $\mu$ into the \mean\ register.
For $\ell$ from~$0$ to~$m-1$, we iteratively perform the following operations;
see the quantum circuit in~\cref{fig:1DGCircuit}.
\begin{enumerate}
    \item Compute a $p$-bit approximation for $\eta_\ell:=\theta_\ell/2\uppi$~\eqref{eq:rot_angle} and store the result in the \ang\ register.
    We perform this operation~by
    \begin{equation}
    \label{eq:angle_operation}
         \operatorname{\textsc{angle}}: \ket{\tilde{\sigma}}_\stDev\ket{\mu}_\mean \ket{0}_\ang \mapsto \ket{\tilde{\sigma}}_\stDev\ket{\mu}_\mean \ket{\eta}_\ang,
    \end{equation}
    which we describe by by~$\operatorname{\textsc{angle}}(\stDev, \mean, \ang)$ in our quantum algorithm.
    \item Perform a single-qubit rotation on~$\out[\ell]$,
    the $\ell^\text{th}$ qubit of \out,
    where the angle of rotation is read from the \ang\ register.
    The rotation is performed by implementing the operation
    \begin{equation}
    \label{eq:rot_operation}
    \operatorname{\textsc{rot}}:
    \ket{\eta}_\ang \ket{0}_{\out[\ell]}
    \mapsto
    \ket{\eta}_\ang \left(
        \cos(2\pi\eta) \ket{0}_{\out[\ell]} +
        \sin(2\pi\eta) \ket{1}_{\out[\ell]}
    \right),
    \end{equation}
    which we describe by~$\operatorname{\textsc{rot}}\left(\ang, \out[\ell]\right)$ in our algorithm.
    \item Erase \ang\ by uncomputing~$\eta_\ell$.
    We uncompute~$\eta_\ell$ by performing the $\textsc{angle}$~\eqref{eq:angle_operation} operation. 
    \item Divide the numbers stored in \stDev\ and \mean\ by two.
    To divide the number stored in \stDev\ by two, we cyclically shift the qubits of this register one qubit to the right by performing
    \begin{equation}
    \label{eq:shift_operation}
    \operatorname{\textsc{shift}}: \ket{b_{p-1}\ldots b_1b_0}_\out \mapsto \ket{b_0b_{p-1}\ldots b_1}_\out,
    \end{equation}
    and then flip the leftmost qubit of \stDev\ if the~$\ell^\text{th}$ bit of the classical input~$\tilde{\sigma}$ is~$1$.
    As we start by the initial value~$\mu=0$, the rightmost qubit of \mean\ remains in the zero state throughout the computation.
    Therefore, to divide the number stored in \mean\ by two, we only perform \textsc{shift}~\eqref{eq:shift_operation} on this register.
    \item
    Add $1/2$ to \mean\ if the state of $\out[\ell]$ is $\ket{1}$.
    As we start by~$\mu=0$, after dividing the value encoded in \mean\ by two in the previous step, the~$(p-m-1)^\text{th}$ qubit of this register is~$\ket{0}$.
    Therefore, we add~$1/2$ to \mean\ by flipping the state of this qubit from~$\ket{0}$ to~$\ket{1}$.
\end{enumerate}
To simplify the readability of our 1DG-state-generation algorithm, we describe these iterative operations by a helper function in~Library~\ref{lib:1DG}.

\begin{library}[H]
  \caption{Helper function for \oneDG-state generation in Algorithm~\ref{alg:1DG}}
  \label{lib:1DG}
  \begin{algorithmic}[1]
\Function{iterOneDG}{$\ell,m,p,\stDev, \mean, \ang$}
    \Comment{performs the iterative steps described in~\cref{subsubsec:1DG_generation}}
   \State $\operatorname{\textsc{angle}}(\stDev, \mean, \ang)$
    \Comment{computes $\theta_\ell/2\uppi$~\eqref{eq:rot_angle} into the \ang\ register as per~\cref{eq:angle_operation}}
    \State $\operatorname{\textsc{rot}}(\ang, \out[\ell])$
    \Comment{rotates the $\ell^\text{th}$ qubit
    of \out\ register as per~\cref{eq:rot_operation}}
    \State $
    \operatorname{\textsc{angle}}(\stDev, \mean, \ang)$
    \Comment{erases \ang\ register} 
    \State $\operatorname{\textsc{shift}}(\stDev)$
    \Comment{along with the next two lines, divides the number stored in \stDev\ by two}
    \If{$\tilde{\sigma}_\ell=1$}
        \State $X(\stDev[p-1])$
        \Comment{flips the leftmost qubit of \stDev}
    \EndIf
    \State $\operatorname{\textsc{shift}}(\mean)$
    \Comment{divides the number stored in \mean\ by two; the leftmost qubit of \mean\ is always $\ket{0}$}
    \State $\operatorname{\textsc{cnot}}(\out[\ell],\mean[p-m-1])$
    \Comment{adds $1/2$ to \mean\ register if $\out[\ell]$ is $\ket{1}$}
    \State \Yield $\out[\ell]$
\EndFunction
\end{algorithmic}
\end{library}

\begin{figure}[htb]
\centering
\includegraphics[width=.76\linewidth]{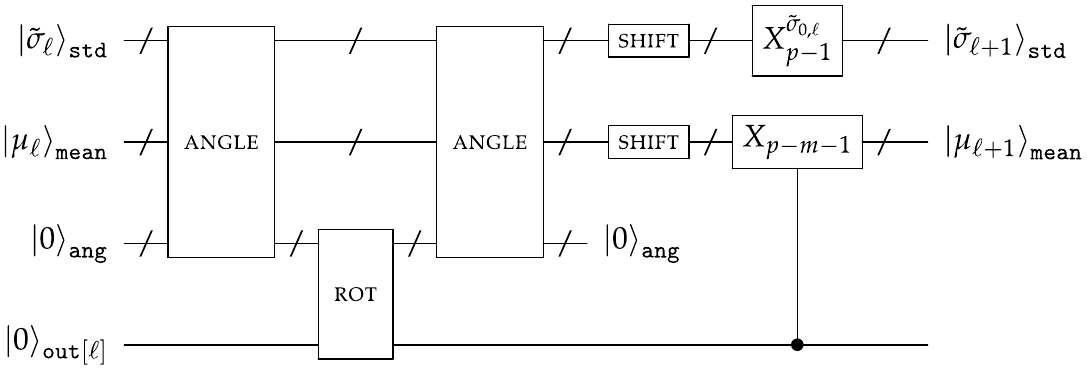} 
\caption[Quantum circuit for 1DG-state generation]{Quantum circuit for implementing the iterative steps for 1DG-state generation.
Positive integers~$p$ and~$m$ are respectively the total number of bits and position of the radix point in the fixed-point number representation;
\qwire\ represents~$p$ qubits.
$\operatorname{\textsc{angle}}$~\eqref{eq:angle_operation} computes a rotation angle~\eqref{eq:rot_angle} into the \ang\ register,
$\operatorname{\textsc{rot}}$~\eqref{eq:rot_operation} rotates the~$\ell^\text{th}$ qubit of \out,~$\out[\ell]$, by the rotation angle
stored in \ang\ and $\operatorname{\textsc{shift}}$~\eqref{eq:shift_operation} cyclically shifts qubits of a register one qubit to the right.
The Pauli-X gate~$X^b_i$ acts on~$i^\text{th}$ qubit of a register if the binary~$b$ is 1.
Intermediate results are uncomputed by running the appropriate operations in reverse order.}
\label{fig:1DGCircuit}
\end{figure}

By performing the described iterative operations, the quantum state~$\ket{\Psi(\tilde{\sigma}, 0, m)}$~\eqref{eq:1DG_over_integers} is prepared on the first~$m$ qubits of the \out\ register.
To transform this state into the integer part of \out, we swap the first~$m$ qubits with the last~$m$ qubits of \out.
Specifically, we swap the~$\ell^\text{th}$ qubit of \out\ with the~$(p-m+\ell)^\text{th}$ qubit
for~$\ell\in\{0, \ldots, m-1$\}.

The last step is to transform~$\ket{\Psi(\tilde{\sigma}, 0, m)}$ to the desired discrete 1DG state~$\ket{\G_\text{lattice}(\tilde{\sigma},\delta,m)}$~\eqref{eq:lattice1DG}.
To this end, we write the classical input~$\delta$ into the \spacing\ register and perform the operation
\begin{equation}
\label{eq:mul_operation}
    \operatorname{\textsc{mul}}: \ket{j}_\out\ket{\delta}_\spacing \ket{0}_\temp \mapsto \ket{j}_\out\ket{\delta}_\spacing \ket{j\delta}_\temp,
\end{equation}
which we describe by~$\operatorname{\textsc{mul}}(\out, \spacing, \temp)$ in our algorithm.
This operation is an out-of-place multiplication and,
therefore, we need to uncompute the \out\ register.
To uncompute \out, we perform the operations that generates~$\ket{\Psi(\tilde{\sigma}, 0, m)}$ in the reverse order.
We then swap qubits of \temp\ with qubits of \out.
By this swapping, \temp\ is erased and~$\ket{\G(\tilde{\sigma}, \delta, m)}$ is transformed into \out.

Finally, we erase the \spacing\ and \stDev\ registers.
Note that the values stored in these registers do not change throughout the computation.
Therefore, we erase \spacing\ and \stDev\ by writing the classical inputs~$\delta$ and~$\tilde{\sigma}$ into them, respectively.
The full description of our 1DG-state-generation algorithm is presented as pseudocode in Algorithm~\ref{alg:1DG}.

\begin{algorithm}[H]
  \caption{Quantum algorithm for generating a one-dimensional Gaussian state}
  \label{alg:1DG}
  \begin{algorithmic}[1]
    \Require{
        \Statex $p\in \integers^{+}$
        \Comment{working precision}
        \Statex $m \in \integers^{+}$
        \Comment{position of the radix point in fixed-point number representation}
        \Statex $\tilde{\sigma}\in \binary^p$
        \Comment{$p$-bit approximate standard deviation of the approximate~1DG pure state}
        \Statex $\delta \in \binary^p$ \Comment{$p$-bit approximate lattice spacing}
            }
    \Ensure
        \Statex $\out \in \mathscr{H}_2^p$
        \Comment{$p$-qubit approximate 1DG~state \eqref{eq:lattice1DG} prepared on \out\ register}
\Function{oneDG}{$p, m, \tilde{\sigma}, \delta$}
    \State $ \mathscr{H}_2^p \ni \stDev\gets \tilde{\sigma}$
   \Comment{encodes $\tilde{\sigma}$ into the \stDev\ register}
    \State $\mathscr{H}_2^p \ni \spacing\gets \delta$
    \Comment{encodes $\delta$ into the \spacing\ register}
\For{$\ell \gets 0$ to $m-1$}\label{line:computeOut1}
    \State \label{line:iterOneDG}
    \textsc{iterOneDG}($\ell,m,p,\stDev,\mean, \ang$)
    \Comment{see the helper function in Library~\ref{lib:1DG}}
\EndFor
\For{$\ell \gets 0$ to $m-1$} \label{line:swap_start}
 \State $\operatorname{\textsc{swap}}(\out[\ell],\out[p-m+\ell])$
 \label{line:computeOut2}
\EndFor \label{line:swap_end}
\State \label{line:mul}
$\operatorname{\textsc{mul}}(\out,\spacing,\temp)$
\For{$\ell \gets 0$ to $m-1$}\label{line:uncomputeOut1}
\Comment{Lines (\ref{line:uncomputeOut1}--\ref{line:uncomputeOut2}) are the reverse of lines (\ref{line:computeOut1}--\ref{line:computeOut2}) and uncomputes \out}
 \State $\operatorname{\textsc{swap}}(\out[\ell],\out[p-m+\ell])$
\EndFor
\For{$\ell \gets m-1$ to $0$}
\State \textsc{invIterOneDG}($\ell, m,p,\stDev, \mean, \ang$) \label{line:uncomputeOut2}
\Comment{preforms inverse of the helper function in line~\ref{line:iterOneDG}}
\EndFor
\For{$\ell \gets 0$ to $p-1$}
\Comment{swaps the qubits of \out\ and \temp\ registers}
 \State $\operatorname{\textsc{swap}}(\out[\ell],\temp[\ell])$
\EndFor
\State $ \stDev\gets \tilde{\sigma}$
   \Comment{erases \stDev\ register}
    \State $\spacing\gets \delta$
    \Comment{erases \spacing\ register}
    \State \Yield \out
\EndFunction
\end{algorithmic}
\end{algorithm}

\subsubsection{One-dimensional Gaussian-state generation by inequality testing}
\label{subsubsec:inequality_testing}

We now construct our alternative quantum algorithm for generating a 1DG state.
Our algorithm is based on testing an inequality on a quantum computer.
We begin with a high-level description for state generation by inequality testing and our algorithm for 1DG-state generation.
Then we proceed with a detailed description of the algorithm, and finally, we present our algorithm as pseudocode.

The general principle to prepare a state of the form $\sum_j f(j) \ket{j}$ by inequality testing is as follows~\cite{SLSB19}.
First prepare an equal superposition over $\ket{j}$.
Then, for each $j$, compute~$f(j)$ into a new quantum register.
Next perform an inequality test between the value encoded in this register and the value encoded in an ancillary register prepared in an equal superposition over all possible values of the function.
Then erase the ancillary register and measure the qubit storing result of the inequality test.
The prepared state by this method is the desired state with certain probability. 
The success probability can then be boosted by amplitude amplification~\cite{BHM+02}.
The complexity of this approach could be large in the case where the distribution of amplitudes has a sharp peak in an unknown location because the amplitude amplification would then essentially be solving a Grover search, which has a square root speed limit~\cite{BHM+02}.

The amplitude distribution for a Gaussian state has a single peak but in a known location, and we take advantage of the known location in preparing a Gaussian state.
Our approach for using the known location is similar to that used for preparing a state with amplitudes~$1/\|k_\nu\|$ in~\cite{BBM+19}.
Instead of initially preparing a state with an equal superposition over $j$, we prepare a state with approximate amplitudes upper bounding the amplitudes to be prepared, as shown by the orange points in~\cref{fig:1DG_by_ineq_testing}.
Instead of computing $f(j)$, we compute the ratio between $f(j)$ and the initial upper bound on $f(j)$ and perform an inequality test with that ratio.
The inequality test corrects the amplitudes and results in a much larger amplitude for success.
Hence only a single step of amplitude amplification can be used, which significantly reduces the algorithm's complexity.

We now proceed with a detailed description of the inequality-testing-based algorithm for Gaussian-state generation.
The state that we aim to prepare by inequality testing is
\begin{equation}
\label{eq:pn_intgs}
    \ket{\Psi(\sigma,m)} := \frac{1}{\sqrt{\mathcal{N}(\sigma,m)}}
    \sum_{j=-(2^{m-1}-1)}^{2^{m-1}-1}
    \e^{-\frac{j^2}{4\sigma^2}} \ket{j},\quad
    \mathcal{N}(\sigma,m) := 1+2\sum_{j=1}^{2^{m-1}-1}
    \e^{-\frac{j^2}{2\sigma^2}},
\end{equation}
with classical inputs~$m\in \integers^+$ and~$\sigma\in\reals^+$.

Having prepared this state on a quantum register, labelled \out, we then incorporate a lattice spacing~$\delta \in \reals^+$ by 
implementing~$\ket{j}_\out\mapsto\ket{j\delta}_\out$ as per~\cref{eq:mul_operation}.
Thereby an approximation for the desired 1DG state~\eqref{eq:lattice1DG} is prepared on~\out.
Considering the range of the index~$j$ in~\cref{eq:pn_intgs} and~\cref{eq:lattice1DG}, the infidelity between the approximate and the desired 1DG states is exponentially close to zero.

Our strategy to prepare the state in~\cref{eq:pn_intgs} by inequality testing is as follows.
First we prepare the state
\begin{equation}
\label{eq:p_intgs}
    \ket{\Psi_+(\sigma,m)} = \frac{1}{\sqrt{\mathcal{N}(\sigma,m)}}
    \left(\ket{0}+\sqrt{2}\sum_{j=1}^{2^{m-1}-1}
    \e^{-\frac{j^2}{4\sigma^2}} \ket{j}\right),
\end{equation}
on $m$ qubits.
Controlled on~$j\neq 0$, we then perform a Hadamard on the leftmost qubit.
The controlled operation gives a sign bit for $j$ being a signed integer with positive and negative values.
Then we convert from signed integer to two's complement representation~\cite[p.~16]{HH16} to obtain the state in~\cref{eq:pn_intgs}.

To prepare the state in~\cref{eq:p_intgs}, we use an initial amplitude according to the value of $j$ rounded down to the nearest power of two;
the initial amplitude is illustrated by orange points in~\cref{fig:1DG_by_ineq_testing}.
Specifically, the initial state that we prepare is
\begin{equation}
\label{eq:p_intgs_approx}
    \ket{\tilde{\Psi}_+(\sigma,m)} = \frac{1}{\sqrt{\tilde{\mathcal{N}}(\sigma,m)}}
    \left(\ket{0}+\sqrt{2}\sum_{j=1}^{2^{m-1}-1}
    \e^{-\frac{2^{2\floor{\log_2 j}}}{4\sigma^2}} \ket{j}\right),
    \quad
    \tilde{\mathcal{N}}(\sigma,m) :=
    1+2 \sum_{j=1}^{2^{m-1}-1}
    \e^{-\frac{2^{2\floor{\log_2 j}}}{2\sigma^2}},
\end{equation}
where~$\tilde{\mathcal{N}}(\sigma,m)$ is the state's normalization factor.
Having prepared this initial state, we then transform it to the state in~\cref{eq:p_intgs} by inequality testing.
To prepare the initial state, first we prepare the state
\begin{equation}
\label{eq:stepped_Gaussian}
    \frac{1}{\sqrt{\mathcal{M}(\sigma,m)}}
    \left(\ket{0}^{\otimes m} +
    \sum_{j=1}^{m-1} 2^{j/2} \e^{-\frac{2^{2(j-1)}}{4\sigma^2}} \ket{0}^{\otimes(m-j)}\ket{1}^{\otimes j}\right),
    \quad
    \mathcal{M}(\sigma,m):=  1 + \sum_{j=1}^{m-1} 2^j \e^{-\frac{2^{2(j-1)}}{2\sigma^2}},
\end{equation}
on an $m$-qubit register \texttt{G}.
Then we sequentially perform a \textsc{cnot} followed by a controlled-Hadamard~(\textsc{chad}) from~$\texttt{G}[\ell]$ to~$\texttt{G}[\ell-1]$ for~$\ell$ from $1$ to~$m-1$;
note that quits are ordered from right to left, so the rightmost qubit is $0^\text{th}$ qubit.
Upon performing these operations, the initial state~\eqref{eq:p_intgs_approx} is generated on register \texttt{G}.

We now describe how to transform the initial state~\eqref{eq:p_intgs_approx} into the state in~\cref{eq:p_intgs} by inequality testing.
Let us define $r_0:=1$,
\begin{equation}
\label{eq:ratio}
    r_j := 
      \exp(\frac{2^{2\lfloor \log_2 j \rfloor} - j^2}{4\sigma^2})
      \quad
      \forall j\in\{j=1,\ldots, 2^m-1\},
\end{equation}
for the ratio of the unnormalized amplitudes from~\cref{eq:p_intgs} to the unnormalized amplitudes from~\cref{eq:p_intgs_approx}.
For some positive integer~$t$,
we compute a~$t$-bit approximation of~$r_j$ into a $t$-qubit temporary register, labelled~\temp, by an operation defined as
\begin{equation}
\label{eq:ratio_operation}
  \textsc{ratio}: \ket{j}_\out \ket{\sigma}_\stDev \ket{0}_\temp \mapsto \ket{j}_\out \ket{\sigma}_\stDev \ket{r_j}_\temp,
\end{equation}
which we describe by~$\textsc{ratio}(\out,\stDev,\temp)$ in our quantum algorithm.
As the ratio~$r_j\leq 1$ for each~$j$, the encoded $t$-bit string in \temp\ represents the value of~$r_j$~\eqref{eq:ratio} with an implied division by~$2^t$.
Next we prepare a $t$-qubit reference register, labelled \reff, in the uniform superposition state~$2^{-t/2} \sum_{z=0}^{2^t - 1} \ket{z}$ using~$t$ Hadamard gates.
With an implied division of the encoded value by~$2^t$, the register \reff\ can be viewed as being prepared in a uniform superposition of all possible values from~$0$ to~$1$.
Finally, we test an inequality between the value encoded in \temp\
and the value encoded in \texttt{ref} with the result of the inequality test written to a fresh qubit labelled \ineq.
Specifically, we perform a comparison operation defined as
\begin{equation}
    \label{eq:comp_operation}
    \textsc{comp}: \ket{r}_\temp\ket{z}_\reff \ket{0}_\ineq
    \mapsto
    \begin{cases}
        \ket{r}_\temp\ket{z}_\reff \ket{0}_\ineq \quad
        \text{if $r<z$},\\
        \ket{r}_\temp\ket{z}_\reff \ket{1}_\ineq \quad
        \text{if $r\geq z$},
    \end{cases}
\end{equation}
where $r$ and $z$ are $t$-bit numbers;
this operation is described by $\textsc{comp}(\temp,\reff,\ineq)$ in our quantum algorithm.
The state after inequality testing is
\begin{equation}
\ket{\Psi_\text{comp}}:=\frac 1{\sqrt{\tilde{\mathcal{N}}(\sigma,m) 2^{t}}}
\sum_{j=0}^{2^{m-1}-1}
g(j) \ket{j}_\out \ket{r_j}_\temp
\left(
\sum_{z=0}^{r^{(t)}_j-1}
\ket{z}_\reff\ket{0}_\ineq
+
\sum_{z=r^{(t)}_j}^{2^{t-1}-1}
\ket{z}_\reff\ket{1}_\ineq
\right),
\end{equation}
where~$\tilde{\mathcal{N}}(\sigma,m)$ is the normalization factor in~\cref{eq:p_intgs_approx}, $r^{(t)}:=\floor{2^t r_j}$ and
\begin{equation}
    g(0):=1, \quad g(j):=\sqrt{2}
    \exp\left(
    -\frac{2^{2\floor{\log_2j}}}{4\sigma^2}
    \right) \quad \forall\, j\in\{1,\ldots,2^m-1\},
\end{equation}
are the unnormalized amplitudes in~\cref{eq:p_intgs_approx}.
Next we unprepare the uniform superposition on \reff\ with $t$ Hadamard gates.
Then projecting the single qubit \ineq\ onto the success state~$\ket{0}_\ineq$ yields
\begin{equation}
    \ket{\Psi_\text{out}}:=
    \frac 1{\sqrt{\tilde{\mathcal{N}}(\sigma,m)}2^t} \sum_j g(j) r^{(t)}_j \ket{j},
\end{equation}
with success probability
\begin{equation}
    P_\text{success} := \frac{1}{\tilde{\mathcal{N}}(\sigma,m)2^{2t}} \sum_j \left(g(j) \floor*{2^t r_j}\right)^2.
\end{equation}

\cref{fig:successProb} shows the success probability for a wide range of the standard deviation, and realistic size (number of qubits $m$) for the quantum register encoding the 1DG state.
The success probability is greater than $0.67\approx 2/3$ for any realistic example.
Therefore, one step of amplitude amplification is sufficient to achieve a high success probability.
However, as 1DG-state preparation is at the beginning of the main algorithm for ground-state generation, it would be sufficient to prepare the state probabilistically and repeat until success.
As the probability of success is at least about $70\%$, the repeat-until-success procedure is more efficient on average than amplitude amplification.

\begin{figure}[H]
    \centering
    \includegraphics[width=.76\textwidth]{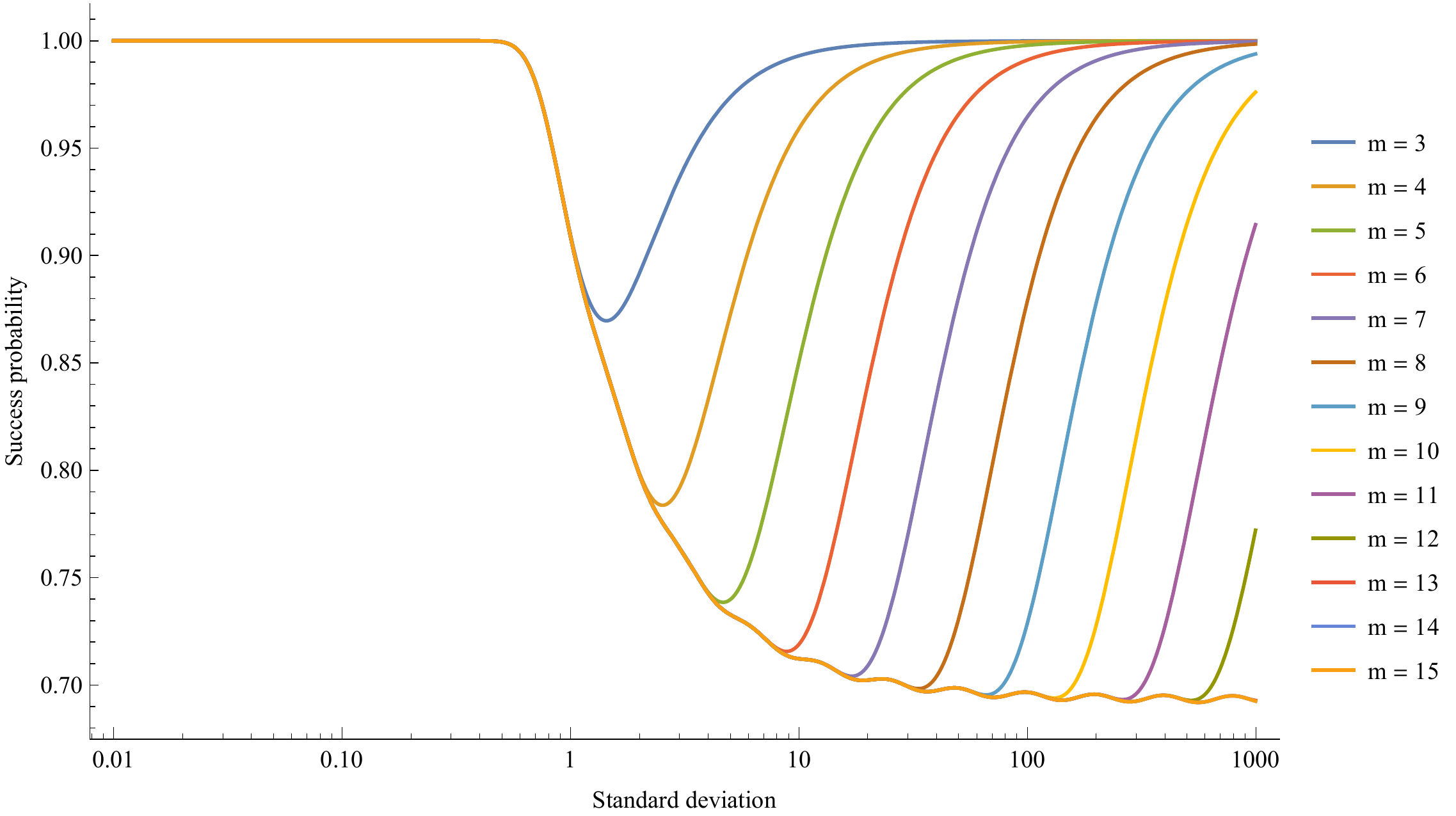}
    \caption[Success probability for inequality testing]{
    The success probability for preparing a 1DG state by inequality testing for a wide range of standard deviation and number of qubits~$m$; see~\cref{eq:pn_intgs}.
    The success probability is at least about~$70\%$.}
    \label{fig:successProb}
\end{figure}

The state in~\cref{eq:stepped_Gaussian} is a unary state which can be prepared by rotations and controlled rotations~\cite[pp.~7--8]{BBK+16}.
We now describe how to prepare this state on an $m$-qubit register \out\ by these operations.
To this end, first we compute the rotation angles~$\theta_\ell$~in
\begin{equation}
\label{eq:ineq_rot_angs}
    \sin(\theta_\ell) =\frac{\sqrt{2}2^{\ell/2}
    \e^{-\frac{2^{2\ell}}{4\sigma^2}}}{\sqrt{\mathcal{M}(\sigma,\ell+2)}}
    \quad
    \forall \ell \in\{0,\ldots,m-2\},
\end{equation}
with $\mathcal{M}(\sigma,\ell)$ given in~\cref{eq:stepped_Gaussian}, by a classical algorithm.
Then we perform the following operations:
\begin{enumerate}
    \item Perform a single-qubit rotation with angle $\theta_{m-2}$ on $\out[m-2]$;
    $\out[i]$ denotes the $i^\text{th}$ qubit of \out. 
    \item For~$\ell$ from~$m-3$ to zero,
    perform a \textsc{cnot} from~$\out[\ell+1]$ to~$\out[\ell]$, and a single-qubit rotation with angle~$\theta_\ell$ on~$\out[\ell]$ if state of~$\out[\ell+1]$ is $\ket{0}$. 
    We perform the controlled rotation by
    \begin{equation}
        \operatorname{\textsc{crot}}:
        \ket{b}_{\out[\ell+1]} \ket{\eta}_\ang \ket{0}_{\out[\ell]}
        \mapsto
        \ket{b}_{\out[\ell+1]}
        \operatorname{\textsc{rot}}^{1-b}
        \ket{\eta}_\ang \ket{0}_{\out[\ell]},
    \end{equation}
    where $\operatorname{\textsc{rot}}^{1}:=\operatorname{\textsc{rot}}$~\eqref{eq:rot_operation} and $\operatorname{\textsc{rot}}^{0}:=\id$.
    This operation is described by~$\operatorname{\textsc{crot}}(\out[\ell+1],\ang,\out[\ell])$ in our algorithm.
    \item If $j\neq 0$, perform a Hadamard on $\out[m-1]$, i.e., the leftmost qubit of \out.
    To do this, we test an inequality between the value $j$ encoded in \out\ register and the value $0$ encoded in \reff\ register, and write the result of inequality test to a fresh qubit labelled \texttt{flag}.
    Then we perform a \textsc{chad} from \texttt{flag} to~$\out[m-1]$.
    That is, we perform a Hadamard on~$\out[m-1]$ if state of \texttt{flag} is~$\ket{1}$. 
    Finally, we erase \texttt{flag}.
    \item Convert the signed integer encoded in \out\ to its two's complement representation.
\end{enumerate}
The classical algorithm for computing the rotation angles is formally described by the helper function in Library~\ref{alg:ineq1DG}.
The purpose of this helper function is only to distinguish the classical vs quantum parts of the state-generation algorithm by inequality testing presented as pseudocode in Algorithm~\ref{alg:ineq1DG}.

\begin{library}[H]
  \caption{Helper functions for 1DG-state generation in Algorithm~\ref{alg:ineq1DG}}
  \label{lib:ineq1DG}
  \begin{algorithmic}[1]
\Function{rotAngles}{$m,\sigma$}
\Comment{computes rotation angles $\theta_\ell$~\eqref{eq:ineq_rot_angs} required to prepare the state in~\cref{eq:stepped_Gaussian}}
\For{$\ell\gets 0$ to $m-2$}
    \State $\mathcal{M} \gets 1+ \displaystyle\sum_{j=1}^{\ell-1} 2^j \exp\left(-2^{2(j-1)}/\left(2\sigma^2\right)\right)$
    \State $\theta_\ell \gets 2^{\ell/2}\sqrt{2/\mathcal{M}} \exp\left(-2^{2\ell}/\left(4\sigma^2\right)\right)$ 
\EndFor
\State \Return $\bm{\uptheta}$
\Comment{$\bm{\uptheta}$ is a vector with components $\theta_\ell$}
\EndFunction
\end{algorithmic}
\end{library}

\begin{algorithm}[H]
  \caption{Quantum algorithm for generating a one-dimensional Gaussian state by inequality testing}
  \label{alg:ineq1DG}
  \begin{algorithmic}[1]
    \Require{
    \Statex $p\in \integers^{+}$
        \Comment{working precision}
        \Statex $m \in \integers^{+}$
        \Comment{position of the radix point in fixed-point number representation}
        \Statex $\sigma\in \binary^p$
        \Comment{$p$-bit approximate standard deviation of the approximate~1DG pure state}
        \Statex $\delta \in \binary^p$ \Comment{$p$-bit approximate lattice spacing}
            }
    \Ensure
    \Statex $\out \in \mathscr{H}_2^p$
        \Comment{$p$-qubit approximate 1DG~state \eqref{eq:lattice1DG} prepared on \out\ register}
\Function{ineqBasedOneDG}{$p, m, \tilde{\sigma}, \delta$}
    \State $\reals^{m-1} \ni \bm{\uptheta} \gets \textsc{rotAngles}(m, \sigma)$
    \Comment{classically computes rotation angles for preparing the state in~\cref{eq:stepped_Gaussian}; see~Library~\ref{lib:ineq1DG}}
    \State $\ang \gets \theta_{m-2}/2\uppi$
    \State \textsc{rot}($\ang, \out[m-2]$)
    \Comment{rotates $(m-2)^\text{th}$ qubit of \out\ register by angle~$\theta_{m-2}$; see~\cref{eq:rot_operation}}
    \State $\ang \gets \theta_{m-2}/2\uppi$
    \Comment{erases \ang\ register}
    \For{$\ell \gets m-3$ to $0$}
        \State \textsc{cnot}($\out[\ell+1], \out[\ell]$)
        \State $\ang \gets \theta_\ell/2\uppi$
        \State \textsc{crot}($\out[\ell+1], \ang, \out[\ell]$)
        \Comment{rotates $(\ell-1)^\text{th}$ qubits of \out\
        if $\ell^\text{th}$ qubit of \out\ is $\ket{0}$}
        \State $\ang \gets \theta_\ell/2\uppi$
        \Comment{erases \ang\ register}
    \EndFor
    \For{$\ell \gets 1$ to $m-1$}
        \Comment{lines~(\ref{line:p_intgs_approx1}--\ref{line:p_intgs_approx2}): prepares the state in~\cref{eq:p_intgs_approx} on \out\ register}
        \State \label{line:p_intgs_approx1}
        \textsc{cnot}($\out[\ell], \out[\ell-1]$)
        \State \label{line:p_intgs_approx2}
        \textsc{chad}($\out[\ell],\out[\ell-1]$)
    \EndFor
    \State $b\gets 1$
    \Comment{sets the initial value for binary~$b$}
    \While{$b=1$}
        \For{$\ell \gets 0$ to $t-1$}\label{line:repeat}
        \Comment{prepares $t$-qubit \reff\ register in uniform superposition state}
        \State \textsc{H}($\reff[\ell]$)
        \EndFor
        \State \textsc{ratio}(\out, \stDev, \temp)
        \Comment{computes the ratio $r_j$~\eqref{eq:ratio} into \temp\ as per~\cref{eq:ratio_operation}}
        \State \textsc{comp}(\temp, \reff,\ineq)
        \Comment{performs the inequality test~\eqref{eq:comp_operation} between the values encoded in \temp\ and \reff}
        \For{$\ell \gets 0$ to $t-1$}
        \Comment{erases \reff\ register}
        \State \textsc{H}($\reff[\ell]$)
        \EndFor
        \State $b\gets \textsc{measure}(\ineq)$
        \Comment{measures the single qubit \ineq\ in computation basis; $b$ is measurement's output}
        \If{$b=1$}
        \Comment{if measurement's output is 1, the state in~\cref{eq:p_intgs} is generated on \out}
        \State \textsc{comp}(\out, \reff, \texttt{flag})
        \Comment{performs the inequality test between the values encoded in \out\ and \texttt{flag}}
        \State $\textsc{chad}(\texttt{flag}, \out[m-1])$
        \Comment{controlled on \texttt{flag} applies a Hadamard on the leftmost qubit of \out}
        \State \textsc{comp}(\out, \reff, \texttt{flag})
        \Comment{erases \texttt{flag}}
        \State $\spacing \gets \delta$
        \Comment{writes lattice spacing~$\delta$ into \spacing\ register}
        \State \textsc{mul}(\out, \spacing, \anc)
        \State $\spacing \gets \delta$
        \Comment{erases \spacing\ register}
        \State \textsc{erase}(\out)
        \Comment{erases \out\ register by performing reverse of operations used in preparation}
         \For{$\ell \gets 0$ to $p-1$}
            \State \textsc{swap}($\out[\ell], \anc[\ell]$)
            \Comment{swaps the qubits of \out\ and \anc\ registers; erases \anc}
        \EndFor
        \State \Yield \out
        \EndIf
    \EndWhile    
\EndFunction
\end{algorithmic}
\end{algorithm}

\newpage
\subsubsection{Quantum fast Fourier transform}
\label{subsubsec:QFFT}

Here we present our quantum algorithm for performing a discrete Fourier transform~(DFT) on a quantum computer. 
We begin by specifying the task here and distinguishing it from the standard quantum Fourier transform~\cite[Chap.~5]{NC11}.
We then explain how a classical fast Fourier transform algorithm can be modified for execution on a quantum computer.
We next explain that, for our purposes, the usual complex-valued DFT can be replaced with the discrete Hartley transform~(DHT).
Finally, we present a detailed description of our quantum algorithm for performing a DHT on a quantum computer.

For a given positive integer~$N$ and a complex-valued vector~$\bm{x} = \left( x_0, x_1, \ldots, x_{N-1} \right)$,
the DFT of~$\bm{x}$ is a complex-valued
vector~$\tilde{\bm{x}}=
\left(\tilde{x}_0, \tilde{x}_1,\ldots, \tilde{x}_{N-1}\right)$ with components
\begin{equation}
\label{eq:DFT}
    \tilde x_\ell := \frac1{\sqrt{N}}
    \sum_{\ell^\prime = 0}^{N-1} x_{\ell^\prime}
    \e^{-2\uppi\upi\ell\ell^\prime/N} .
\end{equation}
A quantum fast Fourier transform~(QFFT) is then a quantum circuit 
$\operatorname{\textsc{qfft}}$ designed so that
\begin{equation}
\label{eq:QFFT}
    \operatorname{\textsc{qfft}}: \ket{x_0} \ket{x_1} \cdots \ket{x_{N-1}}
    \mapsto \ket{\tilde{x}_0} \ket{\tilde{x}_1} \cdots \ket{\tilde{x}_{N-1}},
\end{equation}
where~$\tilde x_\ell$~\eqref{eq:DFT} are components of the transformed vector~$\tilde{\bm{x}}$.
Note that QFFT is distinct from the standard quantum Fourier transform~\cite[p.~217]{NC11}, which is defined as
\begin{equation}
\label{eq:QFT}
    \operatorname{\textsc{qft}}: \ket{\ell}
    \mapsto \frac1{\sqrt{N}}
    \sum_{\ell^\prime = 0}^{N-1}
    \e^{-2\uppi\upi\ell\ell^\prime/N} \ket{\ell^\prime}
    \qquad \forall \ell\in \{0, 1, \ldots, N - 1\},
\end{equation}
for any $N$-dimensional qudit.
Specifically, in contrast to the standard quantum Fourier transform, the QFFT requires the quantum computer to perform arithmetic operations.

There are well-known techniques to execute the DFT on a classical computer using~$\order{N \log_2 N}$ arithmetic operations,
rather than the $\order{N^2}$ required by a naive implementation.
Such techniques are called `fast Fourier transforms'.
Fast Fourier transforms can be straightforwardly ported to quantum algorithms by executing the prescribed arithmetic operations reversibly~\cite{ASY20}.
These arithmetic operations always take the form of an in-place transformation~$(a, b) \mapsto (a + \omega b, a - \omega b)$, called `butterfly' operation, for~$a, b \in \cmplex$ read from the input array and~$\omega \in \cmplex$ some precomputed constant.
As per the cost model in~\cref{subsubsec:complexity_measure}, we assign a unit cost for each butterfly operation,
as the overall operation is roughly as computationally expensive as multiplication.
Hence we can execute any fast Fourier transform on a quantum computer and report the same complexity of~$\order{N \log_2 N}$.

The problem with the QFFT is that the resulting representation
for the Gaussian state requires us to have complex-valued position states~\eqref{eq:continuous1DG}.
As all operations for ground-state generation require only real numbers, we avoid the complex numbers required by the DFT and instead use a discrete Hartley transform,
which only requires real numbers~\cite[p.~100]{HJ12}.
The DHT is valid because the ground-state ICM in the Fourier-based method~(\cref{subsubsec:Fourier_method}) is a real, symmetric and circulant matrix, and hence is diagonalizable by a DHT~\cite[Theorem~1]{BF93}.
The fast Hartley transform has a similar structure to the fast Fourier transform but with a different butterfly operation~\cite{Bra84}.
The number of butterfly operations is the same.
Hence, by porting these operations to quantum algorithms, we achieve the same quantum complexity of~$\order{N \log_2 N}$ for executing a fast Hartley transform on a quantum computer.

We replace the discrete Fourier transform circuit $\operatorname{\textsc{qfft}}$~\eqref{eq:QFFT}
by a new circuit $\operatorname{\textsc{qfht}}$ defined so that
\begin{equation}
\label{eq:QFHT}
  \operatorname{\textsc{qfht}}: \ket{x_0} \ket{x_1} \cdots \ket{x_{N-1}}
  \mapsto \ket{\overline{x}_0} \ket{\overline{x}_1} \cdots \ket{\overline{x}_{N-1}},
  \quad
  \overline{x}_\ell:=\frac1{\sqrt{N}} 
  \sum_{\ell^\prime = 0}^{N-1}
  x_{\ell^\prime} \cas(2\uppi\ell\ell^\prime/N),
\end{equation}
where the~$\cas$ function is defined as $\cas(\theta):=\cos(\theta)+\sin(\theta)$ and~$\overline{x}_\ell$ are components of the vector~$\overline{\bm{x}}$, the DHT of~$\bm{x}$.
Analogous to the distinction between QFFT and the standard quantum Fourier transform, the QFFT is distinct from the standard quantum Hartley transform~\cite{KR03,TH05}, which is defined as
\begin{equation}
    \textsc{qht}: \ket{\ell}\mapsto \frac{1}{\sqrt{N}}
    \sum_{\ell^\prime=0}^{N-1} \cas(2\uppi\ell\ell^\prime/N) \ket{\ell^\prime}
    \quad\forall\, \ell\in\{0,1,\ldots,N-1\},
\end{equation}
for any $N$-dimensional qudit.

To elucidate our QFHT algorithm, first we describe a recursive decomposition for the DHT.
Let~$\overline{\bm{x}}_\text{E}$ and~$\overline{\bm{x}}_\text{O}$ be vectors comprised of the even- and odd-indexed components of~$\overline{\bm{x}}$~\eqref{eq:QFHT}, respectively,
and let~$\overline{\bm{x}}_\text{L}$ and~$\overline{\bm{x}}_\text{R}$ be the left half and the right half of~$\overline{\bm{x}}$, respectively.
Then the Hartley-transformed vector~$\overline{\bm{x}}$~\eqref{eq:QFHT} is written as~\cite[Chap.~25]{Arn11}
\begin{equation}
\label{eq:fhtDecomposition}
    \overline{\bm{x}}_\text{L} = \overline{\bm{x}}_\text{E}
    +\textsc{chs}\left(\overline{\bm{x}}_\text{O}\right),
    \quad
    \overline{\bm{x}}_\text{R} = \overline{\bm{x}}_\text{E}
    -\textsc{chs}\left(\overline{\bm{x}}_\text{O}\right),
\end{equation}
where \textsc{chs} is the classical Hartley-shift operation with action
\begin{equation}
\label{eq:CHS}
    \operatorname{\textsc{chs}}:
    (x_0,x_1,\ldots,x_{N-1} )\mapsto
    (x^\text{s}_0, x^\text{s}_1,\ldots,x^\text{s}_{N-1}),
    \quad
    x^\text{s}_\ell:= x_\ell \cos(\uppi\ell/N) +x_{N-\ell} \sin(\uppi\ell/N).
\end{equation}
The decomposition in~\cref{eq:fhtDecomposition} enables a recursive algorithm for the DHT.
The recursive algorithm requires a temporary workspace for writing the results of intermediate computations.
We avoid the need for a temporary workspace by writing the algorithm in a non-recursive way and performing in-place operations.
Our in-place algorithm requires a quantum-data reordering, similar to the data-reordering operation known as `bit reversal' in the classical FHT algorithm.
The bit reversal reorders the input data-vector~$\bm x$ such that the data~$x_i$ at an index $i$ is swapped with the data~$x_{\textsc{rev}(i)}$ at index $\textsc{rev}(i)$, where $\textsc{rev}(i)$ is an integer obtained from~$i$ by reversing its binary digits.

Our QFHT algorithm is based on three key quantum operations that we define here.
The first operation is a quantum version of the classical bit-reversal operation and is defined as
\begin{equation}
    \label{eq:qrev}
    \textsc{qrev}:
    \ket{x_0}_{\vac[0]} \ket{x_1}_{\vac[1]} \cdots \ket{x_{N-1}}_{\vac[N-1]} 
    \mapsto
    \ket{x_{\textsc{rev}(0)}}_{\vac[0]}
    \ket{x_{\textsc{rev}(1)}}_{\vac[1]}
    \cdots \ket{x_{\textsc{rev}(N-1)}}_{\vac[N-1]},
\end{equation}
where \textsc{rev} is the classical bit reversal.
This operation swaps the values encoded in the registers~$\vac[i]$ and~$\vac[\textsc{rev}(i)]$ by swapping qubits of these registers;
see the helper function \textsc{qrev} in line~\eqref{line:qrev} of~Library~\ref{lib:QFHT}.
The second key operation in our algorithm is a quantum Hartley shift defined as
\begin{equation}
    \label{eq:qhs}
    \textsc{qhs}: \ket{x_0} \ket{x_1} \cdots \ket{x_{N-1}}
    \mapsto \ket{x^\text{s}_0} \ket{x^\text{s}_1} \cdots \ket{x^\text{s}_{N-1}},
\end{equation}
where~$x^\text{s}_\ell$ is given in~\cref{eq:CHS}. Notice that this operation is a quantum version of the classical Hartley shift~\eqref{eq:CHS}. 
We implement \textsc{qhs} by $N/2$ applications of a primitive quantum Hartley-shift operation defined as
\begin{equation}
\label{eq:hs}
    \textsc{hs}: \ket{c}_{\texttt{cos}} \ket{s}_{\texttt{sin}}
    \ket{x}_{\vac[i]} \ket{y}_{\vac[j]}
    \mapsto \ket{c}_{\texttt{cos}} \ket{s}_{\texttt{sin}}
    \ket{xc+ys}_{\vac[i]} \ket{xs-yc}_{\vac[j]},
\end{equation}
where the values encoded in the quantum registers labelled \texttt{cos} and \texttt{sin} are respectively the cosine and sine in~\cref{eq:CHS};
see the helper function \textsc{qhs} in line~\eqref{line:qhs} of~Library~\ref{lib:QFHT}.
The third key operation used in the QFHT algorithm is a quantum butterfly operation defined as
\begin{equation}
\label{eq:qbf}
    \textsc{qbf}: \ket{x}_{\vac[i]} \ket{y}_{\vac[j]}
    \mapsto \ket{x+y}_{\vac[i]} \ket{x-y}_{\vac[j]},
\end{equation}
which we describe by $\textsc{qbf}(\vac[i], \vac[j])$ in our quantum algorithm.
All of these operations can be implemented by quantum arithmetic, specifically the quantum multiplication \textsc{mul}~\eqref{eq:mul_operation};
see~\cref{subsubsec:QFHT_complexity}

\begin{library}[H]
  \caption{Helper functions for quantum fast Hartley transform in Algorithm~\ref{alg:QFHT}}
  \label{lib:QFHT}
  \begin{algorithmic}[1]
\Function{qrev}{$N,p,\vac$}
\label{line:qrev}
\Comment{\vac\ is a qubit register with $N$ cells, each of which has $p$ qubits;
$N$ is assumed to be an even integer}
\For{$i\gets 1$ to $N/2-1$}
    \For{$j\gets 0$ to $p-1$}
        \State $\textsc{swap}\left(\vac[i,j],\vac[\textsc{rev}(i),j]\right)$
        \Comment{$\textsc{rev}(i)$ is an integer whose binary representations is revers of binary representation of $i$}
    \EndFor
\EndFor
\State \Yield \vac
\EndFunction
\Function{qhs}{$N,p,\vac$}
\label{line:qhs}
\If{$N>1$}
\Comment{\textsc{qhs}~\eqref{eq:qhs} is trivial for $N=1$}
\For{$\ell\gets 1$ to $N/2-1$}
    \State (\texttt{cos}, \texttt{sin}) $\gets \left(\cos(\uppi\ell/N),\sin(\uppi\ell/N)\right)$
    \Comment{writes cosine and sine in~\cref{eq:CHS} to \texttt{cos} and \texttt{sin} registers}
    \State \textsc{hs}$\left(\texttt{cos}, \texttt{sin}, \vac[\ell], \vac[N-\ell]\right)$
    \Comment{performs the primitive quantum Hartley transform~\eqref{eq:hs}}
    \State (\texttt{cos}, \texttt{sin}) $\gets \left(\cos(\uppi\ell/N),\sin(\uppi\ell/N)\right)$
    \Comment{erases \texttt{cos} and \texttt{sin} registers}
\EndFor
\EndIf
\State \Yield \vac
\EndFunction
\end{algorithmic}
\end{library}

The QFHT algorithm proceeds as follows;
see~\cref{fig:QFHT} for a schematic description of the algorithm.
First we reorder the size-$N$ input state-vector to the algorithm by implementing the \textsc{qrev}~\eqref{eq:qrev} operation.
Then the algorithm has~$\log_2 N$ stages.
For each stage $s\in\{1,\ldots, \log_2 N\}$, we group the reordered state-vector into~$N/2^s$ blocks of size-$2^s$ state-vectors and perform the \textsc{qhs}~\eqref{eq:qhs} on the right half of the state-vector in each block.
Then, for $\ell\in\{0,\ldots,2^s/2-1\}$, we perform \textsc{qbf}~\eqref{eq:qbf} on the~$\ell^\text{th}$ component of the left and right halves of the state-vector in each block.
Figure~\ref{fig:QFHT} shows a schematic representation of the QFHT algorithm for~$N=8$.
Components of the input state-vector in this figure is represented from top to bottom.
Hence the top and bottom halves of the state vector in each block represented in~\cref{fig:QFHT} corresponds to the left and right halves of the state-vector in the block when the state-vector is represented from left to right as in~\cref{eq:qhs}.

\begin{figure}[htb]
\centering
\includegraphics[width=.86\linewidth]{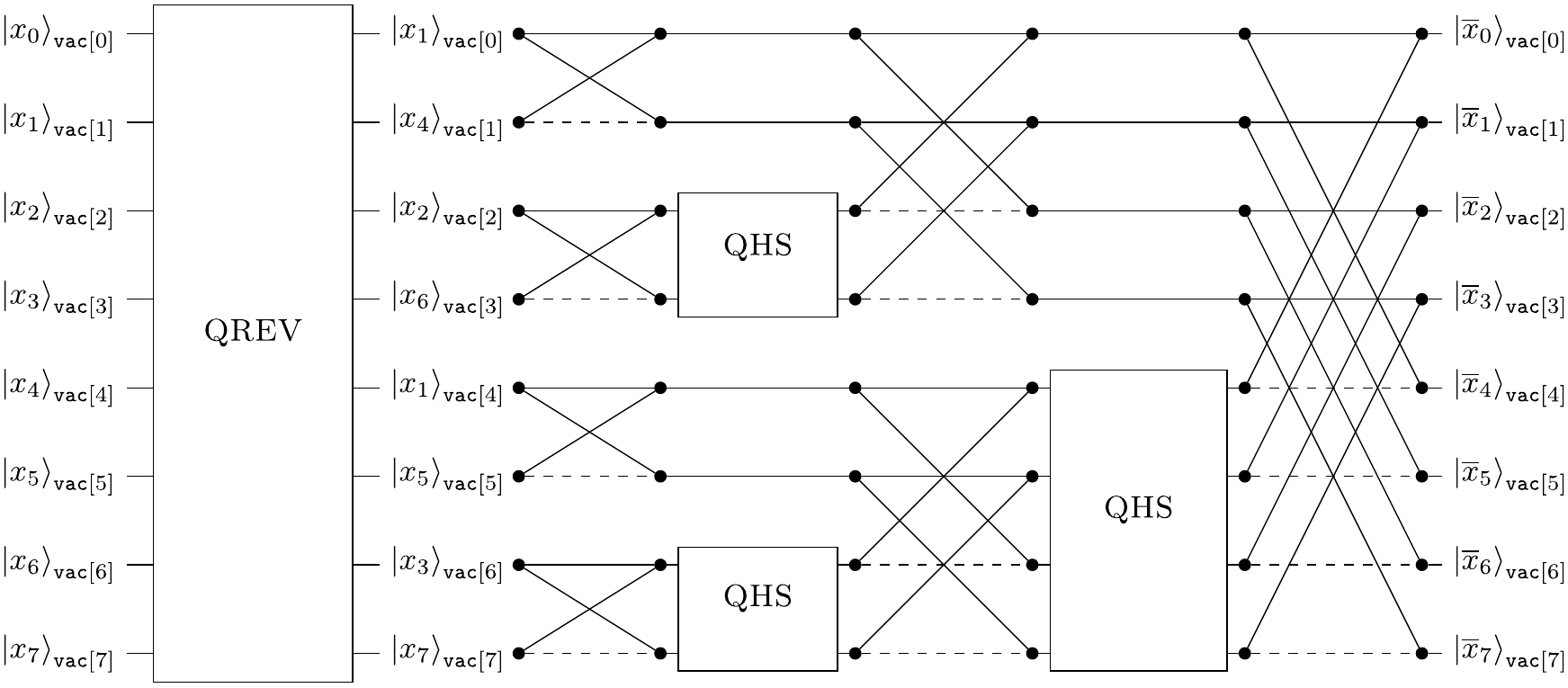} 
\caption[Schematic description of the quantum fast Hartley-transform algorithm]{
Schematic description of our quantum fast Hartley-transform algorithm for~$N=8$;
see~\eqref{eq:QFHT}.
Here \vac\ is a quantum register with~$8$ cells,
and each cell comprises $p$ qubits encoding a $p$-bit number.
\textsc{qrev}~\eqref{eq:qrev} reorders the values encoded in the input state-vector by swapping qubits of~$\vac[i]$ and~$\vac[\textsc{rev}(i)]$, where~$\textsc{rev}(i)$ is an integer obtained from $i$ by reversing its binary digits.
The algorithm has~$\log_2 N$ stages after \textsc{qrev}, each of which comprises~$N/2$ \textsc{qbf}~\eqref{eq:qbf} operations represented by crossed lines;
there are $N/2^s$ blocks of \textsc{qbf} operations at stage $s\in\{1,\ldots,\log_2 N\}$ and each block comprises~$2^s/2$ \textsc{qbf}.
\textsc{qhs} performs the Hartley-shift operation on its input state-vector as per~\cref{eq:qhs}.
\label{fig:QFHT}
}
\end{figure}

\begin{algorithm}[H]
\caption{Quantum fast Hartley transform
\label{alg:QFHT}}
\begin{algorithmic}[1]
    \Require{
    \Statex $N \in \integers_{\geq 2(2\dbIndex-1)}$
    \Comment{number of modes; for convenience, we assume that~$N$ is a power of~$2$}
    \Statex $p\in \integers^{+}$
    \Comment{working precision}
    \Statex $\vac \in \bigotimes_{\ell=0}^{N-1} \mathscr{H}_2^p$
    \Comment{tensor product of $N$ $p$-qubit 1DG states}
    }
    \Ensure
    \Statex $\vac \in \mathscr{H}_2^{N\times p}$
    \Comment{$(N\times p)$-qubit approximate ground state prepared on \vac\ register}
\Function{qfht}{$N, p, \vac$}
\State $\textsc{qrev}(N,p,\vac)$
\Comment{performs the quantum bit-reversal operation in~\cref{eq:qrev}; see line~\eqref{line:qrev} of~Library~\ref{lib:QFHT}}
\For{$s\gets 1$ to $\log_2 N$}
\Comment{iterates over stages of \textsc{qfht}; see~\cref{fig:QFHT}}
    \For{$b\gets 0$ to $N/2^s-1$}
        \Comment{iterates over blocks of butterflies at state~$s$}
        \State $\textsc{qhs}(2^s/2, p, \vac[(b+1/2)2^s:(b+1)2^s-1])$
        \Comment{see the helper function in line~\ref{line:qhs} of Library~\ref{lib:QFHT}}
        \For{$\ell \gets 0$ to $2^s-1$}
            \Comment{iterates over butterflies within block $b$ and performs butterfly operations}
            \State $\textsc{qbf}\left(\vac[b2^s+\ell], \vac[(b+1/2)2^s+\ell]\right)$
            \Comment{performs the quantum butterfly operation in~\cref{eq:qbf}}
        \EndFor
    \EndFor
\EndFor
\State \Yield \vac 
\EndFunction
\end{algorithmic}
\end{algorithm}

\subsubsection{Quantum shear transform}
\label{subsubsec:QST}

We now present a quantum algorithm for performing a shear transform on a quantum computer.
We begin by specifying the task in a shear transform and describe the quantum shear transform~(QST).
Next we explain how a sequence of shear transforms with exactly one shear element can implement a general shear transform.
We then discuss the shear transform required for the basis transformation in the wavelet-based algorithm.
We specify the inputs and outputs of our QST algorithm and proceed with explaining the procedure.
Finally, we present the QST algorithm as pseudocode.

For a shear transform, we are given a scalar $N \in \integers^+$,
a vector $\bm{x} = \left( x_0, x_1, \ldots, x_{N-1} \right) \in \reals^N$ and a shear matrix~$\upbm{S} \in \reals^{N\times N}$, where the shear matrix is  either a  lower or an upper unit-triangular matrix.
The shear transform
$\tilde{\bm{x}} = \left( \tilde{x}_0, \tilde{x}_1, \ldots, \tilde{x}_{N-1} \right)$
of~$\bm{x}$ specified by the shear matrix~$\upbm{S}$ is defined so that $\tilde{\bm{x}}= \upbm{S}\bm{x}$.
For our application, we only consider a shear transform with an upper unit-triangular shear matrix.
In this case,
\begin{align}
    \tilde{x}_i = x_i + \sum_{j=i+1}^{N-1} S_{ij} x_j,
\end{align}
is the shear transform~$\tilde{\bm{x}}$ of $\bm{x}$.

The quantum shear transform is similar to the quantum fast Fourier transform.
Specifically, the QST specified by~$\upbm{S}$ is a quantum circuit $\operatorname{\textsc{qst}}$ designed so that
\begin{equation}
\label{eq:qst}
\operatorname{\textsc{qst}}:
\ket{x_0} \ket{x_1} \cdots \ket{x_{N-1}}
\mapsto
\ket{\tilde{x}_0} \ket{\tilde{x}_1} \cdots \ket{\tilde{x}_{N-1}},
\end{equation}
where~$\tilde{x}_i$ are components of the shear-transformed vector.
A shear matrix~$\upbm{S}$ is not unitary, so the map $\ket{\bm{x}} \mapsto \ket{\upbm{S}\bm{x}}$ cannot be directly implemented by a quantum circuit.
However, by storing the shear elements of~$\upbm{S}$ on an ancillary quantum register and performing quantum arithmetic, this map can be implemented by a quantum circuit.

To elucidate implementation of \textsc{qst}~\eqref{eq:qst} by a quantum circuit, first we describe how to decompose a dense shear matrix as a product of shear matrices with exactly one shear element.
Let~$\upbm{U}$ be an~$N$-by-$N$ upper unit-triangular matrix and let~$\upbm{S}(i,j,s)$ be a $N$-by-$N$ upper unit-triangular shear matrix with one shear element~$s \in \reals$ at entry~$(i,j)$ with~$i<j$.
We decompose~$\upbm{U}$ into a product of shear matrices with exactly one shear element as
\begin{equation}
\label{eq:shearDecomp}
\upbm{U}=\prod _{i=N-2}^{0}
\prod_{j=N-1}^{i+1}\upbm{S}\left(i,j,u_{ij}\right),
\end{equation}
where~$u_{ij}$ are elements of the matrix~$\upbm{U}$.
This decomposition implies that a general shear transform can be implemented by performing a sequence of shear transforms with exactly one shear element.
Therefore, we only consider a shear transform with one nonzero shear element.

The required shear transform for basis transformation in the wavelet-based algorithm is the inverse-transpose of the upper unit-triangular matrix~$\upbm{U}$ in the UDU decomposition for the ground-state ICM in a multi-scale wavelet basis.
By~\cref{eq:shearDecomp}, we decompose this matrix as
\begin{equation}
\label{eq:shear_transform}
\left(\upbm{U}^\T\right)^{{-1}}=
\prod _{i=N-2}^{0}
\prod_{j=N-1}^{i+1}\upbm{S}^\T \left(i,j,-u_{ij}\right),
\end{equation}
where we used the fact that the inverse of a shear matrix with one shear element is a shear matrix with the shear element negated.
To implement the shear transform specified by the shear matrix~$\upbm{S}^\T(i,j,-u_{ij})$ by a quantum circuit,
we write the shear element~$s=-u_{ij}$ into a~$p$-qubit register labelled \shear\ and implement
\begin{equation}
\label{eq:shear_primitive}
    \ket{s}_\shear \ket{x}_{\vac[i]} \ket{y}_{\vac[j]}
    \mapsto 
    \ket{s}_\shear \ket{x}_{\vac[i]} \ket{y+sx}_{\vac[j]},
\end{equation}
where $\vac[i]$ and $\vac[j]$ are two $p$-qubit quantum registers that store the two $p$-bit numbers~$x$ and~$y$, respectively.
Notice that this map is identical to the quantum multiplication $\textsc{mul}$ in~\cref{eq:mul_operation}.
Hence we implement the map by performing one $\textsc{mul}$~\eqref{eq:mul_operation} operation and describe it by~$\textsc{mul}(\shear, \vac[i], \vac[j])$ in our quantum algorithm.

We now discuss the inputs and output of our quantum algorithm for performing the required shear transform for ground-state generation.
Inputs to the algorithm are $N$ $p$-qubit quantum states and the parameters that specify the shear matrix~$\upbm{U}$ in the UDU decomposition of the approximate ICM:
wavelet index~\dbIndex, number of modes~$N$, the upper bandwidth~$w$~\eqref{eq:bandwidth}
and shear elements of $\upbm{U}$.
The algorithm's output is an~$(N\times p)$-qubit quantum state that represents an approximation for the ground state~\eqref{eq:multiscale_groundstate} of the discretized QFT in a multi-scale wavelet basis.

We now describe the procedure of our quantum algorithm for performing the shear transform specified by the shear matrix in~\cref{eq:shear_transform}.
To perform this shear transform, we need to perform a sequence of shear transforms with one shear element;
order of the shear transforms is imposed by~\cref{eq:shear_transform}.
We start by performing the rightmost shear transform in \cref{eq:shear_transform} and proceed to perform the leftmost shear transform.
That is, we start by $i=0$ and proceed to $i=N-1$, and for each $i$ we perform the shear transform defined by $\upbm{S}^\T(i,j, -u_{ij})$ for $j>i$.

The inputs, output and explicit procedure of our quantum algorithm for performing the needed shear transform for ground-state generation is presented by the pseudocode in~\cref{alg:QST}.
In this algorithm, we use the helper functions in Library~\ref{lib:QST} and our method for storing the nonzero shear elements---described in~\cref{subsubsec:UDUDecomp} and illustrated in~\cref{fig:shearstoring}---to find the location of the nonzero shear element in the shear~matrix and only perform the shear transform induced by these elements.

\begin{library}[H]
  \caption{Helper functions for quantum shear transform in~\cref{alg:QST}
  \label{lib:QST}
  }
  \begin{algorithmic}[1]
\Function{ssQST}{$i,s_0,\upbm{s}^{(s_0)}_\sS,p,\vac[0:2^{s_0}-1]$}
\label{line:ssQST}
\For{$j \gets i+1$ to $2^{s_0}-1$} \label{line:shearUss-1}
\Comment{lines~(\ref{line:shearUss-1}--\ref{line:shearUss-2}) perform the shear transform induced by the ss block of inverse-transpose of $\upbm{U}$}
    \State $\shear \gets -s^{(s_0)}_{\sS;\, i2^{s_0}-i(i+1)/2+j-(i+1)}$
    \Comment{writes negate of the shear element at $(i,j)^\text{th}$ entry of $\upbm{U}^{(s_0)}_\sS$ into the \shear\ register}
    \State $\MUL(\shear, \vac[i], \vac[j])$
    \State \label{line:shearUss-2}
    $\shear \gets -s^{(s_0)}_{\sS;\, i2^{s_0}-i(i+1)/2+j-(i+1)}$
    \Comment{erases the \shear\ register by re-writing negate of the shear element}
\EndFor
\State \Yield $\vac[0:2^{s_0}-1]$
\EndFunction
\Function{swQST}{$i,s_0,c,\upbm{S}^{(s_0,c)}_\sw,p,\vac[2^c:2^{c+1}-1]$}
\label{line:swQST}
\For{$j \gets 0 $ to $2^c-1$}
\Comment{iterates over columns of $\upbm{S}_{\sw}^{(s_0,c)}$}
    \State $\shear \gets -s^{(s_0,c)}_{\sw;\,i,j}$
    \State $\MUL\left(\shear, \vac[i], \vac\left[2^c+j\right]\right)$
    \Comment{note that $j^\text{th}$ column of $\upbm{U}_\sw^{(s_0,c)}$ is $(2^k+j)^\text{th}$ column of $\upbm{U}$}
    \State \label{line:shearUsw-2}
    $\shear \gets  -s^{(s_0,c)}_{\sw;\,i,j}$
\EndFor
\State \Yield $\vac[2^c:2^{c+1}-1]$
\EndFunction
\Function{diagwwQST}{$i,r,\upbm{S}^{(r,r)}_\ww,p,w,\vac[2^r:2^{r+1}-1]$}
\label{line:diagwwQST}
\For{$j \gets 0$ to $w-1$}
\Comment{iterates over columns of $\upbm{S}_\ww^{(r,r)}$}
    \State $\shear \gets -s^{(r,r)}_{\ww;\,i,j}$
    \If{$i<2^r-w$}
        \State $\MUL\left(\shear, \vac\left[2^r+i\right], \vac\left[2^r+i+1+j\right]\right)$
        \Comment{note that $i^\text{th}$ row of $\upbm{U}_\ww^{(r,r)}$ is $(2^r+i)^\text{th}$ row of $\upbm{U}$}
    \EndIf
    \If{$i\geq 2^r-w$ and $j< 2^r-1-i$}
        \State $\MUL\left(\shear, \vac\left[2^r+i\right], \vac\left[2^{r+1}-w+j\right]\right)$
    \EndIf
    \State $\shear \gets -s^{(r,r)}_{\ww;\,i,j}$
\EndFor
\If{$i\leq w-1$}
\Comment{perform shear transform corresponding to shear elements in TRP of $\upbm{S}_\ww^{(r,r)}$}
    \For{$j\gets i$ to $w-1$}
        \State $\shear \gets -s^{(r,r)}_{\ww;\, 2^r-1-i,j}$
        \State $\MUL\left(\shear, \vac\left[2^r+i\right], \vac\left[2^{r+1}-w+j\right]\right)$
        \State $\shear \gets -s^{(r,r)}_{\ww;\, 2^r-1-i,j}$
    \EndFor
\EndIf
\State \Yield $\vac[2^r:2^{r+1}-1]$
\EndFunction
\Function{offDiagwwQST}{$i,r,c,\upbm{S}^{(r,c)}_\ww,p,w,\vac[2^c:2^{c+1}-1]$}
\label{line:offDiagwwQST}
\State $h \gets \textsc{lastColNNZ}(r,c)$
\State $H \gets \textsc{vertWidth}(r,c)$
\State $W \gets \textsc{width}(r,c)$
\State $(i_\text{LT},i_\text{FB}) \gets
        \left(h-1, 2^r-(H-h-1)\right)$
\Comment{$i_\text{LT}$ is row index of last column's last nonzero in TRP and $i_\text{FB}$ is row index of first column's first nonzero in BLP of $\upbm{U}^{(r,c)}_\ww$}
\State $j_\text{FT}\gets \textsc{width}(r,c)-w + i2^{c-r}$
\Comment{computes column index of first nonzero in $i^\text{th}$ row in TRP of $\upbm{S}^{(r,c)}_\ww$}
\For{$j \gets 0 $ to $\textsc{width}(r,c)-1$}
\Comment{iterates over columns of $\upbm{S}_\ww^{(r,c)}$}
    \State $\shear \gets -s^{(r,c)}_{\ww;\, i,j}$
    \If{$i \leq i_\text{LT}$ and $j < j_\text{FT}$}
    \label{line:MPshearStart}
    \Comment{lines (\ref{line:MPshearStart}---\ref{line:MPshearEnd}): perform shear transform corresponding to shear elements in MP of $\upbm{S}_\ww^{(r,c)}$}
        \State $\MUL(\shear,\vac\left[2^r+i\right],\vac\left[2^c+j\right])$
    \EndIf
    \If{$i_\text{LT} < i < i_\text{FB}$}
        \State $\MUL\left(\shear, \vac\left[2^r+i\right], \vac\left[2^c+2^{c-r}(i-i_\text{LT})+j\right]\right)$
    \EndIf
    \If{$i\geq i_\text{FB}$ and $j\geq 2^{c-r}(i-i_\text{FB})$}
        \State $\MUL\left(\shear, \vac\left[2^r+i\right], \vac\left[2^{c+1}-w-\left(2^c-i\right)2^{c-r}+j\right]\right)$
    \EndIf
     \label{line:MPshearEnd}
    \If{$i\leq i_\text{LT}$ and $j \geq j_\text{FT}$}
    \Comment{perform shear transform corresponding to shear elements in TRP of $\upbm{S}_\ww^{(r,c)}$}
        \State $\MUL\left(\shear, \vac\left[2^r+i\right], \vac\left[2^{c+1}-W+j\right]\right)$
    \EndIf
    \If{$i\geq i_\text{FB}$ and $j< 2^{c-r}(i-i_\text{FB})$}
    \Comment{perform shear transform corresponding to shear elements in BLP of $\upbm{S}_\ww^{(r,c)}$}
        \State $\MUL\left(\shear, \vac{\left[2^r+i\right]}, \vac{\left[2^c+j\right]}\right)$
    \EndIf
    \State \label{line:shearUww-2}$
    \shear \gets -s^{(r,c)}_{\ww;\,i,j}$
\EndFor
\State \Yield $\vac[2^c:2^{c+1}-1]$
\EndFunction
\end{algorithmic}
\end{library}

\begin{algorithm}[H]
  \caption{Quantum algorithm for a shear transform}
  \label{alg:QST}
  \begin{algorithmic}[1]
    \Require{
        \Statex $\dbIndex \in \integers_{\geq 3}$
        \Comment{wavelet index}
        \Statex $N \in \integers_{\geq 2(2\dbIndex-1)}$
        \Comment{number of modes; for convenience, we assume~$N$ is a power of~$2$}
        \Statex $p\in \integers^{+}$
        \Comment{working precision}
        \Statex $w \in \integers^{+}$
        \Comment{upper bandwidth of diagonal \ww\ blocks in approximate ICM $\tilde{\upbm{A}}$}
        \Statex 
         $\upbm{S}:=\cBraket{\left.
    \upbm{s}^{(s_0)}_\sS \in \reals^{\frac1{2}2^{s_0}(2^{s_0}-1)},
    \upbm{S}^{(s_0,c)}_\sw \in \reals^{2^{s_0}\times 2^c},
    \upbm{S}^{(r,c)}_\ww \in \reals^{2^r\times\left(\textsc{width}(r,c)
    -(w+1)\updelta_{rc}\right)}
    \right\vert 
    \ceil*{\log_2(4\dbIndex-2)} =:s_0 \leq r \leq c < k:=\log_2 N
    }$
    \Comment{shear elements in main and upper-diagonal blocks of~$\upbm{U}$ in UDU decomposition of $\tilde{\upbm{A}}$;
    here $\textsc{width}(r,c)$~\eqref{eq:block_bandwidth} is bandwidth of $\tilde{\upbm{A}}^{(r,c)}_\ww$}
    \Statex $\vac \in \bigotimes_{\ell=0}^{N-1} \mathscr{H}_2^p$
    \Comment{tensor product of $N$ $p$-qubit 1DG states}
            }
    \Ensure
        \Statex $\vac \in \mathscr{H}_2^{N\times p}$
        \Comment{$(N\times p)$-qubit approximate ground state prepared on \vac\ register}
\Function{QST}{$\dbIndex, N, p, w, \upbm{S}$}
\For{$i \gets 0$ to $2^{s_0}-1$}
    \Comment{perform the shear transform induced by the ss block of inverse-transpose of $\upbm{U}$}
    \State $\textsc{ssQST}(i,s_0,\upbm{s}^{(s_0)}_\sS,p,\vac[0:2^{s_0}-1])$
    \Comment{see line~\eqref{line:ssQST} of~Library~\ref{lib:QST}}
    \For{$c \gets s_0 $ to $k-1$} \label{line:shearUsw-1}
    \Comment{perform the shear transform induced by the sw blocks of inverse-transpose of $\upbm{U}$}
     \State $\textsc{swQST}(i,s_0,c,\upbm{S}^{(s_0,c)}_\sw,p,\vac[2^c:2^{c+1}-1])$
     \Comment{see line~\eqref{line:swQST} of~Library~\ref{lib:QST}}
    \EndFor
\EndFor
\For{$r \gets s_0$ to $k-1$}\label{line:shearUww-1}
\Comment{perform the shear transform induced by the ww blocks of inverse-transpose of $\upbm{U}$}
    \For{$i \gets 0 $ to $2^r-1$}
    \Comment{iterates over rows of $\upbm{U}_\ww^{(r,r)}$}
        \State $\textsc{diagwwQST}(i,r,\upbm{S}^{(r,r)}_\ww,p,w,\vac[2^r:2^{r+1}-1])$
        \Comment{see line~\eqref{line:diagwwQST} of~Library~\ref{lib:QST}}
    \For{$c \gets r+1 $ to $k-1$}
        \State $\textsc{offDiagwwQST}(i,r,c,\upbm{S}^{(r,c)}_\ww,p,w,\vac[2^c:2^{c+1}-1])$
        \Comment{see line~\eqref{line:offDiagwwQST} of~Library~\ref{lib:QST}}
    \EndFor
\EndFor    
\EndFor
\State \Yield \vac
\EndFunction
\end{algorithmic}
\end{algorithm}

\subsection{Complexity analysis}
\label{subsect:complexity_analysis}

In this subsection, we analyze time complexity for our two ground-state-generation algorithms with respect to the primitive operations discussed in~\cref{subsubsec:complexity_measure}.
We begin, in~\cref{subsubsec:FBA_classical_complexity}, by analyzing time complexity for classical preprocessing of the Fourier-based algorithm and present time complexity for the wavelet-based algorithm's classical preprocessing  in~\cref{subsubsec:WBA_classical_complexity}.
The time complexity for our quantum algorithm for generating a one-dimensional Gaussian state is discussed in~\cref{subsubsec:1DG_complexity}.
We analyze time complexity for the quantum fast Fourier-transform algorithm and the quantum shear-transform algorithm in~\cref{subsubsec:QFHT_complexity} and~\cref{subsubsec:QST_complexity}, respectively.
Finally, we put all complexities together in~\cref{subsubsec:overall_complexity} and discuss the overall time complexity for the Fourier- and wavelet-based algorithms.

\subsubsection{Classical preprocessing in Fourier-based algorithm}
\label{subsubsec:FBA_classical_complexity}

Here we analyze time complexity for classical preprocessing in the Fourier-based algorithm.
Following Algorithm~\ref{alg:FBA}, classical preprocessing involves computing various elementary functions, such as logarithm and trigonometric functions, and has two key subroutines:
computing the second-order derivative overlaps~\eqref{eq:derivative_overlaps} and eigenvalues of the ground state's ICM.
First we discuss time complexity for computing the elementary functions and then analyze the key subroutines' time complexity.
Finally, we build on these complexities and discuss the overall complexity of classical preprocessing.

The elementary functions used in various subroutines of the classical preprocessing in Algorithm~\ref{alg:FBA} are logarithm, square-root, inverse-square-root, and trigonometric functions.
The time complexity for computing these functions is analyzed with respect to time complexity for performing multiplication~\cite{Bre10}.
Multiplication is a primitive operation in our cost model and has a unit cost.
Therefore, as per~\cref{subsec:QRAM}, the time complexity for computing square-root or inverse-square-root of a number in our cost model is~$\order{1}$, and the time complexity for computing logarithm or trigonometric functions to precision~$p$ are each~$\order{\log p}$.

We now proceed with analyzing time complexity for computing the second-order derivative overlaps~\eqref{eq:derivative_overlaps}.
We compute these overlaps by Algorithm~\ref{alg:derivativeOverlaps} using Beylkin's method~\cite{Bey92}.
The derivative overlaps in this method are elements of the unique solution-vector for a system of~$2\dbIndex$ linear algebraic equations with~$2\dbIndex-1$ unknowns; see~\cref{appx:derivative_overlaps}.
The standard algorithm for solving the system of linear equations is based on Gaussian elimination that requires~$\order{\dbIndex^3}$ basic arithmetic operations.
Hence the time complexity for computing the derivative overlaps is cubic with respect to the wavelet index~\dbIndex.

We compute eigenvalues of the ground state's ICM by~Algorithm~\ref{alg:invCovEigens}.
By the lines~(\ref{line:eigen1}--\ref{line:eigen2}) of this algorithm, computing each eigenvalue requires performing~$\order{\dbIndex}$ basic arithmetic operations and computing two elementary functions: square root and cosine.
The square-root function is computed once, and cosine is computed~$\order{\dbIndex}$ times.
By time complexity for computing these functions, computing each eigenvalue requires~$\order{\dbIndex \log p}$ basic arithmetic operations.
Therefore, time complexity for computing the eigenvalues is 
\begin{equation}
\label{eq:T_eigens}
    T_\text{eigens} \in \order{N\dbIndex\log p},
\end{equation}
because~$N$ eigenvalues are computed.

We lastly put all complexities together to achieve the overall time complexity for classical preprocessing in the Fourier-based algorithm.
Following Algorithm~\ref{alg:FBA}, classical preprocessing requires computing the working precision~$p$,
the second-order derivative overlaps~$\bm\Delta$,
eigenvalues~$\bm\uplambda$ of the ICM,
lattice spacing~$\delta$ and
the standard deviation~$\tilde{\bm\upsigma}$ for the discrete 1DG states.
Computing~$p$ requires computing one inverse-square-root function and one logarithm function.
Computing~$\delta$ and~$\tilde{\bm\upsigma}$ respectively require computing one and~$N$ inverse-square-root functions.
Therefore, by the time complexities for computing logarithm and inverse-square-root functions,
the time complexity for computing~$p$,~$\delta$ and~$\tilde{\bm\upsigma}$ are $\order{\log p}, \order{1}$ and~$\order{N}$, respectively.
The combination of these complexities with the complexity for computing~$\bm\Delta$ and~$\bm\lambda$ yields $\order{\dbIndex^3+N\dbIndex\log p}$. 
As $p$~\eqref{eq:p} is logarithmic in $N$, the overall time complexity for classical preprocessing is quasilinear in the number of modes~$N$.

\subsubsection{Classical preprocessing in wavelet-based algorithm}
\label{subsubsec:WBA_classical_complexity}

We now analyze time complexity for classical preprocessing in the wavelet-based algorithm.
In contrast to Algorithm~\ref{alg:FBA}, classical preprocessing in the wavelet-based algorithm, Algorithm~\ref{alg:WBA}, has two unique subroutines:
computing the circulant row in unique blocks of the ground state's ICM and the UDU decomposition for the approximate ICM.
First we elaborate on the time complexity for these unique subroutines and then discuss the overall time complexity for classical preprocessing in the wavelet-based algorithm.

We begin by analyzing time complexity for Algorithm~\ref{alg:invCovCircRows}, which computes the circulant row in unique blocks of the approximate ICM.
This algorithm involves computing eigenvalues of the ground-state ICM,
the low-pass filters~\eqref{eq:scaling_function} and the circulant row of the fixed-scale ICM $\upbm{A}^{(c)}_\sS$~\eqref{eq:fixedscale_groundstate} at scale~$c$ for~$c\in\{s_0, \ldots, k\}$.
The time complexity to compute the low-pass filters for Daubechies~$\dbIndex$ wavelets with precision $p$ is
\begin{equation}
\label{eq:T_lowpass}
    T_\text{lowpass} \in
    \order{\dbIndex \log^5\dbIndex \log{(3\dbIndex+p)}},
\end{equation}
see~\cref{appx:lowpass_filter}.
As per
lines~(\ref{line:scalekCircrow-start}--\ref{line:scalekCircrow-end}) of Algorithm~\ref{alg:invCovCircRows}, computing circulant row of~$\upbm{A}^{(k)}_\sS$ requires performing
one division, $\Theta(N)$ multiplications, $N$ additions and computing~$N$ cosine functions.
Computing trigonometric functions with precision~$p$ require~$\order{\log p}$ basic arithmetic operations;
see~\cref{subsec:QRAM}.
Therefore, time complexity to compute circulant row of~$\upbm{A}^{(k)}_\sS$ is~$\order{N\log p}$.
Computing circulant row of~$\upbm{A}^{(c)}_\sS$ by circulant row of~$\upbm{A}^{(c+1)}_\sS$ requires~$\Theta(\dbIndex 2^c)$ basic arithmetic operations.
Therefore, the overall time complexity to compute the circulant rows of~$\upbm{A}^{(c)}_\sS$ for~$c\in\{s_0, \ldots, k-1\}$ is
\begin{equation}
\label{eq:T_circss}
    T^\sS_\text{circ} \in \Theta(\dbIndex 2^k)
    =\Theta(\dbIndex N),
\end{equation}
where we used~$k=\log_2 N$.
By a similar analysis, following lines~(\ref{line:ssCircrow-start}--\ref{line:swCircrow-end}) of Algorithm~\ref{alg:invCovCircRows}, we obtain
\begin{equation}
\label{eq:T_circww}
    T^\sw_\text{circ} \in \Theta(\dbIndex N\log_2 N),
    \quad T^\ww_\text{circ} \in \Theta(\dbIndex N\log_2 N),
\end{equation}
for time complexity to compute circulant rows in the sw and ww blocks of the ground-state ICM.
The time complexity to perform the rest of computation is
\begin{equation}
\label{eq:T_highpass}
    T_\text{highpass}\in \order{\dbIndex},
\end{equation}
the high-pass filters in lines~(\ref{line:highpass-start}--\ref{line:highpass-end}) of~\cref{alg:invCovCircRows}.
Altogether, Eqs~\eqref{eq:T_eigens} to~\eqref{eq:T_highpass} and the fact that~$p$ is logarithmic in the numbers of modes~$N$, yield
\begin{equation}
\label{eq:circ_complexity}
    T_\text{circ}=
    T_\text{eigens}+ T_\text{lowpass}+
    T^\sS_\text{circ}+T^\sw_\text{circ}+T^\ww_\text{circ}
    +T_\text{highpass}
    \in \order{\dbIndex N \log_2 N},
\end{equation}
for the overall time complexity to compute the circulant rows.

We now proceed with analyzing time complexity for our UDU-decomposition algorithm presented in Algorithm~\ref{alg:invCovUDU}.
This algorithm computes diagonal elements of the diagonal matrix~$\upbm{D}$ and shear elements of the upper unit-triangular matrix~$\upbm{U}$ in the UDU decomposition of the approximate ICM $\tilde{\upbm{A}}$.
We separately analyze time complexity for computing the diagonal elements and time complexity for computing the shear elements.
The sum of these complexities then yields the UDU decomposition's time complexity.

First we analyze time complexity for computing diagonal elements of~$\upbm{D}$ which,
as per~\cref{eq:blcokDiagonalD},
is a block-diagonal matrix whose blocks are diagonal matrices~$\upbm{D}^{(s_0)}_\sS$ and $\upbm{D}^{(c)}_\ww$ for $s_0\leq c<k$.
Computing~$i^\text{th}$ diagonal element~$d^{(c)}_{\ww;\,i}$ of~$\upbm{D}^{(c)}_\ww$ according to~\cref{eq:wwDiags} requires computing nonzero elements in~$i^\text{th}$ row of~$\upbm{V}^{(c,s)}_\ww$~\eqref{eq:matrixV} for
each~$s\in\{c,\ldots, k-1\}$,
multiplying each nonzero element in~$i^\text{th}$ row of~$\upbm{U}^{(c,s)}_\ww$~\eqref{eq:blockUDU} by its corresponding element in~$i^\text{th}$ row of $\upbm{V}^{(c,s)}_\ww$ and adding them all.
The number of nonzero~(NNZ) elements in each row of~$\upbm{V}^{(c,s)}_\ww$ is at most $\textsc{width}(c,s)$, where $\textsc{width}(c,s)$~\eqref{eq:block_bandwidth} is bandwidth of the block~$\tilde{\upbm{A}}^{(c,s)}_{\ww}$
and, by~\cref{eq:matrixV}, computing each nonzero element of $\upbm{V}^{(c,s)}_\ww$ requires one multiplication.
Therefore, computing each of the~$2^c$ diagonal elements of~$\upbm{D}^{(c)}_\ww$ requires at most~$3\sum_s \textsc{width}(c,s)$ basic arithmetic operations.
Consequently, time complexity to compute all diagonal elements in the \ww\ blocks of~$\upbm{D}$ is
\begin{equation}
\label{T_diagsww}
    T_\text{diags}^\ww =\sum_{c=s_0}^{k-1}
    2^c\times \left(3\sum_{s=c}^{k-1} \textsc{width}(c,s)\right)
    \in \order{\dbIndex N \log_2 N},
\end{equation}
where we use $k=\log_2N$.

By a similar analysis, we obtain time complexity to compute diagonal elements in the ss block of~$\upbm{D}$.
To compute $i^\text{th}$ diagonal element in this block, we compute all elements in the $i^\text{th}$ row of the ss and sw blocks of~$\upbm{V}$ because these blocks are dense matrices.
Being dense blocks, the sum of nonzero elements in any row of the ss and sw blocks is at most~$N$, the number of columns of~$\upbm{V}$.
Therefore, computing each diagonal element requires at most $3N$ basic arithmetic operations, so we have~$T_\text{diags}^\sS = 2^{s_0}~\times~(3N)$ for time complexity to compute diagonal elements in the ss block of~$\upbm{D}$.
This equation together with~\cref{T_diagsww} yield
\begin{equation}
\label{eq:T_diags}
T_\text{diags}= T_\text{diags}^\sS + T_\text{diags}^\ww \in \order{\dbIndex N \log_2 N},
\end{equation}
for the time complexity to compute all diagonal elements of~$\upbm{D}$.

We now analyze time complexity for computing shear elements of~$\upbm{U}$ in the UDU decomposition.
As per~\cref{eq:blockUDU},~$\upbm{U}$ is a block matrix with three types of blocks: ss, sw and ww blocks.
By~\cref{eq:swShears}, computing the shear element at entry~$(m,i)$ of the sw block~$\upbm{U}^{(s_0,c)}_\sw$ requires multiplying each nonzero element in~$m^\text{th}$ row of~$\upbm{U}^{(s_0,s)}_\sw$ by its corresponding element in~$m^\text{th}$ row of~$\upbm{V}^{(c,s)}_\ww$~\eqref{eq:matrixV} for~$s\in\{c,\ldots,k-1\}$ and adding them all.
The NNZ elements in each row of~$\upbm{V}^{(c,s)}_\ww$ is at most $\textsc{width}(c,s)$~\eqref{eq:block_bandwidth}, hence computing each shear element of~$\upbm{U}^{(s_0,c)}_\sw$ requires at most~$2\sum_s\textsc{width}(c,s)$ basic arithmetic operations.
Being a dense matrix,~$\upbm{U}^{(s_0,c)}_\sw$ has~$2^{s_0}\times 2^c$ shear elements and, therefore, we have
\begin{equation}
\label{eq:T_shearssw}
    T_\text{shear}^\sw =
    \sum_{c=k-1}^{s_0}
    \left[
    \left(2^{s_0} \times 2^c\right) \times
    \left(2\sum_{s=c}^{k-1} \textsc{width}(c,s)\right)
    \right] \in \order{\dbIndex N\log_2 N},
\end{equation}
for time complexity to compute all shear elements in the \sw\ blocks of~$\upbm{U}$.
Likewise, by~\cref{eq:wwShears}, computing each shear element of the ww block~$\upbm{U}^{(r,c)}_\ww$ requires at most $2\sum_s\textsc{width}(c,s)$ basic arithmetic operations.
This block is a sparse matrix with at most $2^r \times \textsc{width}(r,c)$ shear elements.
We therefore achieve
\begin{equation}
\label{eq:T_shearsww}
    T_\text{shear}^\ww =
    \sum_{c=k-1}^{s_0}
    \left[
    \left(\sum_{r=s_0}^{c} 2^r\times
    \textsc{width}(r,c)\right) \times
    \left(2\sum_{s=c}^{k-1} \textsc{width}(c,s)\right)
    \right] \in \order{\dbIndex N\log_2 N},
\end{equation}
for time complexity to compute all shear elements in the ww blocks of~$\upbm{U}$.

A similar analysis yields the time complexity~$T_\text{shear}^\sS$ to computing all shear elements in the ss bock of~$\upbm{U}$.
By~\cref{eq:ssShears}, computing the shear element at entry~$(m,i)$ of the ss block requires multiplying each nonzero element in~$m^\text{th}$ row of this block by its corresponding element in~$m^\text{th}$ row of the ss block of~$\upbm{V}$,
multiplying each nonzero element in~$m^\text{th}$ row of~$\upbm{U}^{(s_0,s)}_\sw$ by its corresponding element in~$m^\text{th}$ row of~$\upbm{V}^{(s_0,s)}_\sw$ for every~$s\in\{s_0,\ldots,k-1\}$
and adding them all.
As the ss and sw blocks of~$\upbm{U}$ are dense matrices, computing each shear element in the ss block requires at most~$2N$ basic arithmetic operations and because the ss block has~$2^{s_0}\times (2^{s_0}-1)/2 \in \Theta(\dbIndex^2)$ shear elements, we have~$T_\text{shear}^\sS \in \order{\dbIndex^2N}$.
The sum of this time complexity by the times complexities in~\cref{eq:T_shearssw} and~\cref{eq:T_shearsww} yield
\begin{equation}
\label{eq:T_shear}
T_\text{shear} = T_\text{shear}^\sS + T_\text{shear}^\sw + T_\text{shear}^\ww \in \order{\dbIndex^2 N\log_2 N},
\end{equation}
for time complexity to computing all shear elements of~$\upbm{U}$.
Finally, by the time complexities for computing the diagonal~\eqref{eq:T_diags} and shear~\eqref{eq:T_shear} elements we obtain
\begin{equation}
\label{eq:UDU_complexity}
    T_\text{UDU} = T_\text{diags}+ T_\text{shear}
    \in \order{\dbIndex^2 N\log_2 N},
\end{equation}
for the UDU decomposition's time complexity.

We finally discuss the overall time complexity for classical preprocessing in the wavelet-based algorithm.
Classical preprocessing in Algorithms~\ref{alg:FBA} and~\ref{alg:WBA} have four identical subroutines:
computing
(1)~working precision $p$;
(2)~second-order derivative overlaps~$\bm{\Delta}$;
(3)~lattice spacing~$\delta$;
and (4)~the vector~$\tilde{\bm{\upsigma}}$ of standard deviations for the approximate 1DG states.
As per the discussion in~\cref{subsubsec:FBA_classical_complexity}, the overall time complexity for these subroutines is~$\order{N}$.
By the combination of this time complexity with the time complexities for the two unique subroutines,~$T_\text{circ}$~\eqref{eq:circ_complexity} and~$T_\text{UDU}$~\eqref{eq:UDU_complexity},
we conclude that the time complexity for classical preprocessing in the wavelet-based algorithm is quasilinear in the number of modes~$N$.

\subsubsection{One-dimensional Gaussian-state generation}
\label{subsubsec:1DG_complexity}

Here we analyze time complexity for the quantum algorithm presented in~\cref{alg:1DG} that generates an approximation for a 1DG state on a quantum computer.
We begin with analyzing time complexity for executing the quantum circuit in~\cref{fig:1DGCircuit}, which represents the iterative part of~\cref{alg:1DG}.
Then we discuss the algorithm's overall time complexity.

The quantum circuit in~\cref{fig:1DGCircuit} involves performing
two \textsc{angle}~\eqref{eq:angle_operation} operations,
one \textsc{rot}~\eqref{eq:rot_operation},
two \textsc{shift}~\eqref{eq:shift_operation},
one Pauli~$X$ and
one \textsc{cnot} operation.
The circuit's time complexity~$T_\text{iter}$ is therefore
\begin{align}
    T_\text{iter}=2T_\textsc{angle}+T_\textsc{rot}+2T_\textsc{shift}+2,
\end{align}
where $T_\textsc{angle}, T_\textsc{rot}$ and $T_\textsc{shift}$ are time complexities to implement \textsc{angle}, \textsc{rot} and \textsc{shift} operations, respectively.
The \textsc{rot} and \textsc{shift} operations operate on quantum registers with~$p$ qubits, where~$p$ is logarithmic in the number of modes.
As illustrated in~\cref{fig:rotrShift}, implementing \textsc{rot}~\eqref{eq:rot_operation} requires performing at most~$p$ standard rotations and \textsc{shift}~\eqref{eq:shift_operation} is implemented by performing~$p$ \textsc{swap} gates.
Therefore, $T_\textsc{rot} \in \order{p}$ and $T_\textsc{shift}=p$.
Our analysis in~\cref{appx:rotation_angle} shows that $T_\textsc{angle} \in \order{p^2}$.
The combination of these complexities yields
\begin{equation}
\label{eq:iter_complexity}
    T_\text{iter} \in \order{p^2},
\end{equation}
for time complexity of the quantum circuit in~\cref{fig:1DGCircuit}, which implements the iterative part of Algorithm~\ref{alg:1DG}.
\begin{figure}[htb]
    \centering
    \includegraphics[width=.86\linewidth]{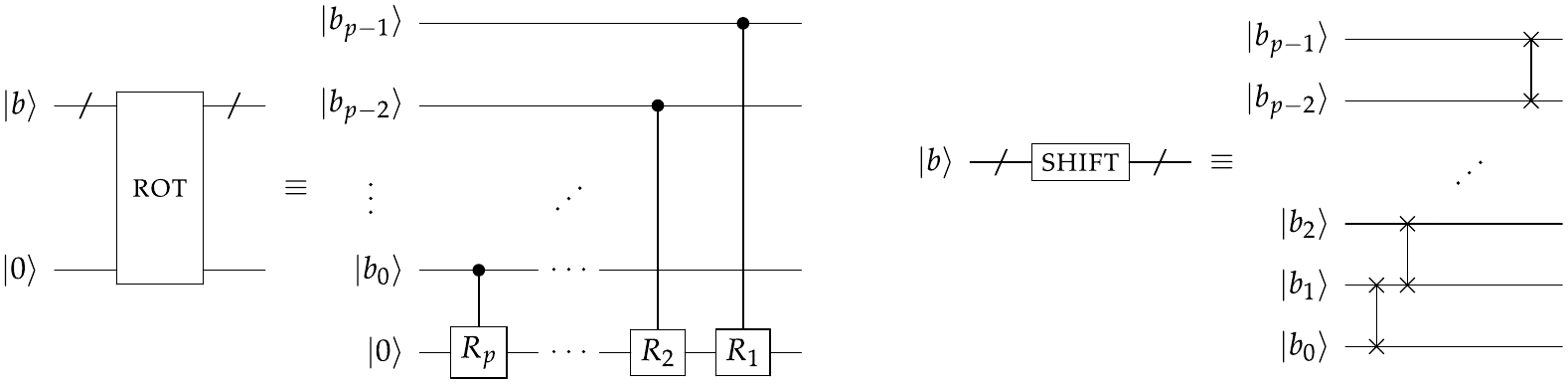}
    \caption[Implementing high-level operations by low-level operations]{Implementation of \textsc{rot}~\eqref{eq:rot_operation} and \textsc{shift}~\eqref{eq:shift_operation} by primitive operations.
    Left: implementing \textsc{rot} using bits of a $p$-bit number $b$ requires at most $p$ standard rotations $R_\ell:=\text{exp}(-2\uppi\upi/2^\ell)$.
    Right: implementing \textsc{shift} requires $p$ \textsc{swap} gates.}
    \label{fig:rotrShift}
\end{figure}

The quantum algorithm in Algorithm~\ref{alg:1DG} performs the operations of the quantum circuit in~\cref{fig:1DGCircuit}~$m$ times.
The algorithm also involves performing $m$ \textsc{swap} operations, lines~(\ref{line:swap_start}--\ref{line:swap_end}) of the algorithm, and one \textsc{mul}~\eqref{eq:mul_operation} operation in line~\ref{line:mul}.
All of these operations are also used in the uncomputation part of Algorithm~\ref{alg:1DG}.
Therefore, by~\cref{eq:iter_complexity} and $m<p$, the overall time complexity~$T_\oneDG$ for generating a 1DG state by Algorithm~\ref{alg:1DG} is
\begin{align}
\label{eq:1DGcomplexity}
    T_\oneDG=(2m) \times T_\text{iter} + 2\times (m+1)
    \in \order{p^{3}}.
\end{align}
As~$p$~\eqref{eq:p} is logarithmic in the number of modes~$N$, the overall complexity for generating a 1DG state is logarithmic in~$N$.

\subsubsection{Quantum fast Hartley transform}
\label{subsubsec:QFHT_complexity}

We now analyze time complexity for our quantum fast Hartley-transform algorithm presented in Algorithm~\ref{alg:QFHT}.
Our algorithm is based on three key quantum operations: \textsc{qrev}~\eqref{eq:qrev}, \textsc{qbf}~\eqref{eq:qbf} and \textsc{hs}~\eqref{eq:hs}; see~\cref{subsubsec:QFFT}.
First we analyze time complexity for performing these key operations with respect to the primitive operations in our cost model~(\cref{subsubsec:complexity_measure}).
Then we build on these complexities and discuss the overall time complexity for the QFHT algorithm.

We begin with analyzing time complexity to perform the quantum data-reordering~\textsc{qrev}~\eqref{eq:qrev}.
This operation reorders the values encoded in the size-$N$ input state-vector $\vac$ by swapping qubits of~$\vac[i]$ and~$\vac[\textsc{rev}(i)]$ for each~$i\in\{1,\ldots,N/2-1\}$, where~$\textsc{rev}(i)$ is an integer obtained form~$i$ by reversing its binary digits.
Hence, we have~$T_\textsc{qrev}\in \order{N}$ for time complexity to perform~\textsc{qrev}~\eqref{eq:qrev}.
Next we analyze time complexity~$T_\textsc{qbf}$ for performing the quantum butterfly~\textsc{qbf}~\eqref{eq:qbf}.
This operation can be implemented as
\begin{equation}
    \ket{x}_{\vac[i]} \ket{y}_{\vac[j]} \mapsto
    \ket{x+y}_{\vac[i]}\ket{y}_{\vac[j]} \mapsto
    \ket{x+y}_{\vac[i]}\ket{2y}_{\vac[j]} \mapsto
    \ket{x+y}_{\vac[i]}\ket{(x+y)-2y}_{\vac[j]},
\end{equation}
on two quantum registers~$\vac[i]$ and~$\vac[j]$ by two additions and one multiplication.
The quantum addition and multiplication here are in-place, and no uncomputation is required.
Therefore,~$T_\textsc{qbf} \in \order{1}$ with respect to the quantum primitive operations.
We use the LDU decomposition
\begin{equation}
\label{eq:qbf_implementation}
    \begin{bmatrix}
    c&s\\
    s&-c
 \end{bmatrix}
 =
 \begin{bmatrix}
    1&0\\
    s/c&1
 \end{bmatrix}
  \begin{bmatrix}
    c&0\\
    0&-1/c
 \end{bmatrix}
  \begin{bmatrix}
    1&s/c\\
    0&1
 \end{bmatrix},
\end{equation}
to implement the primitive Hartley shift~\textsc{hs}~\eqref{eq:hs} by in-place quantum arithmetic operations;
note here~$s^2+c^2=1$.
Assuming that the values of~$c$ and~$s$ are stored into two ancillary quantum registers, \textsc{hs}~\eqref{eq:hs} can be implemented by in-place operations~as
\begin{equation}
\label{eq:hs_implementation}
    \ket{x}_{\vac[i]} \ket{y}_{\vac[j]} \mapsto
    \ket{x+ys/c}_{\vac[i]}\ket{y}_{\vac[j]} \mapsto
    \ket{cx+sy}_{\vac[i]}\ket{-y/c}_{\vac[j]} \mapsto
    \ket{cx+sy}_{\vac[i]}\ket{xs-yc}_{\vac[j]},
\end{equation}
on two quantum registers~$\vac[i]$ and~$\vac[j]$ by performing six basic arithmetic operations.
We therefore
have~$T_\textsc{hs} \in \order{1}$ for time complexity to perform~\textsc{hs}.

We now discuss the overall time complexity for the QFHT algorithm.
This algorithms starts with reordering the size-$N$ input state-vector and proceeds with~$\log_2(N)$ stages.
Each stage requires performing~$N/2$ quantum butterfly operations and~$N/4$ primitive quantum Hartley-shift operations;
see~\cref{fig:QFHT} and implementation of \textsc{qhs}~\eqref{eq:qhs} by \textsc{hs}~\eqref{eq:hs} in Library~\ref{lib:QFHT}.
Thus, by the time complexities for the key quantum operations,
we have
\begin{equation}
    T_\textsc{qfht} = T_\textsc{qrev} +
    \left((N/2) T_\textsc{qbf} + (N/4) T_\textsc{hs}
    \right) \log_2N \in \order{N\log_2N},
\end{equation}
for time complexity to perform the quantum fast Hartley transform~\eqref{eq:QFHT}.

\subsubsection{Quantum shear transform}
\label{subsubsec:QST_complexity}

Here we analyze time complexity for performing a shear transform on a quantum computer.
First we discuss time complexity to perform shear transform with a dense shear matrix.
Then we analyze time complexity for our QST algorithm presented in Algorithm~\ref{alg:QST}, which performs the shear transform required for generating the free-field ground state on a quantum register.

We use the decomposition in~\cref{eq:shearDecomp} to analyze time complexity for performing a QST with a dense shear matrix.
As~per this decomposition, performing a general shear transform is accomplished by performing a sequence of shear transforms with exactly one shear element.
The number of terms in the sequence is equal to the NNZ shear elements in the shear matrix, and the sequence's order is specified by~\cref{eq:shearDecomp}.

To perform the shear transform specified by each term of the sequence, we write the shear element into an ancillary quantum register denoted \shear\ and perform a multiplication as per~\cref{eq:shear_primitive}.
Then we erase the \shear\ register by re-writing the shear element.
Therefore, performing a shear transform with exactly one shear element requires three primitive operations:
writing into a quantum register, performing one multiplication and erasing the quantum register.
Consequently, the time complexity for performing a shear transform scales linearly with the NNZ shear elements in the shear-transform matrix.
The NNZ shear elements for a dense $N$-by-$N$ shear matrix is~$N(N-1)/2$, so we have~$\Theta(N^2)$ for time complexity to perform a shear transform with a dense shear matrix.

We now specify the time complexity to perform the required quantum shear transform for ground-state generation.
The shear matrix for the required shear transform is sparse.
We use the identified relationship between the time complexity to perform a shear transform and the NNZ shear elements to obtain the time complexity for performing a shear transform with a sparse shear matrix.
In particular, the complexity for performing a sparse shear transform is achieved by counting the NNZ shear elements in the shear-transform matrix.
The NNZ shear elements for the sparse shear matrix~$\upbm{U}$ in the UDU decomposition of the approximate ICM scales as $\order{Nw}$, where $w$~\eqref{eq:bandwidth} is logarithmic in the number of modes~$N$.
Therefore, time complexity for Algorithm~\ref{alg:QST}, which performs the QST with the sparse shear matrix~$\upbm{U}$, is quasilinear in the number of modes.

\subsubsection{Overall complexity for Fourier- and wavelet-based algorithms}
\label{subsubsec:overall_complexity}

We now determine the overall time complexity for the Fourier- and wavelet-based algorithms for ground-state generation.
Both algorithms have a classical preprocessing and quantum routine.
We begin by discussing the overall time complexity for each algorithm's quantum routine.
We then put together the overall time complexities for classical preprocessing and quantum routine to establish the overall time complexity for each of the two algorithms.

The quantum routine for each of the two state-generation algorithms comprises two subroutines:
generating~$N$ different~1DG states and performing a basis transformation to transform the 1DG states to the free-field ground state.
The basis transformation in the Fourier-based algorithm is performed by executing a quantum Hartley transform, and the basis transformation in the wavelet-based algorithm is performed by executing a quantum shear transform.

The time complexity for generating each 1DG state is logarithmic in the number modes~$N$ as per~\cref{subsubsec:1DG_complexity}, so time complexity for the first subroutine in each algorithm's quantum routine is quasilinear in~$N$.
The time complexities for performing the quantum Hartley transform and the quantum shear transform are also quasilinear in~$N$, as discussed in~\cref{subsubsec:QFFT} and~\cref{subsubsec:QST_complexity},~respectively.
Specifically, the time complexity for quantum routine of the Fourier-based algorithm~(FBA) and the wavelet-based algorithm~(WBA) is
\begin{equation}
\label{eq:qComplexity}
    T^{(\text{quantum})}_\text{FBA} \in \order{N\log_2N +N\log^3_2{(N/\sqrt{m_0\varepsilon_\text{vac}})}},
    \quad
    T^{(\text{quantum})}_\text{WBA} \in \order{Nw +N\log^3_2{(N/\sqrt{m_0\varepsilon_\text{vac}})}},
\end{equation}
respectively, where $w$ is given in~\cref{eq:bandwidth}.
Therefore, the overall time complexity for the quantum routine of each ground-state-generation algorithm is quasilinear in the number of modes~$N$.

The sum of overall time complexities for each algorithm's classical preprocessing and quantum routine yields the algorithm's overall time complexity.
By the complexity analysis in~\cref{subsubsec:FBA_classical_complexity} and~\cref{subsubsec:WBA_classical_complexity},
the overall time complexity for classical preprocessing of the Fourier- and wavelet-based algorithms is
\begin{equation}
\label{eq:cComplexity}
    T^{(\text{classical})}_\text{FBA}
    \in \order{\dbIndex^3+
    \dbIndex N\log_2\log_2(N/\sqrt{m_0\varepsilon_\text{vac}})},
    \quad
    T^{(\text{classical})}_\text{WBA}
    \in \order{\dbIndex^3+ \dbIndex^2N\log_2N},
\end{equation}
respectively.
By combining overall time complexities for each algorithm's classical preprocessing and quantum routine, we conclude that the overall time complexity for each of the two state-generation algorithms is quasilinear in the number of~modes.

\subsection{Lower bound for ground-state generation}
\label{subsec:lowebound}

In this subsection, we discuss a lower bound on the gate complexity for generating the free-field ground state with respect to the discretized-QFT number of modes.
We show that one cannot generate a good approximation for the ground state in a sublinear time.
In particular, we argue that any sublinear algorithm results in an exponentially bad approximation for the free-field ground state regarding the number of modes.

The free-field ground state for an infinite-mass~($m_0\to \infty$) theory is the tensor product of the all-zero state for every~mode.
The infinite-mass theory, however, is not a physical theory.
For a physical field theory with a finite but very large mass, the free-field ground state is a tensor product of one-dimensional Gaussian states with small variances; 
the variance of each Gaussian is $\sigma^2=1/m_0$, which is small due to the large mass $m_0$.
Now suppose a sublinear algorithm exists that generates an approximation for the ground state of a large-mass theory;
i.e., let us assume that the algorithm's gate complexity scales as~$N^\alpha$ for $0<\alpha<1$.
In this case, the algorithm must leave the state of $N^{1-\alpha}$ modes untouched in order to have sublinear gate complexity, meaning that the Gaussian states for the untouched modes are approximated by the all-zero state.
The fidelity between the qubit representation of a one-dimensional Gaussian state with variance $\sigma^2$, i.e., the discrete 1DG~\eqref{eq:lattice1DG} state over a lattice with spacing $\delta$ and $2^m$ points, and the all-zero state is
\begin{equation}
\label{eq:allzerofidelity}
    F_\text{1DG}:=\braket{0\cdots 0}{\G_\text{lattice}(\tilde{\sigma}, \delta, m)} = \frac{\delta}{\tilde{\mathcal{N}}}=
    \left(\sum_j \e^{-\frac{j^2}{2\tilde{\sigma}^2}}\right)^{-1/2},
\end{equation}
where $\tilde{\sigma}=\sigma/\delta$ and~$\tilde{\mathcal{N}}$ is the normalization in~\cref{eq:lattice1DG}.
For fixed $\delta$, if $\sigma^2=1/m_0\geq \delta$,
then $\tilde{\sigma}^2=\sigma^2/\delta^2\geq 1/\sigma^2=m_0$.
We therefore have
\begin{equation}
    F_\text{1DG} \leq \left(\sum_j \e^{-\frac{j^2}{2m_0}}\right)^{-1/2}
    \leq \left(1+2 \e^{-1/(2m_0)}\right)^{-1/2}
   \leq \e^{-1/(4m_0)},
\end{equation}
where the last inequality follows from $1+2\exp(-x/2)\geq \exp(x/2)$ for any~$x\in(0,1)$.
In this case, the fidelity between the ground state and the generated state falls off exponentially with respect to the number of modes that are untouched.
Specifically, the fidelity between the ground state~$\ket{\G}$ and the generated state $\ket{\tilde{\G}}$ is $\braket{\G}{\tilde{\G}} = F_\text{1DG}^{N^{1-\alpha}} \leq \exp(-N^{1-\alpha}/(4m_0))$, which falls off exponentially with respect to $N$.

For all other finite-mass field theories, including the zero-mass~$(m_0\to0)$ theory, generating the free-field ground state requires a non-trivial operation on every mode because the ground state, in this case, is a superposition of computational states with Gaussian amplitudes.
Therefore, generating a good approximation for the free-field ground state requires a number of gates that scales at least linearly with respect to the number of modes; any sublinear algorithm results in an exponentially bad approximation for the ground state.

\subsection{Fourier vs wavelet approach}
\label{subsec:Fouriervswavelet}

In this subsection, we compare the Fourier and wavelet approaches for ground-state generation.
We consider two cases where the wavelet approach could be advantageous over the Fourier approach:
(1)~QFTs with broken translational invariance and
(2)~generating states beyond the free-field ground state.
In~\cref{subsubsec:inhomogeneous_mass}, we consider a simple case of inhomogeneous-mass QFT and explain why the wavelet approach could be advantageous.
We then follow in~\cref{subsubsec:particlecreation} by comparing particle-state generation in both Fourier and wavelet approaches.

\subsubsection{Inhomogeneous-mass QFT}
\label{subsubsec:inhomogeneous_mass}

Here we perform a numerical study to justify why the wavelet-based approach could be preferred over the Fourier-based approach for QFTs with broken translational invariance.
We consider a simple case where the free-field Hamiltonian~\eqref{eq:free_Hamiltonian}
has a position-dependent mass.
In particular, we consider a point defect in the QFT where the mass is~$m_0$ plus a delta-function term.
The inhomogeneous mass can be seen as a space-dependent source term for a massive scalar QFT in one spatial dimension, a case for which estimating the vacuum-to-vacuum transition amplitude was shown to be a BQP-complete problem~\cite{JKL+18}.

We model the point defect by adding to the fixed-scale coupling matrix~\eqref{eq:fixedscaleK} a matrix containing all zeros except for one nonzero diagonal element that is significantly larger than the free-QFT mass~$m_0$.
In this case, the Fourier-based approach is not appropriate because a discrete Fourier transform cannot diagonalize the modified coupling matrix.
However, our numerical experiment demonstrates that the wavelet approach accommodates such a modification.

In our numerical experiment, we take the nonzero diagonal element to be~$100\times m_{0}$ and compute the ground-state ICM~\eqref{eq:multiscale_ICM} for a wide range of~$m_0$.
Figure~\ref{fig:modifiedICM}~(left) shows a visualization of the approximate ICM~\eqref{eq:cutoffcondition} derived from the modified coupling matrix and
\cref{fig:modifiedICM}~(right) shows the bandwidth of the approximate ICM's diagonal blocks with and without the point defect for a wide range of $m_{0}$.
As per these figures, the fingerlike sparse structure of the ground-state ICM in a multi-scale wavelet basis is not affected by the point defect.
Consequently, the wavelet-based algorithm is not affected by the point defect and successfully yields an approximation for the free-field ground-state with a quasilinear gate complexity.

\begin{figure}[H]
\centering
     \includegraphics[width=.9\textwidth]{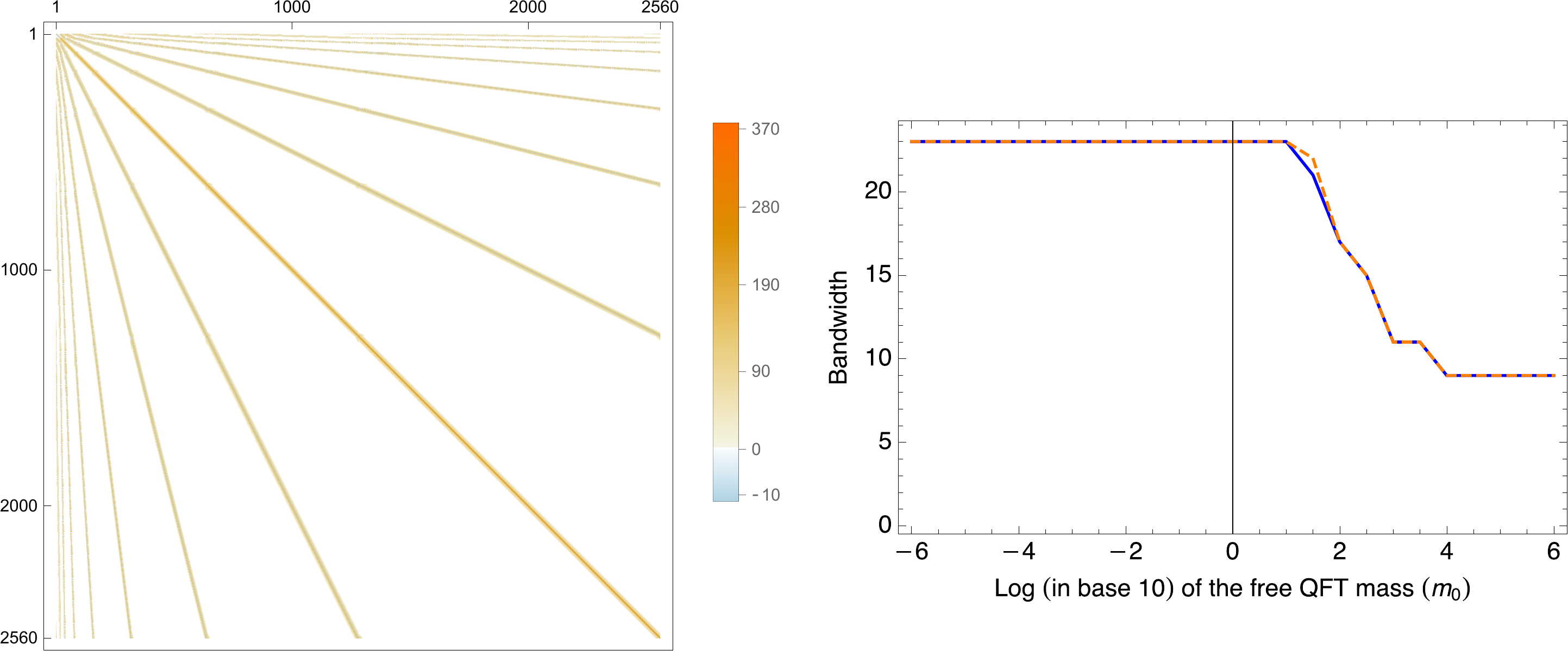}
\caption[Effect of a point defect]{
Effect of a mass defect on the ground-state ICM~\eqref{eq:multiscale_ICM} represented in a multi-scale wavelet basis.
Left: visual representation for approximation of the modified ICM~\eqref{eq:cutoffcondition} in a multi-scale wavelet basis where
elements with a magnitude less than $10^{-8}$ are replaced with exactly zero; rows and columns of the matrix are ordered as the matrix in~\cref{fig:truncICM}.
Right: bandwidth for diagonal blocks of the approximate ICM with mass defect (orange dashed line) and without mass defect (blue line) for a wide range of~mass:~$m_0 \in~[10^{-6},10^6]$.
\label{fig:modifiedICM}
}
 \end{figure}

\subsubsection{Particle-state generation for the free QFT}
\label{subsubsec:particlecreation}

Here we describe a procedure used by the main server~(\cref{subsubsec:framework}) for generating a free-QFT particle state in the Fourier and wavelet approaches.
First we explain how to prepare a free wavepacket, i.e., a spatially localized free particle, in the Fourier approach.
Then we describe wavepacket preparation in the wavelet approach.
Finally, we compare time complexity for particle-state generation in the two approaches.

We begin with particle-state generation in the Fourier approach.
For simplicity, we consider preparing a free particle whose position-space wavefunction is the scaling function~$s_{\ell}^{(k)}\!(x)$ at scale~$k$.
Specifically, we aim to prepare the wavepacket state
\begin{equation}
\label{eq:scaleState}
\ket{s^{(k)}_\ell}:=
\hat{a}^{(k)\text\textdagger}_{\s;\ell}
\ket{\G^{(k)}_\text{scale}},    
\end{equation}
where~$\ket{\G^{(k)}_\text{scale}}$~\eqref{eq:fixedscale_groundstate} is the free-field ground state in the Fourier approach and~$\hat{a}^{(k)\text\textdagger}_{\s;\ell}$ is the creation operator constructed from the scale-field operator~$\hat{\Phi}_{\s;\ell}^{(k)}$~\eqref{eq:discrete_fields} and its conjugate
momentum~\cite{Note1}.
Following the Jordan-Lee-Preskill approach for preparing a free-particle state~\cite[p.~1027]{JLP14},
we introduce one ancillary qubit
denoted \anc, and construct the Hamiltonian
\begin{equation}
\label{eq:scaleHamiltonian}
    \hat{H}^{(k)}_{\s;\ell}:=
    \hat{a}^{(k)\text\textdagger}_{\s;\ell} \otimes (\ket{1}_\anc\!\!\bra{0})
    + \hat{a}^{(k)}_{\s;\ell} \otimes (\ket{0}_\anc\!\!\bra{1}).
\end{equation}
The time evolution generated by this Hamiltonian for time $t=\uppi/2$ is
\begin{equation}
    \e^{-\upi \hat{H}^{(k)}_{\s;\ell} \uppi/2}\ket{\G^{(k)}_\text{scale}}\ket{0}_\anc=-\upi \ket{s^{(k)}_\ell} \ket{1}_\anc,
\end{equation}
and we obtain the desired wavepacket state~\eqref{eq:scaleState}
up to a global phase with no entanglement between the wavepacket and ancilla-qubit states.
By expressing the constructed Hamiltonian~\eqref{eq:scaleHamiltonian} in terms of the scale-field
operator~$\hat{\Phi}_{\s;\ell}^{(k)}$~\eqref{eq:discrete_fields} and its conjugate momentum, we obtain a local Hamiltonian that can be simulated by a technique described in \S2.2.4 of~\cite{JLP14} with gate complexity that is logarithmic in the numbers of modes.
 
For particle-state generation in the wavelet approach, we consider a free particle whose position-space wavefunction is the wavelet function $w^{(r)}_\ell\!(x)$ at scale~$r$ for some integer~$r<k$.
In this case, the wavepacket state we wish to prepare is
\begin{equation}
\label{eq:waveletState}
\ket{w^{(r)}_\ell}:=\hat{a}^{(r)\text\textdagger}_{\s;\ell} \ket{\G^{(k)}_\text{wavelet}},  
\end{equation}
where~$\ket{\G^{(k)}_\text{wavelet}}$~\eqref{eq:multiscale_groundstate} is the free-field ground state in the wavelet approach and~$\hat{a}^{(r)\text\textdagger}_{\w;\,\ell}$ is the creation operator constructed from the wavelet-field operator~$\hat{\Phi}_{\w;\, \ell}^{(r)}$~\eqref{eq:discrete_fields} and its conjugate momentum at scale~$r$.
Analogous to particle creation in the Fourier approach, we construct the Hamiltonian
\begin{equation}
\label{eq:waveletHamiltonian}
    \hat{H}^{(r)}_{\w;\ell}:=
    \hat{a}^{(r)\text\textdagger}_{\w;\ell}
    \otimes (\ket{1}_\anc\!\!\bra{0})
    +\hat{a}^{(r)}_{\w;\ell}
    \otimes (\ket{0}_\anc\!\!\bra{1}),
\end{equation}
and simulate time evolution according to this Hamiltonian for time $t=\uppi/2$.
By the time evolution, we obtain the wavepacket state in~\cref{eq:waveletState}, up to a global phase, and an ancilla-qubit state that can be discarded.

We now compare time complexity for generating a single-particle state in the Fourier and wavelet approaches.
For comparison, we assign a unit cost to simulating time-evolution for a constant time induced by a local Hamiltonian in these two approaches.
Specifically, we assign a unit cost to simulating time evolution induced
by~$\hat{H}^{(k)}_{\s;\ell}$~\eqref{eq:scaleHamiltonian} in the Fourier approach and time evolution induced
by~$\hat{H}^{(r)}_{\w;\ell}$~\eqref{eq:waveletHamiltonian} in the wavelet approach for a constant time.

Without loss of generality, we discuss time complexity for preparing the wavepacket state in~\cref{eq:waveletState} for both approaches.
As described, preparing this state in the wavelet approach requires simulating time evolution generated 
by~$\hat{H}^{(r)}_{\w;\ell}$~\eqref{eq:waveletHamiltonian} for constant time~$t=\uppi/2$.
To prepare the same wavepacket state in the Fourier approach, we express time evolution generated by~$\hat{H}^{(r)}_{\w;\ell}$~\eqref{eq:waveletHamiltonian} in terms of time evolution generated
by~$\hat{H}^{(k)}_{\s;\ell^\prime}$~\eqref{eq:scaleHamiltonian} for various $\ell^\prime$. 
To this end, first we write the creation operation~$\hat{a}^{(r)\text\textdagger}_{\w;\, \ell}$ as a linear combination of the creation operators~$\hat{a}^{(k)\text\textdagger}_{\s;\, \ell}$.
Let $d:=k-s_0$, where $s_0\leq k$ is the scale index for the lowest scale,
and let
\begin{equation}
    \hat{\upbm{a}}_\s^{(k)\text\textdagger}:= 
    \left(
    \hat{a}_{\s;0}^{(k)\text\textdagger},
    \ldots,
    \hat{a}_{\s;2^k-1}^{(k)\text\textdagger}
    \right)^\T,
    \quad
    \hat{\upbm{a}}^{(k)\text\textdagger}:= 
    \left(
    \hat{a}_{\s;0}^{(s_0)\text\textdagger},
    \ldots,
    \hat{a}_{\s;2^{s_0}-1}^{(s_0)\text\textdagger},
    \hat{a}_{\w;0}^{(s_0)\text\textdagger},
    \ldots,
    \hat{a}_{\w;2^{s_0}-1}^{(s_0)\text\textdagger},
    \hat{a}_{\w;0}^{(s_0+1)\text\textdagger},
    \ldots,
    \hat{a}_{\w;2^k-1}^{(k-1)\text\textdagger},
    \right)^\T,
\end{equation}
be the vector of creation operators in the fixed- and multi-scale wavelet bases, respectively.
Then
\begin{equation}
\label{eq:creationtransform}
    \hat{\upbm{a}}^{(k)\text\textdagger}= \upbm{W}_d^{(k)}\hat{\upbm{a}}^{(k)\text\textdagger}_\s,
\end{equation}
where $\upbm{W}_d^{(k)}$ is the $d$-level wavelet-transform matrix at scale~$k$;
see~\cref{appx:wavelet_transform}.
This equation yields
\begin{equation}
\label{eq:waveletcreation}
    \hat{a}^{(r)\text\textdagger}_{\w;\ell} =
    \sum_{\ell^\prime} W^{(k)}_{d;\ell\ell^\prime}
    \hat{a}^{(k)\text\textdagger}_{\s;\ell^\prime},
\end{equation}
for any $s_0\leq r<k$.

The number of nonzero~(NNZ) coefficients in this summation depends on two parameters:
the wavelet index~\dbIndex\ and the difference~$(k-r)$ between the finest scale~$k$ and the scale~$r$.
Specifically, the NNZ coefficients is~$2^{k-r}\dbIndex$ for any~$r<k$.
This relation follows from the recursive relation for the scaling and wavelets function at different scales described by Eqs.~\eqref{eq:scaling_function} and~\eqref{eq:wavelet_function}, respectively.
By~\cref{eq:wavelet_function}, each wavelet function at a particular scale~$r$ is a linear combination of~$2\dbIndex$ scaling functions at one higher scale~$r+1$.
Similarly, by~\cref{eq:scaling_function}, each scaling function at a given scale can be written as a linear combination of~$2\dbIndex$ scaling functions at one higher scale.
Therefore, a wavelet function at scale~$r$ is a linear combination of~$2^{k-r}\dbIndex$ scaling functions at scale~$k>r$.
See~\cref{fig:waveletmtx} for a visual insight into the relationship between the NNZ coefficients, $(k-r)$ and~\dbIndex.

\begin{figure}[htb]
\centering
\includegraphics[width=.97\textwidth]{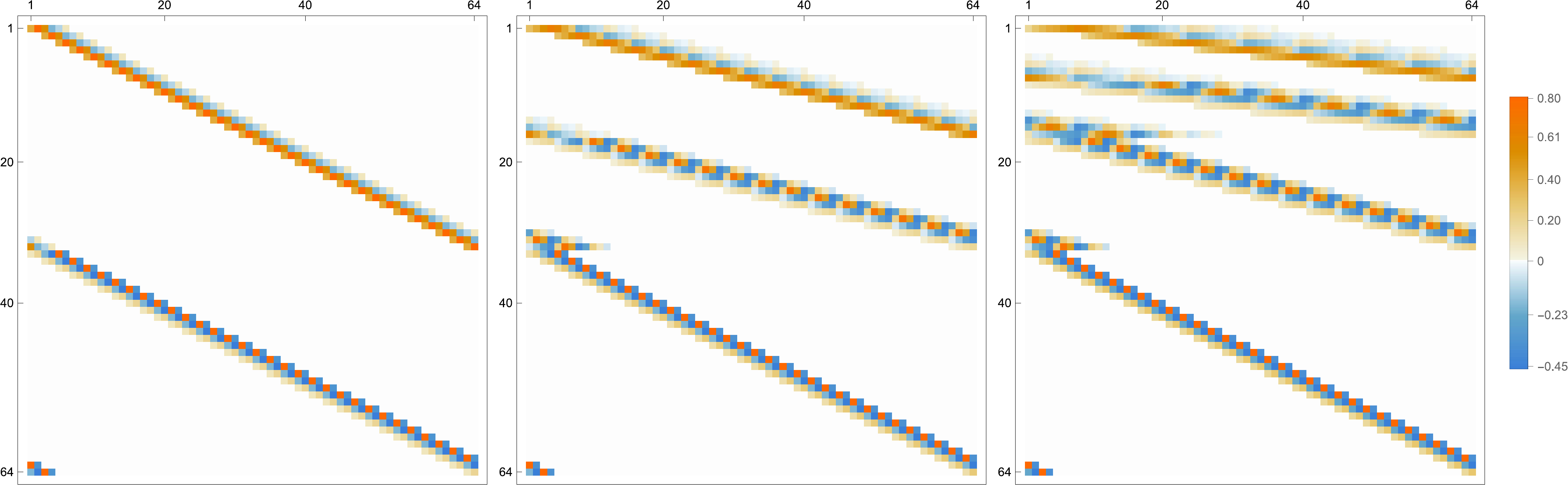}
\caption[A visualization for the wavelet-transform matrix]{
From left to right: visualization of the $d$-level wavelet-transform matrix~(WTM) with level~$d\in\{1,2,3\}$ at scale~$k=6$ for the Daubechies wavelet with index $\dbIndex=3$;
size of matrix is~$2^k\times 2^k$.
The number of nonzero~(NNZ) elements in each row of the 1-level WTM is~$2\dbIndex$.
The NNZ elements in those rows of the 2-level WTM that pertain to the first level, i.e., the bottom half rows, is~$2\dbIndex$,
and the NNZ elements in the rows pertaining to the second level, the top half rows, is~$4\dbIndex$.
For the 3-level wavelet transform, the NNZ elements in those rows pertaining to the third level is two times the NNZ elements in the rows pertaining to the second level, which itself is two times the NNZ on the rows pertaining to the first level.
The NNZ elements increases by a factor of two when we increase the level of wavelet transform by one. 
 \label{fig:waveletmtx}
  }
 \end{figure}

By the combination of
Eqs.~\eqref{eq:waveletcreation}, \eqref{eq:waveletHamiltonian} and \eqref{eq:scaleHamiltonian}, we have
\begin{equation}
\label{eq:HwDecomp}
    \hat{H}^{(r)}_{\w;\,\ell} =
    \sum_{\ell^\prime} W^{(k)}_{d;\ell \ell^\prime}
    \hat{H}^{(k)}_{\s;\ell^\prime},
\end{equation}
hence the time evolution induced by this Hamiltonian for time $t=\uppi/2$ in the Fourier approach yields
\begin{equation}
    \e^{-\upi \hat{H}^{(r)}_{\w;\ell} \uppi/2}
    \ket{\G^{(k)}_\text{scale}}\ket{0}_\anc =
    -\upi \ket{w^{(r)}_\ell} \ket{1}_\anc,
\end{equation}
and, therefore, we obtain the wavepacket state in~\cref{eq:waveletState} up to a global phase.
Because of the bosonic commutation relations between the bosonic creation and annihilation operators in~\cref{eq:scaleHamiltonian}, any term in the right-hand-side of~\cref{eq:HwDecomp} commutes with other terms.
Consequently, we have the decomposition
\begin{equation}
    \e^{-\upi \hat{H}^{(r)}_{\w;\, \ell} \uppi/2}
    = \prod_{\ell^\prime} \e^{-\upi \hat{H}^{(k)}_{\s;\, \ell^\prime} W^{(k)}_{d;\, \ell \ell^\prime}\uppi/2},
\end{equation}
for the time evolution induced by the wavelet Hamiltonian~\eqref{eq:HwDecomp}.
By this decomposition, simulating time evolution induced by the wavelet Hamiltonian for constant time~$\uppi/2$ in the Fourier approach is achieved by simulating time evolution induced by the scale Hamiltonian~\eqref{eq:scaleHamiltonian} with~$2^{k-r}\dbIndex$ different~$\ell^\prime$ for constant time $t_{\ell^\prime}:= W^{(k)}_{d;\, \ell \ell^\prime}\uppi/2$.
Therefore, generating a single-particle state at scale~$r$ in the Fourier approach is~$\Theta(2^{k-r}\dbIndex)$ times more expansive than generating the same state in the~wavelet~approach.

\section{Discussion}
\label{sec:discussion}

We have established two quasilinear quantum algorithms, one Fourier-based and the other wavelet-based, to generate an approximation for the ground state of a massive scalar bosonic free QFT.
Specifically, each of the two algorithms' time complexity is quasilinear with respect to the discretized-QFT number of modes.
Our algorithms deliver a super-quadratic speedup over the state-of-the-art quantum algorithm for ground-state generation and are optimal up to polylogarithmic factors.
The Fourier-based algorithm is limited to translationally invariant QFTs.
By numerical simulations, we have shown that the wavelet-based algorithm successfully yields the ground state for a QFT with a broken translational invariance.

We have also developed two quantum algorithms for generating one-dimensional~(1DG) Gaussian states.
Our first algorithm is based on the Kitaev-Webb method~\cite{KW09} for preparing a 1DG state, which itself is an application of the standard state-preparation method by Zalka~\cite{Zal96}, Grover and Rudolph~\cite{GR02}.
The Kitaev-Webb method, however, is restricted to 1DG states that possess an extremely large standard deviation, whereas our algorithm generates a 1DG state with any standard deviation.
Our second quantum algorithm for generating a 1DG state is based on inequality testing~\cite{SLSB19} that mitigates the number of arithmetic operations required by the standard state-preparation method.

Methodologically, in the Fourier-based algorithm, we discretize the continuum free QFT in a fixed-scale basis.
The ground-state ICM for the discretized QFT has a circulant structure, and we utilize this structure to establish a quasilinear quantum algorithm for ground-state generation.
In the wavelet-based algorithm, we discretize the continuum free QFT in a multi-scale wavelet basis.
In this case, the ground-state ICM is a quasi-sparse matrix.
Specifically, most elements of this matrix are nearly zero, and we replace these values with exactly zero.
This replacement enables a fingerlike sparse structure for the ground-state ICM that we exploit to achieve a quasilinear time complexity for the wavelet-based algorithm.

We went beyond ground-state generation and constructed procedures for preparing free-field wavepackets in the Fourier- and wavelet-based approaches.
We showed that, unlike the Fourier approach, the wavelet approach enables preparing particle states at different energy scales without an additional cost required for the Fourier approach.
Specifically, we showed that preparing a free-field single-particle state at scale with index~$r$ is~$\Theta(2^{k-r}\dbIndex)$ times more expensive than preparing the same state in the wavelet approach, where~$\dbIndex$ is the wavelet index and $k>r$ is the finest scale's index.
The wavelet approach's cheaper cost to preparing states beyond the free-field ground state suggests that this approach is advantageous over the Fourier approach in other aspects of simulating a QFT.
Moreover, as shown by the numerical simulations, the wavelet-based approach is applicable to field theories with broken translational invariance due to inhomogeneous mass, suggesting the wavelet-based approach allows simulating more general quantum field theories~\cite{BJV11}.

Our Fourier-based algorithm's key point is utilizing the circulant structure of the ground state's ICM due to the translational invariance of the free QFT.
We note that the ground-state ICM in the Jordan-Lee-Preskill approach~\cite{JLP12,JLP14} has the same structure as the ICM in our Fourier-based approach.
Hence our Fourier-based algorithm can be used for generating the ground state of the lattice QFT in the Jordan-Lee-Preskill approach and accelerate their algorithm for ground-state generation.
However, in contrast to their approach, the coupling-matrix elements in our Fourier approach are exact. 
The discretization error, due to approximating the derivative operator in the free-field Hamiltonian by a discretized derivative operator, results in a non-exact coupling matrix in the Jordan-Lee-Preskill approach~\cite{JLP12,JLP14}.
Consequently, the Fourier approach could be preferred over the lattice approach for generating the free-field ground state.

In developing our ground-state-generation algorithms, we only cared about producing quasilinear algorithms and opted to focus on their readability rather than optimizing their performance.
Therefore, our algorithms' time complexity could be improved, but any improvement will not change the quasilinear scaling of the algorithms' complexities.

Our Fourier- and wavelet-based algorithms have classical preprocessing and quantum routine.
The classical preprocessing of each algorithm produces a certain amount of classical information required for executing the quantum routine of the algorithm.
We analyzed not only the quantum complexity
but also the classical complexity of our state-generation algorithms in order to
ensure that the resulting procedures are indeed quasilinear in
the number of modes of the discretized QFT.
We established the classical time complexities
\begin{equation}
    T^{(\text{C})}_\text{FBA}
    \in \order{\dbIndex^3+
    \dbIndex N\log_2\log_2{\left(\frac{N}{\sqrt{m_0\varepsilon_\text{vac}}}\right)}},
    \quad
    T^{(\text{C})}_\text{WBA}
    \in \order{\dbIndex^3+ \dbIndex^2N\log_2N},
\end{equation}
for classical preprocessing of the Fourier- and wavelet-based algorithms, respectively, and the quantum time complexities
\begin{equation}
    T^{(\text{Q})}_\text{FBA} \in \order{N\log_2N +N\log^3_2{\left(\frac{N}{\sqrt{m_0\varepsilon_\text{vac}}}\right)}},
    \quad
    T^{(\text{Q})}_\text{WBA} \in \order{\left(\frac{N\dbIndex}{m_0}\right)\log^2_2{
    \left(\frac{N\dbIndex}{m_0\varepsilon_\text{vac}}\right)}
    +N\log^3_2{\left(\frac{N}{\sqrt{m_0\varepsilon_\text{vac}}}\right)}},
\end{equation}
for their quantum routine;
all parameters are specified in~\cref{table:inputsgroundstategen}.

In contrast to the usual approach of using gate complexity, i.e., the number of low-level quantum operations in an algorithm, as a metric to analyze time complexity for a quantum algorithm, 
we have analyzed our quantum algorithm's time complexity with respect to high-level operations.
The high-level quantum operations in our complexity analysis are similar to the high-level operations in the classical random-access machine model.
In particular, we have assigned a unit cost to basic arithmetic operations such as multiplication and addition on a quantum computer.
By the high-level operations, we avoid the implementation details of these operations in the compilation step.
Nevertheless, our algorithm's gate complexity remains quasilinear with respect to the number of modes, as the high-level operations are executed on quantum registers with size logarithmic in the number of modes.
However, the power of logarithmic factors in our algorithms' gate complexity depends on implementations of the high-level operations, specifically quantum multiplication, as other basic arithmetic operations are cheaper to implement than multiplication.
Whether to use schoolbook multiplication \'a la H\"aner, Roettler, and Svore~\cite{HRS18} or Karatsuba multiplication \'a la Gidney~\cite{Gid19} or even as-yet-undeveloped quantum multiplication algorithms based on asymptotically efficient classical multiplication algorithms such as the Sch\"onhage-Strassen algorithm~\cite[Sec.~4.3]{Knu97}, our algorithms' gate complexity stay quasilinear in the number of discretized-QFT modes.

\section{Conclusions}
\label{sec:conclusions}

Free-field ground-state generation is a bottleneck for the prior approaches to simulating a massive scalar bosonic QFT on a quantum computer.
In this paper, we have established two quantum algorithms for generating an approximation for the free-field ground state with a quasilinear gate complexity in the discretized-QFT number of modes.
Our algorithms provide a super-quadratic speedup over the prior approaches and overcome the ground-state-generation bottleneck in simulating a massive scalar bosonic QFT.
We have shown that our ground-state-generation algorithms are optimal up to polylogarithmic factors.
In particular, we have proved that any state-generation algorithm with a sublinear time complexity will result in an exponentially bad approximation, with respect to the number of modes, for the free-field ground state.

We have compared the Fourier- and wavelet-based algorithms and shown that the wavelet-based algorithm is advantageous over the alternative Fourier-based algorithm for two cases where we go beyond ground-state generation and translationally invariant QFTs.
Specifically, for beyond ground-state generation, we have shown that the wavelet-based algorithm enables generating particle states at different energy scales directly, whereas the Fourier-based algorithm only allows direct preparation for particle states at a fixed energy scale.
Preparing a particle state at different scales by the Fourier-based algorithm requires further transformations that add to the cost of initial-state generation for simulating the QFT.
The Fourier-based algorithm is limited to translationally invariant QFTs.
We have shown by numerical simulation that our wavelet-based algorithm is applicable to field theories with broken translational invariance due to inhomogeneous mass, suggesting the wavelet-based approach allows simulating more general quantum field theories.

We have also developed two quantum algorithms for generating a one-dimensional Gaussian state, which is required for preparing the free-field ground state---a multidimensional Gaussian state.
Our first algorithm for 1DG-state generation is based on the standard state-preparation method~\cite{Zal96,GR02}, and the second algorithm is based on inequality testing~\cite{SLSB19} that mitigates the number of arithmetic operations required in the standard method for generating a 1DG state.
Our inequality-testing-based method is more practical than the Zalka-Grover-Rudolph method~\cite{Zal96,GR02}, i.e., the standard state-preparation method, and can be used broadly in the state-preparation subroutine of quantum-simulation algorithms.

The key point of our ground-state-generation algorithms is to utilize the circulant structure and the fingerlike sparse structure for a differential operator represented in a fixed- and multi-scale wavelet basis, respectively.
Our techniques for exploiting these structures could be used more broadly in the quantum simulation of continuous classical or quantum systems whose Hamiltonian involves differential operators, including fermionic QFT~\cite{JLP14-2}, as well as quantum-chemistry simulations~\cite{BGM+19}.
Our methods, particularly the wavelet-based method, could be replaced with finite-difference or finite-element methods to improve quantum algorithms for partial differential equations~\cite{CLO20,CJO19,MP16}.

We have focused on ground-state generation here, but our methods could be employed to improve other aspects of simulating a QFT, namely time evolution and measurement.
In particular, akin to particle-state generation at different energy scales, the wavelet approach could be advantageous over the Fourier approach for measuring mean momentum of a particle state at the final step of a QFT simulation. 

We have particularly used Daubechies wavelets in developing our state-generation algorithms.
These algorithms, however, work for any compactly supported wavelets that are differentiable.
A future direction is to see how other types of compactly supported wavelets could improve the state-generation algorithms.
In particular, least-asymmetric wavelets~\cite[pp.~254--257]{Dau92} also known as symlets~\cite[p.27]{AMC21}, a variant of Daubechies wavelets with nearly symmetrical basis functions, could result in a more sparse ground-state ICM for a given threshold value~\eqref{eq:cutoffcondition} and improve the wavelet-based algorithm.

\section*{Acknowledgements}
This project is supported by the Government of Alberta and 
by the Natural Sciences and Engineering Research Council of Canada (NSERC).
MB and BCS acknowledge the traditional owners of the land on which some of this work was undertaken at the University of Calgary: the Treaty 7 First Nations.
YRS, DWB and GKB acknowledge the Wallamattagal people of the Dharug nation, whose cultures and customs have nurtured, and continue to nurture, the land on which some of this work was undertaken: Macquarie University.
YRS acknowledges the Gadigal and Guring-gai people of the Eora Nation upon whose ancestral lands the University of Technology Sydney now stands. 
MB thanks Mehdi Ahmadi for many helpful discussions.
YRS is supported by Australian Research Council Grant No: DP200100950.
DWB worked on this project under a sponsored research agreement with Google Quantum AI. DWB is also supported by Australian Research Council Discovery Projects DP190102633 and DP210101367.
GKB acknowledges support from the ARC from grant~DP200102152.


\bibliography{references}

\appendix
\addtocontents{toc}{\protect\setcounter{tocdepth}{0}}

\section{Wavelet transform}
\label{appx:wavelet_transform}

In this appendix, we review a family of basis transforms called `the' discrete wavelet transform.
The specific member of that family of transforms will be clear from the context.
Here we focus only on the one-dimensional wavelet transforms.

Two parameters specify a one-dimensional wavelet transform.
The first parameter is a choice~$k$ of scale for~\sqintR.
The second parameter is a choice of decomposition level~$d\geq 1$.
The $d$-level wavelet transform at scale~$k$ is recursively defined as follows.
The $1$-level wavelet transform at any scale~$k$ is defined to be the canonical isomorphism
\begin{equation}
    \upbm{W}^{(k)} :
    \mathcal{S}_k \to \mathcal{S}_{k-1} \oplus \mathcal{W}_{k-1},
    \quad \upbm{W}_1^{(k)} := \upbm{W}^{(k)},
\end{equation}
where $\mathcal{S}_k$ and $\mathcal{W}_k$ are the scale and wavelet subspaces, respectively, defined in~\cref{subsec:wavelet_bases}.
The $d$-level wavelet transform, for $d>1$, is defined as
\begin{equation}
    \upbm{W}_d^{(k)} :=
    \left( \upbm{W}^{(k-d+1)} \oplus \id_{\mathcal{S}_{k-d+1}^\perp} \right)
    \cdot \upbm{W}_{d-1}^{(k)},
\end{equation}
where $\mathcal{S}_{k-d}^\perp:= \mathcal{W}_{k-d+1} \oplus 
\mathcal{W}_{k-d+2} \oplus \cdots \oplus \mathcal{W}_{k-1}$ is the orthogonal
complement of $\mathcal{S}_{k-d}$ considered as a subspace of $\mathcal{S}_k$.
Thus $\upbm{W}_d^{(k)}$ executes the sequence of transformations
\begin{equation}
\begin{split}
    \mathcal{S}_k &
    \xrightarrow{\upbm{W}^{(k)}} 
    \mathcal{S}_{k-1} \oplus \mathcal{W}_{k-1} \\ &
    \xrightarrow{\upbm{W}^{(k-1)} \oplus \id_{\mathcal{S}_{k-1}^\perp}}
    \mathcal{S}_{k-2} \oplus \mathcal{W}_{k-2} \oplus \mathcal{W}_{k-1} \\ &
    \xrightarrow{\upbm{W}^{(k-2)} \oplus \id_{\mathcal{S}_{k-2}^\perp}}
    \mathcal{S}_{k-3} \oplus \mathcal{W}_{k-3} \oplus \mathcal{W}_{k-2}
    \oplus \mathcal{W}_{k-1} \\ & \qquad \qquad \vdots \\ &
    \xrightarrow{\upbm{W}^{(k-d+1)} \oplus \id_{\mathcal{S}_{k-d+1}^\perp}}
    \mathcal{S}_{k-d} \oplus \mathcal{W}_{k-d} \oplus \cdots
    \oplus \mathcal{W}_{k-1}.
\end{split}
\end{equation}
The action of the canonical isomorphism $\upbm{W}^{(k)}$ can be expressed in terms of two infinite matrices
\begin{equation}
    \upbm{H} = 
    \begin{bmatrix}
               & \vdots & \vdots & \vdots & \vdots & \vdots &        \\
        \cdots & h_{ 0} & h_{ 1} & h_{ 2} & h_{ 3} & h_{ 4} & \cdots \\
        \cdots & h_{-2} & h_{-1} & h_{ 0} & h_{ 1} & h_{ 2} & \cdots \\
        \cdots & h_{-4} & h_{-3} & h_{-2} & h_{-1} & h_{ 0} & \cdots \\
               & \vdots & \vdots & \vdots & \vdots & \vdots &        \\
    \end{bmatrix},
    \quad
    \upbm{G} = 
    \begin{bmatrix}
               & \vdots & \vdots & \vdots & \vdots & \vdots &        \\
        \cdots & g_{ 0} & g_{ 1} & g_{ 2} & g_{ 3} & g_{ 4} & \cdots \\
        \cdots & g_{-2} & g_{-1} & g_{ 0} & g_{ 1} & g_{ 2} & \cdots \\
        \cdots & g_{-4} & g_{-3} & g_{-2} & g_{-1} & g_{ 0} & \cdots \\
               & \vdots & \vdots & \vdots & \vdots & \vdots &        \\
    \end{bmatrix},
\end{equation}
called the low-pass and high-pass filter matrix, respectively.
These infinite matrices are to be multiplied to vectors $\upbm{c}^{(k)} \in \mathcal{S}_k$ expressed in the basis $\left\{s_\ell^{(k)}\right\}$,
meaning that these vectors take the form
\begin{equation}
    \upbm{c}^{(k)} = \begin{bmatrix}
    \vdots \\ c_{-1}^{(k)} \\ c_0^{(k)} \\ c_1^{(k)} \\ \vdots
    \end{bmatrix},\quad
    c_\ell^{(k)} := \braket{s_\ell^{(k)}}{f}.
\end{equation}
The infinite matrix $\upbm{H}$ is then defined so that
$\upbm{H} \cdot \upbm{c}^{(k)} = \upbm{c}^{(k-1)}$.
Note that the form of $\upbm{H}$ and $\upbm{G}$ do not depend on~$k$.
Also note that $\upbm{H}$ and $\upbm{G}$ are both row-sparse operators if the wavelet is compactly supported, as only a finite number of~$h$ and~$g$ coefficients are nonzero.
With these definitions, we have
$ \upbm{W}^{(k)} \cdot \upbm{c}^{(k)} =
 \left( \upbm{H} \cdot \upbm{c}^{(k)} \right) \oplus
 \left( \upbm{G} \cdot \upbm{c}^{(k)} \right)$.

We do not work with an infinite matrix representation in this paper and instead restrict attention to finite-sized systems.
Specifically, we restrict attention to subset $\sqintS$ of $\sqintR$ that has support only on unit interval
$\mathds{S}:= \{x\in\reals \vert 0\leq x\leq 1 \}$.
In this restricted space, the scale coefficient
$c_\ell^{(k)} := \braket{s_\ell^{(k)}}{f}$
for the function $f \in \sqintS$
is guaranteed to be equal to zero for all but a finite number of
values of $\ell$.
Specifically, the fact that
\begin{equation}
  \operatorname{supp} \left( s_\ell^{(k)} \right) \subseteq
  \{ x \in \reals \vert 0 < 2^k x - \ell < 2\dbIndex - 1 \},
\end{equation}
where $\operatorname{supp}(\bullet)$ refers
to support of a function, implies that
$c_\ell^{(k)} = 0$ if $\ell < 1 - 2\dbIndex$ or $\ell \geq 2^k$.
That is to say, only $2^k + 2\dbIndex + 1$ scale coefficients
$c_\ell^{(k)}$ could be nonzero.
We thus treat the infinite-dimensional vector~$\upbm{c}^{(k)}$ as though it were finite-dimensional with dimension $2^k + 2\dbIndex + 1$.
Similarly, we treat $\upbm{H}$ and~$\upbm{G}$ as though they were
$(2^{k-1} + 2\dbIndex + 1) \times (2^k + 2\dbIndex + 1)$ matrices
and hence $\upbm{W}^{(k)}$ as though it were a 
$(2^k + 4\dbIndex + 2) \times (2^k + 2\dbIndex + 1)$ matrix.
Applying $\upbm{W}^{(k)}$ to a vector is described as
calculating the `discrete wavelet transform' of the vector.

The treatment described here leaves~$\upbm{W}^{(k)}$ as a non-square matrix, which cannot be
an automorphism of $\mathcal{S}_k \subset \sqintS$.
We deal with this issue in two ways.
In the first way, which we refer to as the `open boundaries' case,
we discard the negative index values $\ell < 0$.
Although this leaves us with a square ($2^k \times 2^k$) matrix, the result is not invertible and hence is not an automorphism of
$\mathcal{S}_k \subset \sqintS$ as desired.
Despite its weaknesses, this approach is good enough for numerical studies.
The second way we deal with the non-square $\upbm{W}^{(k)}$ is what we call the `(anti-)periodic boundaries' case.
In this approach, we treat the signal~$f$ as though it depicts a periodic (antiperiodic) system, so the problematic negative index values $\ell < 0$ would be identical to (the negative of) the index values $2^k - \ell$.
In this case, we apply the same condition to the derivative overlap coefficients~\eqref{eq:derivative_overlaps}.
By putting these two approaches together, the matrix $\upbm{W}^{(k)}$ for the db$3$ wavelets becomes the $2^k \times 2^k$ matrix
\begin{equation}
\label{eq:WTM}
   \upbm{W}^{(k)} = 
   \begin{bmatrix}
      h_0  &   h_1  &   h_2  &   h_3  &   h_4  &   h_5  &    0   &    0   & \cdots \\
       0   &    0   &   h_0  &   h_1  &   h_2  &   h_3  &   h_4  &   h_5  & \cdots \\
       0   &    0   &    0   &    0   &   h_0  &   h_1  &   h_2  &   h_3  & \cdots \\
    \vdots & \vdots & \vdots & \vdots & \vdots & \vdots & \vdots & \vdots & \ddots \\
    b h_4  & b h_5  &    0   &    0   &    0   &    0   &    0   &    0   & \cdots \\
    b h_2  & b h_3  & b h_4  & b h_5  &    0   &    0   &    0   &    0   & \cdots \\
      g_0  &   g_1  &   g_2  &   g_3  &   g_4  &   g_5  &    0   &    0   & \cdots \\
       0   &    0   &   g_0  &   g_1  &   g_2  &   g_3  &   g_4  &   g_5  & \cdots \\
       0   &    0   &    0   &    0   &   g_0  &   g_1  &   g_2  &   g_3  & \cdots \\
    \vdots & \vdots & \vdots & \vdots & \vdots & \vdots & \vdots & \vdots & \ddots \\
    b g_4  & b g_5  &    0   &    0   &    0   &    0   &    0   &    0   & \cdots \\
    b g_2  & b g_3  & b g_4  & b g_5  &    0   &    0   &    0   &    0   & \cdots \\
   \end{bmatrix}
   = \begin{bmatrix}
      \upbm{H}\\
      \upbm{G}
   \end{bmatrix},
\end{equation}
where $b = 0, +1, -1$ depending on whether the boundaries are open, periodic or antiperiodic.

\section{Low-pass filter for Daubechies wavelets}
\label{appx:lowpass_filter}

This appendix describes a method for computing the low-pass filter coefficients~$h_\ell$~\eqref{eq:scaling_function} for Daubechies wavelets.
We begin with describing the method and then construct a classical algorithm for computing the low-pass filter coefficients using the described method.
Finally, we discuss the algorithm's time complexity with respect to the classical primitive operations described in~\cref{subsubsec:complexity_measure}.

Several methods are used to compute the numerical values of the low-pass filters~$h_\ell$~\eqref{eq:scaling_function}.
A direct method is to solve the system of nonlinear equations~\cite[p.~3]{BP13}
\begin{equation}
    \sum_\ell h_\ell=\sqrt{2},
    \quad
    \sum_\ell h_\ell h_{\ell-2\ell'}=\updelta_{0,\ell'},
    \quad
    \sum_\ell (-1)^\ell \ell^{\ell'} h_{2\dbIndex-1-\ell}=0
    \quad
    \forall \ell'<\dbIndex,
\end{equation}
which are derived from various properties of the Daubechies wavelets, such as orthonormality.
This nonlinear system can be solved analytically for~$\dbIndex\leq 3$ and numerically otherwise.
However, the numerical methods for solving the nonlinear system become increasingly inefficient as~\dbIndex\ increases.
We use an alternative method proposed by Daubechies to compute the numerical values of the filter coefficients~\cite{Dau92};
also see~\cite[p.~81]{Cha97}.
The filter coefficients in Daubechies' method are obtained by first computing the~$2\dbIndex-2$ roots of the polynomial
\begin{equation}
\label{eq:fzPy}
   f(z):= z^{\dbIndex-1}P\left(\frac1{2}-\frac1{4z}-\frac{z}{4},\, \dbIndex\right),\quad
   P(y, \dbIndex):=\sum_{\ell=0}^{\dbIndex-1}\binom{\dbIndex-1+\ell}{\ell}y^\ell,
\end{equation}
where $P(y, \dbIndex)$ is a polynomial of degree $\dbIndex-1$.
Next those roots $z_\ell$ with $\abs*{z_\ell} < 1$ are selected.
The filter coefficients are then obtained by identifying the coefficients $H_\ell$ in
\begin{equation}
\label{eq:polyCoeff}
\left(1+z\right)^\dbIndex \prod_{\ell=0}^{\dbIndex-1}(z-z_\ell)
= \sum_{\ell=0}^{2\dbIndex-1} H_\ell z^{2\dbIndex-1-l},
\end{equation}
for selected roots and normalizing the coefficients as~$h_\ell=H_\ell/\sqrt{\upbm{H}\cdot \upbm{H}}$,
where~$\upbm{H}$ is a vector with elements~$H_\ell$.

The pseudocode in Algorithm~\ref{alg:lowPass} describes a formal algorithm for computing the low-pass filters using Daubechies' method.
This algorithm is used in classical preprocessing of the wavelet-based algorithm~(\cref{subsec:highlevel_description}), specifically as a subroutine in Algorithm~\ref{alg:invCovCircRows}, which requires the low-pass filters to compute unique elements of the ground-state ICM in a multi-scale wavelet~basis.

\begin{algorithm}[H]
  \caption{Classical algorithm for computing the low-pass filter for Daubechies wavelets}
  \label{alg:lowPass}
  \begin{algorithmic}[1]
  \Require{
  \Statex $\dbIndex \in \integers^{+}$
  \Comment{Daubechies wavelet index}
  \Statex $p \in \integers^{+}$
  \Comment{working precision}
    }
  \Ensure
  \Statex $\upbm{h} \in \reals^{2\dbIndex}$
  \Comment{low-pass filter~\eqref{eq:scaling_function} for the Daubechies wavelet with index~$\dbIndex$}
  \Function{poly}{$y,\dbIndex$}
  \Comment{defines the polynomial~$P(y,\dbIndex)$ in~\cref{eq:fzPy}}
  \State \Return $  \displaystyle \sum_{\ell=0}^{\dbIndex-1}\binom{\dbIndex-1+l}{\ell}y^\ell$
  \EndFunction
\Function{lowPassFilter}{$\dbIndex, p$}
    \State \label{line:rootf}
    $ \reals^{2\dbIndex-2} \ni \upbm{r}
    \gets \textsc{roots}\left(z^{\dbIndex-1}\textsc{poly}(\frac1{2}-\frac1{4z}-\frac{z}{4}, \dbIndex),\, p\right)$
    \Comment{computes roots of $f(z)$ in~\cref{eq:fzPy} with precision~$p$}
    \State $\upbm{z} \gets \varnothing$
    \Comment{initializes an empty list}
    \For{$\ell \gets 0$ to $2\dbIndex-3$}
    \Comment{appends roots~$r_\ell$ of~$f(z)$ with~$\abs{r_\ell} <1$ to~$\upbm{z}$}
        \If{$\abs{r_\ell} <1$}
            \State append $r_\ell$ to $\upbm{z}$
        \EndIf
    \EndFor
    \For{$\ell \gets 0$ to $\dbIndex-1$}
    \Comment{appends roots of $(1+z)^\dbIndex$ to $\upbm{z}$; see~\cref{eq:polyCoeff}}
        \State append $-1$ to $\upbm{z}$
    \EndFor
    \State $\reals^{2\dbIndex} \ni \upbm{H} \gets \textsc{polyCoeff}(\upbm{z})$
    \Comment{computes coefficients $H_\ell$~\eqref{eq:polyCoeff} of a polynomial with roots in $\upbm{z}$}
    \State $\upbm{h} \gets \upbm{H}/\sqrt{\upbm{H}\cdot\upbm{H}}$
\EndFunction
\State \Return $\upbm{h}$
\end{algorithmic}
\end{algorithm}

The time complexity of this algorithm is dominated by the time complexity for computing roots of~$f(z)$ in~\cref{eq:fzPy} because the root-finding subroutine in line~(\ref{line:rootf}) is the computationally expensive part of the algorithm.
The coefficients of~$f(z)$, a polynomial of degree $2\dbIndex-2$, have a magnitude less than~$2^{3\dbIndex}$.
By these properties, the arithmetic complexity for finding roots of~$f(z)$ with precision~$p$ is~$\order{\dbIndex \log^5 \dbIndex \log(3\dbIndex+p)}$
using a known root-finding algorithm~\cite{NR96}.
The classical primitive operations in the cost model described in~\cref{subsubsec:complexity_measure} include basic arithmetic operations.
Therefore, with respect to the classical primitives, the time complexity for computing the low-pass filter coefficients is quasilinear in the wavelet index.


\section{Derivative overlap coefficients}
\label{appx:derivative_overlaps}

In this appendix, we describe Beylkin's method~\cite{Bey92} for computing the derivative overlap coefficients~\eqref{eq:derivative_overlaps} for the Daubechies wavelet with the wavelet index~\dbIndex.
For our application, we restrict the method to computing the overlaps coefficients for the second-order derivative operator.
Beylkin's method, however, is general and could be used to compute derivative overlaps for derivative operators with any integer or fractional order.
Having described the method, we then present a formal algorithm for computing the second-order derivative overlaps.

The explicit expression for the Daubechies scaling function and its derivatives are unknown.
Therefore, the numerical value of the derivative overlaps cannot be directly computed from~\cref{eq:derivative_overlaps}.
Beylkin's method is an indirect way to compute these coefficients.
The second-order derivative overlaps in this method
satisfy the system of linear algebraic equations~\cite[Proposition~2]{Bey92}
\begin{equation}
\sum_{\ell\in \integers}
\ell^2 x_\ell = 2,
\quad x_\ell = 4 x_{2\ell} + 2\sum_{k=1}^{\dbIndex} a_{2k-1}\left(x_{2\ell-(2k+1)} + x_{2\ell+2k+1}\right)
\quad \forall \ell \in \integers,
\end{equation}
where the coefficients
\begin{equation}
\label{eq:autocorrelation}
a_{2k-1}:=\dfrac{(-1)^{k-1}C}{(\dbIndex-k)!(\dbIndex+k-1)!(2k-1)}
\quad \forall k \in \{1, \ldots, \dbIndex\},
\quad
C:=\left[\dfrac{(2\dbIndex-1)!}{(\dbIndex-1)!4^{\dbIndex-1}} \right]^2,
\end{equation}
are auto-correlation of the low-pass filter $h_\ell$~\eqref{eq:scaling_function}.
If~$\dbIndex\geq 2$, then the system of linear equations has a unique solution with~$x_\ell\neq 0$
for~$2\dbIndex-2 \leq \ell\leq 2\dbIndex-2$,
and~$x_\ell=x_{-\ell}$.
Using these properties, we write the linear system in the matrix-vector form as~$\upbm{M} \bm{x} = \upbm{b}$,
where $\upbm{M}$ is a $(2\dbIndex-1)$-by-$(2\dbIndex-1)$ matrix with elements
\begin{equation}
\label{eq:M_elements}
    M_{mn}= n^2+4\updelta_{2m,n}-\updelta_{m,n}+2\sum_{k=1}^{\dbIndex} a_{2k-1}\left(\updelta_{n,\abs{2m-(2k+1)}}+\updelta_{n,2m+2k+1}\right),
\end{equation}
the solution vector is $\bm{x}:=(x_0,\ldots,x_{2\dbIndex-2})^\T$ and $\upbm{b}$ is the vector of all $1$'s, i.e.,~$\upbm{b}:=(1,\ldots,1)$.

The pseudocode in~\cref{alg:derivativeOverlaps} describes a formal algorithm for computing the derivative overlaps by Beylkin's method.
The formal algorithm presented here is only used to simplify the description of our ground-state-generation algorithms~(\cref{subsec:highlevel_description}).

\begin{algorithm}[H]
  \caption{Classical algorithm for computing the second-order derivative overlaps in~\cref{eq:derivative_overlaps}}
  \label{alg:derivativeOverlaps}
  \begin{algorithmic}[1]
  \Require{
  \Statex $\dbIndex \in \integers^+$
  \Comment{wavelet index}
  \Statex $p\in \integers^+$
  \Comment{working precision}
    }
  \Ensure
  \Statex $\bm{\Delta} \in \reals^{2\dbIndex-1}$
  \Comment{second-order derivative overlaps~\eqref{eq:derivative_overlaps}}
    \Function{derivativeOverlaps}{$\dbIndex, p$}
    \State $C \gets \left[\dfrac{(2\dbIndex-1)!}{(\dbIndex-1)!4^{\dbIndex-1}} \right]^{2}$
    \Comment{see~\cref{eq:autocorrelation}}
    \For{\text{$k$ from $1$ to $\dbIndex$}}
    \State $a_{2k-1} \gets \dfrac{(-1)^{k-1}C}{(\dbIndex-k)!(\dbIndex+k-1)!(2k-1)}$
    \Comment{computes the auto-correlation coefficients in~\cref{eq:autocorrelation}}
    \State $b_{k-1} \gets 1$
    \EndFor
    \State $\upbm{M}\gets \textsc{mtx}(\dbIndex)$
    \Comment{\textsc{mtx} is a function that constructs the matrix~$\upbm{M}$ by~\cref{eq:M_elements}}
    \State $\bm\Delta \gets \upbm{M}^{-1}\upbm{b}$
    \State \Return $\bm\Delta$
    \EndFunction
     \end{algorithmic}
\end{algorithm}


\section{Proofs}
\label{appx:proofs}
\noindent
This appendix contains the statement and proofs of several propositions used in this paper.

\begin{proposition}
\label{prop:derivativeMtx}
Let $N$ be a positive integer, $\{s_n(x) \,\vert\, n\in \{0,\ldots,N-1\}\}$ be a set of orthonormal, real-valued, differentiable and compactly supported functions that span a subspace of $\mathscr{L}^2(\reals)$, and let $\bm{\Gamma} \in \reals^{N\times N}$ be a matrix whose elements are
\begin{equation}
\label{eq:derivativeMtx}
 \Gamma_{nm}:=\int \dd{x} \dv{x}s_n(x) \dv{x}s_{m}(x)
 \quad \forall\, n,m\in \{0,\ldots,N-1\}.
\end{equation}
Then $\bm{\Gamma}$ is a symmetric and positive-semidefinite matrix.
\end{proposition}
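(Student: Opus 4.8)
The statement is a standard-looking linear-algebra fact about a Gram-type matrix built from derivatives, so the plan is to exhibit $\bm{\Gamma}$ explicitly as a Gram matrix and then invoke the elementary fact that every Gram matrix is symmetric positive-semidefinite. Concretely, I would introduce the functions $t_n(x):=\dv{x}s_n(x)$, which are well-defined and lie in $\mathscr{L}^2(\reals)$ because each $s_n$ is differentiable and compactly supported (so the derivative is supported in the same compact set and is square-integrable, at least after the usual mild regularity that the excerpt tacitly assumes for Daubechies-$\dbIndex$ scaling functions with $\dbIndex\geq 3$, where $s(x)$ has continuous first derivative). Then $\Gamma_{nm}=\int \dd{x}\, t_n(x)\, t_m(x) = \braket{t_n}{t_m}$ in the real $\mathscr{L}^2$ inner product.

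Symmetry is then immediate: $\Gamma_{nm}=\braket{t_n}{t_m}=\braket{t_m}{t_n}=\Gamma_{mn}$ since the inner product is real and symmetric. For positive-semidefiniteness, I would take an arbitrary vector $\bm{v}=(v_0,\ldots,v_{N-1})^\T\in\reals^N$ and compute
\begin{equation}
\bm{v}^\T \bm{\Gamma}\, \bm{v}
= \sum_{n,m} v_n v_m \braket{t_n}{t_m}
= \norm[\Big]{\sum_{n} v_n t_n}^2 \geq 0,
\end{equation}
where the middle step is bilinearity of the inner product and the last is non-negativity of the norm. This shows $\bm{\Gamma}\succeq 0$, completing the argument.

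I do not anticipate a serious obstacle here; the only thing to be careful about is the justification that $t_n\in\mathscr{L}^2(\reals)$ and that the integral defining $\Gamma_{nm}$ converges, which is where compact support and differentiability of the $s_n$ are used — so I would state that step explicitly rather than gloss over it. One could alternatively give a more ``physical'' derivation by noting that $\bm{\Gamma}$ arises (up to scaling by $4^k$) as the kinetic part of the coupling matrix $\upbm{K}_\sS^{(k)}$ via integration by parts, $\Gamma_{nm}=-\int \dd{x}\, s_n(x)\, \dv[2]{x}s_m(x)$, which connects the proposition to $\Delta^{(2)}$ in~\cref{eq:derivative_overlaps}; but the Gram-matrix route is cleaner and self-contained, so that is the one I would write up.
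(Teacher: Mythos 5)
Your proposal is correct and is essentially the paper's own argument: the paper also proves positive-semidefiniteness by writing $\bm{v}^\T\bm{\Gamma}\bm{v}=\int \dd{x}\,\bigl(\dv{x}f(x)\bigr)^2\geq 0$ with $f=\sum_n v_n s_n$, which is exactly your Gram-matrix computation with $\sum_n v_n t_n = f'$. Your added care about $t_n\in\mathscr{L}^2(\reals)$ is a reasonable refinement the paper leaves implicit.
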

\begin{proof}
The symmetry of $\bm{\Gamma}$ is immediate from the definition of its elements.
By definition, an $N$-by-$N$ symmetric real matrix $\upbm{A}$ is positive semidefinite if $\bm{v}^{\T}\upbm{A}\bm{v}\geq 0$ for all nonzero $\bm{v}\in\reals^N$.
We use this definition to prove that $\bm{\Gamma}$ is a positive-semidefinite matrix.
Let $\bm{v}$ be a nonzero vector in $\reals^N$ and let $f(x):=\sum\limits_{n=0}^{N-1} v_n s_n(x)$, then
\begin{equation}
    \bm{v}^{\T}\bm{\Gamma}\bm{v}=
    \sum_{n,m=0}^{N-1} v_n\Gamma_{nm}v_m
    =\int \dd{x} \dv{x}\left(\sum_{n=0}^{N-1}v_n s_n(x)\right) \dv{x}\left(\sum_{n=0}^{N-1}v_m s_m(x)\right)
    = \int \dd{x} \left(\dv{x}f(x)\right)^2 \geq 0,
\end{equation}
Thus, by definition, $\bm{\Gamma}$ is a positive-semidefinite matrix.
\end{proof}

\begin{proposition}
\label{prop:smallest_eigen_of_ICM}
Let $m_0 \in \reals^{+}$ be the free mass and $\upbm{A}\in\reals^{N\times N}$ be the ground-state ICM in a wavelet basis.
Then the smallest eigenvalue of $\upbm{A}$ is~$m_0$.
\end{proposition}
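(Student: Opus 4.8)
The plan is to exhibit the coupling matrix $\upbm{K}$ underlying the ICM as $\upbm{K}=m_0^2\id+\bm{\Gamma}$ with $\bm{\Gamma}$ symmetric positive semidefinite and possessing a nontrivial kernel, so that $\min\operatorname{spec}\upbm{K}=m_0^2$ and hence $\min\operatorname{spec}\upbm{A}=\min\operatorname{spec}\sqrt{\upbm{K}}=m_0$.

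First I would pin down the decomposition $\upbm{K}=m_0^2\id+\bm{\Gamma}$. Projecting the free Hamiltonian~\eqref{eq:free_Hamiltonian} onto the scale subspace $\mathcal{S}_k$ with periodic boundaries, the $m_0^2\hat{\Phi}^2$ term contributes $m_0^2\id$ by orthonormality of the basis functions, while the $(\nabla\hat{\Phi})^2$ term contributes the matrix $\bm{\Gamma}$ of overlaps of derivatives of the basis functions; this is exactly the $N^2$-rescaled derivative-overlap combination appearing in~\eqref{eq:fixedscaleK}, where the wrap-around term implements the periodic identification described in~\cref{appx:wavelet_transform}. Because the basis functions are orthonormal, real, differentiable and compactly supported, \cref{prop:derivativeMtx} applies to $\bm{\Gamma}$ and gives that $\bm{\Gamma}$ is symmetric and positive semidefinite, so every eigenvalue of $\bm{\Gamma}$ is nonnegative and $\min\operatorname{spec}\upbm{K}\geq m_0^2$.

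Next I would show $0\in\operatorname{spec}\bm{\Gamma}$. The key fact is that the constant function on the periodic unit interval $\mathds{S}$ is exactly representable in the wavelet space: the Daubechies scaling functions form a partition of unity, so $\sum_\ell s_\ell^{(k)}$ is constant on $\mathds{S}$ (and $\mathcal{S}_{s_0}\subseteq\mathcal{S}_k$ already contains the constants). Hence there is a nonzero coefficient vector $\bm{v}_0$ representing the constant function $f_0$, and the quadratic-form identity from the proof of \cref{prop:derivativeMtx} gives $\bm{v}_0^{\T}\bm{\Gamma}\bm{v}_0=\int_{\mathds{S}}(\nabla f_0)^2=0$; since $\bm{\Gamma}\succeq0$ this forces $\bm{\Gamma}\bm{v}_0=0$. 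Equivalently, from $\Gamma_{nm}=\int\nabla b_n\,\nabla b_m$ the row sums $\sum_m\Gamma_{nm}=\int\nabla b_n\,\nabla\!\big(\sum_m b_m\big)=0$ vanish, so $\bm{\Gamma}$ annihilates the all-ones vector in the scale basis. Combined with positive semidefiniteness, $\min\operatorname{spec}\bm{\Gamma}=0$, and therefore $\min\operatorname{spec}\upbm{K}=m_0^2$.

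Finally, since $\upbm{A}=\sqrt{\upbm{K}}$ is the principal (positive-definite) square root and $t\mapsto\sqrt{t}$ is monotone increasing on $\reals^+$, we have $\operatorname{spec}\upbm{A}=\{\sqrt{\lambda}:\lambda\in\operatorname{spec}\upbm{K}\}$, whence $\min\operatorname{spec}\upbm{A}=\sqrt{m_0^2}=m_0$. The same argument works verbatim in the multi-scale wavelet basis, because that basis is obtained from the fixed-scale basis by the orthogonal wavelet transform of~\cref{appx:wavelet_transform}, which preserves both the decomposition $\upbm{K}=m_0^2\id+\bm{\Gamma}$ and the spectra. The main obstacle is getting the first step exactly right: verifying the identity $\upbm{K}=m_0^2\id+\bm{\Gamma}$, and in particular that the periodic-boundary wrap-around terms in~\eqref{eq:fixedscaleK} are precisely what makes $\bm{\Gamma}$ the Gram matrix of derivatives on the circle, so that \cref{prop:derivativeMtx} is genuinely applicable and the constant mode lies in $\ker\bm{\Gamma}$.
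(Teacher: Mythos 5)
Your proposal is correct and follows the same skeleton as the paper's proof: reduce to the fixed-scale case by unitary equivalence of the wavelet transform, write $\upbm{K}^{(k)}_\sS=m_0^2\mathds{1}+4^k\bm{\Gamma}$ with $\bm{\Gamma}$ the Gram matrix of derivatives, invoke \cref{prop:derivativeMtx} for positive semidefiniteness (hence $\min\operatorname{spec}\upbm{K}\geq m_0^2$), and pass to the ICM via the principal square root. The only place you diverge is in showing that the bound $m_0^2$ is actually attained. The paper exploits the circulant structure to write the eigenvalues explicitly as $\lambda_j=m_0^2-4^k\Delta_0-2\sum_\ell 4^k\Delta_\ell\cos(2\uppi\ell j/N)$ and then evaluates $\lambda_0=m_0^2$ using the summation identity $\Delta_0+2\sum_{\ell\geq 1}\Delta_\ell=0$, which it cites from the literature on derivative overlaps. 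You instead observe that this identity is the statement that $\bm{\Gamma}$ annihilates the all-ones vector, and you derive it from first principles: the Daubechies scaling functions form a partition of unity on the periodic interval, so the constant function lies in $\mathcal{S}_k$, its derivative vanishes, and the quadratic form $\bm{v}_0^\T\bm{\Gamma}\bm{v}_0=\int(\nabla f_0)^2=0$ together with $\bm{\Gamma}\succeq 0$ forces $\bm{\Gamma}\bm{v}_0=0$. The two arguments establish the same fact; yours is self-contained and basis-independent (it does not need the circulant diagonalization at all), while the paper's is shorter given the cited identity and yields the full spectrum as a byproduct. Both are valid.
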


\begin{proof}
The spectrum of the ground-state ICM in a fixed- and multi-scale wavelet basis are identical because they are unitarily equivalent.
Therefore we only prove this proposition for a fixed-scale ICM~$\upbm{A}^{(k)}_\sS$~\eqref{eq:fixedscale_groundstate}.
This matrix is the principal square root of the fixed-scale coupling matrix in~\cref{eq:fixedscaleK}.
The fixed-scale coupling matrix can be written as
\begin{equation}
\label{eq:fixdedcoupMtxForm}
    \upbm{K}^{(k)}_\sS= m_0^2 \mathds{1} - 4^k \bm{\Delta}^{(2)} = m_0^2 \mathds{1} + 4^k \bm{\Gamma},
\end{equation}
where~$\bm{\Delta}^{(2)}$ is a matrix whose elements are the second-order derivative overlaps~$\Delta^{(2)}_{mn}$~\eqref{eq:derivative_overlaps}, and the second identity in~\cref{eq:fixdedcoupMtxForm} follows from~\cref{eq:derivativeMtx} and~\cref{eq:derivative_overlaps}.
Eigenvalues of~$\upbm{K}^{(k)}_\sS$ are
\begin{equation}
    \lambda^{(k)}_j := m_0^2 - 4^k \Delta^{(2)}_0
            -2\sum\limits_{\ell=1}^{2\dbIndex-1} 4^k \Delta^{(2)}_\ell \cos\left(\frac{2\uppi \ell}{N}j\right),
\end{equation}
which yields
\begin{equation}
\label{eq:smallestEigen}
    \lambda^{(k)}_0 = m_0^2 - 4^k \Delta^{(2)}_0
            -2\sum\limits_{\ell=1}^{2\dbIndex-1} 4^k \Delta^{(2)}_\ell =m^2_0,
\end{equation}
where the second identity follows from properties of the second-order derivative overlaps~\cite[Appendix~A]{SB16}.
By Eqs.~\eqref{eq:fixdedcoupMtxForm},~\eqref{eq:smallestEigen} and~\cref{prop:derivativeMtx}, $m^2_0$ is the smallest eigenvalue of the coupling matrix and therefore~$m_0$ is the smallest eigenvalue of the ground-state ICM in both fixed- and multi-scale wavelet basis.
\end{proof}

\begin{proposition}[Bound on determinant of near-identity matrices~\cite{BOS15}]
\label{prop:nearIdendityDetBound}
Let $N\in \integers^+$, $\varepsilon \in [0,1)$ and $\upbm{A}=\mathds1-\upbm{E} \in \reals^{N \times N}$ such that $\abs*{E_{ij}}\leq \varepsilon$ for all 
$i,j \in \{0,\ldots,N-1\}$.
If~$N\varepsilon\leq 1$, then
\begin{equation}
1-N\varepsilon \leq \det(\upbm{A})\leq  (1-N\varepsilon)^{-1}.
\end{equation}
\end{proposition}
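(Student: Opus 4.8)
The plan is to control $\det(\upbm{A})$ through its logarithm, writing $\log\det(\upbm{A}) = \operatorname{tr}\log(\id - \upbm{E})$ and expanding the matrix logarithm as a Mercator series. Throughout set $\rho := N\varepsilon$. If $\rho = 1$ the upper bound $(1-N\varepsilon)^{-1}$ is $+\infty$ and holds trivially, while the lower bound $\det(\upbm{A}) \ge 0$ follows by a limiting argument: the map $t \mapsto \det(\id - t\upbm{E})$ is a polynomial in $t$, hence continuous, and the case $\rho < 1$ applied to $t\upbm{E}$ gives $\det(\id - t\upbm{E}) \ge 1 - tN\varepsilon > 0$ for all $t \in [0,1)$, so letting $t \to 1^-$ yields $\det(\upbm{A}) \ge 1 - N\varepsilon = 0$. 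It therefore suffices to treat the generic case $\rho < 1$.

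First I would record that the spectral radius of $\upbm{E}$ is at most $\rho$: by the Gershgorin circle theorem every eigenvalue $\mu$ of $\upbm{E}$ satisfies $|\mu| \le |E_{ii}| + \sum_{j\ne i}|E_{ij}| \le N\varepsilon = \rho$ for some $i$. Consequently every eigenvalue of $\id - \upbm{E}$ lies within distance $\rho < 1$ of $1$, so it avoids the branch cut of the principal logarithm and the origin; hence $\log(\id - \upbm{E})$ is well defined, $\exp(\log(\id - \upbm{E})) = \id - \upbm{E}$, and the standard identity $\det(\mathrm{e}^M) = \mathrm{e}^{\operatorname{tr} M}$ gives $\det(\upbm{A}) = \exp\bigl(\operatorname{tr}\log(\id - \upbm{E})\bigr)$. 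Moreover, since $\rho(\upbm{E}) < 1$ the series $\log(\id - \upbm{E}) = -\sum_{k\ge 1} \upbm{E}^k/k$ converges, and therefore $\operatorname{tr}\log(\id - \upbm{E}) = -\sum_{k\ge 1} \operatorname{tr}(\upbm{E}^k)/k$.

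The key estimate is a walk-counting bound on $\operatorname{tr}(\upbm{E}^k)$. Writing $(\upbm{E}^k)_{ii} = \sum_{j_1,\dots,j_{k-1}} E_{ij_1}E_{j_1j_2}\cdots E_{j_{k-1}i}$, the inner sum has $N^{k-1}$ terms, each a product of $k$ entries of $\upbm{E}$ of modulus at most $\varepsilon$; summing over the $N$ values of $i$ gives $|\operatorname{tr}(\upbm{E}^k)| \le N\cdot N^{k-1}\varepsilon^k = \rho^k$. Feeding this into the series yields $\bigl|\operatorname{tr}\log(\id - \upbm{E})\bigr| \le \sum_{k\ge 1} \rho^k/k = -\log(1-\rho)$, i.e.\ $\log(1-\rho) \le \log\det(\upbm{A}) \le -\log(1-\rho)$. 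Exponentiating gives exactly $1 - N\varepsilon \le \det(\upbm{A}) \le (1 - N\varepsilon)^{-1}$.

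I expect the only genuinely delicate point to be the justification that $\det(\upbm{A}) = \exp(\operatorname{tr}\log(\id - \upbm{E}))$ with the convergent Mercator series --- that is, pinning down that the principal matrix logarithm is legitimate here, which is precisely where the hypothesis $N\varepsilon \le 1$ (via the Gershgorin bound $\rho(\upbm{E}) \le N\varepsilon$) enters; everything else is elementary counting and a geometric-series summation. The boundary case $N\varepsilon = 1$ is dealt with separately by the continuity argument above.
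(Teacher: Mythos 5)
Your proof is correct. Note that the paper itself does not prove this proposition --- it is imported from the cited reference~\cite{BOS15} --- and your argument (write $\log\det(\id-\upbm{E})=-\sum_{k\ge1}\operatorname{tr}(\upbm{E}^k)/k$, bound $\abs{\operatorname{tr}(\upbm{E}^k)}\le (N\varepsilon)^k$ by counting the $N^k$ products of entries each of modulus at most $\varepsilon$, and sum the series to $-\log(1-N\varepsilon)$) is essentially the standard proof given there. The only point worth making explicit is that $\operatorname{tr}\log(\id-\upbm{E})$ is real because $\upbm{E}$ is real (so each $\operatorname{tr}(\upbm{E}^k)$ is real), which is what licenses exponentiating the two-sided bound; your Gershgorin step guaranteeing convergence of the Mercator series and your continuity treatment of the boundary case $N\varepsilon=1$ are both sound.
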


\begin{proposition}
\label{prop:sum_integral_relation}
Let $J\in 2\integers^+,\, \sigma \in \reals^+$ and $\delta\leq\min(1/2,\sigma)$.
Define~$r:=\delta/(\sigma\sqrt{2})$ and
\begin{equation}
\label{eq:normalizations}
    \mathcal{N}^2:=\int_\reals \dd{x}\e^{-\frac{x^2}{2\sigma^2}}=
    \sigma\sqrt{2\uppi}, \quad
    \tilde{\mathcal{N}}^2:= \delta^2\sum_{j=-J/2}^{J/2-1}\e^{-j^2r^2}.
\end{equation}
Then
\begin{equation}
\label{eq:sum_integral_relation}
    \dfrac{1}{\mathcal{N}\tilde{\mathcal{N}}}
    \sum_{j=-J/2}^{J/2-1}\delta\,\e^{-j^2 r^2}
    \geq \frac1{\mathcal{N}^2}
    \int_{-J\delta/2}^{J\delta/2}\dd{x}\e^{-\frac{x^2}{2\sigma^2}}.
\end{equation}
\end{proposition}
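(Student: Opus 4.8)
The plan is to compare the sum with the integral by recognizing both as instances of a general Gaussian sum-vs-integral inequality, and then reduce everything to showing that a unit-spacing Gaussian sum dominates the corresponding integral. First I would substitute $j\mapsto j r$ is not quite the right move; instead, I would rescale the variable in the integral on the right-hand side. Writing $x = \delta y$ converts $\int_{-J\delta/2}^{J\delta/2}\dd{x}\,\e^{-x^2/(2\sigma^2)}$ into $\delta\int_{-J/2}^{J/2}\dd{y}\,\e^{-y^2 r^2}$ with $r^2 = \delta^2/(2\sigma^2)$. Similarly the sum on the left is $\delta\sum_{j=-J/2}^{J/2-1}\e^{-j^2 r^2}$. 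Cancelling a common factor of $\delta/\mathcal{N}$ from both sides and using $\tilde{\mathcal N} = \delta\big(\sum_j \e^{-j^2 r^2}\big)^{1/2}$, the claimed inequality~\eqref{eq:sum_integral_relation} becomes equivalent to
\begin{equation}
\Big(\sum_{j=-J/2}^{J/2-1}\e^{-j^2 r^2}\Big)^{1/2}
\;\geq\;
\frac{1}{\mathcal N}\int_{-J/2}^{J/2}\dd{y}\,\e^{-y^2 r^2},
\end{equation}
and after squaring (both sides nonnegative) it suffices to show the sum's square is at least the square of the integral divided by $\mathcal N^2 = \sigma\sqrt{2\uppi}$.

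The key observation I would exploit is that a Gaussian is log-concave, hence unimodal, and its sum over a lattice can be bounded below by a shifted integral via a midpoint-type argument. Concretely, for a unimodal function $\phi$ one has $\sum_{j} \phi(j) \geq \int \phi(y)\,\dd{y}$ when the sum runs over a symmetric enough range, because each term $\phi(j)$ dominates $\int_{j-1/2}^{j+1/2}\phi(y)\,\dd{y}$ on the decreasing side (and on the increasing side one pairs $j$ with $j$ shifted appropriately, or uses that we sum from $-J/2$ to $J/2-1$, which is exactly the right asymmetric window to capture the interval $[-J/2, J/2]$). The condition $\delta \leq \min(1/2,\sigma)$ enters here: it guarantees $r^2 = \delta^2/(2\sigma^2) \leq 1/2$ is small, so $\e^{-y^2r^2}$ varies slowly over unit steps, which is what makes the comparison between $\sum_{j=-J/2}^{J/2-1}\e^{-j^2r^2}$ and $\int_{-J/2}^{J/2}\e^{-y^2r^2}\dd{y}$ tight in the favorable direction. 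I would make this rigorous by writing
\begin{equation}
\int_{-J/2}^{J/2}\e^{-y^2 r^2}\,\dd{y}
= \sum_{j=-J/2}^{J/2-1}\int_{j}^{j+1}\e^{-y^2 r^2}\,\dd{y},
\end{equation}
and bounding each $\int_j^{j+1}\e^{-y^2r^2}\dd{y}$ against $\e^{-j^2r^2}$ or a nearby lattice value, handling the terms with $j\geq 0$ and $j< 0$ separately and using symmetry $\e^{-y^2r^2}=\e^{-(-y)^2r^2}$.

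The main obstacle I anticipate is the bookkeeping at the boundary of the summation window and near the peak at $y=0$: on the increasing branch $y\in[j,j+1]$ with $j<0$ we have $\e^{-y^2r^2}\geq \e^{-j^2 r^2}$ only on part of the interval, so a naive term-by-term bound fails there, and one must instead compare $\int_j^{j+1}$ with the value at the \emph{right} endpoint or exploit the extra term coming from the asymmetric range $\{-J/2,\dots,J/2-1\}$. I expect the cleanest route is to split the integral at $0$, use monotone comparison on each half (decreasing for $y>0$, increasing for $y<0$), and verify that the off-by-one shift in the summation index precisely supplies the needed endpoint term so that $\sum_{j=-J/2}^{J/2-1}\e^{-j^2r^2}\geq \int_{-J/2}^{J/2}\e^{-y^2r^2}\dd{y}$. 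Once that inequality is in hand, squaring it is too weak on its own; rather, I would observe that since every term of the sum is at most $1$ and the integral over all of $\reals$ equals $\sqrt{\uppi}/r = \mathcal N/\delta \cdot (\text{const})$, one gets $\sum_j \e^{-j^2r^2} \geq \big(\int_{-J/2}^{J/2}\e^{-y^2r^2}\dd{y}\big)^2 / \int_{\reals}\e^{-y^2 r^2}\dd{y}$ by bounding the sum below by the full-line sum restricted suitably — but the simplest honest argument is: $\sum_j \e^{-j^2r^2}\cdot \mathcal N^2 \geq \big(\int_{-J/2}^{J/2}\e^{-y^2r^2}\,\dd y\big)^2$ follows from $\sum_j\e^{-j^2r^2}\geq \int_{\reals}\e^{-y^2r^2}\dd y = \mathcal N/\delta \cdot$, combined with $\int_{-J/2}^{J/2}\e^{-y^2r^2}\dd y \leq \int_\reals \e^{-y^2r^2}\dd y$, and the normalization identities~\eqref{eq:normalizations}; I would reconcile the constants carefully at the end, as that arithmetic is where sign and factor errors are most likely to creep in.
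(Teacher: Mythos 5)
Your overall strategy---eliminate $\tilde{\mathcal{N}}$ via $\tilde{\mathcal{N}}=\delta\bigl(\sum_j\e^{-j^2r^2}\bigr)^{1/2}$ and then compare the lattice sum with the integral---is a legitimate alternative to the paper's route, but your reduction contains an algebra error that turns the target into a false statement. Write $S:=\sum_{j=-J/2}^{J/2-1}\e^{-j^2r^2}$ and $I:=\int_{-J/2}^{J/2}\e^{-y^2r^2}\dd{y}$. The left side of \cref{eq:sum_integral_relation} equals $\delta S/(\mathcal{N}\tilde{\mathcal{N}})=\sqrt{S}/\mathcal{N}$ and, after the substitution $x=\delta y$, the right side equals $\delta I/\mathcal{N}^2$; so the correct equivalent is $\sqrt{S}\geq\delta I/\mathcal{N}$, whereas you wrote $\sqrt{S}\geq I/\mathcal{N}$, dropping the factor $\delta$. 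That stronger inequality is false: letting $J\to\infty$, Poisson summation gives $S\to\sum_{j\in\integers}\e^{-j^2r^2}=(1+O(\e^{-2\uppi^2}))\int_\reals\e^{-y^2r^2}\dd{y}$ with $\int_\reals\e^{-y^2r^2}\dd{y}=\mathcal{N}^2/\delta$, and your inequality then forces $\delta\geq 1$, contradicting $\delta\leq 1/2$. The two intermediate claims you invoke are also false as stated. First, $S\geq I$ fails already for $J=2$ and small $r$: there $S=1+\e^{-r^2}=2-r^2+O(r^4)$ while $I=2-\tfrac{2}{3}r^2+O(r^4)$; the endpoint comparison you sketch only yields $I\leq S+1-\e^{-J^2r^2/4}$ because the peak term gets counted on both the increasing and decreasing branches. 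Second, $S\geq\int_\reals\e^{-y^2r^2}\dd{y}$ fails for the truncated sum whenever $J$ is small compared with $1/r$. Even granting both, your final chain produces $S\mathcal{N}^2\geq\delta I^2$ rather than $S\mathcal{N}^2\geq I^2$---short by exactly the factor lost in the reduction.

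The structural point being missed is that the sum-versus-integral comparison cannot be made loss-free; it only holds with a multiplicative loss, and the hypothesis $\delta\leq\min(1/2,\sigma)$ exists precisely to absorb that loss. The paper handles this by splitting the claim into an upper bound $\tilde{\mathcal{N}}\leq\sqrt{p\delta}\,\mathcal{N}$, proved with the Poisson summation formula, and a lower bound $\sqrt{\delta/p}\,S\geq\int_{-J\delta/2}^{J\delta/2}\e^{-x^2/(2\sigma^2)}\dd{x}$, proved using convexity of the Gaussian for $\abs{x}\geq\sigma$ together with $\delta\leq 1/2$; multiplying the two gives \cref{eq:sum_integral_relation}. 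Your route is repairable and would in fact be shorter: after the correct reduction it suffices to prove $S\geq\delta I$, and your own endpoint comparison gives $I\leq S+1$, whence $\delta I\leq\delta S+\delta\leq S/2+S/2=S$ because $S\geq\e^{0}=1\geq 2\delta$. As written, however, with a false "equivalent" statement and two false lemmas, the proposal does not constitute a proof.
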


\begin{proof}
Let
\begin{equation}
    p:=1+\e^{\uppi-2\uppi^2}\left(\frac{\uppi^{1/4}}{\Gamma(3/4)}-1\right),
\end{equation}
where $\Gamma$ is the Gamma function.
We first prove that
\begin{align}
    &
    \label{eq:NtildeNInequality}
    \tilde{\mathcal{N}}\leq \sqrt{p\delta}\mathcal{N},\\
    &
    \label{eq:sumIntegralInequality}
    \sqrt{\delta/p}
    \sum_{j=-J/2}^{J/2-1}\e^{-j^2r^2}
    \geq \int_{-J\delta/2}^{J\delta/2}\dd{x}\e^{-\frac{x^2}{2\sigma^2}}.
\end{align}
\cref{prop:sum_integral_relation} follows from Eqs.~\eqref{eq:NtildeNInequality} and~\eqref{eq:sumIntegralInequality};
\cref{eq:NtildeNInequality} implies that $\sqrt{p\delta}/(\tilde{\mathcal{N}}\mathcal{N}) \geq 1/\mathcal{N}^2$ and multiplying each side of this inequality by each side of~\cref{eq:sumIntegralInequality} yields~\cref{eq:sum_integral_relation}.
We now prove~\cref{eq:NtildeNInequality}.
Note that
\begin{align}
\label{eq:PoissonSummation}
    \sum_{j=-J/2}^{J/2-1}\e^{-j^2r^2}
    \leq \sum_{j\in\integers}\e^{-j^2r^2}
    = \frac{\sqrt{\uppi}}{r}
    \sum_{k\in\integers}\e^{-k^{2}\uppi^2/r^2},
\end{align}
where the equality follows from the Poisson summation formula~\cite[p.~385]{AD19}.
We now find an upper bound for the right-hand side of~\cref{eq:PoissonSummation}.
Note that~$r^2\leq1/2$ so $\e^{-k^{2}\uppi^2/r^2} \leq \e^{-2k^2\uppi^2}$,
also~$\e^{-2k^2\uppi^2}
    \leq \e^{\uppi-2\uppi^2}\e^{-\uppi k^2}$
for each $k\in\integers/\{0\}$, therefore,
\begin{align}
\label{eq:sumExansionUingThetaFunction}
    \sum_{k\in\integers}\e^{-k^2\uppi^2/r^2}
    \leq \sum_{k\in\integers}\e^{-2k^2\uppi^2}
    \leq 1+\e^{\uppi-2\uppi^2}
    \sum_{k\in\integers/\{0\}}\e^{-\uppi k^2}
    =1+\e^{\uppi-2\uppi^2}
    \left(\frac{\uppi^{1/4}}{\Gamma(3/4)}-1\right)
    =p,
\end{align}
where we used~$\sum_{k\in\integers}\e^{-\uppi k^2}
    =\uppi^{1/4}/\Gamma(3/4)$~\cite[p.~103]{Ber91}.
By Eqs.~\eqref{eq:normalizations}, \eqref{eq:PoissonSummation} and~\eqref{eq:sumExansionUingThetaFunction}
\begin{equation}
    \tilde{\mathcal{N}}^2=
    \delta^2\sum_{j=-J/2}^{J/2-1}\e^{-j^2r^2}
    \leq p\delta\mathcal{N}^2,
\end{equation}
which yields~\cref{eq:NtildeNInequality}.
We now prove~\cref{eq:sumIntegralInequality}.
Note that
\begin{align}
\label{eq:sumExpansion}
\sqrt{\delta/p}\sum_{j=-J/2}^{J/2-1}\e^{-j^2r^2}
= \sqrt{\delta/p}\left(1 + \e^{-J^2r^2/4} + 2\sum_{j=1}^{J/2-1}\e^{-j^2r^2}\right),
\end{align}
and, as $\e^{-x^2/(2\sigma^2)}$ is a convex function for $\abs*{x}\geq \sigma$,
\begin{align}
\label{eq:integralExpansion}
\int_{-J\delta/2}^{J\delta/2}\dd{x}\e^{-\frac{x^2}{2\sigma^2}}
&= 2\int_{0}^{\delta}\dd{x}\e^{-\frac{x^2}{2\sigma^2}}
+ 2\sum_{j=1}^{J/2-1}\int_{j\delta}^{(j+1)\delta}\dd{x}\e^{-\frac{x^2}{2\sigma^2}}\nonumber\\
&\leq 2\delta + 2\delta \sum_{j=1}^{J/2-1}\e^{-j^2r^2}-2(1/2)\delta\left((1/\sqrt{\e})-\e^{-J^2r^2/4}\right)\nonumber\\
&=\delta\left((2-1/\sqrt{\e})
+\e^{-J^2r^2/4}
+2\sum_{j=1}^{J/2-1}\e^{-j^2r^2}\right).
\end{align}
Note that if $\delta\leq (2-1/\sqrt{\e})^{-2}/p=0.514998$
then $\delta^2(2-1/\sqrt{\e})^2\leq \delta/p$ and therefore,
\begin{align}
\label{eq:deltaRestriction}
    \delta(2-1/\sqrt{\e})\leq\sqrt{\delta/p}.
\end{align}
This inequality implies that the right-hand side of~\cref{eq:sumExpansion} is greater than or equal to the right-hand side of~\cref{eq:integralExpansion}.
Therefore, Eqs.~\eqref{eq:sumExpansion}, \eqref{eq:integralExpansion}, \eqref{eq:deltaRestriction} and $\delta\leq 1/2$ yields \cref{eq:sumIntegralInequality}.
\end{proof}


\begin{proposition}
\label{prop:dmax_dmin_bound}
Let~$d_{\max}\in\reals^+$ and~$d_{\min}\in\reals^+$ be respectively the largest and the smallest diagonal elements of the diagonal matrix~$\upbm{D}$ in either the spectral or the UDU decomposition of a symmetric positive-definite matrix~$\upbm{A}\in\reals^{N\times N}$, with~$N \in \integers^+$.
Also let~$\kappa \in \reals^+$ be the condition number of~$\upbm{A}$,
quantified as the ratio of the largest to the smallest eigenvalues of~$\upbm{A}$.
Then
\begin{equation}
    \frac{d_{\max}}{d_{\min}} \in \order{\kappa}.
\end{equation}
\end{proposition}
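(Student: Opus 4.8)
The plan is to dispose of the spectral case immediately and then do the real work for the UDU decomposition. If $\upbm{D}$ is the diagonal matrix in the spectral decomposition of $\upbm{A}$, its diagonal entries are precisely the eigenvalues of $\upbm{A}$, so $d_{\max}=\lambda_{\max}(\upbm{A})$, $d_{\min}=\lambda_{\min}(\upbm{A})$, and $d_{\max}/d_{\min}=\kappa$, which is trivially in $\order{\kappa}$. So from here on I would assume $\upbm{A}=\upbm{U}\upbm{D}\upbm{U}^\T$ is the UDU decomposition, $\upbm{U}$ upper unit-triangular.

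The key structural observation is that each pivot $d_i$ is a diagonal entry of an iterated Schur complement of $\upbm{A}$: unwinding one step of the block elimination used in \cref{subsubsec:UDUDecomp} exhibits $d_i$ as a diagonal entry of a symmetric positive-definite matrix $\upbm{A}^{(i)}$ obtained from $\upbm{A}$ by successively taking Schur complements with respect to trailing principal blocks, and Schur complements of an SPD matrix with respect to a principal block are again SPD. I would then prove, by induction on the elimination steps, that the spectrum of every such $\upbm{A}^{(i)}$ lies in $[\lambda_{\min}(\upbm{A}),\lambda_{\max}(\upbm{A})]$, using the following one-step facts. If an SPD matrix is partitioned into a $2\times 2$ block form with diagonal blocks $\upbm{E},\upbm{B}$ and off-diagonal block $\upbm{C}$, then the Schur complement $\upbm{E}-\upbm{C}^\T\upbm{B}^{-1}\upbm{C}\preceq\upbm{E}$, whence $\lambda_{\max}$ of the Schur complement is at most $\lambda_{\max}(\upbm{E})$, which is at most $\lambda_{\max}$ of the parent matrix by the Cauchy interlacing theorem~\cite{HJ12}; dually, the inverse of the Schur complement is a principal submatrix of the inverse of the parent matrix, so interlacing on the inverses gives $\lambda_{\min}$ of the Schur complement at least $\lambda_{\min}$ of the parent.

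With the spectral containment established, I would finish by noting that a diagonal entry of an SPD matrix always lies between its extreme eigenvalues (Rayleigh quotient at a coordinate vector), so for every pivot $\lambda_{\min}(\upbm{A})\le\lambda_{\min}(\upbm{A}^{(i)})\le d_i\le\lambda_{\max}(\upbm{A}^{(i)})\le\lambda_{\max}(\upbm{A})$. Taking the extremes over $i$ yields $d_{\max}\le\lambda_{\max}(\upbm{A})$ and $d_{\min}\ge\lambda_{\min}(\upbm{A})$, hence $d_{\max}/d_{\min}\le\kappa$, which is the claimed bound (indeed with implied constant one).

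The main obstacle is not mathematical depth but bookkeeping: making precise that the block-UDU pivots of \cref{subsubsec:UDUDecomp} are genuinely diagonal entries of iterated Schur complements, and ensuring the quantitative Schur-complement estimates are invoked only for \emph{principal} submatrices so that Cauchy interlacing applies. I would deliberately avoid the tempting shortcut of writing $d_i$ as a ratio of leading principal minors and bounding via interlacing of the submatrices' eigenvalues, since that route only delivers the exponentially weak $d_i\le\lambda_{\max}\kappa^{\,i-1}$; it is the Schur-complement argument that gives the tight linear dependence on $\kappa$.
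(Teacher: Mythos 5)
Your proof is correct, but it reaches the bound by a genuinely different route than the paper. The paper's argument is a two-line Rayleigh-quotient computation applied directly to $\upbm{A}$ and $\upbm{A}^{-1}$: writing $\upbm{A}=\upbm{U}\upbm{D}\upbm{U}^\T$ with $\upbm{U}$ upper unit-triangular gives $A_{ii}=d_i+\sum_{j>i}d_jU_{ij}^2\geq d_i$, so $d_{\max}\leq\max_i\upbm{e}_i^\T\upbm{A}\upbm{e}_i\leq\lambda_{\max}(\upbm{A})$; the same identity applied to $\upbm{A}^{-1}=(\upbm{U}^{-1})^\T\upbm{D}^{-1}\upbm{U}^{-1}$ (using that $\upbm{U}^{-1}$ is again unit-triangular) gives $d_{\min}^{-1}\leq\lambda_{\max}(\upbm{A}^{-1})=\lambda_{\min}^{-1}(\upbm{A})$, and the two inequalities combine to $d_{\max}/d_{\min}\leq\kappa$ with no induction. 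Your route instead identifies each pivot $d_i$ as a diagonal entry of an iterated trailing Schur complement and propagates the spectral inclusion $[\lambda_{\min}(\upbm{A}),\lambda_{\max}(\upbm{A})]$ through the elimination via $\upbm{E}-\upbm{C}^\T\upbm{B}^{-1}\upbm{C}\preceq\upbm{E}$, Cauchy interlacing, and the block-inverse identity for the Schur complement; all of these facts are correct and the induction closes, yielding the same constant-one bound $d_{\max}/d_{\min}\leq\kappa$. What your approach buys is the stronger pointwise statement $\lambda_{\min}(\upbm{A})\leq d_i\leq\lambda_{\max}(\upbm{A})$ for every pivot (the paper only needs, and only directly establishes, the two extremal inequalities, though its argument also yields the pointwise version) and a picture that generalizes to block pivots; what it costs is exactly the bookkeeping you flag, all of which the paper's algebraic identity for $(\upbm{U}\upbm{D}\upbm{U}^\T)_{ii}$ sidesteps. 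Your instinct to avoid the ratio-of-minors shortcut is sound, and the spectral case is handled identically in both arguments.
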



\begin{proof}
If~$\upbm{D}$ is the diagonal matrix in the spectral decomposition of~$\upbm{A}$,
then~$d_{\max}$ and~$d_{\max}$ are respectively the largest and the smallest eigenvalue of~$\upbm{A}$.
Therefore, by definition,
\begin{align}
\label{eq:dmax_dmin_spectral}
     \frac{d_{\max}}{d_{\min}}
     = \kappa.
\end{align}
If~$\upbm{D}$ is the diagonal matrix in the UDU decomposition of~$\upbm{A}$, then
\begin{align}
\label{eq:dmax_dmin_UDU}
    \frac{d_{\max}}{d_{\min}}\leq \kappa.
\end{align}
The proof of this inequality is as follows.
Let $\lambda_{\max}(\upbm{A})$ and $\lambda_{\min}(\upbm{A})$ be respectively the largest and the smallest eigenvalues of~$\upbm{A}$ and let~$\upbm{e}_i$ be the~$i^\text{th}$ column of the $N$-by-$N$ identity matrix.
Then
\begin{align}
    &
    \label{eq:maxEigenA}
    \lambda_{\max}(\upbm{A})
    =\max_{\norm{\bm{x}}=1}\bm{x}^\T\upbm{A}\bm{x}
    \geq \max_i \upbm{e}^\T_i \upbm{A}\upbm{e}_i
    =\max_i \left(\upbm{UDU}^\T\right)_{ii}
    =\max_i \left(d_i +\sum_{j>i} d_j U^2_{ij}\right)
    \geq \max_i d_i = d_{\max},\\
    &
    \label{eq:minEigenA}
    \lambda^{-1}_{\min}(\upbm{A})
    =\lambda_{\max}\left(\upbm{A}^{-1}\right)
    \geq \max_i \left(\upbm{UDU^\T}\right)^{-1}_{ii}
    =\max_i \left(d^{-1}_i +\sum_{j>i} d^{-1}_j \left(\upbm{U}^{-1}\right)^2_{ij}\right)
    \geq \max d^{-1}_i
    =d^{-1}_{\min}.
\end{align}
The second equality in~\cref{eq:minEigenA} follows from the fact that the inverse of an upper unit-triangular matrix is an upper unit-triangular matrix~\cite[p.~220]{HJ12}.
The last inequality in Eqs.~\eqref{eq:maxEigenA} and~\eqref{eq:minEigenA} holds because $d_i>0$ for the positive-definite matrix~$\upbm{A}$.
\cref{prop:dmax_dmin_bound} follows from Eqs.~\eqref{eq:dmax_dmin_spectral} and~\eqref{eq:dmax_dmin_UDU}.
\end{proof}


\begin{proposition}
\label{prop:condition_bound}
Let~$\kappa\in \reals^+$ be the condition number of the ICM for the ground state of a $N$-mode massive real scalar bosonic free QFT with mass $m_0\in\reals^+$ in a fixed- or multi-scale wavelet basis.
Then
\begin{equation}
    \kappa \in \Theta (N/m_0).
\end{equation}
\end{proposition}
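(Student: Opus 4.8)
The plan is to compute the extremal eigenvalues of the ground-state ICM and show their ratio is $\Theta(N/m_0)$. Since the ICM in a multi-scale basis is unitarily equivalent to the ICM in a fixed-scale basis (they are related by the orthogonal wavelet transform, as noted around~\cref{eq:multiscale_ICM}), it suffices to work with the fixed-scale ICM $\upbm{A}^{(k)}_\sS=\sqrt{\upbm{K}^{(k)}_\sS}$~\eqref{eq:fixedscale_groundstate}. The eigenvalues of $\upbm{A}^{(k)}_\sS$ are the square roots of the eigenvalues $\lambda^{(k)}_j$ of the circulant coupling matrix $\upbm{K}^{(k)}_\sS$, which by the DHT diagonalization (cf.~\cref{alg:invCovEigens}) are
\begin{equation}
\lambda^{(k)}_j = m_0^2 - N^2\Delta^{(2)}_0 - 2\sum_{\ell=1}^{2\dbIndex-1} N^2\Delta^{(2)}_\ell \cos\!\left(\tfrac{2\uppi\ell}{N}j\right),
\quad N=2^k.
\end{equation}
Hence $\kappa = \sqrt{\lambda_{\max}/\lambda_{\min}}$ where $\lambda_{\max},\lambda_{\min}$ are the largest and smallest of the $\lambda^{(k)}_j$.

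First I would pin down $\lambda_{\min}$. By~\cref{prop:smallest_eigen_of_ICM}, the smallest eigenvalue of the ICM is exactly $m_0$, so $\lambda_{\min}=m_0^2$; this is the $j=0$ mode, using the derivative-overlap identity $\Delta^{(2)}_0 + 2\sum_{\ell\geq 1}\Delta^{(2)}_\ell = 0$ from~\cite[Appendix~A]{SB16}. Next I would bound $\lambda_{\max}$. Writing $\upbm{K}^{(k)}_\sS = m_0^2\mathds{1} + N^2\bm{\Gamma}$ as in~\cref{eq:fixdedcoupMtxForm}, with $\bm{\Gamma}$ the (positive-semidefinite, by~\cref{prop:derivativeMtx}) matrix of derivative overlaps of the scaling functions at unit scale, we get $\lambda_{\max} = m_0^2 + N^2 \gamma_{\max}$ where $\gamma_{\max}$ is the largest eigenvalue of $\bm{\Gamma}$. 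The point is that $\gamma_{\max}$ is a fixed positive constant depending only on the wavelet index $\dbIndex$ (it is $\max_j(-\Delta^{(2)}_0 - 2\sum_\ell \Delta^{(2)}_\ell\cos(2\uppi\ell j/N))$, which is bounded above and below by $\dbIndex$-dependent constants independent of $N$ for $N$ large enough, since the cosine sum ranges over a dense set and the symbol $-\Delta^{(2)}_0-2\sum_\ell\Delta^{(2)}_\ell\cos(\ell\theta)$ is a fixed trigonometric polynomial with a strictly positive maximum — this maximum is positive because $\bm\Gamma$ is not the zero matrix). Therefore $\lambda_{\max} = \Theta(N^2)$ (for fixed $m_0$, with the constants depending on $\dbIndex$, and more precisely $\lambda_{\max}=m_0^2+\Theta(N^2)$, so that for any $m_0$ one has $\lambda_{\max}/\lambda_{\min}=\Theta(N^2/m_0^2)$ in the regime $N^2\gg m_0^2$, and at worst $\Theta(1)$ otherwise; I would state the result for the physically relevant regime or absorb this into the $\Theta$).

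Combining the two, $\kappa = \sqrt{\lambda_{\max}/\lambda_{\min}} = \sqrt{\Theta(N^2/m_0^2)} = \Theta(N/m_0)$, which is the claim. I expect the main obstacle to be the $N$-uniform two-sided bound on $\gamma_{\max}$: one must verify that the maximum of the trigonometric symbol over the discrete frequency grid $\{2\uppi j/N : j=0,\dots,N-1\}$ stays bounded away from $0$ and $\infty$ as $N\to\infty$. The upper bound is trivial (finitely many bounded coefficients $\Delta^{(2)}_\ell$). The lower bound requires noting that the symbol's continuous maximum over $[0,2\uppi)$ is a fixed positive number (positivity follows since $\bm\Gamma\neq 0$, i.e. not all $\Delta^{(2)}_\ell$ with $\ell\ne 0$ vanish, which holds for Daubechies wavelets), and that for $N$ large the grid samples the symbol finely enough to recover a constant fraction of that maximum; alternatively one can invoke that the derivative-overlap symbol is (up to scaling) the squared magnitude of a nonzero trigonometric polynomial whose essential supremum is attained and positive. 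Care is also needed to handle small $m_0$ correctly in the $\Theta$ statement, which I would address by restricting to the regime where $N/m_0$ dominates or by phrasing the bound as $\kappa\in\Theta(N/m_0)$ understood asymptotically in $N$.
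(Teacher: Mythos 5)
Your proposal is correct and follows the same skeleton as the paper's proof: reduce to the fixed-scale ICM by unitary equivalence under the wavelet transform, write $\kappa=\sqrt{\lambda_{\max}(\upbm{K}_\sS)/\lambda_{\min}(\upbm{K}_\sS)}$, identify $\lambda_{\min}=m_0^2$ via the derivative-overlap identity, and show $\lambda_{\max}\in\Theta(m_0^2+N^2)$. The one step you handle differently is the two-sided bound on $\lambda_{\max}$. The paper gets the upper bound from the Gershgorin circle theorem combined with $\abs{K_{ij}}^2\leq K_{ii}K_{jj}$ and the bandwidth $2\dbIndex-2$, and gets the lower bound from the Rayleigh quotient at a standard basis vector, $\lambda_{\max}\geq \upbm{e}_i^\T\upbm{K}_\sS\upbm{e}_i=m_0^2+N^2\Gamma_{00}$ with $\Gamma_{00}=\int(s'(x))^2\dd{x}>0$ a fixed $\dbIndex$-dependent constant. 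You instead work with the explicit circulant eigenvalue formula and worry, correctly, about whether the maximum of the trigonometric symbol over the discrete grid $\{2\uppi j/N\}$ stays bounded below uniformly in $N$. That obstacle dissolves if you note that for a circulant matrix the largest eigenvalue is at least the average of the eigenvalues, $\gamma_{\max}\geq\tfrac1N\operatorname{tr}\bm\Gamma=\Gamma_{00}>0$, which is the same fact the paper extracts from the Rayleigh quotient and holds for every $N$ with no density-of-the-grid argument. Your closing caveat is also well taken: with $\lambda_{\max}=m_0^2+\Theta(N^2)$ and $\lambda_{\min}=m_0^2$ one literally gets $\kappa=\Theta\bigl(\sqrt{1+N^2/m_0^2}\bigr)$, which equals $\Theta(N/m_0)$ only when $N/m_0\in\Omega(1)$ (since $\kappa\geq1$ always); the paper states the clean $\Theta(N/m_0)$ form without flagging this regime restriction, so your extra care here is a genuine (minor) improvement rather than a gap.
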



\begin{proof}
Note that the ICM $\upbm{A}$ in a multi-scale wavelet has the same condition number as the ICM $\upbm{A}_\sS$ in a fixed scale wavelet basis because $\upbm{A}$ is obtained from $\upbm{A}_\sS$ by a wavelet transform.
We therefore consider the condition number of $\upbm{A}_\sS$.
Now using~$\upbm{A}_\sS=\sqrt{\upbm{K}_\sS}$, where~$\upbm{K}_\sS$~\eqref{eq:fixedscaleK} is the coupling matrix in a fixed-scale wavelet basis, we have
\begin{equation}
\label{eq:condICM}
    \kappa=\sqrt{\frac{\lambda_{\max}(\upbm{K}_\sS)}{\lambda_{\min}(\upbm{K}_\sS)}}.
\end{equation}
We prove that
\begin{align}
    &\label{eq:minEigenK}
    \lambda_{\min}(\upbm{K}_\sS)= m^2_0,\\
    & \label{eq:maxEigenK}
    \lambda_{\max}(\upbm{K}_\sS)\in \Theta(m^2_0+N^2).
\end{align}
\cref{prop:condition_bound} follows form~Eqs.~\eqref{eq:condICM} to~\eqref{eq:maxEigenK}.
To prove~\cref{eq:maxEigenK}, we find a lower and an upper bound for the largest eigenvalue of $\upbm{K}_\sS$. 
We use the Gershgorin circle theorem to find the upper bound~\cite[p.~388]{HJ12}.
This theorem implies that
\begin{equation}
\label{eq:GreshgorinTheorem}
\lambda_{\max}(\upbm{K}_\sS)
\leq \max_i\left( K_{\sS;ii} + R_i\right),
\;\; R_i:=\sum_{j\neq i}\abs*{K_{\sS;ij}}.
\end{equation}
Note that for any positive-definite matrix~$\upbm{K}_\sS$~\cite[p.~434]{HJ12}
\begin{align}
\label{eq:SPDcondition}
    \abs*{K_{\sS;ij}}^2
    \leq K_{\sS;ii}K_{\sS;jj},
\end{align}
and by \cref{eq:fixedscaleK}
\begin{equation}
\label{eq:diagsK} 
K_{\sS;\, ii}=m^2_0+N^2 \Delta_0.
\end{equation}
Therefore,
\begin{align}
\label{eq:boundDiagsK}
    \abs*{K_{\sS;ij}}
    \leq m_0^2+N^2\Delta_0.
\end{align}
Furthermore,~$\upbm{K}_\sS$ is a banded circulant matrix with the lower and upper bandwidth~$2\dbIndex-2$.
Thus, using~\cref{eq:boundDiagsK},
\begin{align}
\label{eq:boundR}
    R_i\leq 2(2\dbIndex-2)(m_0^2+N^2 \Delta_0).
\end{align}
Eqs.~\eqref{eq:GreshgorinTheorem},~\eqref{eq:diagsK} and~\eqref{eq:boundR} yield
\begin{equation}
\label{eq:upperBoundMaxEigenK}
    \lambda_{\max}(\upbm{K}_\sS)
    \leq (4\dbIndex-3)(m_0^2+ N^2 \Delta_0).
\end{equation}
We now find a lower bound for the largest eigenvalue.
Note that
\begin{align}
\label{eq:lowerBoundMaxEigenK}
    \lambda_{\max}(\upbm{K}_\sS)
    =\max_{\norm{\upbm{x}}=1}\upbm{x}^\T\upbm{K}_\sS\upbm{x}
    \geq \max_i \upbm{e}^\T_i\upbm{K}_\sS\upbm{e}_i
    = m_0^2+N^2\Delta_0,
\end{align}
where~$\upbm{e}_i$ is the~$i^\text{th}$ column of~$\mathds{1}_{N\times N}$ and in the last equality we used \cref{eq:diagsK}.
Eqs.~\eqref{eq:upperBoundMaxEigenK} and~\eqref{eq:lowerBoundMaxEigenK} yield~\cref{eq:maxEigenK}.
\end{proof}


\begin{proposition}[exponentially decaying ICM at a fixed scale]
\label{prop:decayingICM}
Let $m_0 \in \reals^{+}$ be the free mass,
$\dbIndex \in \integers_{\geq 3}$ be the wavelet index and $r \in \integers_{\geq 0}$ be the scale index.
Let $\upbm{A}_\sS^{(r)}\in\reals^{L2^r\times L2^r}$~\eqref{eq:fixedscale_groundstate} with $L \in \integers_{\geq 2(2\dbIndex-1)}$ be the ground-state ICM at scale~$r$.
Then,
for any~$j$
\begin{equation}
\label{eq:decayingssBlocks}
    \abs{A^{(r)}_{\sS;\, 0, j}}
    \leq 4m_0\kappa^{(r)} 2^{-\abs{j}/\xi^{(r)}},
    \quad
    \xi^{(r)}:=(2\dbIndex-1)2^{r+1}/m_0,
\end{equation}
where $\kappa^{(r)}>1$ is the spectral condition number of~$\upbm{K}_\sS^{(r)}$~\eqref{eq:fixedscaleK}.
\end{proposition}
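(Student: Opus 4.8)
\emph{Proof proposal.} The plan is to reduce the claim to a statement about uniform polynomial approximation of $\sqrt{x}$ on the spectral interval of $\upbm{K}_\sS^{(r)}$ and then invoke a classical approximation estimate. Write $\upbm{K}:=\upbm{K}_\sS^{(r)}$~\eqref{eq:fixedscaleK}, so that $\upbm{A}_\sS^{(r)}=\sqrt{\upbm{K}}$~\eqref{eq:fixedscale_groundstate}. Using $\upbm{K}=m_0^2\id+4^r\bm{\Gamma}$ with $\bm{\Gamma}$ the positive-semidefinite matrix of second-order derivative overlaps (as in~\cref{prop:derivativeMtx}), $\upbm{K}$ is real, symmetric and positive-definite, so $\sqrt{\upbm{K}}$ is well defined; moreover $\upbm{K}$ is circulant, and since $\Delta_\ell^{(2)}$~\eqref{eq:derivative_overlaps} vanishes for $\abs{\ell}\ge 2\dbIndex-1$, each row of $\upbm{K}$ has its nonzero entries confined to cyclic distance at most $2\dbIndex-1$ from the diagonal. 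Finally, by~\cref{prop:smallest_eigen_of_ICM} and the definition of $\kappa^{(r)}$, the spectrum of $\upbm{K}$ lies in $[m_0^2,\ \kappa^{(r)}m_0^2]$.

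The key step is that for any polynomial $q$ of degree $d$, the matrix $q(\upbm{K})$ is circulant with its nonzero entries confined to cyclic distance at most $(2\dbIndex-1)d$ from the diagonal (since $\upbm{K}^n$ has this property). Hence, whenever the cyclic distance $\abs{j}$ obeys $\abs{j}>(2\dbIndex-1)d$, we have $[\sqrt{\upbm{K}}]_{0,j}=[\sqrt{\upbm{K}}-q(\upbm{K})]_{0,j}$, so that
\begin{equation}
\bigl|[\sqrt{\upbm{K}}]_{0,j}\bigr|
\le \norm{\sqrt{\upbm{K}}-q(\upbm{K})}
\le \max_{x\in\operatorname{spec}\upbm{K}}\bigl|\sqrt{x}-q(x)\bigr|
\le \max_{x\in[m_0^2,\ \kappa^{(r)}m_0^2]}\bigl|\sqrt{x}-q(x)\bigr|,
\end{equation}
where the middle inequality uses that $\upbm{K}$ is symmetric. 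Next I would insert a standard bound on the best uniform polynomial approximation of $\sqrt{x}$ on $[a,b]$ with $0<a<b$: for each $d$ there is a polynomial $q_d$ of degree $d$ (a rescaled truncated Chebyshev expansion) with $\max_{[a,b]}\abs{\sqrt{x}-q_d(x)}\le 2\sqrt{b}\,\varrho^{-d}$, where $\varrho=(\sqrt{b/a}+1)/(\sqrt{b/a}-1)$ is the Bernstein-ellipse parameter of $\sqrt{x}$ on $[a,b]$. Applying this with $a=m_0^2$, $b=\kappa^{(r)}m_0^2$ gives, for every $d<\abs{j}/(2\dbIndex-1)$,
\begin{equation}
\bigl|[\sqrt{\upbm{K}}]_{0,j}\bigr|\le 2m_0\sqrt{\kappa^{(r)}}\,\varrho^{-d},\qquad \varrho=\frac{\sqrt{\kappa^{(r)}}+1}{\sqrt{\kappa^{(r)}}-1}.
\end{equation}

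I would then choose $d$ maximal, $d=\lceil\abs{j}/(2\dbIndex-1)\rceil-1\ge \abs{j}/(2\dbIndex-1)-1$, so $\varrho^{-d}\le\varrho\,\varrho^{-\abs{j}/(2\dbIndex-1)}$, and use $\ln\varrho=\ln(1+2/(\sqrt{\kappa^{(r)}}-1))\ge 2/\sqrt{\kappa^{(r)}}$ to rewrite $\varrho^{-\abs{j}/(2\dbIndex-1)}\le 2^{-\abs{j}/\xi}$ with $\xi=(2\dbIndex-1)\sqrt{\kappa^{(r)}}\,(\ln 2)/2$. Bounding $\sqrt{\kappa^{(r)}}\le 2^{r+1}/m_0$ — which follows from $\kappa^{(r)}=\lambda_{\max}(\upbm{K})/m_0^2$, $\upbm{K}=m_0^2\id+4^r\bm{\Gamma}$, and a Gershgorin estimate of $\lambda_{\max}(\bm{\Gamma})$ of the kind used in the proof of~\cref{prop:condition_bound} (cf.~\eqref{eq:upperBoundMaxEigenK}) — gives $\xi\le(2\dbIndex-1)2^{r+1}/m_0=\xi^{(r)}$, while $2\varrho\le 4$ and $\sqrt{\kappa^{(r)}}\le\kappa^{(r)}$ let the prefactor be absorbed into $4m_0\kappa^{(r)}$, yielding~\eqref{eq:decayingssBlocks}. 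The few indices with $\abs{j}\le 2\dbIndex-1$ (for which no admissible $d\ge 1$ exists) are handled separately using $\norm{\sqrt{\upbm{K}}}\le\sqrt{\kappa^{(r)}}\,m_0$, which is below the stated right-hand side.

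I expect the main obstacle to be the bookkeeping needed to land exactly on the advertised $\xi^{(r)}$ and prefactor — in particular the clean bound $\sqrt{\kappa^{(r)}}\le 2^{r+1}/m_0$, which requires a sufficiently tight estimate of the largest eigenvalue of the scale-$r$ second-derivative-overlap operator, and checking that the polynomial-approximation constant is no larger than what $4m_0\kappa^{(r)}$ can absorb. A secondary point of care is that ``$\abs{j}$'' must be read as cyclic distance on the $L2^r$-point ring, which coincides with $j$ precisely in the range where the bound is informative. A sharper but more delicate alternative worth noting is that $\sqrt{\upbm{K}}$ is circulant with entries equal to the Fourier coefficients of $\sqrt{\sigma(\theta)}$, where $\sigma(\theta)=m_0^2+4^r\gamma(\theta)$ with $\gamma(\theta)\asymp\theta^2$ near $\theta=0$, so $\sqrt{\sigma}$ is analytic in a strip of half-width $\sim m_0/2^r$ and contour shifting gives exponential decay with an even smaller (hence still admissible) constant; I would nonetheless present the polynomial route since it makes the stated constants transparent.
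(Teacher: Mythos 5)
Your proposal is essentially the paper's own proof: the paper invokes the Benzi--Golub theorem \cite[Theorem~2.2]{BG99} for analytic functions of banded symmetric matrices, which is exactly the argument you derive inline (degree-$d$ polynomials of a bandwidth-$(2\dbIndex-1)$ circulant matrix have bandwidth $(2\dbIndex-1)d$, plus Bernstein-ellipse approximation of $\sqrt{x}$ on $[m_0^2,\kappa^{(r)}m_0^2]$, with your $\varrho$ equal to the paper's $\bar\chi$ and the same reliance on $\kappa^{(r)}\lesssim 4^{r+1}/m_0^2$ from \cref{prop:condition_bound}). The one quantitative caution is that Bernstein's bound carries a factor $1/(\varrho-1)\approx\sqrt{\kappa^{(r)}}/2$ that your stated estimate $2\sqrt{b}\,\varrho^{-d}$ omits --- this is precisely where the paper's prefactor $K\leq 4m_0\kappa^{(r)}$ (rather than $\propto m_0\sqrt{\kappa^{(r)}}$) comes from, and your slack $\sqrt{\kappa^{(r)}}\leq\kappa^{(r)}$ is what absorbs it.
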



\begin{proof}
We employ the Benzi-Golub theorem~\cite[Theorem~2.2]{BG99} to bound off-diagonal entries of ICM~$\upbm{A}_\sS^{(r)}$; see~\cite[p.~243]{ben16} for an alternative statement of this theorem.
Let $\alpha>1$ and $\beta>0$, with $\alpha>\beta$, be the half axes of an ellipse in the complex plane with foci in~$\pm 1$.
This ellipse is specified by the sum of its half axes $\chi:=\alpha+\beta >1$, so we denote the ellipse by ${\cal{E}}_\chi$.
Let~$f(z)$ be an analytic function in the interior of~${\cal{E}}_\chi$ and continuous on~${\cal{E}}_\chi$ for any $1<\chi<\Bar{\chi}$, and let
\begin{equation}
\label{eq:BGK}
    M(\chi):=\max_{z\in {{\cal{E}}_\chi}} \abs{f(z)}
    \quad \text{and} \quad
    K:= \frac{2\chi M(\chi)}{\chi-1}.
\end{equation}
Also let~$\upbm{B}$ be a symmetric and banded matrix whose spectrum~$\operatorname{spec}\upbm{B}$
is contained in~$[-1,1]$ and let~$\cal{B}$ the upper bandwidth of~$\upbm{B}$.
Then by the Benzi-Golub theorem~\cite[Theorem~2.2]{BG99}
\begin{equation}
    \abs{\left[f(\upbm{B})\right]_{i,j}}
    \leq K 2^{-\gamma\abs{i-j}},
    \quad
    \gamma:= \frac{\log_2{\chi}}{{\cal{B}}+1}.
\end{equation}
The ICM matrix~$\upbm{A}^{(r)}_{\sS}$ is the principal square root of the coupling matrix~$\upbm{K}^{(r)}_{\sS}$~\eqref{eq:fixedscaleK}, and~$\upbm{K}^{(r)}_{\sS}$ is a symmetric and banded matrix with the upper bandwidth $2\dbIndex-2$.
However, 
$\operatorname{spec}\upbm{K}^{(r)}_{\sS}$
is not contained in~$[-1,1]$, so we cannot directly apply the Benzi-Golub theorem to this matrix.
Let $[a,b] \subset \reals^+$ be the interval containing $\operatorname{spec}\upbm{K}^{(r)}_{\sS}$.
To apply the Benzi-Golub theorem, by shifting and scaling the coupling matrix, we construct the matrix
\begin{equation}
    \upbm{B}:=\frac{2\upbm{K}^{(r)}_\sS
    -(b+a)\id}
    {b-a},
\end{equation}
whose spectrum is contained in~$[-1,1]$.
Then the function
\begin{equation}
\label{eq:mapfunc}
     f(z):= \sqrt{\frac{b-a}{2}z+\frac{b+a}{2}},
\end{equation}
maps~$\upbm{B}$ to the ICM; that is~$\upbm{A}^{(r)}_{\sS}=f(\upbm{B})$.
By the Benzi-Golub theorem, we then have
\begin{equation}
\label{eq:BGbound}
     \abs{A^{(r)}_{\sS;\, 0, j}}
     \leq K 2^{-\gamma\abs{j}},
\end{equation}
with
\begin{equation}
\label{eq:gamma}
    \gamma= \frac{\log_2{\chi}}{2\dbIndex-1}.
\end{equation}
We now obtain a lower bound for $\gamma$.
The function $f(z)$ in~\cref{eq:mapfunc} is analytic in the interior of any ellipse ${\cal{E}}_\chi$ with $\chi$ less than
\begin{equation}
    \Bar{\chi} = \frac{b+a}{b-a}+\sqrt{\left(\frac{b+a}{b-a}\right)^2-1}.
\end{equation}
Using the spectral condition number~$\kappa^{(r)}:=b/a$ of~$\upbm{K}^{(r)}_{\sS}$, we have
\begin{equation}
    \Bar{\chi} = \frac{\sqrt{\kappa^{(r)}}+1}{\sqrt{\kappa^{(r)}}-1}
    =1+ \frac{2}{\sqrt{\kappa^{(r)}}-1} 
    \geq 1+ \frac{2}{\sqrt{\kappa^{(r)}-1}},
\end{equation}
where the inequality holds because $\kappa^{(r)}\geq 1$.
Setting
\begin{equation}
\label{eq:chi}
\chi = 1+2/\sqrt{\kappa^{(r)}-1} \leq \Bar{\chi},
\end{equation}
we obtain
\begin{equation}
\label{eq:logChiBound}
    \log_2(\chi)= \log_2\left(1+2/\sqrt{\kappa^{(r)}-1}\right)\geq
    \sqrt{2/\kappa^{(r)}}
    \geq m_0/2^{r+1},
\end{equation}
where the first inequality holds because $\kappa^{(r)}\geq 1$ and the second inequality holds because
\begin{equation}
\label{eq:kappaBound}
    \kappa^{(r)} \leq 1+4^{r+2}/m_0^2 \leq
    2\times 4^{r+2}/m_0^2,
\end{equation}
as per~\cref{prop:condition_bound}.
The combination of~\cref{eq:gamma} and~\cref{eq:logChiBound} yields
\begin{equation}
    \label{eq:gammaBound}
    \gamma \geq 1/\xi^{(r)},
\end{equation}
where~$\xi^{(r)}$ is given in~\cref{eq:decayingssBlocks}.
We now find an upper bound for $K$ in~\cref{eq:BGbound}.
By~\cref{eq:chi} and $\kappa^{(r)}\geq 1$
\begin{equation}
\label{eq:Kbound0}
    \frac{2\chi}{\chi-1}
    = \chi\sqrt{\kappa^{(r)}-1}
    = \sqrt{\kappa^{(r)}-1}+2
    \leq 2\sqrt{2\kappa^{(r)}}.
\end{equation}
The function $f(z)$ in~\cref{eq:mapfunc} attains its maximum at the point $z_0:=\alpha$, where $\alpha$ is the greater half axis of the ellipse ${\cal{E}}_\chi$.
The combination of $\chi=\alpha+\beta$, $\alpha^2-\beta^2=1$ and~\cref{eq:chi} yields
\begin{equation}
    z_0:= \frac{\chi^2+1}{2\chi} = 1+\frac{(\chi-1)^2}{2\chi} \leq 1
    + \frac12 (\chi-1)^2
    = 1+ \frac{2}{\kappa^{(r)}-1},
\end{equation}
and
\begin{equation}
\label{eq:Kbound1}
    M(\chi)=\max_{z\in{\cal{E}}_\chi} \abs{f(z)}
    = f(z_0)
    =\sqrt{\frac{b-a}{2}\frac{\chi^2+1}{2\chi}
    +\frac{b+a}{2}}
    \leq \sqrt{a/2}\sqrt{2\kappa^{(r)}+2}
    \leq \sqrt{2a\kappa^{(r)}} = m_0  \sqrt{2\kappa^{(r)}},
\end{equation}
where the last equality follows because the smallest eigenvalue of~$\upbm{K}^{(r)}_{\sS}$ is $a=m_0^2$.
Eqs.~\eqref{eq:BGK},~\eqref{eq:Kbound0} and~\eqref{eq:Kbound1} yield
\begin{equation}
\label{eq:Kbound}
    K\leq 4m_0 \kappa^{(r)}.
\end{equation}
\cref{prop:decayingICM} follows from Eqs.~\eqref{eq:BGbound},~\eqref{eq:gammaBound}, and~\eqref{eq:Kbound}.
\end{proof}

\subsection{Proof of~\cref{prop:truncated_Gaussian}}
\label{proofprop:truncated_Gaussian}
To prove~\cref{prop:truncated_Gaussian},
first we show that the $\varepsilon_\text{th}$-approximate ICM $\upbm{A}_{\varepsilon_\text{th}}$~\eqref{eq:cutoffcondition} is a positive-definite matrix.
Then we prove that the infidelity between~$\ket{\G_N(\upbm{A})}$ and~$\ket{\G_N(\upbm{A}_{\varepsilon_\text{th}})}$ is no greater than~$\varepsilon$.

We use the Bauer-Fike theorem~\cite[p.~405]{HJ12} to show that $\upbm{A}_{\varepsilon_\text{th}}$ is a positive-definite matrix.
Let~$\upbm{Q}$ be a nonsingular matrix such that $\upbm{Q}^{-1}\upbm{A}\upbm{Q}=\bm{\Lambda}$ is a diagonal matrix and let~$\upbm{E}:=\upbm{A}-\upbm{A}_{\varepsilon_\text{th}}$.
According to the Bauer-Fike theorem,
for any eigenvalue~$\lambda(\upbm{A}_{\varepsilon_\text{th}})$ of $\upbm{A}_{\varepsilon_\text{th}}$, there is an eigenvalue $\lambda(\upbm{A})$ of $\upbm{A}$ such that
\begin{equation}
\label{eq:BauerFikeTheorem}
\abs{\lambda(\upbm{A}_{\varepsilon_\text{th}})-\lambda(\upbm{A})} \leq \kappa_p(\upbm{Q}) \norm{\upbm{E}}_{p}\,,
\end{equation}
where $\norm{\bullet}_{p}$ is the matrix norm induced by any $p$-norm on $\cmplex^N$ and $\kappa_p(\bullet)$ is the condition number of a matrix
with respect to this norm.
As $\upbm{A}$ is a real-symmetric matrix, $\upbm{Q}$ can be chosen to be an orthogonal matrix for which~$\norm{\upbm{Q}}_{2}=1$ and~$\kappa_2(\upbm{Q})=1$.
Therefore, using~\cref{eq:BauerFikeTheorem} with $p=2$, we obtain
\begin{align}
\label{eq:BauerFikeTheoremNorm2}
\abs*{\lambda_{\min}(\upbm{A}_{\varepsilon_\text{th}})-\lambda(\upbm{A})} \leq \norm{\upbm{E}}_2 \leq N\varepsilon_\text{th},
\end{align}
for the smallest eigenvalue $\lambda_{\min}(\bm{A}_{\varepsilon_\text{th}})$ of~$\upbm{A}_{\varepsilon_\text{th}}$.
The last inequality here comes from~$\abs{E_{ij}} \leq \varepsilon_\text{th}$ by the definition of~$\upbm{E}$ and using the matrix-norm inequalities
$\norm{\upbm{E}}_{2}\leq \norm{\upbm{E}}_{\text{F}} \leq N \max_{i,j}\abs{E_{ij}}$, where $\norm{\bullet}_\text{F}$ is the Frobenius norm of a matrix.
If $\lambda_{\min}(\upbm{A}_{\varepsilon_\text{th}})\geq \lambda_{\min}(\upbm{A})$, then $\upbm{A}_{\varepsilon_\text{th}}$ is already a positive-definite matrix.
Otherwise, using~\cref{eq:BauerFikeTheoremNorm2}
\begin{align}
\abs*{\lambda_{\min}(\upbm{A}_{\varepsilon_\text{th}})-\lambda_{\min}(\upbm{A})}\leq N\varepsilon_\text{th}
\leq \varepsilon \gamma/\sqrt{N},   
\end{align}
which implies that the~$\upbm{A}_{\varepsilon_\text{th}}$ is a positive-definite matrix. 

We now prove Eq.~(\ref{eq:infidGNAGNAth}).
Let~$\upbm{T}:=\upbm{E}\upbm{A}^{-1}$, then
\begin{equation}
\braket{\G_N(\upbm{A})}{\G_N(\upbm{A}_{\varepsilon_\text{th}})}
= \left(\dfrac{\det\upbm{A}\det\upbm{A}_{\varepsilon_\text{th}}}{(2\uppi)^{2N}} \right)^{1/4}
\int_{\reals^N} \dd[N]{\bm{x}}
\e^{-\tfrac{1}{4} \bm{x}^\T \left(\upbm{A}
+ \upbm{A}_{\varepsilon_\text{th}}\right)\bm{x}}
= \dfrac{\left[\det\left({\mathds1-\upbm{T}}\right)\right]^{1/4}}{\left[\det\left({\mathds1-\upbm{T}/2}\right)\right]^{1/2}},
\label{eq:fidelityBound}
\end{equation}
where we use $\det\upbm{A}_{\varepsilon_\text{th}} = \det\upbm{A} \det (\mathds1-\upbm{T})$ by employing the multiplicative property of the determinant function~\cite[p.~11]{HJ12}, and the Gaussian integral
\begin{equation}
\label{eq:Gaussian_integral}
    \int_{\reals^N} \dd[N]{\bm{x}}
    \e^{-\tfrac{1}{2} \bm{x}^\T \upbm{A}\bm{x}} = \left(\frac{(2\uppi)^N}{\det\upbm{A}}\right)^{1/2},
\end{equation}
for a symmetric and positive-definite matrix $\upbm{A}\in\reals^{N\times N}$.
We now use~\cref{prop:nearIdendityDetBound} to obtain a lower and an upper bound for determinant of the near-identity matrix $\mathds1-\upbm{T}$.
Let us first find an upper bound for $\abs*{T_{ij}}$.
Note that
\begin{align}
\label{eq:boundForElementsOfT}
    \abs{T_{ij}}
    = \abs{\sum_{k=0}^{N-1} E_{ik}\left(\upbm{A}^{-1}\right)_{kj}}
    \leq \varepsilon_\text{th} \max_{j} \sum_{k=0}^{N-1} \abs{\left(\upbm{A}^{-1}\right)_{kj}}
    = \varepsilon_\text{th}\norm{\upbm{A}^{-1}}_{1}
    \leq \varepsilon_\text{th} \gamma^{-1}\sqrt{N},
\end{align}
where the last inequality follows from the matrix-norm relations
\begin{align}
\label{eq:normRelations}
\norm{\upbm{A}}_1\leq \sqrt{N}\norm{\upbm{A}}_2,\;\;
\norm{\upbm{A}}_2=\lambda_{\max}(\upbm{A}),
\end{align}
for a real-symmetric matrix~$\upbm{A}$~\cite[pp.~346, 363]{HJ12},
and $\lambda_{\max}(\upbm{A}^{-1})= \lambda^{-1}_{\min}(\upbm{A})\leq \gamma^{-1}$;
note that~$\gamma > 0$ is a lower bound for the eigenvalues of $\upbm{A}$.
By~\cref{prop:nearIdendityDetBound} and~\cref{eq:boundForElementsOfT},
if $\varepsilon_\text{th} \gamma^{-1} N^{3/2} \leq \varepsilon$, then
\begin{equation}
\label{eq:detIneqaulities}
\det\left(\mathds1-\upbm{T}\right)\geq 1-\varepsilon, \;\;
\det\left(\mathds1-\upbm{T}/2\right)\leq (1-\varepsilon/2)^{-1}.
\end{equation}
Combination of \cref{eq:fidelityBound} and \cref{eq:detIneqaulities} yields
\begin{equation}
\braket{\G_N(\upbm{A})}{\G_N(\upbm{A}_{\varepsilon_\text{th}})}
\geq (1-\varepsilon)^{1/4}(1-\varepsilon/2)^{1/2}
\geq 1-\varepsilon,
\end{equation}
where, in the last inequality, we use $(1-\varepsilon/2)^{1/2}\geq (1-\varepsilon)^{1/2}$ and $(1-\varepsilon)^{3/4} \geq 1-\varepsilon$ for any $\varepsilon \in (0,1)$.
Therefore, the infidelity between~$\ket{\G_N(\upbm{A})}$ and~$\ket{\G_N(\upbm{A}_{\varepsilon_\text{th}})}$ is at most~$\varepsilon$.

\subsection{Proof of~\cref{prop:decayingDiagBlocks}}
\label{proofprop:decayingDiagBlocks}

This appendix presents a proof for~\cref{prop:decayingDiagBlocks}.
Following~\cref{eq:Aww} with $r=c$, we have
\begin{equation}
\label{eq:wws}
    \upbm{A}^{(r,r)}_\ww = \upbm{G} \upbm{A}^{(r+1)}_\sS\upbm{G}^\T,
\end{equation}
where $\upbm{G}$ is the lower half of the wavelet-transform matrix at scale $r$, see~\cref{eq:WTM}, with entries $G_{nm}=g_{n-2m}$.
Using~\cref{eq:wws} and circulant property of~$\upbm{A}^{(r+1)}_\sS$, we have
\begin{equation}
\label{eq:wwsr+1}
    A^{(r,r)}_{\ww;\, 0, j} = \sum_{m,n=0}^{2\dbIndex-1} g_n g_m A^{(r+1)}_{\sS;\, 0, (2j+m-n)}.
\end{equation}
By virtue of~\cref{prop:decayingICM}, $\abs{A^{(r+1)}_{\sS;\, 0,j}}$ decreases as~$j$ increases.
For any $m,n$ and $j\geq 2\dbIndex-1$, we have $2j+m-n\geq j$, so
\begin{equation}
\label{eq:ineq1}
    \abs{A^{(r+1)}_{\sS;\, 0, (2j+m-n)}} \leq 
    \abs{A^{(r+1)}_{\sS;\, 0,j}}
    \quad \forall j\geq 2\dbIndex-1.
\end{equation}
\cref{eq:wwsr+1} and~\cref{eq:ineq1} yield
\begin{equation}
\label{eq:ineq2}
    \abs{A^{(r,r)}_{\ww;\, 0, j}}
    \leq \abs{A^{(r+1)}_{\sS;\, 0,j}}
    \sum_{m,n=0}^{2\dbIndex-1} \abs{g_n g_m},
\end{equation}
and because $\sum g^2_n=1$, we have $\abs{g_n}\leq 1$.
Therefore
\begin{equation}
\label{eq:ineq3}
    \sum_{m,n=0}^{2\dbIndex-1} \abs{g_n g_m}
    \leq
    \left(\sum_{m=0}^{2\dbIndex-1} \abs{g_m}\right)
    \left(\sum_{n=0}^{2\dbIndex-1} \abs{g_n}\right)
    < 4\dbIndex^2.
\end{equation}
\cref{prop:decayingDiagBlocks} follows from Eqs.~\eqref{eq:decayingssBlocks}, \eqref{eq:ineq2} and~\eqref{eq:ineq3}.

\subsection{Proof of~\cref{corol:bandedCirculantBlocks}}
\label{appx:proofcorol:bandedCirculantBlocks}
Here we prove~\cref{corol:bandedCirculantBlocks}.
By virtue of~\cref{prop:decayingDiagBlocks}, the off-diagonal entries in diagonal blocks of the multi-scale ICM decay slower as the scale index $r$ increases. 
Therefore, for a given threshold value $\varepsilon_\text{th}$, the bottom-right block of the approximate ICM~\cref{eq:cutoffcondition} has the largest bandwidth among the diagonal blocks.
Plugging $r=k-1$ into~\cref{eq:decayingDiagBlocks} we obtain
\begin{equation}
   \abs{A^{(k-1,k-1)}_{\ww;\, 0, j}}
   \leq \varepsilon_\text{th},
\end{equation}
for
\begin{equation}
\label{eq:jBound}
    \abs{j} \geq \xi^{(k)} \log_2\left(\frac{16\dbIndex m_0\kappa^{(k)}}{\varepsilon_\text{th}}\right).
\end{equation}
We now use~\cref{prop:truncated_Gaussian} and $2^{k+1}\leq \log N$ to express the upper bandwidth in terms of $\dbIndex, m_0, \varepsilon_\text{vac}$ and $N$.
By~\cref{eq:decayingDiagBlocks}
\begin{equation}
    \xi^{(k)}= (2\dbIndex-1)2^{k+1}/m_0
    \leq (2\dbIndex-1)\left(\log N\right)/m_0 .
\end{equation}
The smallest eigenvalue of the ICM is $m_0$, so by~\cref{prop:truncated_Gaussian} we take
$\varepsilon_\text{th}=m_0\varepsilon_\text{vac}N^{-3/2}$.
Therefore
\begin{equation}
\label{eq:bandwidthIneqs}
    \frac{16\dbIndex^2 m_0\kappa^{(k)}}{\varepsilon_\text{th}}
    = \frac{16\dbIndex^2 N^{3/2}\kappa^{(k)}}{\varepsilon_\text{vac}}
    \leq\frac{32\dbIndex^2 N^{3/2}4^{k+2}}{m^2_0\varepsilon_\text{vac}}
    \leq \frac{32\dbIndex^2 N^{3/2}\log^2N}{m_0^2\varepsilon_\text{vac}}
    \leq \frac{16\dbIndex^2N^2}{m_0^2\varepsilon^2_\text{vac}},
\end{equation}
where we use~\cref{eq:kappaBound} in the first inequality.
In the last two inequalities we use
\begin{equation}
    2^{k+1}\leq \log N, \quad
    \varepsilon_\text{vac} \geq \varepsilon^2_\text{vac}
    \quad \text{and} \quad
    32N^{3/2}\log^{2}N \leq 16 N^2,
\end{equation}
for $\varepsilon_\text{vac} \in (0,1)$ and $N \in \integers^+$.
Equation~\ref{eq:bandwidth} follows from~\crefrange{eq:jBound}{eq:bandwidthIneqs}.

\section{Computing rotation angle for 1DG-state generation}
\label{appx:rotation_angle}

In this appendix, we analyze the time complexity for computing the rotation angle~$\theta_\ell$ in~\cref{eq:rot_angle} on a quantum computer.
We first simplify computing~$\theta_\ell$ using the double-angle identity as
\begin{equation}
    \label{eq:rot_angle_simplified}
    \theta_\ell = \frac{1}{2} \arccos(u_\ell)
    = \frac{\uppi}{4} -\frac{1}{4} \arcsin(u_\ell),
\end{equation}
where
\begin{equation}
\label{eq:rot_arg}
    u_\ell := 2\dfrac{f\left(\tilde{\sigma}_\ell/2,\, \mu_\ell/2,\, m_\ell-1\right)}{f\left(\tilde{\sigma}_\ell,\, \mu_\ell,\, m_\ell\right)}-1,
\end{equation}
and $f\left(\tilde{\sigma}_\ell,\, \mu_\ell,\, m_\ell\right)$ is defined in~\cref{eq:1DG_over_integers}.
Using~\cref{eq:rot_angle_simplified}, computing~$\theta_\ell$ requires computing~$u_\ell$ and~$\arcsin(u_\ell)$, and performing one multiplication and one addition.
Therefore, time complexity for computing the rotation angle~$\theta_\ell$, denoted by~$T_\textsc{angle}$, is
\begin{equation}
\label{eq:T_angle}
    T_\textsc{angle} = T_u + T_{\arcsin} + 2,
\end{equation}
where~$T_u $ and~$T_{\arcsin}$ are time complexities for computing $u_\ell$ and $\arcsin(u_\ell)$, respectively.
By~\cref{eq:rot_arg}, computing $u_\ell$ requires computing $f\left(\tilde{\sigma}_\ell/2,\, \mu_\ell/2,\, m_\ell-1\right)$ and
$f\left(\tilde{\sigma}_\ell,\, \mu_\ell,\, m_\ell\right)$, and performing one division, one multiplication and one addition.
Therefore,
\begin{equation}
\label{eq:T_arg}
    T_u = 2T_f + 3,
\end{equation}
where $T_f$ is time complexity for computing $f\left(\tilde{\sigma}_\ell,\, \mu_\ell,\, m_\ell\right)$.
By~\cref{eq:T_angle} and~\cref{eq:T_arg}, we only need~$T_f$ and~$T_{\arcsin}$ to obtain~$T_\textsc{angle}$.
First we provide a high-level description of how to obtain~$T_f$ and~$T_{\arcsin}$.
We show, in \cref{prop:rot_angle}, that computing $f\left(\tilde{\sigma}_\ell,\, \mu_\ell,\, m_\ell\right)$ to $t$ bit of precision requires computing at most~$4t-1$ exponentials and adding them all.
The argument of each exponential needs performing one addition, one multiplication, one division and calculating the square of two numbers.
Calculating the square of a number can be performed by one multiplication~\cite[p.~7]{HRS18}.
Altogether, time complexity for computing~$f\left(\tilde{\sigma}_\ell,\, \mu_\ell,\, m_\ell\right)$ is at most~$(4t-1) T_{\exp} + (4t-2) + 5$, and therefore
\begin{equation}
\label{eq:T_f}
    T_f \in \order{tT_{\exp}}, 
\end{equation}
where~$T_{\exp}$ is time complexity for computing an exponential.
We use~\cref{prop:exp_approx} to show~$T_{\exp}$ scales linearly with~$t$.
By this proposition, to compute $\e^{-x}$ with $t$ bits of precision, first the polynomial
\begin{equation}
\label{eq:poly}
    P(x):= a_0+ a_1 x + a_2 x^2,
\end{equation}
with $a_0=1, a_1=1/4^t$ and $a_2=1/4^{2t}$ is evaluated at~$x$ and then the result is iteratively squared~$2t$ times.
Computing square of a number needs one multiplication, so $T_{\exp} = T_P + 2t$, where~$T_P$ is time complexity for computing $P(x)$ in~\cref{eq:poly}.

We use Horner's method to compute a polynomial on a quantum computer~\cite{HRS18}, and to obtain~$T_P$.
To evaluate
the polynomial in~\cref{eq:poly} at~$x$, we store the coefficients $a_0, a_1$ and $a_2$ on a ancillary quantum register and implement the operation
$a_2 x + a_1\mapsto (a_2 x + a_1) x + a_0$.
Implementing this operation needs two multiplications and two additions before uncomputing the intermediate results, so~$T_P$ is a constant and
\begin{equation}
\label{eq:T_exp}
    T_{\exp} \in \Theta(t).
\end{equation}

We use~\cref{prop:arg_bound} and~\cref{prop:arcsin_approx} to obtain~$T_{\arcsin}$.
In \cref{prop:arg_bound}, we show that the argument~$x$ of $\arcsin(x)$ is a positive number less than~$1/2$ and, in~\cref{prop:arcsin_approx},
we show a polynomial of degree~$t$
suffices to approximate $\arcsin(x)$ with~$t$ bits of precision.
As described, computing a polynomial of degree~$t$ by Horner's method requires~$\Theta(t)$ multiplications and additions, so
\begin{equation}
\label{eq:T_arcsin}
   T_{\arcsin} = \Theta(t),
\end{equation}
The combination of~\crefrange{eq:T_angle}{eq:T_f} and~\crefrange{eq:T_exp}{eq:T_arcsin} yields
\begin{equation}
    T_\textsc{angle} \in \order{t^2}.
\end{equation}
We now state and prove the propositions that we use to obtain~$T_\textsc{angle}$.

\begin{proposition}
\label{prop:rot_angle}
Let $m\in \integers^+, t\in \integers^+$, $\ell \in\{0,\ldots,m-1\}$, $\mu_\ell \in [0,1)$, $m_\ell:=m-\ell$, $\tilde{\sigma}_\ell\in(0,2^{m_\ell/2})$ and
\begin{align}
    & \label{eq:f}
    f(\tilde{\sigma}_\ell,\mu_\ell,m_\ell):= \sum_{j=-2^{m_\ell-1}}^{2^{m_\ell-1}-1}
    \exp\left(-
    \tfrac{(j+\mu_\ell)^2}{2\tilde{\sigma}_\ell^2}
    \right),\\
    & \label{eq:g}
    g(\tilde{\sigma}_\ell,\mu_\ell):= \sum_{j=-2t-1}^{2t}
    \exp\left(
    -\tfrac{(j+\mu_\ell)^2}{2\tilde{\sigma}_\ell^2}
    \right).
\end{align}
For $2^{m_\ell}>8(t+3)$, if $\tilde{\sigma}_\ell^2> t$ then
\begin{equation}
    \abs{f(\tilde{\sigma}_\ell,\mu_\ell,m_\ell)-\tilde{\sigma}_\ell\sqrt{2\uppi}}\leq \frac1{2^{t+1}},
\end{equation}
else
\begin{equation}
    \abs{f(\tilde{\sigma}_\ell,\mu_\ell,m_\ell)-g(\tilde{\sigma},\mu_\ell)}\leq \frac1{2^{t+1}}.
\end{equation}
\end{proposition}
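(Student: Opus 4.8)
The plan is to prove the two cases separately, the dichotomy at $\tilde\sigma_\ell^2\gtrless t$ reflecting whether the discrete Gaussian weight $\e^{-(j+\mu_\ell)^2/(2\tilde\sigma_\ell^2)}$ is ``wide'' (hence well approximated by a continuous Gaussian integral) or ``narrow'' (hence essentially supported on $\order{t}$ lattice points around $0$). Throughout I write $J:=2^{m_\ell}$, $\sigma:=\tilde\sigma_\ell$, $\mu:=\mu_\ell$, $\phi(x):=\e^{-(x+\mu)^2/(2\sigma^2)}$, and use three elementary facts: $\phi$ is positive and decreasing in $|x+\mu|$; $\phi(x)\le \e^{-x^2/(2\sigma^2)}$ for $x\ge 0$ since $\mu\ge 0$; and $\phi(x)\le \e^{-(|x|-1)^2/(2\sigma^2)}$ for $x\le -1$ since $\mu<1$.

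For Case~1 ($\sigma^2>t$) the plan is to interpose the sum over all of $\integers$:
\[
\bigl|\,f-\sigma\sqrt{2\uppi}\,\bigr|\le \Bigl|\,f-\sum_{j\in\integers}\phi(j)\,\Bigr| + \Bigl|\,\sum_{j\in\integers}\phi(j)-\sigma\sqrt{2\uppi}\,\Bigr|.
\]
For the second (aliasing) term I would apply the Poisson summation formula, exactly the tool already used in the proof of \cref{prop:sum_integral_relation}, which gives $\sum_{j\in\integers}\phi(j)=\sigma\sqrt{2\uppi}\bigl(1+2\sum_{k\ge 1}\cos(2\uppi k\mu)\e^{-2\uppi^2\sigma^2 k^2}\bigr)$; bounding the cosine by $1$ and the $k$-sum by a geometric series leaves an error at most $3\sqrt{2\uppi}\,\sigma\,\e^{-2\uppi^2\sigma^2}$, and since $\sigma\mapsto\sigma\e^{-2\uppi^2\sigma^2}$ is decreasing for $\sigma>1/(2\uppi)$ the hypothesis $\sigma^2>t$ reduces this to a bound of the form $C\sqrt{t}\,\e^{-2\uppi^2 t}$, which is far below $2^{-t-2}$ for every $t\ge 1$. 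For the first (truncation) term the omitted terms are the two tails $|j|\gtrsim J/2$; I would compare each tail to a Gaussian integral, using $\sum_{j\ge J/2}\phi(j)\le \e^{-(J/2)^2/(2\sigma^2)}\bigl(1+\sigma^2/(J/2)\bigr)$ and the analogous left-tail bound obtained from the one-sided estimate above. The hypothesis $\sigma^2<J$ makes the prefactor $<3$ and forces the exponent $(J/2)^2/(2\sigma^2)>J/8=2^{m_\ell-3}$, while $2^{m_\ell}>8(t+3)$ gives $2^{m_\ell-3}>t+3$; hence the total truncation error is below $6\,\e^{-(t+3)}\le 2^{-t-2}$, and adding the two contributions yields $\bigl|f-\sigma\sqrt{2\uppi}\bigr|\le 2^{-t-1}$.

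For Case~2 ($\sigma^2\le t$) the index range $-2t-1\le j\le 2t$ of $g$ lies inside the index range of $f$ because $2^{m_\ell}>8(t+3)>4t+2$, so $f-g$ is a sum of positive terms over $|j|\gtrsim 2t$. I would bound it by extending to all of $\integers$ outside $[-2t-1,2t]$ and estimating each one-sided tail: for $j\ge 2t+1$, combining $\mu\ge 0$ with $\sigma^2\le t$ gives $\phi(j)\le \e^{-j^2/(2t)}\le \e^{-j}$ because $j^2/(2t)\ge j$ once $j\ge 2t$, so that tail is at most $\e^{-(2t+1)}/(1-\e^{-1})\le 2\,\e^{-(2t+1)}$; the left tail over $j\le -2t-2$ is handled identically after the unit shift that accounts for $\mu<1$. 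Hence $|f-g|\le 4\,\e^{-(2t+1)}$, which is below $2^{-t-1}$ for all $t\ge 1$.

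The main obstacle is not any individual estimate — each reduces to comparing a sum with a geometric series or with a Gaussian integral — but the bookkeeping of constants so that, in Case~1, both error pieces fit under $2^{-t-2}$ using exactly the stated hypotheses $\sigma^2>t$, $\sigma^2<2^{m_\ell}$, and $2^{m_\ell}>8(t+3)$. In particular one must check that replacing $J/2$ by $J/2-1$ and shifting the summation variable by $\mu$ do not spoil the inequality $2^{m_\ell-3}>t+3$, which is why the hypothesis carries the slack factor $8$ and the additive $+3$ rather than something tighter; Case~2 is comparatively routine since the width of $g$ is fixed at $\order{t}$ lattice points by construction.
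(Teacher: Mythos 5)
Your proposal is correct and follows essentially the same route as the paper's proof: in the wide case you interpose the full sum over $\integers$, control the aliasing error by Poisson summation and the truncation error by a Gaussian tail bound whose exponent is forced above $t+3$ by $\tilde\sigma_\ell^2<2^{m_\ell}$ and $2^{m_\ell}>8(t+3)$, and in the narrow case you bound the two tails beyond $|j|\approx 2t$ by a geometric series using $j^2/(2\tilde\sigma_\ell^2)\geq j$ for $j\geq 2t\geq 2\tilde\sigma_\ell^2$. The only differences are cosmetic (a Mills-ratio integral comparison where the paper uses the geometric majorant $\e^{-j/4}$, and slightly different constants), and your constant bookkeeping checks out.
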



\begin{proof}
We first prove the first part of~\cref{prop:rot_angle}.
Using  the triangle inequality,
\begin{equation}
\label{eq:f_approx}
    \abs{
    f(\tilde{\sigma}_\ell,\mu_\ell,m_\ell)
    -\sigma_\ell\sqrt{2\uppi}
    }
    \leq 
    \abs{
    f(\tilde{\sigma}_\ell,\mu_\ell,m_\ell)
    -\sum_{j\in \integers}
    \exp\left(
    -\tfrac{(j+\mu_\ell)^2}{2\tilde{\sigma}_\ell^2}
    \right)
    }
    +
    \abs{\sum_{j\in \integers}
    \exp\left(
    -\tfrac{(j+\mu_\ell)^2}{2\tilde{\sigma}_\ell^2}
    \right)
    - \tilde{\sigma}_\ell\sqrt{2\uppi}}.
\end{equation}
Let us now find and upper bound for the first term in the right-hand side of~\cref{eq:f_approx}.
By~\cref{eq:f},
\begin{align}
\label{eq:bound_for_first_term}
\abs{f(\tilde{\sigma}_\ell,\mu_\ell,m_\ell)
    -\sum_{j\in \integers}
    \exp\left(-\tfrac{(j+\mu_\ell)^2}{2\tilde{\sigma}_\ell^2}\right)
    }
    &= \sum_{2^{m_\ell-1}}^{\infty}
    \left(\exp\left(-\tfrac{(j+\mu_\ell)^2}{2\tilde{\sigma}_\ell^2}\right)
    +
    \exp\left(-\tfrac{(j+1-\mu_\ell)^2}{2\tilde{\sigma}_\ell^2}\right)
    \right)\nonumber\\
    &\leq 2\sum_{2^{m_\ell-1}}^{\infty}
    \exp\left(-\tfrac{j^2}{2\tilde{\sigma}_\ell^2}\right)
    \leq 2\sum_{2^{m_\ell-1}}^{\infty}
    \e^{-j/4} = \dfrac{2\e^{-2^{m_\ell-3}}}{1-\e^{-1/4}}
    \leq \frac1{2}\e^{-t}.
\end{align}
The first inequality here holds because $\mu_\ell \in [0,1)$.
The second inequality is obtained using
\begin{equation}
    \sum_{2^{m_\ell-1}}^{\infty}
    \exp\left(-\tfrac{j^2}{2\tilde{\sigma}_\ell^2}\right)
    \leq
    \sum_{2^{m_\ell-1}}^{\infty}
    \exp\left(-\tfrac{2^{m_\ell-1}}{2\tilde{\sigma}_\ell^2}j\right),
\end{equation}
and $\tilde{\sigma}_\ell^2\leq 2^{m_\ell}$.
The last inequality is obtained using $2^{m_\ell}>8(t+3)$ and $2\e^{-3}/(1-\e^{-1/4})< 1/2$.

We now establish an upper bound for the second term in the right-hand side of~\cref{eq:f_approx}.
By Poisson's summation formula~\cite[p.~385]{AD19}, 
\begin{align}
    \sum_{j\in \integers}
    \exp\left(-\tfrac{(j+\mu_\ell)^2}{2\tilde{\sigma}_\ell^2}\right)
    = \tilde{\sigma}_\ell\sqrt{2\uppi} \sum_{j\in \integers}
    \e^{-2\uppi^2\tilde{\sigma}_\ell^2j^2 - 2\uppi \upi\mu_\ell j}
    = \tilde{\sigma}_\ell\sqrt{2\uppi}
    \left(1+2\sum_{j=1}^{\infty}
    \e^{-2\uppi^2\tilde{\sigma}_\ell^2j^2} \cos(2\uppi\mu_\ell j)
    \right).
\end{align}
Therefore,
\begin{align}\label{eq:bound_for_second_term}
   \abs{\sum_{j\in \integers}
    \exp\left(-\tfrac{(j+\mu_\ell)^2}{2\tilde{\sigma}_\ell^2}\right)
    - \tilde{\sigma}_\ell\sqrt{2\uppi}}
    &=
    2\sqrt{2\uppi}\tilde{\sigma}_\ell
    \abs{\sum_{j=1}^{\infty}
    \e^{-2\uppi^2\tilde{\sigma}_\ell^2j^2}
    \cos(2\uppi\mu_\ell j)}\nonumber\\
    &\leq
    2\sqrt{2\uppi}\tilde{\sigma}_\ell
    \sum_{j=1}^{\infty}
     \e^{-2\uppi^2\tilde{\sigma}_\ell^2j^2}\nonumber\\
    &\leq
    2\sqrt{2\uppi}\tilde{\sigma}_\ell
    \sum_{j=1}^{\infty}
    \e^{-2\uppi^2\tilde{\sigma}_\ell^2j}
    = 2\sqrt{2\uppi}\tilde{\sigma}_\ell
    \dfrac{\e^{-2\uppi^2\tilde{\sigma}_\ell^2}}{1-\e^{-2\uppi^2\tilde{\sigma}_\ell^2}}
    \leq \e^{-\uppi^2\tilde{\sigma}_\ell^2}
    \leq \e^{-\uppi^2t}.
\end{align}
Here we use the triangle inequality and $\abs{\cos(2\uppi\mu_\ell j)}\leq 1$ to obtain the first inequality.
The second inequality is obtained by
\begin{align}
    \sum_{j=1}^{\infty}
    \e^{-2\uppi^2\tilde{\sigma}_\ell^2j^2}
    \leq 
    \sum_{j=1}^{\infty}
    \e^{-2\uppi^2\tilde{\sigma}_\ell^2j}.
\end{align}
The third inequality is obtained using $1-\e^{-2\uppi^2\tilde{\sigma}_\ell^2}\geq \e^{-\tilde{\sigma}_\ell^2}$ for $\tilde{\sigma}_\ell^2 \geq t\geq 1$,
and $2\sqrt{2\uppi}x \e^{-(\uppi^2-1)x^2}\leq 1$ for any $x\in \reals^+$.
The last inequality holds because
$\tilde{\sigma}_\ell^2 \geq t$.
The first part of~\cref{prop:rot_angle} follows from Eqs.~\eqref{eq:f_approx},~\eqref{eq:bound_for_first_term},~\eqref{eq:bound_for_second_term} and
\begin{align}
    \frac1{2}\e^{-t}+\e^{-\uppi^2t} \leq \frac1{2^{t+1}} \;\;\; \forall t\in \integers^+.
\end{align}
We now prove the second part of~\cref{prop:rot_angle}.
By~Eqs.\eqref{eq:f} and~\eqref{eq:g},
\begin{align}
    \abs{f(\tilde{\sigma}_\ell,\mu_\ell,m_\ell)-g(\tilde{\sigma},\mu_\ell)}
    &=\sum_{j=2t+1}^{2^{m_\ell-1}-1}
    \left(\exp\left(-\tfrac{(j+\mu_\ell)^2}{2\tilde{\sigma}_\ell^2}\right)
    +
    \exp\left(-\tfrac{(j+1-\mu_\ell)^2}{2\tilde{\sigma}_\ell^2}\right)
    \right)\nonumber\\
    &\leq 2\sum_{j=2t+1}^{2^{m_\ell-1}-1} \exp\left(-\tfrac{j^2}{2\tilde{\sigma}_\ell^2}\right)
    \leq 2\sum_{j=2t+1}^{\infty} \e^{-j}
    = 2\frac{\e^{-(2t+1)}}{1-\e^{-1}}
    \leq \frac1{2^{t+1}}.
\end{align}
The first inequality here is obtained using $\mu_\ell \in [0,1)$; the second inequality is obtained
using
\begin{align}
    \sum_{j=2t+1}^{2^{m_\ell-1}-1}
    \exp\left(-\tfrac{j^2}{2\tilde{\sigma}_\ell^2}\right)
    \leq
    \sum_{j=2t+1}^{\infty}
    \exp\left(-\tfrac{2t+1}{2\tilde{\sigma}_\ell^2}j\right),
\end{align}
and
$\tilde{\sigma}_\ell^2\leq t$.
The last inequality is valid for any~$t\in \integers^+$.
\end{proof}


\begin{proposition}
\label{prop:exp_approx}
Let $t\in \integers^+, x\in [0, t]$ and $P(x/\ell):=1-x/\ell+x^2/\ell^2$.
Then, for any $\ell \geq 4^t$,
\begin{equation}
    \abs{\e^{-x}-P^\ell(x/\ell)}\leq 1/4^t.
\end{equation}
\end{proposition}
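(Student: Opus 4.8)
\textbf{Proof plan for Proposition~\ref{prop:exp_approx}.}
The plan is to bound the two approximation steps separately using the triangle inequality. Write
\begin{equation}
\label{eq:exp_triangle}
\abs{\e^{-x}-P^\ell(x/\ell)}
\leq \abs{\e^{-x}-\left(\e^{-x/\ell}\right)^\ell}
+ \abs{\left(\e^{-x/\ell}\right)^\ell-P^\ell(x/\ell)}.
\end{equation}
The first term is identically zero since $\left(\e^{-x/\ell}\right)^\ell=\e^{-x}$, so the entire task reduces to controlling the second term. Here the quantity $P(x/\ell)=1-x/\ell+x^2/\ell^2$ is the degree-two truncation of the Maclaurin series of $\e^{-x/\ell}$ evaluated at $y:=x/\ell$, and since $x\in[0,t]$ and $\ell\geq 4^t$ we have $y\in[0,t/4^t]$, which is small and in particular bounded by $1$. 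So first I would establish a pointwise bound $\abs{\e^{-y}-P(y)}\leq cy^3$ for $y\in[0,1]$ and a constant $c$ (Taylor's theorem with the Lagrange remainder gives $c=1/6$, or one can simply bound by $y^3$), and also note that both $\e^{-y}$ and $P(y)$ lie in $[0,1]$ for $y$ in this range (for $P$: $P(y)=1-y(1-y)\leq 1$ and $P(y)\geq 3/4>0$).

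Next I would use the elementary ``power difference'' inequality: for $a,b\in[0,1]$ and $\ell\in\integers^+$, $\abs{a^\ell-b^\ell}\leq \ell\abs{a-b}$, which follows from factoring $a^\ell-b^\ell=(a-b)\sum_{j=0}^{\ell-1}a^j b^{\ell-1-j}$ and bounding each of the $\ell$ summands by $1$. Applying this with $a=\e^{-y}$, $b=P(y)$ yields
\begin{equation}
\abs{\left(\e^{-y}\right)^\ell-P^\ell(y)}\leq \ell\abs{\e^{-y}-P(y)}\leq \ell y^3 = \ell\cdot\frac{x^3}{\ell^3}=\frac{x^3}{\ell^2}.
\end{equation}
Then I would plug in the bounds $x\leq t$ and $\ell\geq 4^t$ to get $x^3/\ell^2\leq t^3/4^{2t}$. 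The final step is the numerical check that $t^3/4^{2t}\leq 1/4^t$, i.e. $t^3\leq 4^t$, for all $t\in\integers^+$; this holds for $t=1$ ($1\leq 4$) and for larger $t$ since $4^t$ grows much faster than $t^3$ (a short induction or a standard comparison suffices). Combining with~\eqref{eq:exp_triangle} gives $\abs{\e^{-x}-P^\ell(x/\ell)}\leq 1/4^t$.

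The only mild obstacle is bookkeeping the constant in the cubic Taylor remainder and then confirming $t^3\leq 4^t$ over all positive integers; both are routine. A subtlety worth double-checking is that $P(y)\in[0,1]$ on the relevant range so that the power-difference lemma applies with both bases in $[0,1]$ — this is where the hypothesis $\ell\geq 4^t$ (hence $y\leq t/4^t\leq 1/4$) does real work, since $P$ can exceed $1$ or go negative for larger arguments. If one prefers to avoid even the constant $1/6$, replacing the cubic remainder bound by the cruder $\abs{\e^{-y}-P(y)}\leq y^3$ for $y\in[0,1]$ still closes the argument with room to spare.
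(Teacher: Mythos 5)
Your proof is correct and lands on the same bound, but it implements the key error-amplification step differently from the paper. The paper writes $\e^{-x/\ell}=P(x/\ell)+E(x/\ell)$ with $E$ the Taylor tail, expands $\left[P+E\right]^\ell$ by the binomial theorem, bounds $\abs{E}\leq\tfrac12(x/\ell)^3$ and $\binom{\ell}{n}\leq\ell^n$, and then sums the resulting series to obtain roughly $x^3/(2\ell^2)\leq 1/\ell\leq 1/4^t$. You instead use the telescoping factorization $a^\ell-b^\ell=(a-b)\sum_{j=0}^{\ell-1}a^jb^{\ell-1-j}$ with $a=\e^{-y}$ and $b=P(y)$ both in $[0,1]$, which yields $\abs{\e^{-x}-P^\ell(y)}\leq\ell\,\abs{\e^{-y}-P(y)}\leq\ell y^3=x^3/\ell^2$ in a single line, and you close with $t^3\leq 4^t$ rather than the paper's $x^3/\ell\leq 1/2$. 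Your route avoids the binomial series entirely and makes explicit the hypothesis ($P(y)\in[3/4,1]$ for $y\leq 1$) that the paper uses silently when it writes $\abs{P(x/\ell)}^{\ell-n}\leq 1$; both closings are valid. One shared caveat: both arguments require $P(y)$ to be the degree-two Maclaurin truncation of $\e^{-y}$, i.e.\ the coefficient of $x^2/\ell^2$ should be $1/2$. With $P$ literally as stated ($1-y+y^2$), the pointwise error is $\Theta(y^2)$ rather than $\order{y^3}$, and the cubic remainder bound that you and the paper both invoke would fail. Since the paper's own proof defines the remainder as $\sum_{n\geq 3}(-)^n(x/\ell)^n/n!$, which presupposes the $1/2$, this is evidently a typo in the statement of the proposition rather than a defect of your argument relative to the paper's.
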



\begin{proof}
Note that
\begin{equation}
\label{eq:exp_expansion}
    \e^{-x} = \left(\e^{-x/\ell}\right)^\ell =
    \left[P(x/\ell)+E(x/\ell)\right]^\ell,
    \quad E(x/\ell) := \sum_{n\geq 3} \frac{(-)^n}{n!} (x/\ell)^n.
\end{equation}
Using binomial expansion
\begin{equation}
    \e^{-x} = P^\ell(x/\ell) + \sum_{n=1}^\ell \binom{\ell}{n} P^{\ell-n}(x/\ell) E^{n}(x/\ell).
\end{equation}
Therefore,
\begin{equation}
\label{eq:approx_errorbound}
    \abs{\e^{-x}-P^{\ell}(x/\ell)}
    \leq \sum_{n=1}^\ell \binom{\ell}{n}
    \abs{P(x/\ell)}^{\ell-n} \abs{E(x/\ell)}^n
    \leq \sum_{n=1}^\ell \binom{\ell}{n}
    \abs{P(x/\ell)}^{\ell-n} \abs{E(x/\ell)}^n,
\end{equation}
where, in the last inequality, we use $P(x/\ell) \leq 1$ for $x/\ell \leq 1$.
Let us now find an upper bound for~$\abs{E(x/\ell)}$.
Using~\cref{eq:exp_expansion} and the triangle inequality,
\begin{equation}
\label{eq:exp_errorbound}
    \abs{E(x/\ell)} \leq \sum_{n\geq 3}
    \frac{1}{n!} (x/\ell)^n
    = (x/\ell)^3 \sum_{n\geq 0}
    \frac{1}{(n+3)!} (x/\ell)^n
    \leq \frac{1}{6}(x/\ell)^3 \e^{x/\ell}
    \leq \frac{1}{2} (x/\ell)^3,
\end{equation}
the last inequality holds because $\e^{x/\ell}/3 \leq 1$ for $x/\ell \leq 1$.
Using Eqs.~\eqref{eq:approx_errorbound} and~\eqref{eq:exp_errorbound}
\begin{equation}
     \abs{\e^{-x}-P^{\ell}(x/\ell)}
      \leq \sum_{n=1}^\ell \binom{\ell}{n}
      \frac{1}{2^n}(x/\ell)^{3n}
      \leq \sum_{n=1}^\ell \frac{\ell^n}{2^n} (x/\ell)^{3n}
      = \frac{x^3}{2\ell^2} + \sum_{n=2}^\ell \frac{x^{3n}}{2^n \ell^{2n}}
      \leq \frac{x^3}{2\ell^2} \left(1+ \frac{x^3}{2\ell^2}\right)
      \leq \frac{1}{\ell} 
      \leq \frac{1}{4^t}.
\end{equation}
Here we use~$x^3/\ell \leq 1/2$ for $x\in [0,t]$ and $\ell>4^t$.
\end{proof}


\begin{proposition}
\label{prop:arg_bound}
Let $\tilde{\sigma}\in [1,\infty), \mu \in [0,1), m\in \integers^+$ and
\begin{equation}
  f(\tilde{\sigma},\mu,m)=
  \sum_{j=-2^{m-1}}^{2^{m-1}-1}
  \e^{-\frac{(j+\mu)^2}{2\tilde{\sigma}^2}}.
\end{equation}
Then
\begin{equation}
\label{eq:arg_bound}
     \abs{2\frac{f(\tilde{\sigma}/2,\mu/2,m-1)}{f(\tilde{\sigma},\mu,m)}-1}\leq 1/2.
\end{equation}
\end{proposition}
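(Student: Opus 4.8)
The plan is to reduce the claim to the exact two-term recursion that $f$ satisfies when its standard deviation is halved, and then to compare the two resulting finite Gaussian sums. Splitting the index in $f(\tilde\sigma,\mu,m)=\sum_{j=-2^{m-1}}^{2^{m-1}-1}\e^{-(j+\mu)^2/(2\tilde\sigma^2)}$ into even $j=2i$ and odd $j=2i+1$ — in both cases $i$ ranges over $\{-2^{m-2},\dots,2^{m-2}-1\}$ when $m\ge2$ — and using the scaling identities $(2i+\mu)^2/(2\tilde\sigma^2)=(i+\tfrac{\mu}{2})^2/\bigl(2(\tfrac{\tilde\sigma}{2})^2\bigr)$ and $(2i+1+\mu)^2/(2\tilde\sigma^2)=(i+\tfrac{1+\mu}{2})^2/\bigl(2(\tfrac{\tilde\sigma}{2})^2\bigr)$, one obtains
\begin{equation}
f(\tilde\sigma,\mu,m)=f\!\left(\tfrac{\tilde\sigma}{2},\tfrac{\mu}{2},m-1\right)+f\!\left(\tfrac{\tilde\sigma}{2},\tfrac{1+\mu}{2},m-1\right).
\end{equation}
The degenerate base case $m=1$ is a short two-term identity checked directly.

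Writing $A:=f(\tfrac{\tilde\sigma}{2},\tfrac{\mu}{2},m-1)$ and $B:=f(\tfrac{\tilde\sigma}{2},\tfrac{1+\mu}{2},m-1)$, both strictly positive, the identity $2A/(A+B)-1=(A-B)/(A+B)$ shows that the asserted bound $|2A/(A+B)-1|\le\tfrac12$ is equivalent to $|A-B|\le\tfrac12(A+B)$, i.e.\ to $A\le3B$ and $B\le3A$. Hence it suffices to bound the ratio of $A$ and $B$ by $3$ in both directions.

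Set $\sigma':=\tilde\sigma/2\ge\tfrac12$, $\nu:=\mu/2\in[0,\tfrac12)$, $n:=2^{m-2}\ge1$, and $\phi(x):=\e^{-x^2/(2\sigma'^2)}$, so that $A=\sum_{i=-n}^{n-1}\phi(i+\nu)$ and $B=\sum_{i=-n}^{n-1}\phi(i+\nu+\tfrac12)$, with $\phi$ even and strictly decreasing on $[0,\infty)$. The core step is an injective term-pairing exploiting this unimodality: for $i\ge1$ the point $i-\tfrac12+\nu$ lies in $(0,i+\nu)$, so $\phi(i+\nu)\le\phi\bigl((i-1)+\nu+\tfrac12\bigr)$; for $-n\le i\le-1$ both $i+\nu$ and $i+\nu+\tfrac12$ are negative with the former farther from $0$, so $\phi(i+\nu)\le\phi(i+\nu+\tfrac12)$. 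These pairings inject $A$'s indices into $B$'s index set, their images exhaust $\{-n,\dots,n-1\}$ except $i=n-1$, and the single term $\phi(\nu)$ of $A$ (at $i=0$) is left over; therefore $A\le B-\phi\bigl(n-1+\nu+\tfrac12\bigr)+\phi(\nu)\le B+\phi(\nu)\le B+1$. The mirror-image pairing (now using that $\phi$ is increasing on $(-\infty,0]$) gives $B\le A+\phi(\tfrac12-\nu)\le A+1$. Finally, lower-bounding each sum by a single near-central term, $A\ge\phi(\nu)$ and $B\ge\phi(\tfrac12-\nu)$; since $\nu<\tfrac12$, $\tfrac12-\nu\le\tfrac12$, and $2\sigma'^2\ge\tfrac12$, both are at least $\e^{-(1/2)^2/(2\sigma'^2)}\ge\e^{-1/2}$. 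Thus $1\le\e^{1/2}\min(A,B)$, whence $\max(A,B)\le(1+\e^{1/2})\min(A,B)<3\min(A,B)$, which is exactly $A\le3B$ and $B\le3A$.

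I expect the pairing-and-counting step to be the main obstacle: one has to pick the offset of the pairing so that it is injective, covers every term of $B$ but one, and isolates precisely the single central term $\phi(\nu)$ — this is what turns the vague statement ``$A$ and $B$ are sums over windows differing by a half-shift'' into the sharp constant, and together with the uniform lower bound $\phi(\nu),\phi(\tfrac12-\nu)\ge\e^{-1/2}$ (which is where the hypothesis $\tilde\sigma\ge1$ enters) it yields the clean inequality $1+\e^{1/2}<3$. The only other delicate point is the degenerate case $m=1$, where the recursion's second argument falls outside the nominal index range of $f$ and must be treated by hand.
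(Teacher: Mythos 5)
Your proof is correct and follows essentially the same route as the paper's: the same even/odd split $f(\tilde\sigma,\mu,m)=A+B$, the same reduction of the bound to $A\le 3B$ and $B\le 3A$, and the same mechanism of pairing Gaussian terms via unimodality so that the two sums differ by at most one residual term, which is then absorbed using $\tilde\sigma\ge 1$. The only differences are cosmetic — you absorb the residual via $\min(A,B)\ge\e^{-1/2}$ and $1+\e^{1/2}<3$, whereas the paper verifies a concrete three-exponential inequality — and both arguments share the same $m=1$ edge case, which you at least flag explicitly.
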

\begin{proof}
Let $f_\text{even}:= f(\tilde{\sigma}/2,\mu/2,m-1)$ and $f_\text{odd}= f(\tilde{\sigma}/2,(\mu+1)/2,m-1)$.
Then $f(\tilde{\sigma},\mu,m)= f_\text{even} +f_\text{odd}$, and~\cref{eq:arg_bound} becomes
\begin{equation}
\label{eq:arg_bound_oddeven}
    \abs{\frac{f_\text{even}-f_\text{odd}}{f_\text{even}+f_\text{odd}}} \leq 1/2.
\end{equation}
We show $3f_\text{even}\geq f_\text{odd}$ and $3f_\text{odd}\geq f_\text{even}$.
These two inequalities yield~\cref{eq:arg_bound_oddeven} which proves~\cref{prop:arg_bound}.
Notice that
\begin{align}
    & \label{eq:f_even}
    f_\text{even} = \sum_{j=0}^{2^{m-2}-1} 
    \e^{-\frac{(2j+\mu)^2}{2\tilde{\sigma}^2}}
    + \sum_{j=1}^{2^{m-2}}
    \e^{-\frac{(2j-\mu)^2}{2\tilde{\sigma}^2}}
    = \sum_{j=0}^{2^{m-2}-1} 
    \e^{-\frac{(2j+\mu)^2}{2\tilde{\sigma}^2}}
    + \sum_{j=2}^{2^{m-2}+1}
    \e^{-\frac{(2j-2-\mu)^2}{2\tilde{\sigma}^2}},\\
    & \label{eq:f_odd}
    f_\text{odd} = \sum_{j=0}^{2^{m-2}-1} 
    \e^{-\frac{(2j+1+\mu)^2}{2\tilde{\sigma}^2}}
    + \sum_{j=1}^{2^{m-2}}
    \e^{-\frac{(2j-1-\mu)^2}{2\tilde{\sigma}^2}}
    = \sum_{j=1}^{2^{m-2}} 
    \left(
     \e^{-\frac{(2j-1+\mu)^2}{2\tilde{\sigma}^2}}
     + \e^{-\frac{(2j-1-\mu)^2}{2\tilde{\sigma}^2}}
    \right).
\end{align}
Using these equations
\begin{align}
    & f_\text{even}-f_\text{odd} =
    \sum_{j=0}^{2^{m-2}-1}
    \left(
    \e^{-\frac{(2j+\mu)^2}{2\tilde{\sigma}^2}}
    - \e^{-\frac{(2j+\mu+1)^2}{2\tilde{\sigma}^2}}
    \right)
    + \sum_{j=2}^{2^{m-2}}
    \left(
    \e^{-\frac{(2j-2-\mu)^2}{2\tilde{\sigma}^2}}
    - \e^{-\frac{(2j-1-\mu)^2}{2\tilde{\sigma}^2}}
    \right)
    + \e^{-\frac{(2^{m-1}-\mu)^2}{2\tilde{\sigma}^2}}
    - \e^{-\frac{(1-\mu)^2}{2\tilde{\sigma}^2}}
    \\
    & f_\text{odd} -f_\text{even}=
    \sum_{j=1}^{2^{m-2}}
    \left(
    \e^{-\frac{(2j-\mu-1)^2}{2\tilde{\sigma}^2}}
    - \e^{-\frac{(2j-\mu)^2}{2\tilde{\sigma}^2}}
    \right)
    + \sum_{j=1}^{2^{m-2}-1}
    \left(
    \e^{-\frac{(2j-1+\mu)^2}{2\tilde{\sigma}^2}}
    - \e^{-\frac{(2j+\mu)^2}{2\tilde{\sigma}^2}}
    \right)
    + \e^{-\frac{(2^{m-1}-1+\mu)^2}{2\tilde{\sigma}^2}}
    - \e^{-\frac{\mu^2}{2\tilde{\sigma}^2}}.
\end{align}
The first three terms in the right-hand-side of these equations are non-negative, so we have
\begin{equation}
    f_\text{even}-f_\text{odd}
    \geq -\e^{-\frac{(1-\mu)^2}{2\tilde{\sigma}^2}}, \quad
    f_\text{odd}-f_\text{even}
    \geq -\e^{-\frac{\mu^2}{2\tilde{\sigma}^2}}.
\end{equation}
By these inequalities,~\cref{eq:f_even} and~\cref{eq:f_odd}
\begin{align}
    & 3f_\text{even}-f_\text{odd}
    \geq 2f_\text{even}
    -\e^{-\frac{(1-\mu)^2}{2\tilde{\sigma}^2}}
    \geq 2\left(
    \e^{-\frac{\mu^2}{2\tilde{\sigma}^2}}
    +\e^{-\frac{(2-\mu)^2}{2\tilde{\sigma}^2}}
    \right)
    -\e^{-\frac{(1-\mu)^2}{2\tilde{\sigma}^2}}
    \geq 0,\\
    & 3f_\text{odd} -f_\text{even}
    \geq 2f_\text{odd}
    -\e^{-\frac{\mu^2}{2\tilde{\sigma}^2}}
    \geq 2\left(
    \e^{-\frac{(1+\mu)^2}{2\tilde{\sigma}^2}}
    +\e^{-\frac{(1-\mu)^2}{2\tilde{\sigma}^2}}
    \right)
    - \e^{-\frac{\mu^2}{2\tilde{\sigma}^2}}
    \geq 0,
\end{align}
for all~$\tilde{\sigma}\geq 1$ and~$\mu \in [0,1)$.
These inequalities yield
$3f_\text{even}\geq f_\text{odd}$
and
$3f_\text{odd}\geq f_\text{even}$,
and hence~\cref{eq:arg_bound_oddeven}.
\end{proof}


\begin{proposition}
\label{prop:arcsin_approx}
Let $t\in \integers^+, \abs{x} \leq 1/2$ and
\begin{equation}
P(x):=\sum_{\ell=0}^{t-1} a_{2\ell+1} x^{2\ell+1}, \quad
a_{2\ell+1}:=\dfrac{(2\ell-1)!!}{(2\ell)!!}\dfrac{1}{2\ell+1}.
\end{equation}
Then $\abs{\arcsin(x)-P(x)}\leq 1/2^{2t+1}$.
\end{proposition}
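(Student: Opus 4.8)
\textbf{Proof proposal for \cref{prop:arcsin_approx}.}
The plan is to bound the tail of the Maclaurin series for $\arcsin$ directly, since $P(x)$ is precisely the truncation of that series after the term of degree $2t-1$. Recall the classical expansion
\begin{equation}
\arcsin(x) = \sum_{\ell=0}^{\infty} a_{2\ell+1}\, x^{2\ell+1},
\qquad
a_{2\ell+1} = \frac{(2\ell-1)!!}{(2\ell)!!}\,\frac{1}{2\ell+1},
\end{equation}
which converges for $|x|\le 1$. Hence $\arcsin(x)-P(x) = \sum_{\ell\ge t} a_{2\ell+1}\,x^{2\ell+1}$, and the task reduces to estimating this remainder for $|x|\le 1/2$.

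The key steps, in order, are as follows. First I would record the elementary bound $a_{2\ell+1}\le 1$ for every $\ell\ge 0$; this follows because $(2\ell-1)!!/(2\ell)!!\le 1$ (a ratio of a product of odd numbers to the product of the next even numbers, term by term at most $1$) and $1/(2\ell+1)\le 1$. Then, using $|x|\le 1/2$ so that $|x|^{2\ell+1}\le 2^{-(2\ell+1)}$, I would bound
\begin{equation}
|\arcsin(x)-P(x)|
\le \sum_{\ell\ge t} a_{2\ell+1}\,|x|^{2\ell+1}
\le \sum_{\ell\ge t} \frac{1}{2^{2\ell+1}}
= \frac{1}{2}\sum_{\ell\ge t} \frac{1}{4^{\ell}}
= \frac{1}{2}\cdot\frac{4^{-t}}{1-1/4}
= \frac{2}{3}\cdot\frac{1}{4^{t}}
\le \frac{1}{4^{t}}
= \frac{1}{2^{2t}}.
\end{equation}
This already gives a bound of order $2^{-2t}$; to reach the sharper claimed bound $2^{-(2t+1)}$ I would squeeze a little harder using the monotone decrease of the $a_{2\ell+1}$ together with a slightly more careful first-term estimate. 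Specifically, since $a_{2\ell+1}$ is decreasing in $\ell$, one can factor out $a_{2t+1}\le a_3 = 1/6$ (or simply note $a_{2t+1}\le 1/2$ for $t\ge 1$) and bound the remaining geometric sum by $\sum_{\ell\ge t}|x|^{2\ell+1} = |x|^{2t+1}/(1-x^2) \le (1/2)^{2t+1}\cdot(4/3)$, yielding $|\arcsin(x)-P(x)| \le (1/2)\cdot(4/3)\cdot 2^{-(2t+1)} = (2/3)\,2^{-(2t+1)} \le 2^{-(2t+1)}$. Either packaging works; I would present whichever bookkeeping is cleanest.

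The main obstacle is merely cosmetic: matching the exact constant $2^{-(2t+1)}$ in the statement rather than the slightly weaker $2^{-2t}$ that the crudest estimate gives. The genuinely substantive input — the closed form of the $\arcsin$ coefficients and their convergence on $[-1,1]$ — is classical and can be cited, and the bound $a_{2\ell+1}\le 1$ is trivial, so no delicate analysis is required. The only care needed is to ensure the geometric series is summed correctly and that the restriction $|x|\le 1/2$ is used to get a convergent ratio strictly less than $1$; this dovetails with \cref{prop:arg_bound}, which guarantees the argument fed into $\arcsin$ in \cref{eq:rot_angle_simplified} indeed satisfies $|x|\le 1/2$.
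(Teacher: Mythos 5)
Your proof is correct and follows essentially the same route as the paper's: both bound the tail of the $\arcsin$ series by a geometric series, using $1/(1-x^2)\le 4/3$ for $\abs{x}\le 1/2$ together with the elementary bound $(2\ell-1)!!/(2\ell)!!\le 1$. The only difference is bookkeeping for the final constant $2^{-(2t+1)}$ — the paper extracts the needed extra factor from the $1/(2t+1)\le 1/3$ produced by termwise integration of the binomial series for $(1-x^2)^{-1/2}$, whereas you extract it from the monotonicity bound $a_{2t+1}\le a_3=1/6$ on the series coefficients; both close the gap correctly.
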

\begin{proof}
Using the binomial series of $(1-x^{2})^{-1/2}$,
\begin{equation}
    \arcsin(x) =\int_0^x \frac{\dd{x}}{\sqrt{1-x^2}}
    = P(x) + \int_0^x \dd{x} \sum_{\ell=t}^\infty \frac{(2\ell-1)!!}{(2\ell)!!} x^{2\ell}.
\end{equation}
By this equation and $(2\ell-1)!!/(2\ell)!!<1$ we have
\begin{equation}
    \abs{\arcsin(x)-P(x)}
    < \int_0^x \dd{x} \sum_{\ell=t}^\infty x^{2\ell}
    = \int_0^x \dd{x} x^{2t} \sum_{\ell=0}^\infty x^{2\ell}
    \leq \frac{4}{3} \frac{x^{2t+1}}{2t+1}
    \leq \frac{1}{2^{2t+1}},
\end{equation}
where we use $\abs{x}\leq 1/2$ and $t\geq 1$ in the last two inequalities.
\end{proof}

\section{UDU decomposition}
\label{appx:UDU}

In this appendix, we present a classical algorithm for computing the UDU decomposition of a dense real-symmetric matrix.
The algorithm presented here elucidates our classical UDU-decomposition algorithm for a matrix with a fingerlike sparse structure.
First we describe the UDU matrix decomposition and then present the algorithm for a dense matrix as pseudocode.

In the UDU matrix decomposition, a symmetric matrix~$\upbm{A}$ is decomposed into the product of an upper unit-triangular matrix~$\upbm{U}$, a diagonal matrix~$\upbm{D}$ and transpose of the upper unit-triangular matrix.
The UDU decomposition is closely related to the LDL decomposition, where a symmetric matrix is decomposed into the product of a lower unit-triangular matrix~$\upbm{L}$, a diagonal matrix and transpose of the lower unit-triangular matrix.
The LDL decomposition algorithm starts from the top-left corner of the matrix $\upbm{L}$ and proceeds to compute entries of this matrix row by row~\cite{Hig09}.
The UDU-decomposition algorithm, however, starts from the top-right corner of~$\upbm{U}$ and proceeds to compute its entries column by column.

To elucidate the algorithm, we write the UDU decomposition of a real-symmetric matrix~$\upbm{A}$ as
\begin{equation}
\label{eq:UDU}
    \upbm{A} = \upbm{{UDU}}^\T = \bm{\mathrm{UV}}^\T,
    \quad
    \upbm{V}:= \upbm{UD},
\end{equation}
where~$\upbm{V}$ is an upper triangular matrix.
By definition, the nonzero elements in the $i^\text{th}$ row of~$\upbm{V}$ are
\begin{equation}
    \upbm{v}_{i,\, i:N-1} = \upbm{u}_{i,\, i:N-1} \odot \upbm{d}_{i:N-1},
\end{equation}
where $\odot$ denotes the Hadamard product.
By~\cref{eq:UDU}, elements of the $i^\text{th}$ column in the upper-triangular part of $\upbm{A}$ are
\begin{equation}
\label{eq:ithColNonzero}
    \upbm{a}_{0:i,\, i} = \upbm{u}_{0:i,\, i:N-1} \cdot \upbm{v}_{i,\, i:N-1}.
\end{equation}
This equation along with $u_{ii}=1$ and~$v_{ii} = d_i$ yield
\begin{align}
    &\label{eq:diagsOfD}
    d_i = a_{ii} - \upbm{u}_{i,\, i+1:N-1} \cdot \upbm{v}_{i,\, i+1:N-1},\\
    & \label{eq:shearsOfU}
    \upbm{u}_{0:i-1,\, i} = \left[\upbm{a}_{0:i-1,i}
    -\upbm{u}_{0:i-1,\, i+1:N-1} \cdot \upbm{v}_{i,\, i+1:N-1}\right]/d_i
    \quad \forall\, i\neq 0,
\end{align}
for the diagonal $d_i$ and shear elements in the~$i^\text{th}$ column of $\upbm{U}$, respectively.

The procedure of the UDU-decomposition algorithm is as follows.
We start from the last column $i=N-1$ and proceed to the first column~$i=0$.
For each index~$i$, first we compute $\upbm{v}_{i,\, i:N-1}$, i.e., the nonzero elements in the $i^\text{th}$ row of $\upbm{V}$ by~\cref{eq:ithColNonzero}.
Then we compute the $i^\text{th}$ diagonal $d_i$ of $\upbm{D}$ by~\cref{eq:diagsOfD} and the shear elements in the $i^\text{th}$ column of $\upbm{U}$ by~\cref{eq:shearsOfU}.
The inputs, outputs and explicit procedure of the algorithm is presented in~\cref{alg:UDUDecomp}

\begin{algorithm}[H]
  \caption{Classical algorithm for UDU decomposition of a dense real-symmetric matrix}
  \label{alg:UDUDecomp}
  \begin{algorithmic}[1]
  \Require{
  \Statex $N \in \integers^+$
  \Comment{order of a square matrix}
  \Statex $\upbm{A} \in \reals^{N\times N}$
  \Comment{a real-symmetric matrix of order $N$}
  }
  \Ensure
  \Statex $\upbm{U} \in \reals^{N\times N}$
    \Comment{upper unit-triangular matrix in UDU decomposition of $\upbm{A}$}
    \Statex $\upbm{d} \in \reals^N$
    \Comment{diagonals of the diagonal matrix~$\upbm{D}$ in UDU decomposition of $\upbm{A}$}
\Function{UDU}{$N, \upbm{A}$}
\State $\upbm{U} \gets \mathds1$
\Comment{initializes $\upbm{U}$ as identity matrix}
\For{$i \gets N-1$ to $0$}
    \Comment{iterates over columns/rows of $\upbm{A}$ from right/bottom to left/top}
    \State $\upbm{v}_{i,i+1:N-1}
    \gets \upbm{u}_{i,i+1:N-1}
    \odot \upbm{d}_{i+1:N-1}$
    \Comment{$\odot$ denotes the Hadamard product}
    \State $d_i \gets a_{ii}
    -\upbm{u}_{i,i+1:N-1} \cdot \upbm{v}_{i,i+1:N-1}$
    \If{$i>0$}
    \State $\upbm{u}_{0:i-1,i} \gets
    \left[\upbm{a}_{0:i-1,i}
    -\upbm{u}_{0:i-1,i+1:N-1} \cdot \upbm{v}_{i,i+1:N-1}
    \right]/d_i$
    \EndIf
\EndFor
\EndFunction
\State \Return $\{\upbm{d}, \upbm{U}\}$
\end{algorithmic}
\end{algorithm}

\end{document}